\DeclareAcronym{AWGN}{short = AWGN ,long = additive white Gaussian noise}
\DeclareAcronym{ACRDA}{short = ACRDA ,long = asynchronous contention resolution diversity ALOHA}
\DeclareAcronym{CDF}{short = CDF ,long = cumulative distribution function}
\DeclareAcronym{CRA-CC}{short = CRA-CC ,long = CRA-convolutional code}
\DeclareAcronym{CRA-SH}{short = CRA-SH ,long = CRA-shannon bound}
\DeclareAcronym{CRA}{short = CRA ,long = contention resolution ALOHA}
\DeclareAcronym{CRDSA}{short = CRDSA ,long = contention resolution diversity slotted ALOHA}
\DeclareAcronym{CRDSA++}{short = CRDSA++ ,long = contention resolution diversity slotted ALOHA++}
\DeclareAcronym{CRI}{short = CRI ,long = contention resolution interval}
\DeclareAcronym{CSA}{short = CSA ,long = coded slotted ALOHA}
\DeclareAcronym{CSI}{short = CSI ,long = channel state information}
\DeclareAcronym{DAMA}{short = DAMA ,long = demand assigned multiple access}
\DeclareAcronym{DSA}{short = DSA ,long = diversity slotted ALOHA}
\DeclareAcronym{ECRA}{short = ECRA ,long = enhanced contention resolution ALOHA}
\DeclareAcronym{ECRA-SC}{short = ECRA-SC ,long = ECRA selection combining}
\DeclareAcronym{ECRA-MRC}{short = ECRA-MRC ,long = ECRA maximal-ratio combining}
\DeclareAcronym{EGC}{short = EGC ,long = equal-gain combining}
\DeclareAcronym{FEC}{short = FEC ,long = forward error correction}
\DeclareAcronym{GEO}{short = GEO ,long = geostationary orbit}
\DeclareAcronym{IC}{short = IC ,long = interference cancellation}
\DeclareAcronym{IRCRA}{short = IRCRA ,long = irregular repetition contention resolution ALOHA}
\DeclareAcronym{IRSA}{short = IRSA ,long = irregular repetition slotted ALOHA}
\DeclareAcronym{LDPC}{short = LDPC ,long = low density parity check}
\DeclareAcronym{M2M}{short = M2M ,long = machine-to-machine}
\DeclareAcronym{MAC}{short = MAC ,long = medium access}
\DeclareAcronym{MF}{short = MF ,long = matched filter}
\DeclareAcronym{MF-TDMA}{short = MF-TDMA ,long = multi-frequency time division multiple access}
\DeclareAcronym{MRC}{short = MRC ,long = maximal-ratio combining}
\DeclareAcronym{MUD}{short = MUD ,long = multiuser detection}
\DeclareAcronym{PDF}{short = PDF ,long = probability density function}
\DeclareAcronym{PER}{short = PER ,long = packet error rate}
\DeclareAcronym{PLR}{short = PLR ,long = packet loss rate}
\DeclareAcronym{RA}{short = RA ,long = random access}
\DeclareAcronym{RCB}{short = RCB ,long = random coding bound}
\DeclareAcronym{RTT}{short = RTT ,long = round trip time}
\DeclareAcronym{SA}{short = SA , long = slotted ALOHA}
\DeclareAcronym{SB}{short = SB ,long = Shannon bound}
\DeclareAcronym{SC}{short = SC ,long = selection combining}
\DeclareAcronym{SIC}{short = SIC ,long = successive interference cancellation}
\DeclareAcronym{SNIR}{short = SNIR ,long = signal-to-noise and interference ratio}
\DeclareAcronym{SINR}{short = SINR ,long = signal-to-interference and noise ratio}
\DeclareAcronym{SNR}{short = SNR ,long = signal-to-noise ratio}
\DeclareAcronym{TDMA}{short = TDMA ,long = time division multiple access}
\DeclareAcronym{UCP}{short = $\Code$-UCP ,long = $\Code$-unresolvable collision pattern}
\DeclareAcronym{VF}{short = VF ,long = virtual frame}
\newcommand {\ie}{\hbox{i.e. }}
\newcommand {\eg}{\hbox{e.g. }}
\newcommand{\figw}{0.8\columnwidth}
\newcommand{\bmpg}[1]{\begin{minipage}{#1}}
\newcommand{\empg}{\end{minipage}}
\newlength{\betweenfiguresskip}
\newlength{\minorbetweenfiguresskip}
\theoremstyle{plain}
\newtheorem{defin}{Definition}
\newtheoremstyle{exampstyle}
{\topsep}
{\topsep}
{}
{0pt}
{\bfseries}
{ --- }
{ }
{\thmname{#1}\thmnumber{ #2}\thmnote{ #3}}
\theoremstyle{exampstyle}
\newtheorem{exmp}{Example}[section]
\def\cleardoublepage{\clearpage\if@twoside \ifodd\c@page\else
\hbox{}
\thispagestyle{empty}
\newpage
\if@twocolumn\hbox{}\newpage\fi\fi\fi}
\begin{document}
\newtheorem{thm}{Theorem}
\newtheorem{lemma}{Lemma}
\newtheorem{coroll}{Corollary}
\newtheorem{prop}{Proposition}
\newtheorem{definition}{Definition}
\newtheorem{example}{Example}

\newcommand{\load}{\mathsf{G}}
\newcommand{\tp}{\mathsf{S}}
\newcommand{\se}{\xi}
\newcommand{\plr}{\mathsf{p}_l}
\newcommand{\rate}{\mathsf{R}}
\newcommand{\maxRate}{\mathsf{R}^*}
\newcommand{\mutInf}{\mathsf{I}}
\newcommand{\rvDec}{\mathcal{D}}

\newcommand{\capacity}{\mathsf{C}}
\newcommand{\capacitySC}{\capacity_\mathrm{S}}
\newcommand{\capacityMRC}{\capacity_\mathrm{M}}
\newcommand{\refCap}{\mathsf{C}_g}
\newcommand{\maxCap}{\mathsf{C}^*}
\newcommand{\maxSe}{\se^*}
\newcommand{\normCap}{\eta}

\newcommand{\dg}{\mathsf{d}}

\newcommand{\fraLen}{T_f}
\newcommand{\pkLen}{T_p}
\newcommand{\symLen}{T_s}
\newcommand{\fraTp}{n_p}
\newcommand{\VFStart}{T}
\newcommand{\RefTime}{t_0}
\newcommand{\Wind}{W}
\newcommand{\WindSh}{\Delta \Wind}

\newcommand{\numUs}{n_u}
\newcommand{\numSym}{n_s}
\newcommand{\numBit}{k}

\newcommand{\rx}{y}
\newcommand{\rxVec}{\bm{\rx}}
\newcommand{\tx}{x}
\newcommand{\txVec}{\bm{\tx}}
\newcommand{\noise}{n}
\newcommand{\noisePr}{\nu}
\newcommand{\noiseVec}{\bm{\noise}}
\newcommand{\noiseVar}{\sigma_{\noise}^2}
\newcommand{\intVal}{z}
\newcommand{\intVec}{\bm{\intVal}}
\newcommand{\intNum}{m}
\newcommand{\ch}{h}
\newcommand{\symVal}{a}
\newcommand{\pulse}{g}
\newcommand{\pulseRoot}{h}

\newcommand{\tm}{t}
\newcommand{\epoch}{\epsilon}
\newcommand{\freq}{f}
\newcommand{\phase}{\varphi}

\newcommand{\user}{u}
\newcommand{\replica}{r}
\newcommand{\sym}{s}

\newcommand{\usPw}{\mathsf{P}}
\newcommand{\PwAg}{\usPw_g}
\newcommand{\usPwTx}{\usPw_t}
\newcommand{\noisePw}{\mathsf{N}}
\newcommand{\noiseSD}{N_0}
\newcommand{\intPw}{\mathsf{Z}}
\newcommand{\sinr}{\gamma}
\newcommand{\sinrVec}{\Gamma}

\newcommand{\UCP}{\mathcal{L}}
\newcommand{\UCPcons}{\mathcal{L}^*}
\newcommand{\AllUCP}{\mathcal{L}_\mathrm{S}}
\newcommand{\Code}{\mathcal{C}}
\newcommand{\CollClus}{\mathcal{S}}

\newcommand{\RStart}{\tau}
\newcommand{\VL}{\RStart_l^*}
\newcommand{\VR}{\RStart_r^*}
\newcommand{\Vg}{\RStart^*}
\newcommand{\Vpd}{T_v}
\newcommand{\nVp}{n_v}

\newcommand{\mU}{\alpha_u}
\newcommand{\sL}{\beta_d}
\newcommand{\UL}{\beta_{u-d}}

\newcommand{\intFreeAsy}{\varphi_a}
\newcommand{\intFreeMRC}{\varphi_m}
\newcommand{\intOne}{\mu}
\newcommand{\rateFree}{\rate_f}
\newcommand{\rateInt}{\rate_i}
\newcommand{\rateIntO}{\rate_{i1}}
\newcommand{\rateIntT}{\rate_{i2}}

\definecolor{gl}{rgb}{0.0,0.5,0.8}
\definecolor{fc}{rgb}{0.8,0.5,0}
\definecolor{al}{rgb}{1,0.3,0.3}
\newcommand{\giangio}{\textcolor{gl}}
\newcommand{\fede}{\textcolor{fc}}
\newcommand{\alert}{\textcolor{al}}


\titlepage
\thispagestyle{empty}

\begin{center}
\ifpdf
    \begin{figure}[h!]
    \centering
    \includegraphics[width=3cm]{./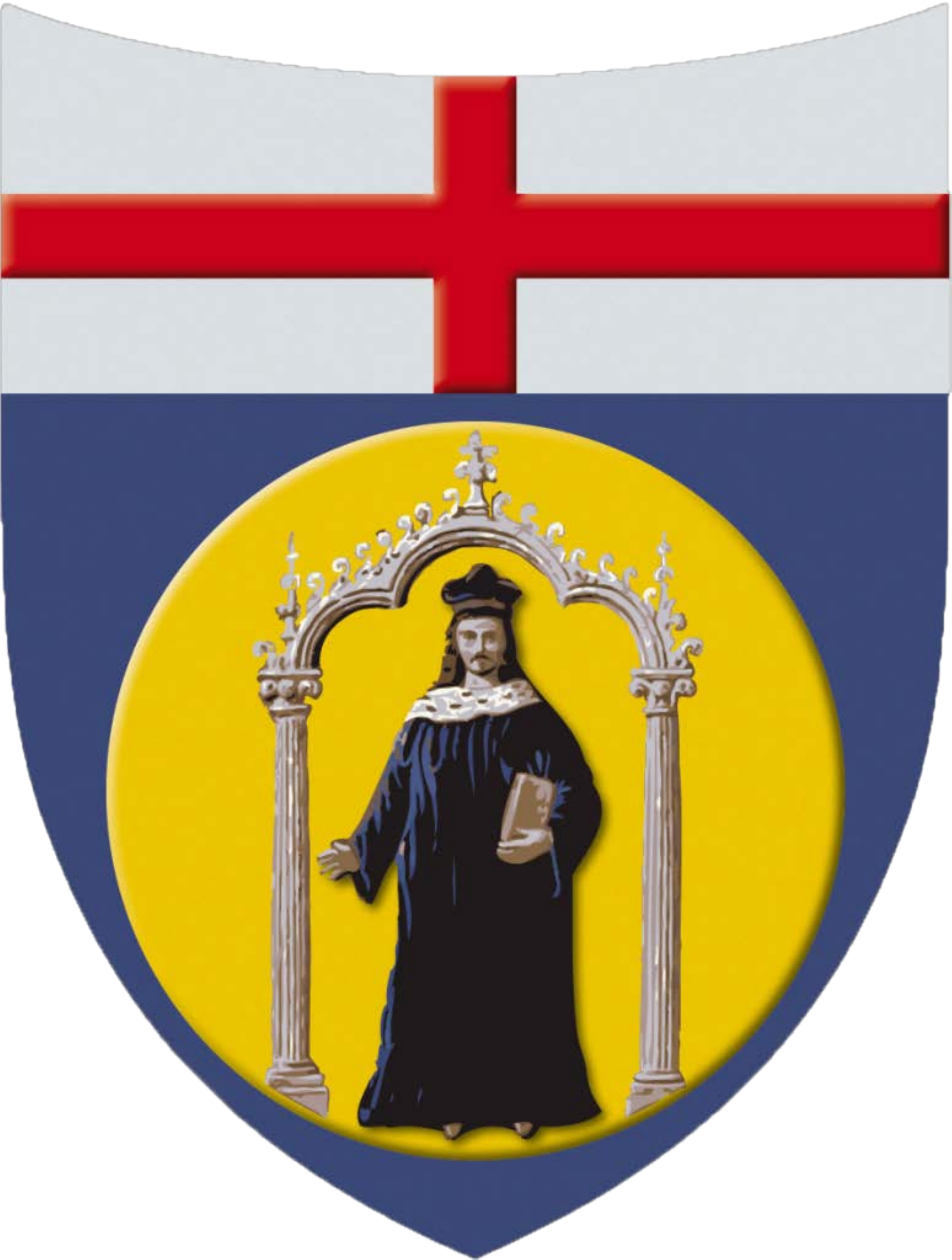}
    \end{figure}
\fi
\huge{\textsc{University of Genoa}}
\end{center}
\hrule

\vspace{0.1cm}

\begin{center}
\large{\textsc{Department of Electrical, Electronic and Telecommunication Engineering and Naval Architecture (DITEN)}}
\end{center}

\vspace{0.1cm}

\begin{center}
\large{\textsc{PhD in Science and Technology for Electronic and Telecommunication engineering}}
\end{center}
\vspace{1.0cm}

\begin{center}
\huge{Modern Random Access \\ for Satellite Communications}
\end{center}

\vspace{1.0cm}

\begin{center}
\large{PhD Thesis}
\end{center}

\begin{center}
\large{Federico Clazzer}
\end{center}

\begin{center}
\large{Genova, April 2017}
\end{center}

\vspace{0.8cm}

\begin{minipage}{\textwidth}
\begin{large}
	\begin{tabular}{@{} l @{\hspace{6.6cm}} r @{}}
		\emph{Tutor:} & \emph{Co-Tutor:} \\
		Prof.~Mario Marchese  & Dr.~Gianluigi Liva\\
         & \emph{Supervisor:} \\
         & Dr.~Andrea Munari \\
	\end{tabular}
\end{large}
\end{minipage}

\vspace{1.5cm}

\hrule
\begin{center}
\large{\emph{Coordinator of the PhD Course: Prof. Mario Marchese}}
\end{center}



\frontmatter

\dominitoc
\dominilof


\tableofcontents

\addcontentsline{toc}{chapter}{List of Figures}
\listoffigures	

\addcontentsline{toc}{chapter}{List of Tables}
\listoftables  

\renewcommand{\chaptermark}[1]{\markboth{LIST OF ABBREVIATIONS}{}}
\chapter{List of Abbreviations}
\printacronyms[include-classes=abbrev] 

\renewcommand{\chaptermark}[1]{\markboth{#1}{}}
\setcounter{secnumdepth}{-1}
\section*{Acknowledgment}\label{sec:ack}

The authors would like to thank Dr. Gianluigi Liva (German Aerospace Center) and Prof. Marco Chiani (University of Bologna) for fruitful discussions. 
\chapter{Abstract}
\thispagestyle{empty}

The exponential increase in communication-capable devices, requires the development of new and highly efficient communication protocols. One of the main limitation in nowadays wireless communication systems is the scarcity of the frequency spectrum. Moreover, the arise of a new paradigm \--- called in general \ac{M2M} \--- changes the perspective of communication, from human-centric to machine-centric. More and more autonomous devices, not directly influenced by humans, will be connected and will require to exchange information. Such devices can be part of a sensor network monitoring a portion of a smart grid, or can be cars driving on highways. Despite their heterogeneous nature, these devices will be required to share common frequency bands calling for development of efficient \ac{MAC} protocols.

The class of \ac{RA} \ac{MAC} is one possible solution to the heterogeneous nature of \ac{M2M} type of communication systems. Although being quite simple, these protocols allow transmitters to share the medium without coordination possibly accommodating traffic with various characteristics. On the other hand, classical \ac{RA}, as ALOHA or \ac{SA}, are, unfortunately, much less efficient than orthogonal multiple access, where collisions are avoided and the resource is dedicated to one terminal. However, in the recent years, the introduction of advanced signal processing techniques including interference cancellation, allowed to reduce the gap between \ac{RA} and orthogonal multiple access. Undubitable advantages, as very limited signaling required and very simple transmitters, lead to a new wave of interest.

The focus of the thesis is two-folded: on the one hand we analyse the performance of advanced asynchronous random access systems, and compare them with slot synchronous ones. Interference cancellation is not enough, and combining techniques are required to achieve the full potential of asynchronous schemes, as we will see in the thesis. On the other hand, we explore advanced slot synchronous \ac{RA} with fading channels or with multiple receivers, in order to gain insights in how their optimisation or performance are subject to change due to the channel and topology.
\chapter{Sommario}
\thispagestyle{empty}

L'aumento esponenziale di terminali in grado di comunicare, richiama l'attenzione sulla necessit\`a di sviluppare nuovi ed efficienti protocolli di comunicazione. Una delle principali limitazioni negli attuali sistemi di comunicazioni radio \`e la scarsit\`a di banda allocata, che rende di primaria importanza la possibilit\`a di allocare nella stessa frequenza il maggior numero di transmettitori concorrenti. In questo scenario, si presenta la nascita di un nuovo paradigma di comunicazione, chiamato comunemente \ac{M2M} communication, che \`e rivoluzionario in quanto sposta il punto di vista, centrandolo sulla macchina invece che sull'inidividuo. Un numero sempre crescente di oggetti autonomi, quindi non direttamente sotto l'influenza o il comando umano, saranno connessi e avranno la necessit\`a di comunicare con l'esterno o tra loro. Esempi possono essere la rete di sensori che monitora la smart grid, oppure autoveicoli in movimento nelle citt\`a o sulle autostrade. Nonstante la loro natura eterogenea, tutti questi terminali avranno la necessit\`a di condividere le stesse radio frequenze. Alla luce di questo, nuovi ed efficienti protocolli \ac{MAC} dovranno essere sviluppati.

I protocolli \ac{RA} per il \ac{MAC} sono una possibile risposta alla natura eterogenea delle reti \ac{M2M}. Pur essendo relativamente semplici, questa classe di protocolli permette di condividere la frequenza senza la necessit\`a di coordinamento tra i terminali, e permettendo la trasmissione dati di varia lunghezza o con caratteristiche differenti. D'altra parte, schemi classici \ac{RA}, come ALOHA e \ac{SA}, sono molto meno efficienti di schemi che evitano collisioni e che riservano le risorse per un unico transmittitore alla volta, come \ac{TDMA}, ad esempio. Negli ultimi anni, inoltre, l'introduzione di tecniche evolute di processamento del segnale, come la cancellazione di interferenza, ha permesso di ridurre la distanza di prestazioni tra queste due classi di protocolli. Chiari vantaggi dei protocolli \ac{RA} sono la ridottissima necessit\`a di scambio di meta dati e la ridotta complessit\`a dei transmettitori. Per questi motivi, un rinnovato interesse da parte della comunit\`a scientifica si \`e manifestato recentemente.

L'obiettivo della tesi \`e duplice: da un lato ci occuperemo di analizzare il comportamento di protocolli \ac{RA} asincroni e li confronteremo con quelli sincroni a livello di slot. L'utilizzo di tecniche di cancellazione di interference non reiscono da sole a sfruttare l'intera potenzialit\`a di questi protocolli. Per questo motive introdurremo tecniche di combining nel nuovo schema chiamato \ac{ECRA}. Ci occuperemo inoltre, di valutare l'impatto di nuovi modelli di canale con fading o di topoligie con ricevitori multipli su protocolli sincroni a livello di slot. L'analisi teorica sar\`a supportata mediante simulazioni al computer, per verificare i benefici delle scelte effettuate.


\mainmatter
\renewcommand{\chaptermark}[1]{\markboth{#1}{}}
\setcounter{secnumdepth}{-1}
\section{Introduction} \label{sec:introduction}

Until now we have focused on the simple a basic scenario in which a plurality of user terminals want to communicate in an uncoordinated fashion to a single receiver. Different \ac{RA} techniques have been investigated, both involving time synchronization (at slot or at frame level) or completely asynchronous. In this Chapter we follow a parallel research direction, in which the transmitter can rely on spatial diversity. In particular, we expand the simple topology typical of \ac{MAC} channel, in two directions. Firstly, we add a finite number of non-cooperative receivers (called also relays) and secondly, we complement the uplink of such a system with a downlink or wireless backhaul link,\footnote{In the following we will use the term downlink for referring to such link, in accordance to satellite communications nomenclature in which the return uplink refers to the users to satellite communication link and the downlink refers to the satellite to gateway link. In terrestrial networks the name downlink shall be substituted with the more appropriate wireless backhaul link.} where the receivers forward the successfully received packets to a single common \ac{gateway}. In order to take into account variability of user-relay links, a packet erasure channel is considered in the uplink phase. Specifically, each user-rely link is considered active with probability $1-\peras$ on a slot-by-slot basis. Furthermore, \ac{SA} is considered as channel access policy for each such link. In the downlink phase, the relays forward the correctly received packets or linear combinations of them.

In this Chapter we aim at characterising the uplink aggregate throughput seen at all receivers as well as the \ac{PLR}. We derive the upper bound on the downlink rate defined as number of total downlink transmission per uplink time slot and we show that with \ac{RLC} the bound can be achieved letting the observation of time slots in the uplink grow very large. We then propose alternative low complexity downlink policies that allow the relays to drop some of the correctly received packets eventually based on the observation of the uplink slots. Finally, we address the scenario in which the relays have a finite memory so that only a finite number of correctly decoded packets can be collected before starting the downlink phase. In this case, we first derive an analytical expression of the performance for \ac{RLC} and we compare this solution with the dropping policies. We are then able to show that under some dowlink rate conditions and for some buffer sizes, the simple dropping policies are able to largely outperform \ac{RLC}.

Note that part of the Chapter has been already published in \cite{Munari2013} and the author of the thesis is not co-author of that work. Nonetheless, a recall of the general framework is considered fundamental for giving a proper understanding of the contribution of the thesis' author which mainly focuses on Section~\ref{sec:buffer_downlink}.

\begin{sloppypar}
\ac{SA} with space (antenna) diversity was analysed in \cite{Zorzi1994_diversity} under the assumption of Rayleigh fading and shadowing, with emphasis on the two-antenna case. The authors in \cite{Jakovetic2015} address also a similar scenario, although in their case \ac{SIC} is introduced at the relays. Depending on the considered policy, both inter-relay and intra-relay \ac{SIC} are also applied.
\end{sloppypar} 
\setcounter{secnumdepth}{2}
\renewcommand{\chaptermark}[1]{\markboth{\chaptername\ \thechapter.\ #1}{}}
\chapter{Basics of Random Access - ALOHA and Slotted ALOHA}
\label{chapter1}
\acresetall
\ifpdf
    \graphicspath{{chapter1/Images/}}
\fi

\thispagestyle{empty}
\epigraph{Computers are only capable of a certain kind of randomness because computers are finite devices}{Tristan Perich}

The focus of the first Chapter is on the basics of \ac{RA} protocols. An overview of the channel access policies of the two pioneering protocols, ALOHA invented by Abramson \cite{Abramson1970} and \ac{SA} invented by Roberts \cite{Roberts1975} as well as their performance are given. To introduce them there are several possible ways and we would like to mention two: the one presented in the book of Bertsekas and Gallager \cite{Data_Networks_1987} and the one in the book of Kleinrock \cite{Kleinrock1976_book}.

Two versions of the original ALOHA and \ac{SA} protocols can be found in the literature. The first option allows retransmissions upon unsuccessful reception, caused by collisions with other concurrent transmissions. The transmitters enable the receiver to detect a collision via the \ac{CRC} field appended to the packet, while the receiver notifies them via feedback about the detected collision. The second option instead does not consider retransmissions and let the protocols operate in \emph{open loop}. Such configuration relies on possible higher layer for guaranteeing reliability, if necessary. Throughout the description of the two protocols, we will consider both options highlighting the differences.

\section{ALOHA}

The multiple access protocol ALOHA invented by Abramson \cite{Abramson1970} has been primarily developed as a strategy to connect various computers of the University of Hawaii via radio communications \cite{Abramson1985}. In the early 1970s there was a need to provide connectivity to terminals of the University distributed on different islands and ALOHA has been the proposed solution.\footnote{It is of undeniable interest for the curious reader the anecdotal account of the development of the ALOHA system given by Abramson in \cite{Abramson1985}, where system design challenges are pointed out as well as the practical impact that such a system had in the 70s to both radio and satellite systems.}

The main idea is to a population of nodes\footnote{We will use the words node, user and transmitter interchangeably throughout the thesis.} transmit a packet to the single receiver whenever it is generated at the local source, regardless of the medium activity. In particular, each node, upon generation of the packet, transmits it immediately. If a collision occurs, the receiver detects the collision via checking the \ac{CRC} field of the decoded packets and all collided users involved are considered lost. At this point in time, depending whether retransmissions are enabled or not, two different behaviors are followed. In the latter case (no retransmissions), the packets are declared lost and no further action is taken by the transmitters. In the former case, instead (with retransmissions), the receiver feeds back a notification of the occurred collision to the users involved. The collided nodes, after a random interval of time, retransmit the packets. From the time when the node realises that a collision occurred until the retransmission the node is said to be \emph{backlogged}.

\ifpdf
\begin{figure}
\centering
\includegraphics[width=\columnwidth]{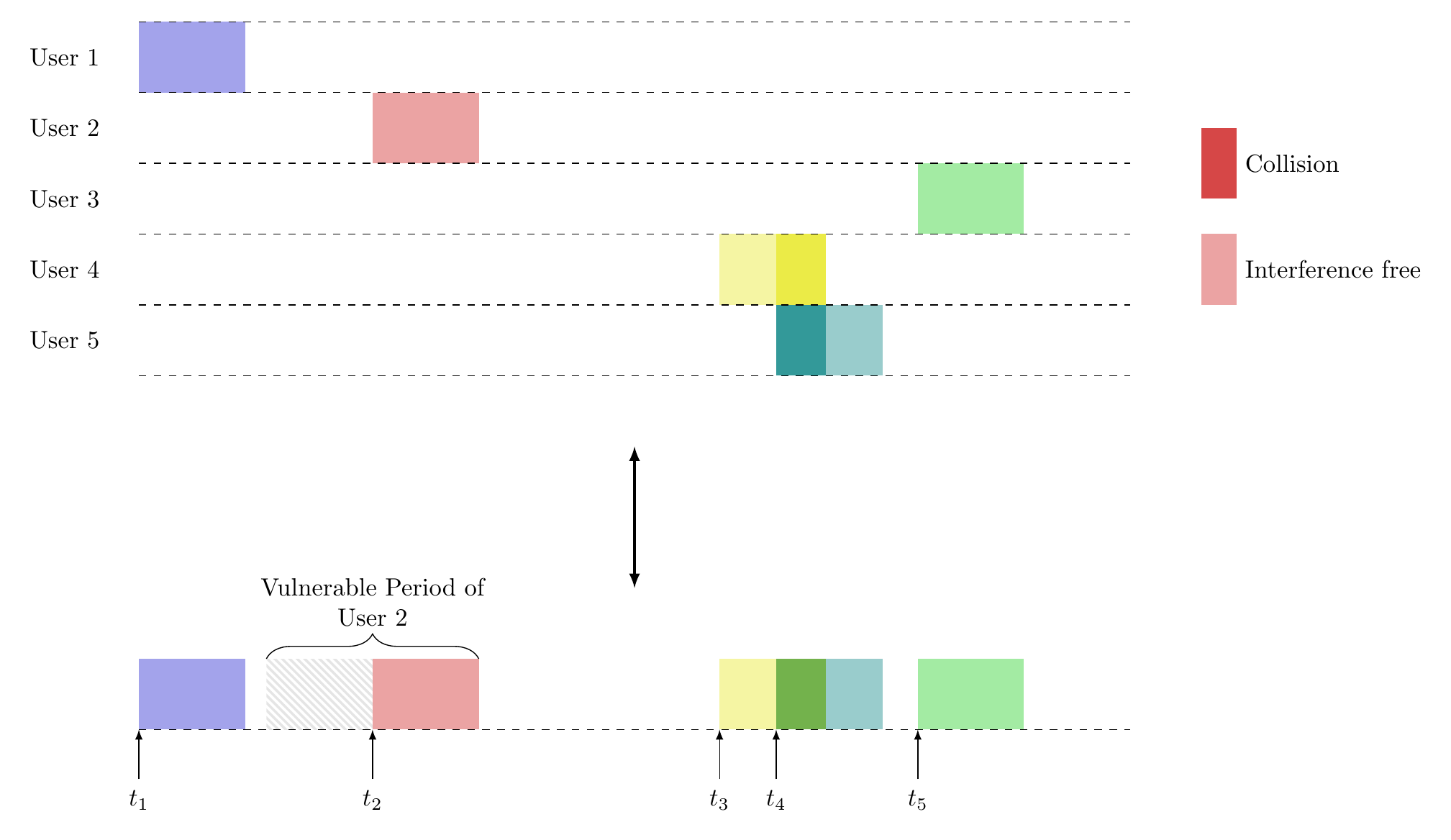}
\caption{A portion of received signal for the ALOHA protocol. We visualize the users both on the same time axes (bottom) as well as separated (top). The latter display mode will be used for many protocols throughout the thesis.}
\label{fig:ALOHA}
\end{figure}
\fi

When a node starts transmitting at time instant $\tm_1$ and assuming that the duration of each transmission is normalised to $1$, every other transmission in the interval $\left[\tm_1-1;\tm_1+1\right]$ will cause a collision (see Figure~\ref{fig:ALOHA}). This interval is also called \emph{vulnerable period} \cite{Kleinrock1976_book}.
\begin{defin}[Vulnerable Period]
For a reference packet, the vulnerable period is the interval of time in which any other start of transmission causes a destructive collision.
\end{defin}
In ALOHA the vulnerable period for any packet equals to two packet durations. A common assumption is that there is an infinite population of nodes and every new arrival is associated to a new node. Such assumption can be casted to any finite number of transmitters setting, associating to each transmitter a set of virtual nodes of the infinite \hyphenation{po-pu-la-tion}. Nonetheless, a different behaviour can be expected in the two scenarios. When a finite population is considered, packets arriving at the same node are forced to be sent in non-overlapping intervals of time, which is not the case for the assumption of infinite population. In that case, in fact, the virtual nodes act independently and multiple packets can be transmitted in overlapping instants of time. Since the typical application scenario of \ac{RA} is a large set of transmitters with low duty cycle, the infinite node population is rather accurate also as approximation for the finite population. The transmission of packets from the entire population is modeled as a Poisson process of intensity $\load$ $\left[\mathrm{packets/packet\, duration}\right]$ or in the following $\left[\mathrm{pk/pk\, duration}\right]$.\footnote{In other words, the binomial distribution can be well approximated with a Poisson under specific conditions.}
\begin{defin}[Channel load $\load$]
The channel load $\load$ is the expected number of transmissions per packet duration.
\end{defin}
In order to correctly receive a packet starting at time $\tm_1$, no other transmission shall occur in the time interval $\left[\tm_1-1;\tm_1+1\right]$. The probability of this event is equivalent to the probability that no further transmission occurs in two packet durations \ie under the Poisson assumption,
\begin{equation}
\label{eq:psucc_ALOHA}
\psucc=e^{-2\load}.
\end{equation}
We can now define the throughput as
\begin{defin}[Throughput]
The expected number of successful transmissions per packet duration is called throughput, and is given by
\begin{equation}
\label{eq:throughput}
\tp(\load)=\load\psucc \quad \left[\mathrm{pk/pk\, duration}\right].
\end{equation}
\end{defin}
Therefore, for ALOHA inserting equation~\eqref{eq:psucc_ALOHA} into equation~\eqref{eq:throughput} yields,
\begin{equation}
\label{eq:t_aloha}
\tp(\load)=\load e^{-2\load} \quad \left[\mathrm{pk/pk\, duration}\right].
\end{equation}

We now derive the channel load for which the peak throughput of ALOHA is found. We compute the derivative of the throughput\footnote{We denote with $f'(x)$ the derivative of $f(x)$.} as a function of the channel load $\load$ as,
\begin{equation}
\label{eq:der_T_ALOHA}
\tp'(\load)=e^{-2\load}\left(1-2\load\right).
\end{equation}
Setting it equal to zero gives
\begin{equation}
\label{eq:peak_G_ALOHA}
\tp'(\load)=0\Rightarrow1-2\load=0\Rightarrow\load=\frac{1}{2}.
\end{equation}
By noting that for $0\leq \load <1/2$ the derivative is positive while for $\load > 1/2$ it is negative, we are ensured that this is the only maximum of the function. Substituting the value of $\load=1/2$ in equation~\eqref{eq:t_aloha} we obtain the peak throughput of
\begin{equation}
\label{eq:peak_T_ALOHA}
\tp(0.5)=\frac{1}{2e}\cong0.18 \quad \left[\mathrm{pk/pk\, duration}\right].
\end{equation}

\section{Slotted ALOHA}

The first and most relevant evolution of ALOHA has been \ac{SA} invented by Roberts \cite{Roberts1975}. It has to be noted that the main impairment to successful transmission in ALOHA is coming from interference. Whenever two packets collide, even partially, they are lost at the receiver.\footnote{This is true under the collision channel model.} In ALOHA, this is particularly detrimental because every transmission starting one packet duration before, till one after, the start of a reference packet can cause a destructive collision. In \ac{SA} this effect is mitigated introducing time slots. A common clock dictates the start of a time slot. Upon local generation of a packet, a node waits until the start of the upcoming slot before transmission. The time slot has a duration equal to the packet length. Although requiring additional delay for a packet transmission w.r.t. ALOHA, \ac{SA} reduces the vulnerable period from two to one packet duration. In fact, only packets starting in the same time slot cause a destructive collision.

\begin{sloppypar}
As previously, an infinite population of nodes is generating traffic modeled as a Poisson process of intensity $\load$ $\left[\mathrm{pk/pk\, duration}\right]$ which in this case can be also measured in $\left[\mathrm{packets/slot}\right]$ or $\left[\mathrm{pk/slot}\right]$ in the following. In contrast with ALOHA, the reference packet transmission starting at time $\tm_1$ (being $\tm_1$ the beginning of a slot) can be correctly received when no other transmission occurs in the time interval $\left[\tm_1;\tm_1+1\right]$ which corresponds to the time slot chosen for transmission. The probability of this event is equivalent to the probability that no further transmission occurs in one packet duration \ie
\begin{equation}
\label{eq:psucc_SA}
\psucc=e^{-\load}.
\end{equation}
Exploiting equation~\eqref{eq:throughput} we can write the throughput expression for \ac{SA} as
\begin{equation}
\label{eq:t_SA}
\tp(\load)=\load e^{-\load} \quad \left[\mathrm{pk/slot}\right].
\end{equation}
\end{sloppypar}

Similarly to ALOHA, we derive here the channel load for which the peak throughput of \ac{SA} can be found. We start from the derivative of the throughput
\begin{equation}
\label{eq:der_T_SA}
\tp'(\load)=e^{-\load}\left(1-\load\right)
\end{equation}
And it holds
\begin{equation}
\label{eq:peak_G_SA}
\tp'(\load)=0\Rightarrow1-\load=0\Rightarrow\load=1.
\end{equation}
Also in this case, the derivative is positive for $0\leq \load < 1$ and negative for $\load > 1$, confirming that $\load=1$ is the global maximum of the throughput function. Computing the throughput for $\load=1$ gives the well-known peak throughput of \ac{SA}
\begin{equation}
\label{eq:peak_T_SA}
\tp(1)=\frac{1}{e}\cong0.36 \quad \left[\mathrm{pk/slot}\right].
\end{equation}

\ifpdf
\begin{figure}
\centering
\includegraphics[width=0.8\columnwidth]{./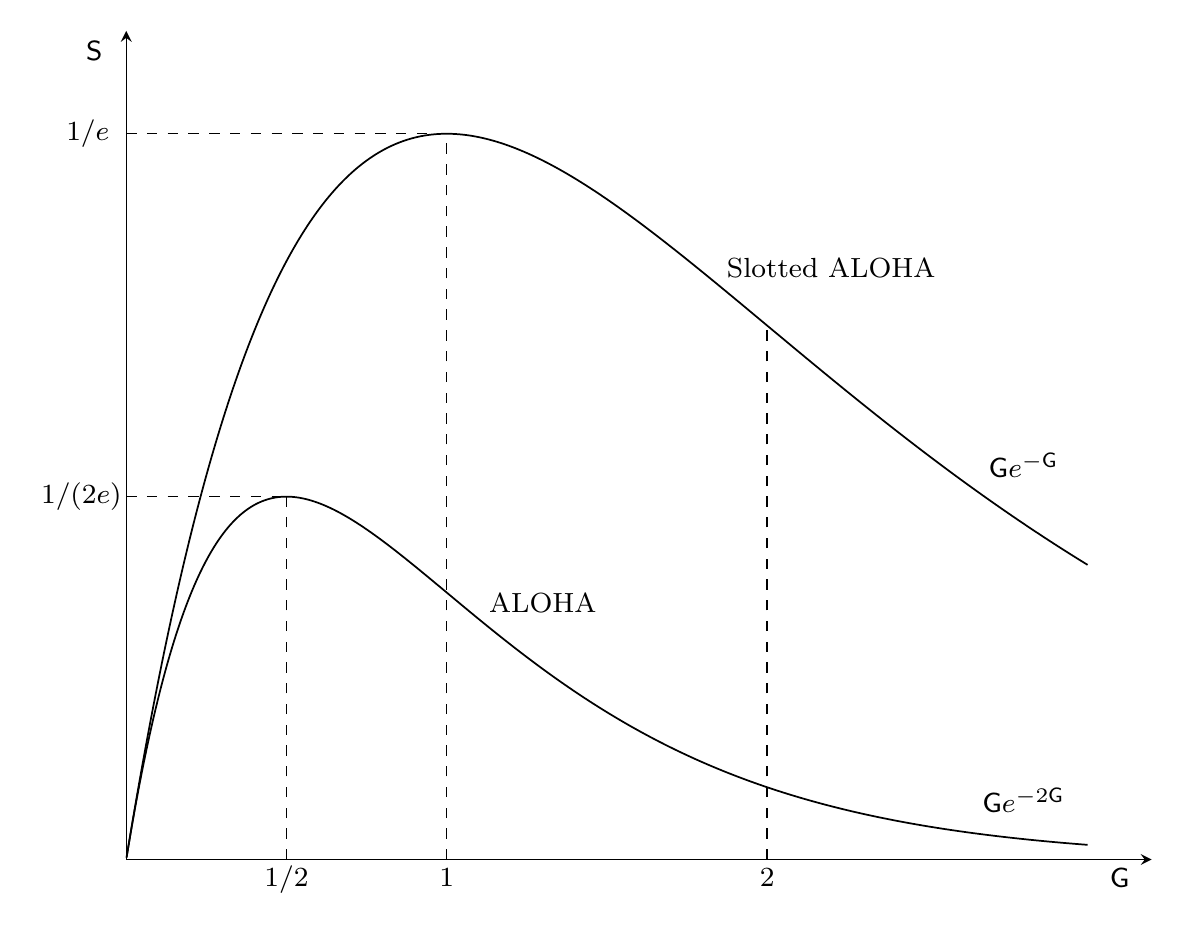}
\caption{Throughput comparison of ALOHA and \ac{SA}. Peak throughput values as well as inflection points and corresponding channel load values are highlighted.}
\label{fig:T_A_SA}
\end{figure}
\fi

\section{Considerations on ALOHA and Slotted ALOHA}

ALOHA-like \ac{RA} protocols have been used since the second half of the seventies in highly successful communications systems and standards, ranging from Ethernet \cite{Metcalfe1976}, to the Marisat system \cite{Marisat1977} that nowadays has become Inmarsat. Most recently they have been employed in mobile networks as in $2$G \ac{GSM} and \ac{GPRS} \cite{Cai1997} for signaling and control purposes, in $3$G using a modification called \ac{R-ALOHA} and $4$G for the \ac{RACH} of \ac{LTE} \cite{Laya2016}.

\Ac{RA} protocols are particularly attractive for all scenarios where the traffic is unpredictable and random, such as satellite return links and ad-hoc networks, just to mention a few. Unfortunately, the throughput performance of both ALOHA and \ac{SA} are quite limited but may allow transmission with lower delay compared to \ac{DAMA} schemes, if no collisions happened. This is particularly relevant for satellite applications, in which the request for resource allocation that precedes the transmission is subject to $1$ \ac{RTT} of delay. In \ac{GEO} satellite systems, this accounts for a delay of at least $500$ ms. Instead, when using \ac{RA} the delay can be drastically reduced.

Applications requiring full reliability (all transmitted packets are successfully received) of the packet delivery will operate ALOHA or \ac{SA} with retransmissions. In these scenarios, the analysis of ALOHA and \ac{SA} has to take into account the dynamism of the channel load due to the variation of backlogged users over time. An insightful analysis can be followed in \cite{Data_Networks_1987}, where a Markov chain model is developed and a drift analysis is given.  In the recent past, thanks to the emerging applications related to \ac{M2M} type of communications of the \ac{IoT} ecosystem the full reliability is not required anymore. In applications like sensor networks, metering applications, etc. in fact, the transmitted data is repetitive so if one transmission is lost, is not particularly dangerous as long as a minimum successful probability can be ensured.

Recent \ac{RA} protocols are able to drastically improve the throughput performance and to guarantee high successful reception probability for a vast range of channel loads. Furthermore, considering satellite communication systems, retransmissions will suffer of at least $1$ \ac{RTT} delay. In view of this, our focus will be on \ac{RA} without retransmissions.

\section{Other Fundamental Protocols in Random Access}

Before moving to the preliminaries, a brief review of the historical milestones is of utmost importance. This Section gives an overview on the different research paths that have been founded starting from the seventies and have developed and evolved until the latest days in the field of \ac{RA}.

As already noted, the approach followed by ALOHA and \ac{SA} poses a number of questions and one of the first to be addressed concerned the stability of the channel. We consider a channel access where retransmissions are allowed and newly generated traffic can also be transmitted over the channel. If the overall channel traffic exceeds a certain rate, more and more collisions will appear, triggering a vicious circle. An increasing amount of traffic is pushed into the network that will lower even more the probability of collision free transmissions leading to a zero throughput equilibrium point. Kleinrock and Lam in \cite{Kleinrock1975} observed that, assuming an infinite population of transmitters, the tradeoffs of the equilibrium throughput-delay are not sufficient for characterising the ALOHA system and also channel stability must be considered. In fact, on the throughput-delay characteristic, each throughput operating point below the channel capacity (which is $0.36$ in case of \ac{SA}) has two equilibrium solutions. This suggests that the assumption of equilibrium conditions may not always be valid. In this work they developed a Markovian model for \ac{SA} allowing the performance evaluation and design of \ac{SA}. They also introduced a new performance metric, the average up time, that represents a measure of the stability. The consequent problem was to define the dynamic control for the unstable \ac{SA}, which has been addressed by the same authors in \cite{Lam1975}. In this second work the authors present three channel control procedures and determine the optimal two-action control policy for each of them. In \cite{Carleial1975} the authors addressed the same problem for the finite user population and for the same three control policies. Thanks to the work of Jenq \cite{Jenq1980}, the question about the number of theoretically possible stable points for \ac{SA} is answered and has been found to be either one or three equilibrium points. The first and last one are stable and the second one is unstable. All the research works mentioned until now assume that the transmitters have no possibility to buffer their packets. Extensions to the work on the stability where this assumption is relaxed and infinite buffers are considered, has been done first in \cite{Tsybakov1979_Stability} where a simple bound for the stability region was obtained. Tsybakov and Mikhailov \cite{Tsybakov1979_Stability} derived the exact stability region for ALOHA in the case of $2$ users in the symmetric case, \ie all input rates and transmission probabilities are the same, applying stability criteria already derived for general cases by Malyshev in \cite{Malyshev1972}. Extensions to higher number of users have been derived by Mensikov in \cite{Mensikov1974} and Malyshev and Mensikov in \cite{Malyshev1981}. In \cite{Rao1988} the stability region for the case of finite user population with infinity buffer size is also addressed and for the case of $2$ users it is exactly determined using a different approach with respect to Tsybakov and Mikhailov, that introduces hypothetical auxiliary systems and simplifies the derivation. Furthermore, the relation and interaction between the two queues is explicitly shown. Applying the same approach for all the other cases, more tight inner bounds are derived\footnote{In the same years a number of other authors investigated the stability of the channel for \ac{SA} with different flavours. Works worth to be mentioned are, for example, \cite{Metcalfe1975, Ferguson1975, Jenq1981, Saadawi1981, Szpankowski1994}}.


Only a few years after the pioneering work on stability, Capetanakis \cite{Capetanakis1979} (work which derives from his PhD thesis \cite{Capetanakis1977}) and at the same time independently, Tsybakov and Mikhailov \cite{Tsybakov1978} opened a new a very productive research area in \ac{RA} protocols, the so-called \emph{collision-resolution algorithms} or splitting algorithms. These protocols are characterized by two operation modes, a normal mode which is normally \ac{SA} and a collision resolution mode. The latter is entered whenever a collision takes place and is exploited by the collided transmitters for retransmissions until the collision is resolved. In the collision mode all other transmitters are prevented from accessing the medium. In this way, the collision-resolution algorithms do not guarantee that newly arrived packets can be immediately transmitted, as was originally allowed by both ALOHA and \ac{SA}. During the collision resolution period the transmitters collided are probabilistically split into a transmitting set and an idle set. The algorithms differ in the rules applied for the split into the two sets during the collision resolution period as well as in the rules for allowing the packets not involved in the collision to transmit after the collision is resolved. The main achievements by the collision-resolution algorithms are a maximum throughput larger than $1/e$ and the proof of stability rather than operating hypothesis. In particular, the maximum stable throughput of the collision-resolution algorithm of Capetanakis, Tsybakov and Mikhailov is $0.429$, that have been extended by Massey \cite{Massey1981} to $0.462$ and by Gallager up to $0.487$ \cite{Gallager1985}. During the same years, there has been a lot of work also in finding upper bounds on the multiple access channel capacity, which has its tightest result in $0.568$ demonstrated by Tsybakov and Likhanov \cite{Tsybakov1980}\footnote{A good list of references on works dealing with collision-resolution algorithms and their maximum throughput as well as upper bounds on the capacity of the multiple access channel can be found in \cite{Chan2012}. A survey on Russian works on the topic, that have been particularly prolific, is given in \cite{Tsybakov1985}.}. Extensions to the case of more than two classes in which the colliding users are split has been investigated by Mathys and Flajolet in \cite{Mathys_1985} showing that ternary splitting (three classes) is optimal for most of the channel access policies. In their analysis they also considered the case in which users are not blocked during the collision resolution but can access the channel. They called this channel access policy \emph{free access protocol}.


Collision-resolution algorithms require a feedback channel to operate in order to actively resolve collisions. An orthogonal research direction has been to investigate what are the performances of \ac{RA} protocols when feedback is not possible. The first insightful investigation of this scenario was done by Massey and Mathys in \cite{Massey1985_RAWithoutFeedback}. The main outcome of their analysis is the fact that, surprisingly, the symmetric capacity, \ie all users adopt the same rate, equals to $1/e$ as in the case of \ac{SA} with feedback. Even more astonishing is that this result is achieved for both time slotted as well as for the asynchronous case. In this way, the simple ALOHA without feedback achieves a symmetric capacity of $1/e$. The approach proposed by Massey and Mathys is to associate to the users different access sequences, \ie slots where the user can send packets in the shared medium. At the same time the users encode their packets via erasure correcting codes. The receiver is able to retrieve a packet if a sufficient number of codeword segments is received without collisions (\ie time slots carry a single packet transmission). Crucial is therefore, a proper access sequence selection for ensuring that enough collision free segments can be received. Independently, having access only to the abstract and the presentation of Massey of the paper \cite{Massey1982_ISIT}, Tsybakov and Likhanov \cite{Tsybakov1983_RAWF} derived the capacity region under the assumption of maximum distance separable codes. Massey's approach, although rather simple, is subject to practical issues especially when the set of users accessing the medium becomes large and varying. Hui \cite{Hui1984} considers a more practical scenario where packets are recoverable only if the collisions in which they are involved are not affecting the vital part of the packet. This is normally the header, where all the physical layer related sections are placed for carrier, phase, time acquisition for example. Under this assumption the capacity attainable is reduced to $0.295$.


Although being already known from the 50s, bandwidth expansion through spreading techniques has been extensively investigated starting from the 70s \cite{Scholtz_1977, Kahn_1978}.\footnote{A precise and comprehensive definition of spread spectrum communication can be found in \cite{Pickholtz_1982}, while a worth reading overview of this class of techniques can be found in \cite{Pursley_1987}.} A first attempt to analyse the throughput-delay characteristic of \ac{RA} spread spectrum systems can be found in \cite{Raychaudhuri_1980}. Pursley \cite{Pursley_1986}, investigates frequency hopping in a satellite network scenario. One of the main outcomes of his work is that time synchronous and asynchronous spread spectrum random access have similar throughput performance, differently from their narrowband counterpart where a factor two is present. In particular, he shows that the throughput of slotted ALOHA spread spectrum is between the lower and upper bounds of the ALOHA spread spectrum throughput performance. Furthermore, the author conjectures that the throughput of the asynchronous system falls between the one of the time slotted system and the lower bound, although no proof is presented in the paper. Motivated by this, the authors in \cite{Polydoros_1987} investigate the time slotted random access spread spectrum scenario only and derive a generic analytical model of the throughput in such networks. Finally a qualitative comparison between \ac{CDMA} and spread spectrum ALOHA is presented in \cite{Abramson1994}, where advantages and disadvantages of the two approaches are highlighted.


During the same years, a conceptual enhancement of the \ac{RA} protocols brought to the idea of \ac{CSMA}. Assuming that the terminals have the possibility to sense the channel before transmitting, the throughput and delay performance can be improved. In \cite{Kleinrock1975a} the throughput and delay analytical performance are derived for three versions of the \ac{CSMA} protocol, non-persistent, $1$-persistent and p-persistent. The first two versions of the protocol have in common the behaviour when the channel is sensed idle, \ie in both cases the terminal transmits the packet with probability $1$. When the channel is sensed busy instead, in the non-persistent \ac{CSMA}, the terminal schedules the retransmission some time later and before transmitting senses again the channel and repeats the procedure, while the 1-persistent \ac{CSMA} persists to sense the channel until it is found idle and then it sends the packet. The p-persistent \ac{CSMA} instead, transmits with probability $p$ when the channel is idle and with probability $1-p$ delays the transmission for one slot. If the channel is busy instead, the terminal keeps sensing the channel until it becomes idle. The investigation assumed line-of-sight for all terminals and that they are all within the range of each other. Relaxation of these assumptions and a deep investigation of a very fundamental problem of the carrier sensing capability, the \emph{hidden terminal} problem, has been given in the subsequent paper of the same authors \cite{Tobagi1975}. The extension of the original \ac{CSMA} to embed collision detection, \ie \ac{CSMA/CD} given by Metcalfe in \cite{Metcalfe1976}, has been adopted for Ethernet. Another extension that permits collision avoidance, \ie \ac{CSMA/CA} given by Colvin in \cite{Colvin1983}, has been adopted for the \ac{MAC} of 802.11\footnote{In \cite{Kleinrock1976_book} a very didactic derivation of the thorughput and delay performance of \ac{CSMA} has been carried out by the author.} known also as Wi-Fi.


All the works presented until now rely on the destructive collision channel model, which can be over pessimistic in some scenarios. Due to the difference in the received power caused for example by difference in relative distance between transmitters and the receiver, packets colliding might be correctly received, \ie \emph{captured}. The capture effect \cite{Roberts1975} has been investigated already by Roberts in its pioneering work on \ac{SA}. Metzener in \cite{Metzener1976} showed that dividing the transmitters into two groups, one transmitting at high power and one at low power, could turn into a double of the maximum achievable throughput. Abramson in \cite{Abramson1977} derived a closed form solution of the throughput under capture in the special case of constant traffic density. Raychaudhuri in \cite{Raychaudhuri1981} presented a modification of \ac{SA} including \ac{CDMA} and as a consequence exploiting multi-packet reception. The outcome of the work is that the performance of \ac{CDMA}-\ac{SA} is similar to the \ac{SA} scheme, but multiaccess coding provides higher capacity and more graceful degradation. Ghez and her co-authors in \cite{Ghez1988}, introduced the \emph{multi-packet reception matrix} which is a very useful representation of the physical layer. Each row of the matrix represents a possible collision size, \ie the number of colliding packets, and each entry $\epsilon_{n,k}$ represents the probability that assuming that $n$ packets are transmitted, $k$ are successfully received. Exploiting this representation, the physical layer can be decoupled from the \ac{MAC} allowing elegant representation of the throughput. Zorzi and Rao in \cite{Zorzi1994} studied the probability of capture in presence of fading and shadowing considering the near-far effect and investigated the stability sufficient conditions in terms of the users spatial distribution. A very insightful overview of cross-layer approaches and multi-user detection techniques for \ac{RA} protocols has been given by Tong in \cite{Tong2004}.


\chapter{Preliminaries}
\label{chapter2}
\thispagestyle{empty}
\epigraph{Mathematics is the art of giving the same name to different things}{Henri Poincare}

\begin{sloppypar}
In this second Chapter the main ingredients that will be relevant in the upcoming Chapters are presented. We will first describe the scenario that will serve most of the \ac{RA} paradigms. The concept of time diversity is then introduced and a set of definitions for the channel models, \ac{SIC} and decoding conditions are presented.
\end{sloppypar}

\section{The Scenario}
\label{sec:scenario}

A population of users, potentially infinitely many, and among those only some are active at the same time, is assumed. They are sharing a common communication channel and want to transmit to one receiver. The users are unable to both coordinate among each other and also to sense the channel.

This is the typical scenario for satellite communications. There in fact, the large footprint of the satellite hinders the effectiveness of channel sensing among transmitters on ground. Coordination, which is typical of orthogonal schemes, like \ac{TDMA} or \ac{FDMA}, as well as of demand assignment protocols, requires the use of a handshake mechanism between the transmitters and a central node. This is very inefficient if the data transmission is small compared to the control messages exchanged during the resources reservation, which is normally the case in messaging applications. For example, a four time handshake to send a single packet produces an overhead of $80\%$ if all packets are of the same size.

\section{Time Diversity}
\label{sec:time_diversity}

The main idea of \emph{diversity} is to counteract fading events by sending different signals carrying the same information. Fading affects these signals independently and the receiver is able to benefit from the signal(s) that are in good channel conditions, \ie not affected by deep fades \cite{TseBook_2005}.

Diversity can be achieved in several ways. Diversity over time, \ie \emph{time diversity}, may be attained via repetition coding. The same signal is repeated (or coded and interleaved) and sent through the channel by spreading the transmission over a time larger than the channel coherence time. Diversity can also be obtained over frequency, \ie \emph{frequency diversity}, if the channel is frequency selective. The same signal is sent at the same time over different frequencies. \emph{Space diversity} is also a possibility when the transmitter and/or receiver are equipped with multiple antennas that are spaced sufficiently apart.

Throughout the entire thesis \emph{time diversity} will be a recurrent concept. Anyhow, it is important to highlight here that we will consider it in a slightly different context with respect to the one of fading channels. In fact, time diversity is used to counteract the effect of interference rather then the fading of the channel. In most of the cases we will assume \ac{AWGN} with interfering packets coming from the random activity typical of ALOHA-like protocols. Two effects pushing against each other arise when time diversity is used in \ac{RA}. On the one hand, multiple replicas may increase the probability that at least one of them can be successfully decoded. On the other hand, an increase of the physical channel load leads to an increasing number of collisions. It turns out that up to a certain channel load, the former effect is predominant, leading to higher throughput, while after this critical point, the latter effect takes over deteriorating the performance compared to protocols without time diversity. The channel load up to which the former beneficial effect takes place can be greatly improved with the presence of \ac{SIC} at the receiver.\footnote{For more details on the \ac{SIC} definition see Section~\ref{sec:sic}.}

\section{Channel Models}
\label{sec:ch_models}

In this Section we define the channel models that are considered in the thesis. The first and simplest channel model is the \emph{collision channel}, typically adopted for investigation of \ac{MAC} protocols, including \ac{RA}. In a collision channel, whenever a packet is received collision free, \ie none of the received symbols are affected by interference, the packet is successfully decoded with probability $1$. Otherwise, whenever a packet is affected by interference (even by one packet symbol only) the packet cannot be decoded.

The collision channel model lacks of accuracy especially when the received power among colliding packets is very different, or the collided portion is very small compared to the packet size. A more accurate channel considers white Gaussian noise as impairment, \ie the \ac{AWGN} channel model. This model is typically adopted in satellite communication systems with fixed terminals where a very strong line of sight is present due to the presence of directive antennas. Depending on the received power and the employed channel code, packets may be lost even without suffering any interference. On the other hand, not all collisions lead to unsuccessful reception. In fact, if the interference power is sufficiently low, or the collided portion of the packet is sufficiently small, \ac{FEC} may be able to counter act it and the receiver can still correctly recover the packet. The latter effect is known as the \emph{capture effect} and has been considered first for \ac{SA} in random access research literature \cite{Roberts1975}.

A third model that takes into consideration multipath fading is described in the following. Communication over the wireless medium may be affected by reflection, distortion and attenuation due to the surrounding environment. In this context, the transmitted signal is split into multiple paths experiencing different levels of attenuation, delay and phase shift when arriving to the receiver. All these signals create interference at the receiver input which can be constructive or destructive. This phenomenon is called \emph{multipath fading}. Since a precise modeling requires to perfectly know attenuation, delay and phase shifts for all the paths, such description becomes easily impractical as the number of paths increases. An alternative is to exploit the central limit theorem assuming that these parameters can be modeled as \acs{r.v.} and resorting to a statistical model. For example, if no predominant path is present and the number of paths is large enough, the envelope of the received signal becomes a lowpass zero mean complex Gaussian process with independent real and imaginary parts. The amplitude is Rayleigh distributed while the phase is uniform in $[0,2\pi)$ and the received power follows an exponential distribution.

\begin{sloppypar}
Depending on the considered scenario, fading channels can be \emph{frequency selective} or/and \emph{time selective}. If the coherence bandwidth is smaller than the transmitted signal bandwidth occupation, then the signal suffers from independent fading on different frequency portions of the signal. In a similar way, if the coherence time is smaller than the transmitted signal duration, then the signal is subject to independent fading in consecutive portions of the transmission. In our case, we will consider only time selective fading, which is a good model for mobile radio communications \cite{Biglieri2005}. Moreover, since we are considering small packet transmissions, the coherence time of the channel is considered to be equal or greater than a packet duration or time slot. Therefore, we resort to a block fading channel model, in which a block corresponds to a time slot and an independent fading coefficient is seen by packets sent in different time slots.\footnote{In reality, some correlation between fading coefficients affecting consecutive time slots is present. Nevertheless, in our case, a given transmitter chooses to transit its replicas in consecutive time slots with probability $1/\numSlot$, where $\numSlot$ is the number of slots per frame. For sufficiently large frames, this probability is vanishing small.} In the \emph{Rayleigh block fading channel model} the received power is drawn from a \ac{PDF} of the form ${f_{\usPw}(\mathsf{p})= \frac{1}{\PwAvg} \exp \left[\frac{\mathsf{p}}{\PwAvg}\right]}$ for $\mathsf{p}\geq0$ with $\PwAvg$ being the average received power. The received power is then constant for an entire block which corresponds to a set of subsequent symbols. Block by block the received power is i.i.d..
\end{sloppypar}


\section{Successive Interference Cancellation}
\label{sec:sic}

The capture effect leads to the possibility of correctly receiving a packet even in presence of underlying interference. But once decoding is successful, the underlying transmissions are left without attempting detection or decoding and the receiver moves forward. This stems from the fact that the interference of the decoded packet is still present and in many cases is predominant over the other transmissions leading to prohibitive conditions for the decoder. Nonetheless, one could think of exploiting the retrieved information coming from the decoded packet and removing its interference contribution on the received signal. In this way, the underlying transmissions will benefit from an increased \ac{SINR} and may be possibly decoded. We can iterate decoding and interference cancellation several times, until underlying packets are discovered and the decoding is successful. This iterative procedure is commonly known as \emph{successive interference cancellation} and is proven to achieve capacity in the multiple access scenario \cite{TseBook_2005} for some specific scenarios.

Indeed, \ac{SIC} can be triggered by the presence of \ac{FEC} that yields more robust transmissions against interference. Another possibility to help \ac{SIC} is to adopt different transmission power levels for different packets so to enable the capture effect and to increase the probability of correct decoding. In uncoordinated access, as \ac{RA}, each user may decide to transmit with a power level independently sampled from a distribution (that can be continuous or discrete) equal for all users. Several works have focused on deriving the best distribution so to achieve the highest throughput \cite{Warrier_1998,Andrews_2003,Agrawal_2005}.

\section{Decoding Conditions}
\label{sec:dec_conditions}

Before going into the details, we will define some important quantities necessary for this section:
\begin{itemize}
\item $\usPw$: received power;
\item $\noisePw$: noise power;
\item $\intPw$: aggregate interference power;
\item $\rate$: transmission rate measured in $\left[\mathrm{bit/symbol}\right]$;
\item $\mutInf$: Instantaneous mutual information.
\end{itemize}
When we will not adopt the collision channel model, more sophisticated decoding conditions will be considered. In order to determine if a received packet can be successfully decoded without deploying a real decoder, we will adopt an abstraction of the decoding condition. A first option is to use the \emph{capacity based decoding condition} which is based on the Shannon-Hartley theorem \cite{Shannon_1948,Hartley_1928},
\begin{defin}[Point-to-Point Capacity-Based Decoding Condition under \ac{AWGN}]
\label{def:point_point_dec}
Assuming an \ac{AWGN} channel, a received packet can be successfully decoded iff
\begin{equation}
\label{eq:dec_cond1}
\rate<\log_2\left(1+\frac{\usPw}{\noisePw}\right).
\end{equation}
\end{defin}

There are some implicit assumptions underlying this model that we highlight: the capacity of the channel can be achieved via a Gaussian codebook and asymptotically long packets. Anyhow this definition holds for point-to-point links or for noise limited scenarios where the interference level is much below the noise power. In \ac{RA} this is normally not the case since the interference power is commonly of the same order of magnitude of the received power and cannot be neglected. Therefore we will resort to the \emph{block interference model} \cite{McEliece1984}. We consider $\numSym$ parallel Gaussian channels \cite{cover2006}, one for each packet symbol. Over these symbols different levels of interference will be observed, depending on the number of interfering packets and their power. Let us consider a very simple example, we assume an \ac{AWGN} channel and a \ac{SA} system, in which transmission are organized into slots of duration equal to the packet size. The interference power seen over each packet symbol will be constant over the entire packet. If instead we consider an ALOHA protocol, where no slots are present, the interference power over each packet symbol may change due to the random nature of packet arrivals, but nevertheless it is correlated across symbols.
\begin{defin}[Capacity-Based Decoding Condition under Block Interference]
\label{def:decoding_capacity}
Assuming an \ac{AWGN} channel and a Gaussian codebook, a received packet can be successfully decoded iff
\begin{equation}
\label{eq:dec_cond2}
\rate< \frac{1}{\numSym} \sum_{i=0}^{\numSym-1} \log_2\left(1+\frac{\usPw}{\noisePw+\intPw_i}\right)
\end{equation}
where $\intPw_i$ is the aggregate interference power over the $i$-th packet symbol.
\end{defin}

In particular, we are taking $\mutInf=\log_2\left(1+\frac{\usPw}{\noisePw+\intPw_i}\right)$, with $\mutInf$ the instantaneous mutual information for each symbol in the packet and averaging over all packet symbols. When considering time slotted systems as \ac{SA}, the interference level is constant over an entire packet duration because all transmissions are synchronized at slot level. In this way we can simplify the capacity-based decoding condition as
\begin{defin}[Capacity-Based Decoding Condition Time Slotted Systems]
\label{def:decoding_capacity_ts}
\begin{sloppypar}
Assuming an \ac{AWGN} channel and a Gaussian codebook, a received packet can be successfully decoded iff
\begin{equation}
\label{eq:dec_cond3}
\rate< \log_2\left(1+\frac{\usPw}{\noisePw+\intPw}\right).
\end{equation}
\end{sloppypar}
\end{defin}

\chapter{The Role of Interference Cancellation in Random Access Protocols}
\label{chapter3}
\thispagestyle{empty}
\ifpdf
    \graphicspath{{chapter3/Images/Figures_Tikz/}}
\fi
\epigraph{All truths are easy to understand once they are discovered; the point is to discover them}{Galileo Galilei}

In this third Chapter the benefits of interference cancellation adopted recently in \ac{RA} are highlighted. The role of interference cancellation is brought to the attention of the reader through the selection of four \ac{RA} schemes. These schemes are the representatives of four classes of \ac{RA} protocols. Their concept and behaviour as well as the design criteria are presented. At the end of the Chapter, we present some applicable scenarios of interest. The Chapter is concluded with a selection of open questions in the area.

\acresetall


\section{Recent Random Access Protocols}

In the early 2000s renewed interest in \ac{RA} driven both by new application scenarios and by the exploitation of advanced signal processing techniques led to an exciting number of new \ac{RA} protocols. In this Section, we review in detail a selection of the most promising ones. Hints and reference to modifications or enhancements of the basic protocols are given to the interested readers. Four classes of protocols are considered:
\begin{enumerate}
\item{slot-synchronous;}
\item{asynchronous (including spread-spectrum protocols);}
\item{tree-splitting algorithms;}
\item{without feedback;}
\end{enumerate}

\subsection{Slot Synchronous Random Access}

The first class of schemes is the slot-synchronous, with \ac{SA} being the pioneer and original root of these protocols. In \ac{SA}, if the number of users accessing the common medium is small, we are in the case of low channel load, the probability that more than one packet is transmitted concurrently in a slot is quite limited.

Exploiting this observation and letting the users transmit multiple times the same packet, increases the probability of receiving correctly at least one of them. Choudhury and Rappaport have been the first observing this feature \cite{Choudhury1983} in the \ac{DSA} protocol. Unfortunately, the replication of packets is beneficial only for low channel load, because as soon as the number of users accessing the medium increases, the collision probability increases as well and the replication of packets is only detrimental. Even more, the replication of packets increases the physical channel load, lowering the throughput compared to the simple \ac{SA} under the same logical load conditions.

Choudry and Rappaport in \cite{Choudhury1983} considered not only time-diversity in \ac{SA} but also fre\-quen\-cy-diversity. In this second variant of the scheme, users are allowed to transmit their packets in multiple frequencies at the same time. Although being conceptually the dual of the time-diversity scheme, this second variant poses some practical requirements on the terminals increasing their complexity and reducing the appeal. If users are allowed to transmit in multiple-frequencies concurrently, more than one transmission chain is required.

The replication of packets alone is beneficial only at low channel loads and does not bring any improvement in terms of maximum achievable throughput. Relevant improvements come when is coupled with more advanced receivers that exploit iterative algorithms. After almost $25$ years from \cite{Choudhury1983}, Casini and his co-authors come out with a very attractive modification of the \ac{DSA} protocol, the \ac{CRDSA} scheme \cite{Casini2007}, that adds at the receiver interference cancellation. Transmissions are organized into frames, where users are allowed to transmit only once. The users replicate their packets two (or more) times and place the replicas in slots selected uniformly at random, providing in all replicas the information on the selected slots. At the receiver, \ac{SIC} exploits the presence of multiple replicas per user for clearing up collisions. Every time a packet is decoded, \ac{SIC} reconstructs the waveform and subtracts it from all the slot locations selected for transmission by the corresponding user, possibly removing the interference contribution with respect to other packets. The performance evaluations in \cite{Casini2007} have shown that the maximum throughput of \ac{CRDSA} can be impressively extended from $\tp\cong 0.36$ (the peak throughput of \ac{SA} measured in average number of successful transmissions per transmission period \cite{Kleinrock1976_book} or packets per slot\footnote{Following the definition of \cite{Kleinrock1976_book} we assume that a transmission period is equal to $\pkLen$ seconds, which coincides with the physical layer packet duration and also coincides with the slot duration. Therefore, for slotted protocols the throughput can be measured also in packets/slot.}), up to $\tp\cong 0.55$. Further throughput improvements can be achieved when, 1) more than two replicas per user and per frame are sent, 2) difference in received power due to induced power unbalance or fading and capture effect are considered \cite{Herrero2014}. The stability of \ac{CRDSA} has been investigated in \cite{Kissling2011}, while more recently an analytical framework for slotted \ac{RA} protocols embracing \ac{SA}, \ac{DSA} and \ac{CRDSA} has been presented in \cite{Herrero2014}.

We are going to explain in detail the operation of \ac{CRDSA} from both the transmitter and receiver perspectives in the following subsections.

\subsubsection{CRDSA - Transmitter Side}

We assume that $m$ users share the medium and are synchronized to a common clock that determines the start of each slot. Transmission of packets can start only at the beginning of a slot. A group of $n$ slots is called frame and each user has the possibility to transmit at most once per frame. The single transmission per \ac{MAC} frame may correspond to a new packet or to the retransmission of a previously lost one. The channel traffic is $\load = m/n$ and is measured in packets per slot, \ie [pk/slot].

Each user is allowed to access the medium $\dg$ times for each frame, with $\dg$ called \emph{degree}. Since only a single transmission is allowed, the $\dg$ accesses are all performed with the same identical packet. The physical layer packets are called also \emph{replicas} and the $\dg$ unique slots where the replicas are sent are selected uniformly at random. Each replica contains information for localizing all the $\dg$ replicas, \ie the slot number of all replicas is stored in the header.\footnote{The seed of a pseudo-random sequence is normally stored instead of the slots number, in order to decrease the overhead impact of the replicas location information. At the transmitter a pseudo-random algorithm is run for choosing the slots for transmission. At the receiver, the seed is inserted in the same pseudo-random algorithm for retrieving the selected slots.} This information is used at the receiver to remove the interference contribution of a correctly decoded packet from all the $\dg$ slots selected for transmission, possibly resolving collisions. Please note that the physical layer traffic injected by the users is $\load \cdot \dg$ but it considers multiple times the same information, since the $\dg$ replicas of each user carry all the same content. In this way, $\load$ is the measure of the information handled by the scheme.

\subsubsection{CRDSA - Receiver Side}

After the receiver stores an incoming \ac{MAC} frame, the \ac{SIC} iterative procedure starts. Under the collision channel model, only the replicas not experiencing collisions in their slot can be decoded. Every time a replica is correctly decoded, the location of its $\dg-1$ copies is retrieved from the header. Assuming perfect knowledge of frequency, timing and carrier offsets for all replicas, their interference contribution can be perfectly removed in all the $\dg$ slots via interference cancellation. The procedure can then be iterated possibly cleaning further collisions. In order to give a deeper insight on the \ac{SIC} process of \ac{CRDSA} we make the use of the exemplary \ac{MAC} frame of Figure~\ref{fig:CRDSA_frame}. There, it is important to mention that the packets are placed in different levels only for simplifying the visualization, but they are assumed to be sent in the same frequency and therefore they may overlap in time. The SIC procedure starts from the first replica free from interference which is replica $C_2$. Once it is successfully decoded its twin replica, packet $C_1$ can be removed from the frame. In this way, replica $B_2$ can be correctly decoded and its twin is also removed from the frame. Similarly, replica $D_2$ can be correctly decoded, since it has not been subject to collisions and therefore its twin $D_1$ can be also removed from the frame, releasing replica $E_1$ from the collision. Finally also replica $A_2$ can be decoded since its collision with replica $E_2$ has been released. In this specific scenario, all transmitted packets could be decoded, but there are collisions configurations that \ac{SIC} is not able to resolve \cite{Herrero2014}. These configurations lead to an error floor in the packet loss probability at low load, as it has been highlighted in \cite{Herrero2014} and \cite{Ivanov2015}.

\begin{figure}
\centering
\includegraphics[width=\textwidth]{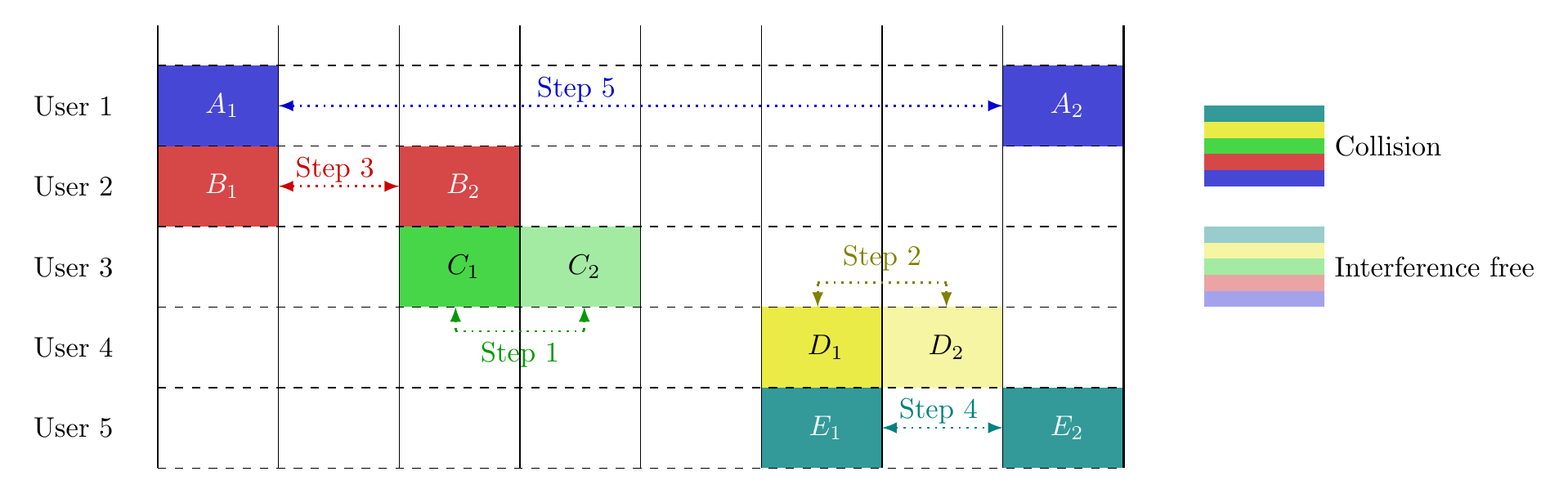}
\caption{Example of a received \ac{MAC} frame for \ac{CRDSA} and the \ac{SIC} iterative procedure.}
\label{fig:CRDSA_frame}
\end{figure}

\subsubsection{CRDSA Evolutions and Some Results}


The exploitation of interference cancellation pioneered by \ac{CRDSA} and its outstanding performance compared to \ac{SA} and \ac{DSA}, lead to a new wave of interest for \ac{RA} protocols in the research community. Among the evolutions of \ac{CRDSA}, we would like to mention three of them, \ie \ac{IRSA} \cite{Liva2011}, \ac{CSA} \cite{Paolini2011,Paolini2015} and Frameless ALOHA \cite{Stefanovic2013_J}. In \ac{IRSA} the link between \ac{SIC} and iterative decoding on bipartite graphs has been for the first time established. This intuition directly leads to the exploitation of tools used typically for the analysis and optimisation of iteratively-decodable codes to the optimisation of \ac{CRDSA}. The similitude holds replacing the check nodes with time slots and variable nodes with the users. In this context, each replica is represented as an edge between a variable node and a check node. The observation that irregular bipartite graph constructions (variable number of edges emanating from the variable nodes) lead to better iterative decoding threshold with respect to regular graphs (fixed number of edges emanating from the variable nodes) \cite{Liva2011}, suggests that allowing a variable number of replicas per user improves the performance of \ac{CRDSA}. Moreover, the probability distribution of the degree can be optimised in order to achieve the maximum throughput.\footnote{Although the argument of the optimisation holds for asymptotically large frames, simulation results have proven that good degree distributions in the asymptotic domain perform also well in the finite frame scenario.} Liva \cite{Liva2011} used density evolution for the optimisation and showed a consistent improvement in the throughput performance (as can be observed in Figure~\ref{fig:T_SL_Comp}). A further evolution of \ac{CRDSA} is \ac{CSA} \cite{Paolini2011,Paolini2015} where the replicas of each user are not simple repetition of the same information, but are instead an encoded sequence that is defined independently by each user. By means of density evolution, the degree distribution is optimised in order to obtain the maximum throughput. Recently, it has been demonstrated that a proper design of the degree probability mass function in \ac{IRSA} and \ac{CSA} achieves $1$ pk/slot throughput \cite{Narayanan2012}, which is the maximum achievable under the collision channel.

Along a similar line of research, the work of Stefanovic and his co-authors in \cite{Stefanovic2013_J}, investigates the behaviour of \ac{RA} protocols with repetitions where the frame dimension is not set a-priori but it is dynamically adapted for maximising the throughput. The scheme resembles to rateless codes (called also fountain codes) \cite{Luby1998}, a forward error correction construction where no code rate is set a-priori and the sources keeps sending encoded symbols until the message is correctly received. In a similar way, an increasing number of users sharing the channel are allowed to send their data, until a target throughput is reached. At this point the frame is stopped and unresolved users are notified.

A first attempt to allow a higher degree of freedom to the user has been done in \cite{Meloni2012}. There, the frame synchronization among user is released and the first replica of a user can be transmitted in the immediate next time slot, without the need to wait for the new frame to start. The remaining $\dg - 1$ replicas are sent in slots selected uniformly at random within a maximum number from the first replica. The receiver adopts a sliding window which comprises a set of time slots over which \ac{SIC} is performed. Upon correct reception of all replicas in the decoding window or when a maximum number of \ac{SIC} iterations is reached, the window moves forward and \ac{SIC} starts again. Similarly, the authors in \cite{Sandgren2016} extended the concept also for the more general \ac{CSA}. Substantial analytical work is presented in their paper where a packet error rate bound tight for the low channel loads is derived. Very interestingly, they show that not only the delay but also the packet error rate can be reduced.

When the capture effect is considered, not all collisions are destructive for the packets and eventually the one received with the highest power can be recovered, triggering the \ac{SIC} process. In this regard, many investigations have shown that power unbalance between received signal boosts \ac{SIC} performance \cite{Andrews2005} and therefore, even higher throughput performance can be achieved \cite{Herrero2014}.

\begin{figure}
\centering
\includegraphics[width=0.8\textwidth]{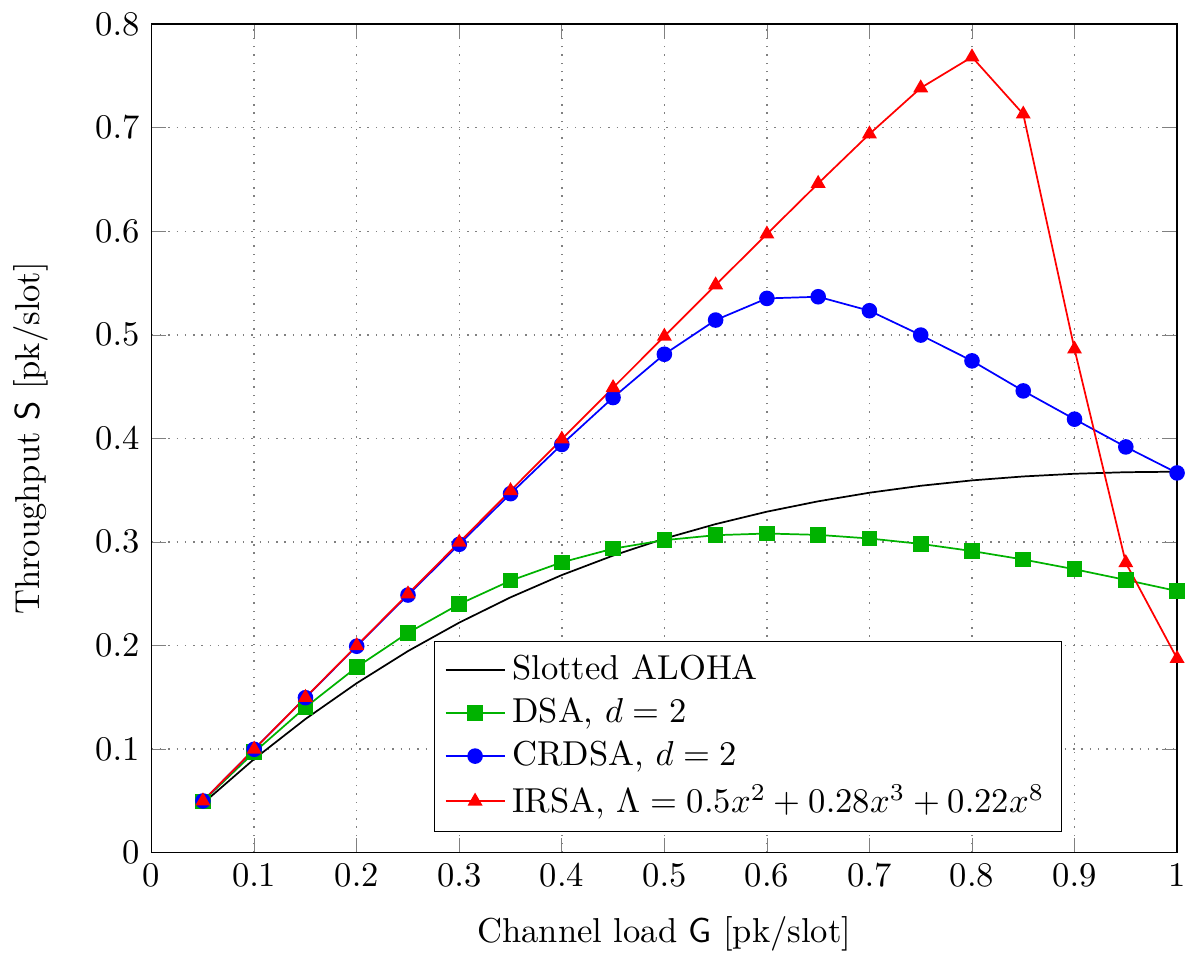}
\caption{Throughput comparison of \ac{SA}, \ac{DSA}, \ac{CRDSA} and \ac{IRSA} under the collision channel. Both \ac{DSA} and \ac{CRDSA} send two copies for each transmitted packet and \ac{CRDSA} employs \ac{SIC} at the receiver side. In \ac{IRSA} each user picks a degree $\dg$ following the \ac{PMF} $\Lambda$. Specifically, with probability $1/2$ two copies will be sent, with probability $0.28$ three and with probability $0.22$ eight. The use of variable number of replicas per user greatly improves the throughput compared to \ac{CRDSA} although for very high channel load values, the degradation is more severe.}
\label{fig:T_SL_Comp}
\end{figure}

In \cite{Gamb_Schlegel2013} it has been shown that joint decoding of the collided packets can be attempted, resorting to \ac{MUD} techniques. Further evolutions of \ac{RA} include the extension to multiple receiver scenarios \cite{Munari2013,Jakovetic2015}, to all-to-all broadcast transmission \cite{Ivanonv2015} and to combining techniques \cite{Bui2015}.

\subsection{Asynchronous Random Access}
\begin{sloppypar}
Although ALOHA has been originally designed for asynchronous transmissions by Abramson \cite{Abramson1970}, research has been focusing mainly on its time slotted enhancement due to its better performance (and also to its simpler mathematical tractability). Indeed, also in recent years, most of the research has been focusing on the time slotted scenario. Nevertheless, the concept introduced in \ac{CRDSA} has been brought also to the asynchronous scenario firstly in \cite{Kissling2011a} with the \ac{CRA} scheme. The time slots boundaries are abandoned there, but the frame structure is maintained. The decoding procedure takes the advantage of multiple replicas and exploits \ac{SIC} as in \ac{CRDSA}. Trying to abandon any synchronization requirement leads to the \ac{ACRDA} scheme \cite{deGaudenzi2014_ACRDA}, where the decoder employs a window based procedure where \ac{SIC} is performed.
\end{sloppypar}

\begin{sloppypar}
A similar approach can be followed also when considering asynchronous spread spectrum random access as in \cite{delRioHerrero_2012}, with the \ac{E-SSA}. Prior transmission on the channel, each packet symbol is multiplied with a spreading sequence, possibly different user by user. It is important to stress that here no replicas are used, \ie each terminal sends only one packet per transmission. At the receiver side, \ac{SIC} is employed for removing interference once packets are correctly decoded. In this way the throughput performance can be drastically improved \cite{Gallinaro_2014}. Thanks to its remarkable performance, \ac{E-SSA} has been selected by \ac{ETSI} as one of the two options for the return link \ac{MAC} layer of the \ac{S-MIM} standard \cite{Scalise_2013}. The aim of \ac{S-MIM} is to standardise messaging services over S-band via \ac{GEO} satellites.
\end{sloppypar}

Being a very interesting solution, the extension of \ac{CRDSA} to asynchronous schemes has minor differences to its original. For this reason, we will focus here on a slightly different issue that is typical in wireless \ac{MAC} and also in \ac{RA}: the hidden terminal problem and a new way of combating it employing \ac{IC}. Let us assume that the terminals are employing \ac{CSMA/CA} as medium access as in IEEE 802.11 standard and are able to sense the channel. Unfortunately, if two terminals that are not in reciprocal radio reception range want to communicate with a third node as in Figure~\ref{fig:Hidden}, there is a probability that the two terminal transmissions collide at the receiver resulting in the hidden terminal problem.
\begin{figure}
\centering
\includegraphics[width=0.4\textwidth]{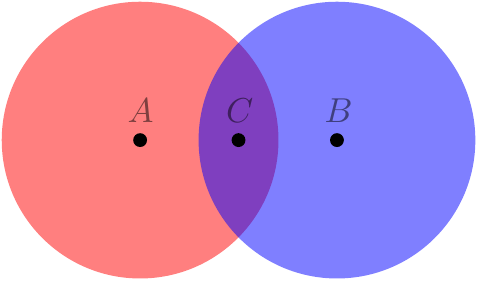}
\caption{Hidden terminal scenario. Node $A$ and node $B$ want to communicate to node $C$ and they are not able to sense each other since they are out of the reception range.}
\label{fig:Hidden}
\end{figure}
In order to overcome possible collisions, the \ac{RTS/CTS} procedure is suggested by the standard, but it is disabled by default from the access points manufacturers due to its significant impact on the overall throughput \cite{ZigZag}. A new way of counteracting the hidden terminal problem without the need of \ac{RTS/CTS} has been firstly proposed by Gollakota and Katabi in \cite{ZigZag} with the \emph{ZigZag decoding}. The core idea is to exploit asynchrony between successive collisions. Terminals that collide once, are with high probability susceptible to collide again in the retransmission phases. On the other hand, successive collisions are likely to have different interference-free stretches at their start, allowing ZigZag decoding to operate. In the following subsection we will describe the ZigZag decoder operations.

\subsubsection{ZigZag Decoder}

We make the use of an example in order to describe how the ZigZag decoder operates. Let us assume that we are in the hidden terminal scenario of Figure~\ref{fig:Hidden}, where the two transmitters are denoted as node $A$ and node $B$ respectively. The two nodes want to communicate to node $C$ using \ac{CSMA/CA} but they are not able to sense each other because they are out of reciprocal reception range. We further assume that nodes $A$ and $B$ transmit simultaneously to node $C$ causing a collision. In order to resolve the collision a second transmission attempt is made by both nodes which is likely to collide again since the sensing phase is useless. We further observe that the random jitter in the two collisions will be different, \ie the relative times offset between the two packets in the two collisions are different $\off_1\neq\off_2$, as shown in Figure~\ref{fig:zigzag}. If node $C$ is able to compute $\off_1$ and $\off_2$, it can identify sections of the packets being interfered in only one of the two collisions, such as $s_1$. These sections can be then used to trigger the ZigZag decoder \ac{IC} operation to resolve the collisions. In fact, the packet section $s_1$ can be demodulated by node $C$ using a standard demodulator. Node $C$ is able to remove the interference contribution of $s_1$ from the second collision freeing from interference section $s_2$. In this way, $s_2$ can also be decoded by a standard demodulator and \ac{IC} can now operate on the first collision. Letting the \ac{IC} iterate between the two collisions allow to demodulate both collided packets from nodes $A$ and $B$, and finally, when the entire packets are successfully retrieved, decoded.
\begin{figure}
\centering
\includegraphics[width=0.8\textwidth]{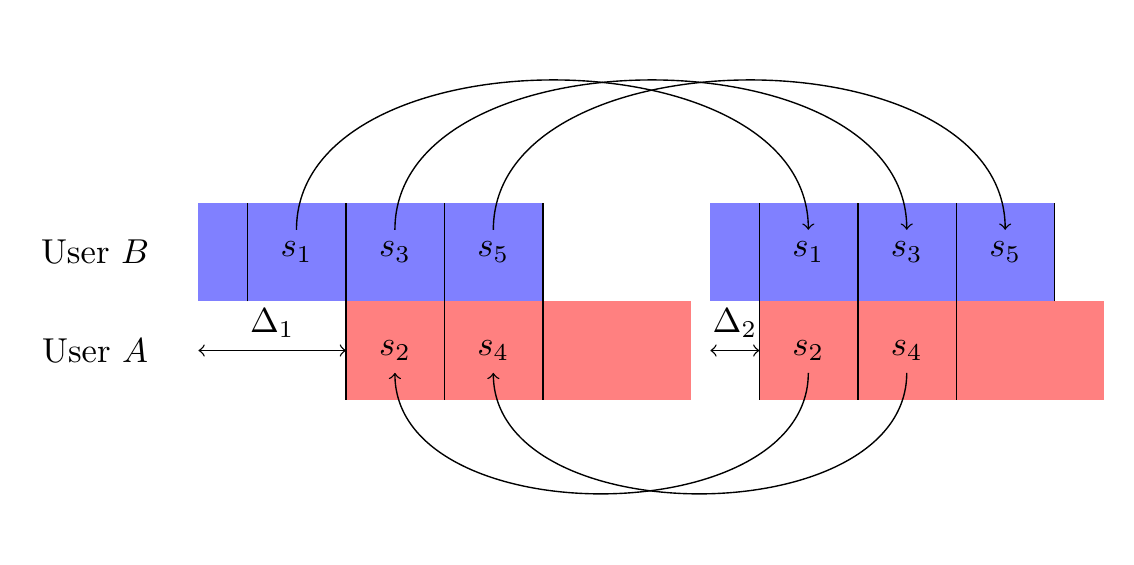}
\caption{ZigZag decoding procedure. The interference free portion $s_1$ of user $B$ first transmission is removed from the second transmission and reveals $s_2$, \ie the first portion of user $A$'s transmission. This portion is then used back in the first collision to remove interference and revealing portion $s_3$. Proceeding iteratively between the two collisions portion by portion, the packets can be successfully decoded.}
\label{fig:zigzag}
\end{figure}

In general, once the ZigZag decoder detects a packet, it tries to decode it with a common decoder, assuming that no collision has occurred. If the decoding is not successful, the ZigZag receiver seeks for possible collisions. Since every packet of IEEE 802.11 starts with a known preamble equal for all packets and transmitters, correlation can be performed over the received packet symbols in order to identify the start of other possible packets. Correlation can be subject to significant losses if the frequency offset between the receiver and the transmitter is not compensated. To counteract this effect, the ZigZag receiver keeps a coarse frequency offset estimate of all active terminals. Once a collision is detected and the relative time offset $\off$ is estimated, the receiver looks for other \emph{matching collisions}, \ie other collisions involving the same packets. To do so, the ZigZag receiver is required to store recent unmatched collisions, specifically the received complex samples. Matching the detected collision with the stored ones implies using once more correlation. It is important to underline that in this case the correlation operates on the entire packets, instead of only on the preamble, increasing the probability of correctness in the correlation search. The received packet symbols are correlated with all the unmatched collision stored and the highest correlation peak can be associated to the matching collision. Since a matching collision is found and the two relative offsets $\off$ and $\off'$ are now known, the ZigZag receiver is able to isolate interference free symbols and start the demodulation process. The demodulator will then iteratively employ \ac{IC} to succeedingly free from interference received packet sections as described beforehand (see also Figure~\ref{fig:zigzag}). The decoder used in ZigZag can be any standard, since it operates on packet that are freed from interference. While the ZigZag re-modulator is the novel part added to the receiver and it is responsible
\begin{enumerate}
\item to use the pre-knowledge of a packet section in order to estimate system parameters on the second packet sent by the same transmitter;
\item to remove its interference contribution.
\end{enumerate}
Let us focus first on the system parameters estimation, which involves the estimation of the channel coefficient, the frequency offset, the sampling offset and the inter-symbol interference.

The channel coefficient is found using the correlation with the known preamble and inverting the equation of the correlation (see \cite{ZigZag} for more details). A coarse frequency offset estimate is kept in memory for each active terminal by the receiver, but is not precise enough. A fine frequency offset estimate is obtained, with the comparison of the packets sections before and after \ac{IC}. In a similar manner, the sampling offset can also be tracked. Finally the inter-symbol interference is removed, applying to the inverse filter introduced by the standard decoder before re-encoding is applied. It has to be underlined that for this operation the filter applied by the decoder must be known. The ZigZag receiver operations aforementioned are also summarised via flow diagram in Figure~\ref{fig:zigzag_fc}. Additionally, ZigZag decoding does not require to be started only at the beginning or at the end of a packet, but a parallel start on both sides is also possible.
\begin{figure}
\centering
\includegraphics[width=0.8\textwidth]{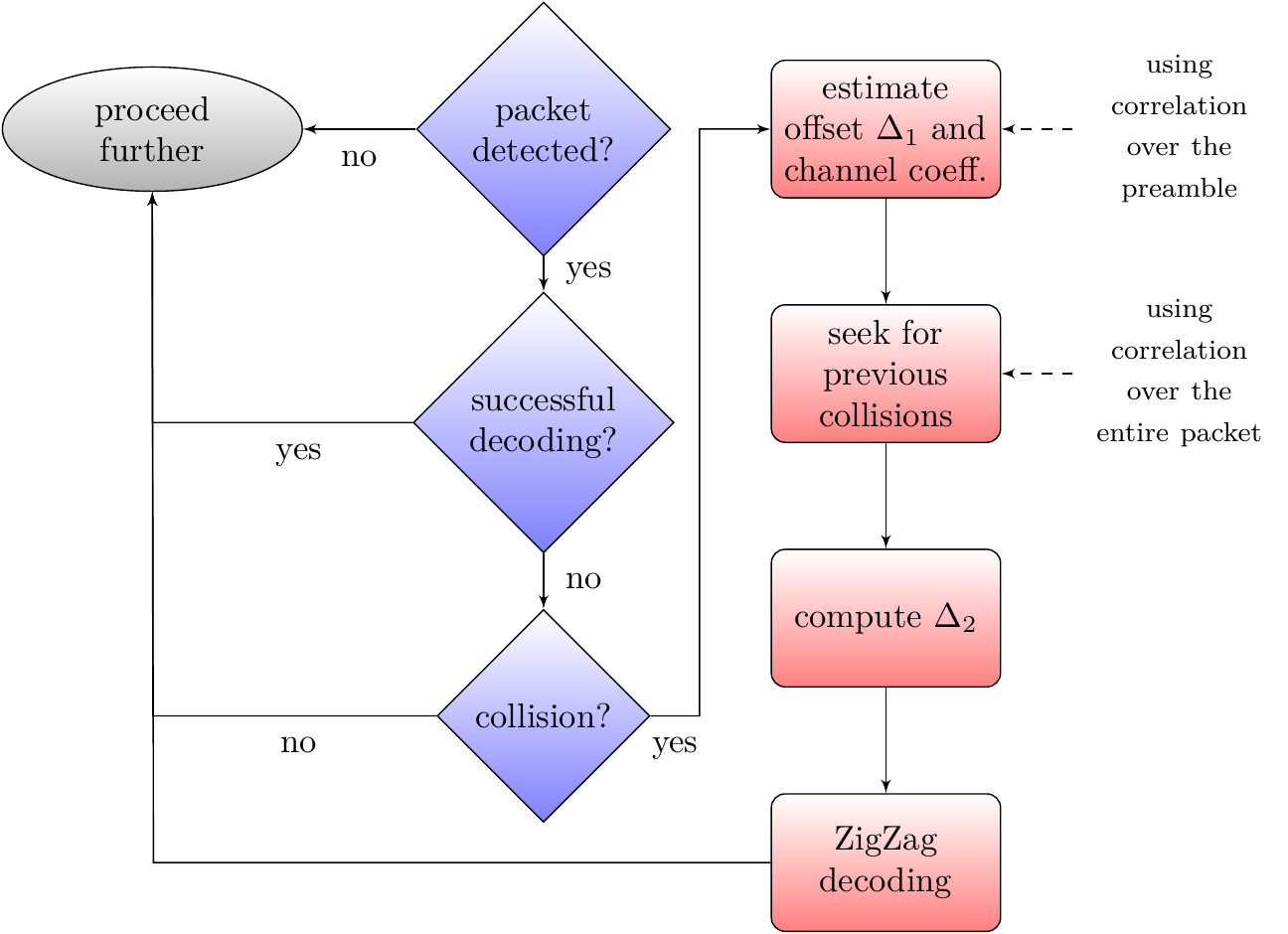}
\caption{ZigZag receiver operations depicted via functional flow chart.}
\label{fig:zigzag_fc}
\end{figure}


Although viable, ZigZag decoding comes with some practical challenges. In the first place, the correlation with the single preamble involves many different values of frequency offsets, increasing the complexity of the correlator. Storing of unmatched collisions may be limited by the receivers storage capability, especially in large networks. There are also some algorithmic limitations in ZigZag. In case of identical collision patterns among users, when the same random time for retransmission is chosen by the terminals, ZigZag is not able to recover the packets, regardless the \ac{SNR}.\footnote{Unless the \ac{SNR}s of the users is so different that the capture effect can be exploited.} Finally, ZigZag propagates errors due to its hard decision in the sequential decoding.

\subsubsection{Extension of ZigZag, The SigSag Decoder}

\begin{sloppypar}
An extension of ZigZag, called SigSag decoding, has been proposed in \cite{SigSag}. The authors define a procedure where soft-information of each symbol is exploited. In fact, ZigZag decoding can be seen as a special instance of belief propagation where only the back-substitution is applied. This corresponds to belief propagation in the high-\ac{SNR} regime. SigSag decoding uses an iterative soft message passing algorithm running over the factor graph of the linear equations that represent packet collisions. In particular, a factor graph representation of the collisions is associated to the consecutive collisions, as we can see from Figure~\ref{fig:sigsag}. There two variations of SigSag can be applied, the sum-product algorithm or the max-product algorithm \cite{Kschischang_2001, Richardson_2001}. The former aims at minimising the bit-error rate via computing the marginals of each bit. The latter aims at minimising the block-error rate computing jointly the most likely codeword.
\begin{figure}
\centering
\includegraphics[width=0.8\textwidth]{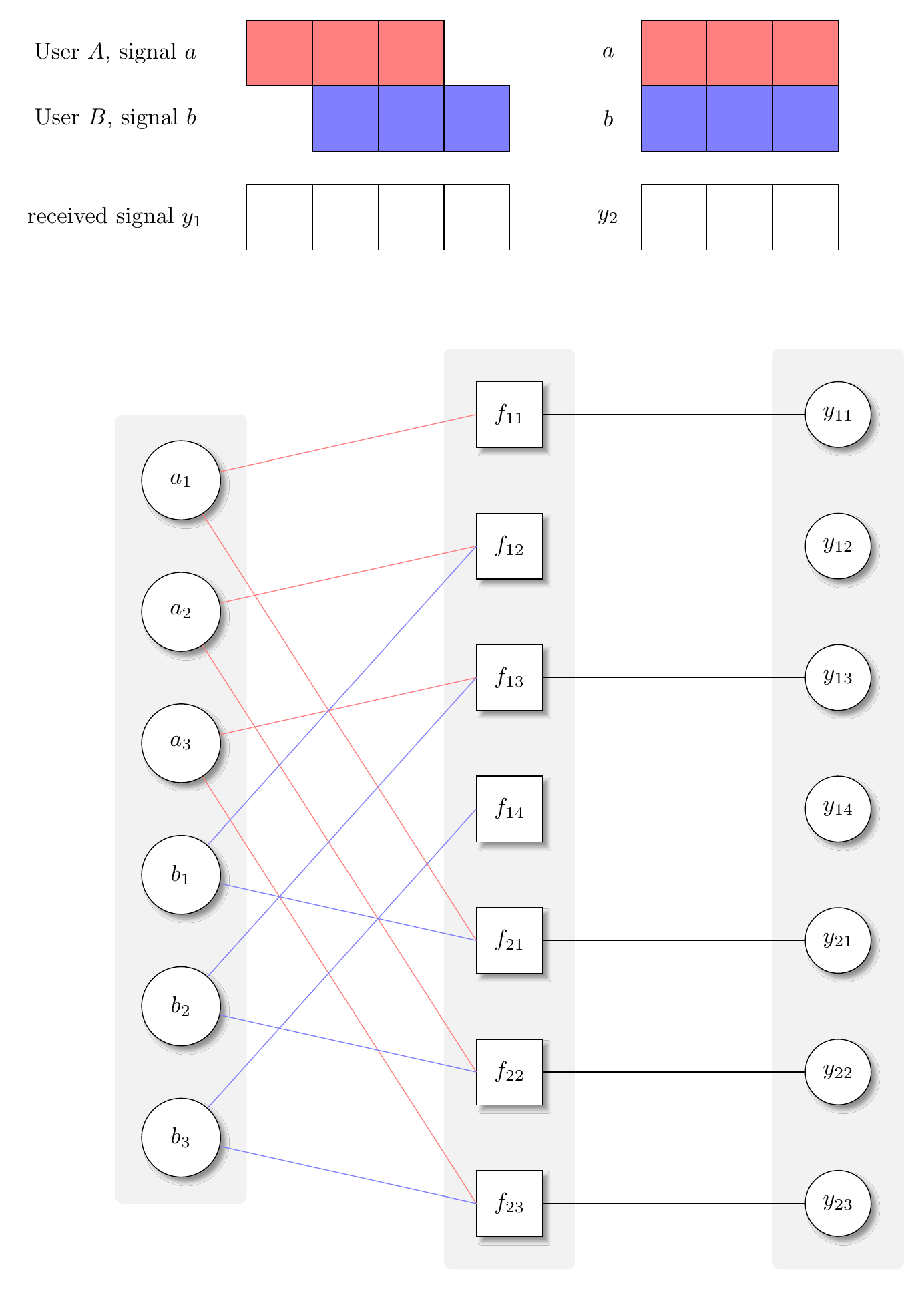}
\caption{Two consecutive collisions of two packets $a$ and $b$ composed by $3$ symbols each. Both packets are transmitted twice and the receiver stores $y_1$ and $y_2$. The corresponding factor graph of the two collisions shows how the symbols of individual packets are connected to the received symbols. The check nodes denoted with $f_{1x}$ and $f_{2x}$ are responsible for computing the \ac{PDF} or log-lokelihood ratios of received bits, given the noisy observations $y_1x$ and $y_2x$. The two options correspond to the sum-product or max-product algorithms respectively.}
\label{fig:sigsag}
\end{figure}
\end{sloppypar}

SigSag counteracts succesfully the error propagation introduced by ZigZag, and is able to profit from larger packet sizes. This effect comes from the message passing algorithm that profits from larger graphs. Nevertheless, SigSag is not able to resolve collisions where the relative distance between the users is the same, as in ZigZag. On the other hand, for the two-user cases and with at least one bit difference in the collision pattern, the authors in \cite{SigSag} have shown that SigSag is optimal, so it performs as \ac{ML} decoding.

\subsection{Tree-Splitting Algorithms}

Carefully looking at the tree-splitting collision resolution algorithm, Yu and Giannakis \cite{Giannakis_2007} observed that \ac{SIC} can be particularly useful in order to reduce the \ac{CRI}, \ie the protocol phase in which collisions among multiple transmitters are resolved through an algorithm.

Let us start with the system model assumptions. In the following, an infinite population of users sharing a common channel and transmitting to a single receiver is considered. Time is slotted and slots have a duration equal to the physical layer packets. Poisson arrivals with a total rate $\arrRate$ distributed over the users is assumed. The collision channel model is adopted and instantaneous $0/k/e$ feedback is provided to the users. At the end of each slot the users are informed errorlessly about the outcome of the last slot. Idle $(0)$ is returned when no transmission took place. The number of correctly decoded packets $(k)$ is returned in case of successful decoding (more than one packet per slot can be decoded when \ac{SIC} is employed, as we will see later). Finally, if none of the previous two cases happened, an erroneous reception $(e)$ due to a collision appears in the last slot.

Tree-splitting \ac{RA} protocols entails two modes, the channel access mode and the collision resolution mode which is entered when a collision during the channel access mode happened.

\subsubsection{Channel Access Modes}

The channel access mode determines when new packets can join the system. There are three main channel access modes:
\begin{enumerate}
\item \textbf{Gated Access:} New packets can join the system only once all previous conflicts are resolved. If the system is operating in the collision resolution mode, all arriving packets are buffered until the collision is fully resolved (all packets involved are correctly decoded). From the slot after the end of the collision resolution, packets are allowed to join the system again. The time span between the beginning and the end of the collision resolution is commonly called \ac{CRI}. This has been the first channel access mode introduced by Capetanakis \cite{Capetanakis1979} and Tsybakov and Mikhailov \cite{Tsybakov1978}.
\item \textbf{Window Access:} Once the \ac{CRI} is finished, only the packets arrived in a specific time interval, called window, are allowed to join the new \ac{CRI}. The access mode determines the next window size based on a time counter and a maximum window size. The former, denoted with $\finWin$, measures the time elapsed between the end of the last window and the end of the \ac{CRI}, while the latter is denoted with $\maxWinSize$. The next window size is then selected adopting the rule: $\min\{\maxWinSize,\finWin\}$.
\item \textbf{Free Access:} Every time a packet is generated, it is transmitted at the beginning of the upcoming time slot, without any delay.
\end{enumerate}

The authors of \cite{Giannakis_2007} focused on the first two access modes due to their better performance and ease of implementation.

\subsubsection{Collision Resolution Modes}

There are two standard approaches for the collision resolution modes, the so-called \emph{standard tree algorithm} and the \emph{Massey's modified tree algorithm}. We review these two modes through an example, shown in Figure~\ref{fig:sta_mta}.\footnote{A binary tree is used for simplicity, but extensions to higher order trees is straightforward.} Every time the receiver feeds back a collision, each user involved in the collision tosses a two-sided coin. With probability $\prob$ the user joins the subset on the right while with probability $1-\prob$ it joins the subset on the left. The subset on the right is then allowed to transmit first, while the left one is forced to wait until all users in the subset on the right are correctly decoded. The collision resolution ends when all users are correctly received, \ie the receiver feeds back idle or success.

\begin{figure}
\centering
\subfigure[Standard tree algorithm example.]{
    \includegraphics[width=0.4\textwidth]{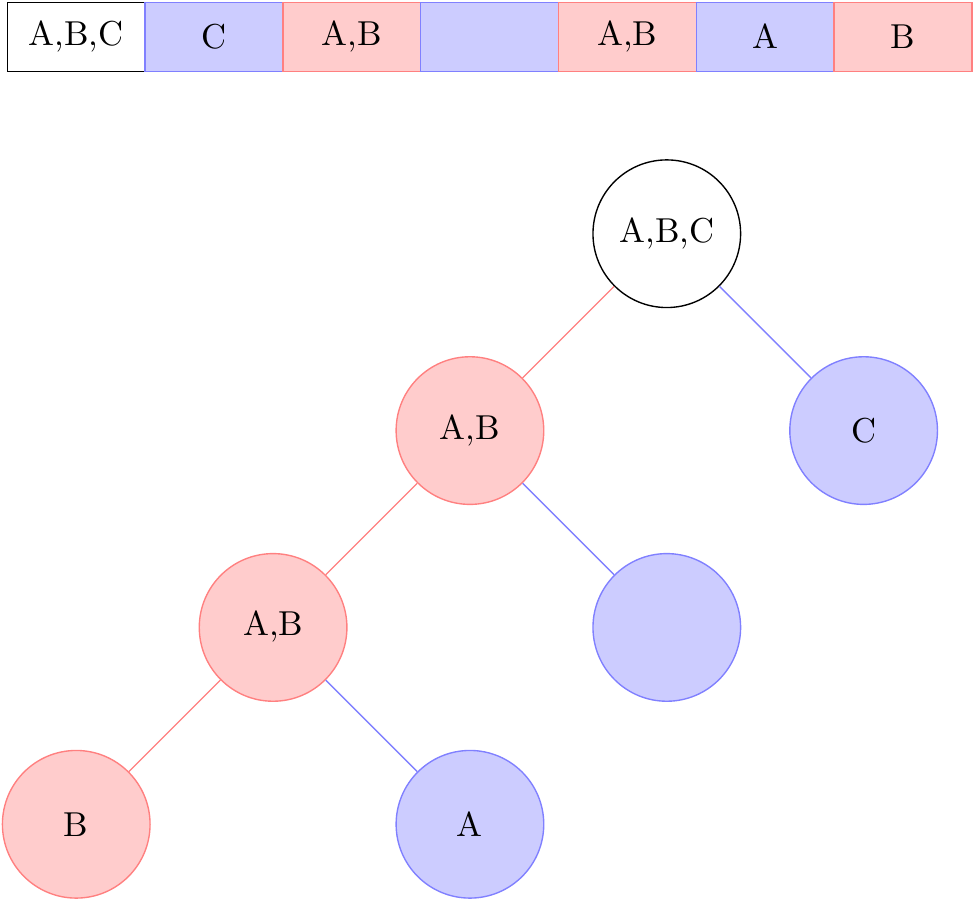}
    \label{fig:sta}
}\hspace{0 mm}
\subfigure[Massey's modified tree algorithm for the example of Figure~\ref{fig:sta}. A collision followed by an idle slot deterministically implies a further collision. Jumping directly to the next level of the tree helps in reducing the \ac{CRI} duration.]{
    \includegraphics[width=0.3858\textwidth]{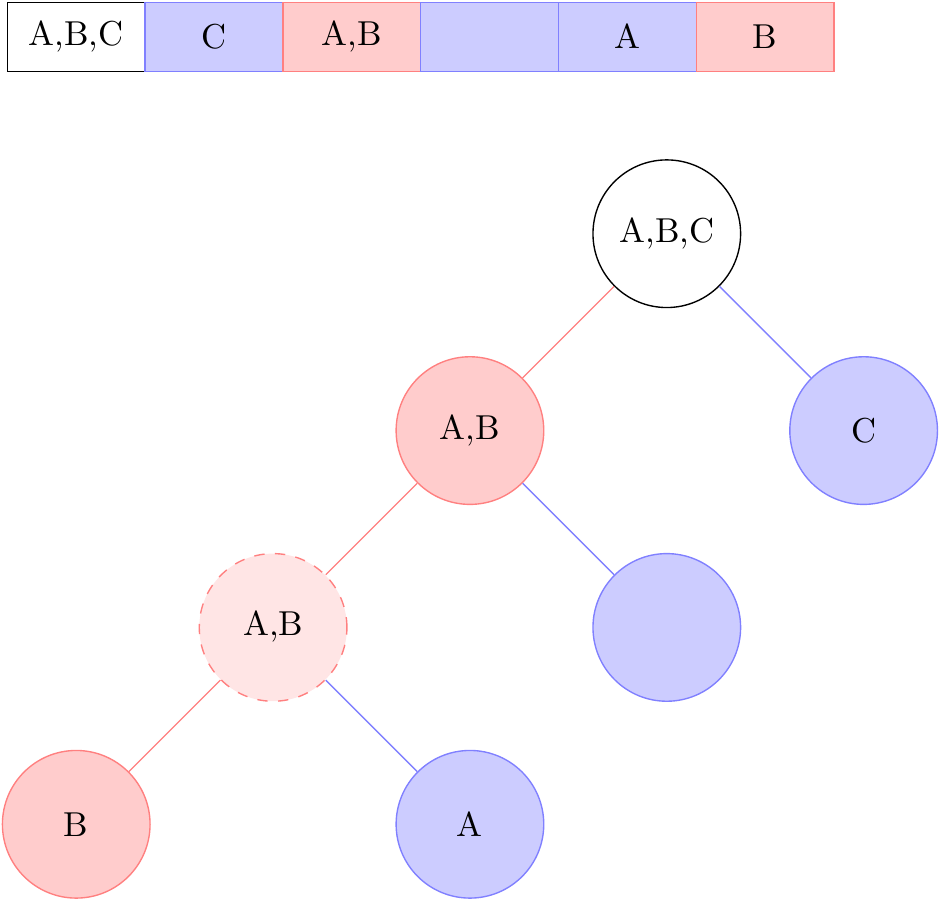}
    \label{fig:mta}
    }
\caption{Standard and modified tree algorithms for the same collision example.}
\label{fig:sta_mta}
\end{figure}

An improvement of the standard tree algorithm is achieved thanks to Massey's modification \cite{Massey1981}. Observing the example in Figure~\ref{fig:sta} the collision in the third slot is followed by an idle slot which foresees an inevitable collision in the fifth slot. Instead of allowing the deterministic collision in slot $5$, it is possible to skip this level of the tree and let the users toss the coin again. We can observe that in this example, we reduce the \ac{CRI} from $7$ slots of the standard tree algorithm to $6$ slots in the modified tree algorithm. Denoting with $\CRIL_m$ the \ac{CRI} duration in number of time slots, given that $m$ packets collide, for the standard tree algorithm it holds
\begin{equation}
\label{eq:cri_sta}
\CRIL_m=1+\CRIL_i+\CRIL_{m-i},\qquad m\geq2
\end{equation}
where $i$ denotes the number of users joining the right subset. From literature, it is known that for the gated access channel access mode fair splitting ($\prob=0.5$) is optimal, and the maximum stable throughput is $0.347$ \cite{Massey1981}. When the window access is considered, the maximum stable throughput can be extended to $0.429$ \cite{Massey1981}. When the Massey's modified tree algorithm is considered, the \ac{CRI} duration becomes
\begin{equation}
\label{eq:cri_mta}
\CRIL_m=
\begin{cases}
1+\CRIL_i+\CRIL_{m-i},\qquad \text{if}\quad 1\geq i\geq m\\
\CRIL_0+\CRIL_{m},\qquad\quad\enspace\, \text{if}\quad i=0
\end{cases}
\quad m\geq2.
\end{equation}
The modified tree algorithm reaches a maximum stable throughput of $0.375$ \cite{Massey1981} for the gated access channel access mode with fair splitting. Nevertheless the optimal splitting probability is $\prob=0.582$ which extends the stable throughput up to $0.381$ \cite{Mathys_1985}. When the window access is considered instead, the maximum stable throughput reaches $0.462$ \cite{Massey1981}.

\subsubsection{\aca{SICTA} (SICTA)}

Differently from the standard and modified tree algorithms collision resolution modes, in \ac{SICTA} collided packets during the collision resolution are not discarded but are kept in memory for further processing. Indeed they become useful when \ac{SIC} takes place. Let us consider a case in which users $A$ and $B$ collide in the first slot of the \ac{CRI} and then user $B$ is correctly received in the second time slot. The standard and modified tree algorithms would require a third time slot dedicated to the transmission of user $A$. \Ac{SICTA}, instead, employs interference cancellation on the signal received in the first slot, removing the contribution of user $B$ from the collision and being therefore able to extract user $A$ as well. It is important to note here that in the second slot the receiver will feedback $k=2$ since both users can be resolved. Let us give a look on how the example of Figure~\ref{fig:sta_mta} would change with the use of \ac{SICTA}.
\begin{figure}
\centering
\includegraphics[width=0.8\textwidth]{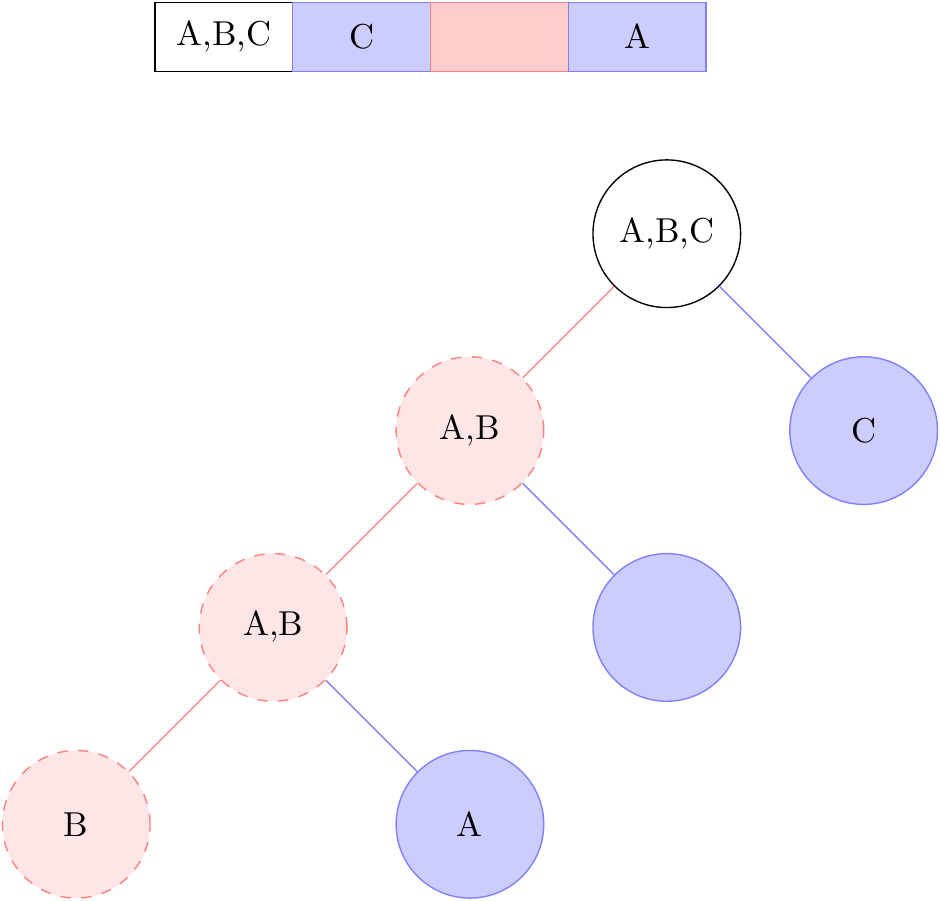}
\caption{\Ac{SICTA} collision resolution algorithm for the collision example of Figure~\ref{fig:sta_mta}.}
\label{fig:sicta}
\end{figure}
At the end of the second slot, \ac{SICTA} feeds back the information that user $C$ could be correctly decoded and maintains the signal of the first slot, where user $A$, $B$ and $C$ originally collided. Since only user $C$ is now available to the receiver, none of user $A$ and $B$ can yet be retrieved from the collision in the first slot, even with \ac{SIC}. Now, the users $A$ and $B$ defer the transmission in the third slot, knowing that a collision is deterministically going to happen. So slot three will be idle, and after a new coin throwing, only user $A$ transmits in the fourth slot. Thanks to \ac{SIC} also the packet of user $B$ can be correctly decoded in the very same slot, since the signal contribution of users $A$ and $C$ can be removed from the collision in the first slot. In this way, the receiver feeds back $k=2$. In general, it can be observed that the first slot in the left subtree can be omitted. This is a direct consequence of the \ac{SIC} process triggered by the decoding of users on the right subtree branches. In this way the \ac{CRI} length for \ac{SICTA} becomes
\begin{equation}
\label{eq:cri_sicta}
\CRIL_m=1+\CRIL_i+\CRIL_{m-i}-1=\CRIL_i+\CRIL_{m-i}\qquad m\geq2.
\end{equation}
As opposite to the previous cases, it holds for $i=0$ as well.

In the following an algorithmic description of \ac{SICTA} is presented. Two counters are kept active by each user. The first is the local counter $\cntL_t$ which is updated based on the received feedback and is used for selecting the action followed by the user in the next slot. The second is the system counter $\cntS_t$ and is employed for determining the \ac{CRI} boundaries. At the beginning of a new \ac{CRI}, both counters are reset to $0$. The update of the local counter $\cntL_t$ depends on the feedback as follows:
\begin{itemize}
\item If the feedback is $e$ and $\cntL_t>0$ then
\begin{equation}
\label{eq:sicta_case1}
\cntL_{t+1}=\cntL_t+1
\end{equation}
\item If the feedback is $e$ and $\cntL_t=0$ then
\begin{equation}
\label{eq:sicta_case2}
\cntL_{t+1}=
\begin{cases}
0,\quad \text{with probability $\prob$}\\
1,\quad \text{with probability $1-\prob$}
\end{cases}
\end{equation}
\item If the feedback is $0$ and $\cntL_t>1$ then
\begin{equation}
\label{eq:sicta_case3}
\cntL_{t+1}=\cntL_t
\end{equation}
\item If the feedback is $0$ and $\cntL_t=1$ then
\begin{equation}
\label{eq:sicta_case4}
\cntL_{t+1}=
\begin{cases}
0,\quad \text{with probability $\prob$}\\
1,\quad \text{with probability $1-\prob$}
\end{cases}
\end{equation}
\item If the feedback is $k\geq1$ then
\begin{equation}
\label{eq:sicta_case5}
\cntL_{t+1}=\cntL_t-(k-1).
\end{equation}
If $\cntL_{t+1}\leq0$, the packet of the user is successfully decoded. The user leaves the collision resolution. If instead $\cntL_{t+1}=1$ the user cannot be resolved by \ac{SIC} since other competitive users transmitted in the same slot. Then,
\begin{equation}
\label{eq:sicta_case6}
\cntL_{t+1}=
\begin{cases}
0,\quad \text{with probability $\prob$}\\
1,\quad \text{with probability $1-\prob$}
\end{cases}
\end{equation}
\end{itemize}
A user retransmits the packet in slot $t$ every time $\cntL_t=0$. The system counter is updated as follows
\begin{itemize}
\item If feedback is $e$
\begin{equation}
\label{eq:sicta_case7}
\cntS_{t+1}=\cntS_t+1
\end{equation}
\item If feedback is $0$
\begin{equation}
\label{eq:sicta_case8}
\cntS_{t+1}=\cntS_t
\end{equation}
\item If feedback is $k$
\begin{equation}
\label{eq:sicta_case9}
\cntS_{t+1}=\cntS_t-(k-1).
\end{equation}
\end{itemize}
The current collision resolution algorithm stops when the system counter reaches $0$.

In \cite{Giannakis_2007} the authors show that for the gated access the maximum stable throughput of \ac{SICTA} is $0.6931$. Interestingly it is achieved for binary $\tsd=2$ fair splitting, unlike the standard tree algorithm where the optimum was achieved with ternary $\tsd=3$ splitting (instead of dividing the users into two classes depending on the outcome of the coin flipping procedure, they are divided into three classes). For \ac{SICTA}, the maximum stable throughput $\tp$ follows\footnote{Details on the stable throughput expression are here omitted and can be found in \cite{Giannakis_2007}.}
\begin{equation}
\label{eq:sicta_T}
\tp\approx \frac{\ln \tsd}{\tsd-1}.
\end{equation}
We can observe that,
\begin{itemize}
\item For $\tsd \rightarrow \infty$ the maximum stable throughput of \ac{SICTA} degenerates rapidly towards the standard tree algorithm;
\item \ac{SICTA} exceeds the upper bound derived by Tsybakov and Likhanov of $0.568$ \cite{Tsybakov1980}. We shall note here that the upper bound by Tsybakov and Likhanov is valid for ternary feedback (idle, collision, success). Indeed, as pointed out in a seminal work of Massey \cite{Massey1988}, a more proper bound for random access with multiplicity feedback, as \ac{SICTA} is, is due to Pippinger \cite{Pippinger1981} and it is $1$.
\end{itemize}
Moving to the window access, the maximum stable throughput $\tp$ is as well $0.693$ but it is reached for a maximum window size tending to infinite, \ie $\maxWinSize \rightarrow \infty$, which corresponds to the gated access. For an average arrival rate of $1$ packet per window duration $\maxWinSize$, the maximum stable throughput is reduced to $0.6$ for \ac{SICTA}.

Interestingly, \ac{SICTA} performs better for gated access, which is easier to implement w.r.t. the window access. The window size has to be optimised for each arrival rate and the maximum stable throughput of the modified tree algorithm of Gallager $0.487$ can be guaranteed only for Poisson arrivals. In \ac{SICTA}, instead, Poisson arrivals are not necessary for demonstrating the maximum stable throughput of $0.693$ \cite{Giannakis_2007}.

\subsubsection{An Extension of \ac{SICTA}}

Recently, the authors in \cite{Peeters2015} extended the work of \ac{SICTA} and showed that their protocol is able to achieve a maximum stable throughput of $1$ under gated access and the collision channel (which closes the gap to the upper bound derived by Pippinger \cite{Pippinger1981}). Nevertheless, this is possible only allowing unbounded computational complexity and therefore is of limited practical value. The authors show also an hybrid approach between their algorithm and \ac{SICTA} able to reach a maximum stable throughput of $1-\epsilon$ with $\epsilon$ chosen arbitrarily, given sufficient computational power.

The key of the algorithm is the collision resolution mode. At the start of the \ac{CRI} all users are required to transmit their packet. From the second slot of the \ac{CRI} onwards each user transmits its packet with probability $\prob=1/2$. At the receiver side these operations are performed:
\begin{itemize}
\item A matrix $\mx\in\mathbb{B}^{m\times m}$ is created, such that, each column represents a time slot, and each row represents an user. A $1$-entry in position $i,j$ represents the transmission of a user $i$ in slot $j$. The dimension $m$ represents both the collision size and the collision resolution duration in time slots.
\item The matrix $\mx$ contains only linearly independent columns so it is invertible. Taking the inverse of $\mx$, $\mx^{-1}$ gives us the solution for all users involved in the collision. In particular, the solution for the first user is given by the first column. Each of the coefficients of $\mx^{-1}$ in the first column applied to the time slot signals \--- received in the $N$ time slots of the \ac{CRI} \--- gives the transmitted packet of the first user.
\end{itemize}
The problem of computational complexity arises since $m$ is unknown at the receiver. The authors propose a brute force approach \cite{Peeters2015} where at each received slot, all possible values of each unknown in $\mathbf{M}$ is tested. Since $\mathbf{M}$ is binary, the search is bounded for every value of $m$.

The authors are able to upper bound the average time of the \ac{CRI} $\CRIL_m$ as
\begin{equation}
\label{eq:cri_sicta_ev}
\CRIL_m\leq m+ \sum_{k=1}^{m-1}\frac{1}{2^k-1}\approx m \qquad \text{for sufficiently large $m$}.
\end{equation}

\subsection{Random Access Without Feedback}

If feedback is not possible due to the lack of dedicated return channel or due to hardware limitations, we speak about \ac{RA} without feedback. A way to approach channel capacity in such scenarios is to let users adopt \emph{protocol sequences} which dictates the time slots where the users are active. Protocol sequences are deterministic, but each user picks a sequence at random from the pool and the user will become active in a time slot which is independent from the other users, preserving the random nature of the channel access.

Protocol sequences for the \ac{RA} without feedback channel have been mainly studied for the collision channel. Only recently authors in \cite{Zhang2016} investigated the protocol sequences' design and performance for the \ac{iota_MPR} channel. The \ac{iota_MPR} channel model ensures that collisions involving up to $\iota$ users can be successfully resolved, \ie all the $\iota$ users involved in the collision can be successfully decoded by means of \ac{MPR} techniques.\footnote{The \ac{iota_MPR} channel model can emulate the behaviour of a receiver able to do interference cancellation, when adopting the proper $\iota$. More in general, the adoption of the \ac{MPR} matrix representation \cite{Ghez1988} can lead to a precise characterization of the \ac{SIC} behaviour but it is not considered by the authors in the paper \cite{Zhang2016}.}

The sequences are periodic and in general the throughput depends on the relative shifts among the users. A first class is the so-called \ac{SI} sequences. Their peculiar property is that their generalised Hamming cross-correlation functions are independent of relative shifts.\footnote{The generalised Hamming cross-correlation function can be seen as the number of slot indices where the two sequences have both an entry equal to one. A formal definition can be found in \cite{Zhang2016a}. For \ac{SI} sequences in particular, the generalised Hamming cross-correlation function it is invariant to sequence shifts.} A second class are the \ac{TI} sequences where instead the throughput is shift invariant. The advantages of \ac{TI} sequences are strictly positive and maximal worst case throughput among all protocol sequence based access schemes and ease to use with \ac{FEC} due to the shift invariant property.

Let us have a look at the system model. We consider a slot synchronous system where all $\numUs$ users' transmissions are synchronised at slot level. Each user's packet occupies exactly one time slot. The user $i$'s protocol sequence is a deterministic and periodic binary sequence $\bs^{(i)}:=\left[\bse_0^{(i)}, \bse_1^{(i)},...,\bse_{\lbs-1}^{(i)}\right]$ that determines the instants in time where user $i$ will transmit, \ie user $i$ transmits in slot $t$ iff $\bse_{t-\ts_i}^{(i)}=1$. The protocol sequence duration or sequence period in time slots is $\lbs$. The subtraction is therefore a modulo-$\lbs$ subtraction. User $i$ has a relative shift of $\ts_i$ time slots w.r.t. the receiver's common clock time indexed by $t$. The relative shifts $\ts_i,i=1,...,\numUs$ of the users are independently chosen and are not known to the common receiver. The duty factor $\df^{(i)}$ of user $i$'s protocol sequence is defined as the fraction of ones in $\bs^{(i)}$ over its length, or more formally
\begin{equation}
\label{eq:def_duty_fac}
\df^{(i)}:=\frac{1}{\lbs}\sum_{t=0}^{\lbs-1} \bse_t^{(i)}.
\end{equation}
It is assumed that each sequence has a non-zero duty factor and that there are at most $\iota-1$ all one sequences.

\subsubsection{Throughput Analysis}

The analysis of \cite{Zhang2016} aims at deriving the throughput of \ac{TI} sequences and shows that it is dependent on the duty factors of the sequences only. The full derivation including the case of non symmetric duty factors for the users is presented in \cite{Zhang2016}, while here only the final throughput result for the symmetric duty cycle case is recalled:
\begin{equation}
\label{eq:without_F_T}
\tp=\numUs \sum_{k=1}^{\iota-1}\binom{\numUs-1}{k}\df^{k+1}(1-\df)^{\numUs-1-k}.
\end{equation}
The channel load $\load$ is then $\load=\numUs \df$. As mentioned by Zhang, the throughput $\tp$ of the system depends only on the duty factor, on the maximum collision size such that all packets can be still correctly decoded the parameter $\iota$ and the number $\numUs$ of users in the system. For every couple of total active users $\numUs$ and $\iota$, there is an optimal duty factor that maximises the throughput. This can be found following the approach of \cite{Bae2014}, where the transmission probability is the equivalent of the duty factor. Noticeably, no feedback is required in this approach with respect to the slotted ALOHA system considered in \cite{Bae2014}.
\begin{figure}
\centering
\includegraphics[width=0.8\textwidth]{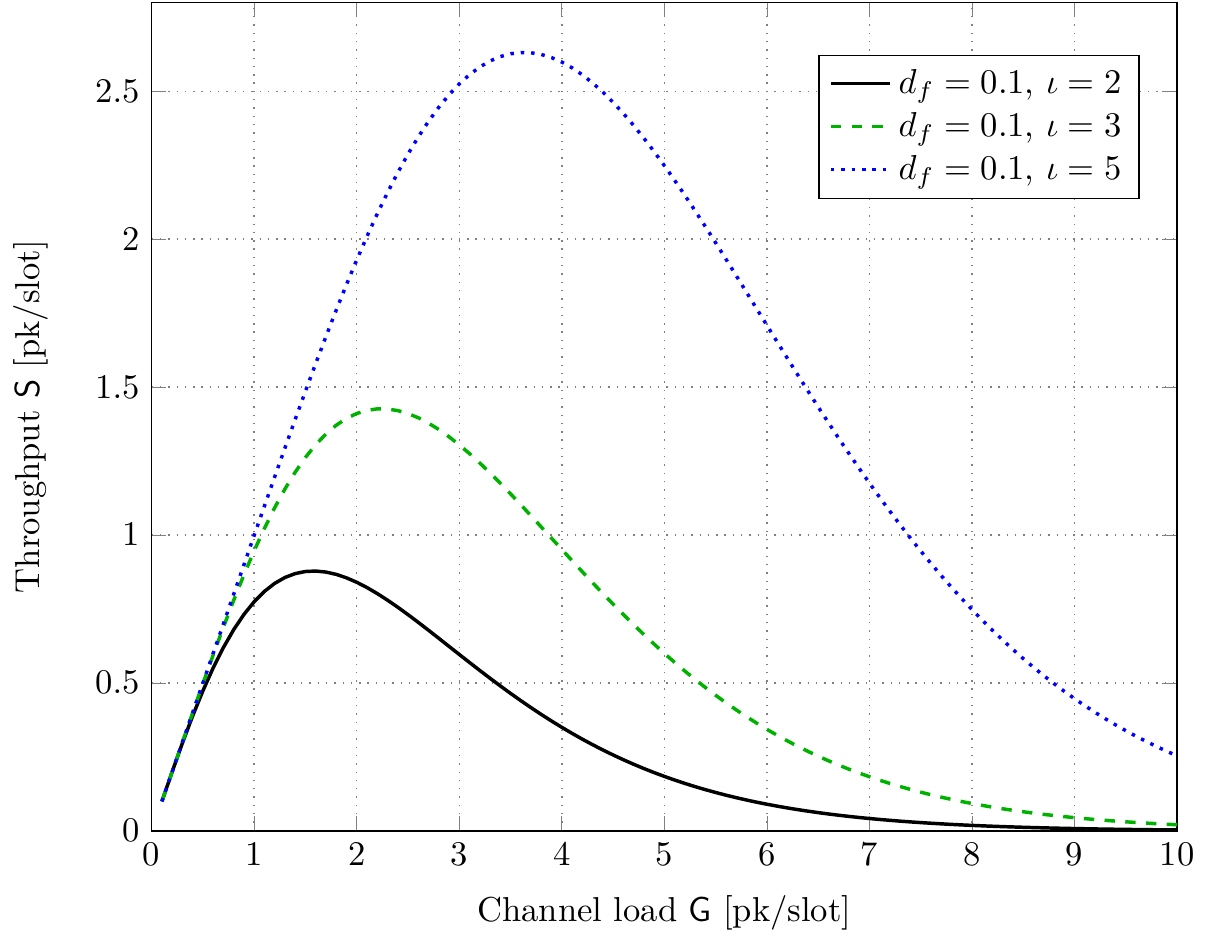}
\caption{Throughput as a function of the channel load for \ac{TI} sequences under the \ac{iota_MPR} channel for different values of $\iota$.}
\label{fig:without_f}
\end{figure}

\subsubsection{Construction of Minimum Period TI Sequences}

In this Section we present the construction of minimum period \ac{TI} sequences. The importance of minimum period sequences relies in the higher robustness against performance variability over short time periods given the same target performance. Without going into theoretical details, Zhang and his co-authors demonstrated that minimum period \ac{TI} sequences can be built from \ac{SI} sequences. A possible construction can be found in \cite{Shum2009} and is recalled here through an example.

\begin{sloppypar}
We specify the $\numUs$ duty factors as $u_1/v_1=\df^{(1)},u_2/v_2=\df^{(2)},..., u_{\numUs}/v_{\numUs}=\df^{(\numUs)}$. For each user we build the matrix $\mx^{(i)}=\left[\vc_1^{(i)}, \vc_2^{(i)},...,\vc_{v_i}^{(i)} \right]$. The matrix $\mx^{(i)}$ is binary with the constraint that in each row there are exactly $u_i$ ones and it has dimension ${(\prod_{k=1}^{i-1}v_k)\times v_i}$, \ie $\mx^{(i)} \in \mathbb{B}^{(\prod_{k=1}^{i-1}v_k)\times v_i}$.\footnote{$\prod_{k=1}^0 v_k=1$  by convention.} The sequence $\bs^{(i)}$ is generated by repeating the row vector $\left[\vc_{1}^{(i)T},\vc_{2}^{(i)T}, ... ,\vc_{\df^{(i)}}^{(i)T}\right]$ for $\frac{\prod_{k=1}^{\numUs} v_k}{\prod_{h=1}^{i} v_h}$ times,
\begin{equation}
\label{eq:sequence}
\bs^{(i)} = \underbrace{\left[\vc_{1}^{(i)T}, ... ,\vc_{\df^{(i)}}^{(i)T},  \vc_{1}^{(i)T}, ... ,\vc_{\df^{(i)}}^{(i)T}, ... , \vc_{1}^{(i)T}, ... ,\vc_{\df^{(i)}}^{(i)T}\right]}_\text{$\prod_{k=1}^{\numUs} v_k$ sequence entries}
\end{equation}
\begin{exmp}
We focus on a three users symmetric case with ${\df^{(1)}=\df^{(2)}=\df^{(3)}=\df=1/2}$. The three matrices built are
\begin{equation}
\label{eq:mx_example_without_F}
\mx^{(1)}=
\begin{bmatrix}
1 & 0
\end{bmatrix} \quad
\mx^{(2)}=
\begin{bmatrix}
1 & 0\\
0 & 1
\end{bmatrix} \quad
\mx^{(3)}=
\begin{bmatrix}
1 & 0\\
0 & 1\\
1 & 0\\
0 & 1
\end{bmatrix}.
\end{equation}
The following three bit sequences can be obtained:
\begin{align}
\label{eq:example_sequence}
\bs^{(1)} &= \left[1 0 1 0 1 0 1 0 \right]\\
\bs^{(2)} &= \left[1 0 0 1 1 0 0 1 \right]\\
\bs^{(3)} &= \left[1 0 1 0 0 1 0 1 \right].
\end{align}
\end{exmp}
\end{sloppypar}

\subsubsection{Some Recent Results of SI Sequences with Interference Cancellation}

Recently, the research group published a work in which the protocol sequences used in the collision channel without feedback are investigated together with \ac{SIC} \cite{Zhang2016a}. The main result achieved is the demonstration that, with proper sequence selection and when \ac{SIC} is employed at the receiver, the limit of $1$ packet per slot is reachable. Some important consequences are
\begin{itemize}
\item The zero-error capacity region of a \ac{RA} system without feedback employing \ac{SIC} is coincident with the one of a system guaranteeing orthogonal multiple access, \eg \ac{TDMA}. In this way, the lack of orthogonality, which does not guarantee interference free transmissions to the user, does not harm the performance as long as \ac{SIC} can be performed at the receiver. It is here important to underline that ideal \ac{SIC} is assumed, \ie perfect interference cancellation of correctly decoded packets is performed.
\item The symmetric case (typical for many systems) in which all users transmit with the same rate is on the outer boundary of the capacity region, \ie also the symmetric case achieves $1$ packet per slot capacity. This result is particularly interesting because it is in contrast with the case without \ac{SIC} \cite{Massey1985_RAWithoutFeedback}. In that case in fact, the symmetric case minimises the capacity.
\item The outer boundary of the zero-error capacity region with rational components can be achieved (which means that it is possible to reach zero-error capacity with rate points that lie on the outer boundary) only with the help of \ac{SI} sequences. Minimum period \ac{SI} sequences are therefore the sequences with smallest period possible reaching the outer boundary of the zero-error capacity region. Construction of these follows the procedure presented in the previous Section.
\item For the symmetric case, the minimum period is $\numUs!$, which is consistently smaller than $\numUs^{\numUs}$, the minimum period of the case without \ac{SIC} \cite{Zhang2016}.
\end{itemize}

\section{Applicable Scenarios}



The improvements in the efficiency of recent \ac{RA} protocols, observable from all the previously presented schemes and not limited to them, widen the scenarios where \ac{RA} schemes can be used. A selection of four possible areas in which recent \ac{RA} schemes are identified as suitable candidates for future communication systems or standards is the following:
\begin{itemize}
\item satellite communication systems and \ac{M2M};
\item \ac{LTE} and the upcoming \ac{5G};
\item \ac{V2V} communication systems;
\item Underwater communication systems.
\end{itemize}
For each of the four areas we give a brief overview of the typical challenges as well as some hints on how RA protocols manage to overcome them.

\subsubsection{Satellite Communication Systems and \ac{M2M}}

From the origin of ALOHA, the typical scenario of such \ac{RA} protocols has been the satellite communication system. The distance between transmitting terminals prevents the use of sensing techniques deployed in terrestrial \ac{RA} based systems, such as \ac{CSMA}, while the peculiar very long round trip time limits the efficiency of on-demand systems such as \ac{DAMA}. The former is due to the wide footprint of \ac{GEO} satellites guarantee connectivity to terminals and that can be thousands of kilometers far apart. The latter, instead, is due to the extreme transmission delay (up to $500$ ms round trip time delay) that a \ac{GEO} satellite link suffers which affects any communication and may restrict the usage of handshake mechanisms necessary for \ac{DAMA}-like protocols to allocate resources to the terminals.

This is the reason why a number of different satellite standards foresee the use of \ac{RA} protocols. The \ac{DVB-RCS2} \cite{dvb_rcs2} adopts both \ac{CRDSA} \cite{Casini2007} and \ac{IRSA} \cite{Liva2011} as options, not only for the user terminal logon at the beginning of a session, but also foresees the possibility to use these protocols for data communication through reserved slots in the \ac{MF-TDMA} frames. The emerging market of mobile satellite terminals calls for development of suitable satellite systems able to cope with this extremely challenging scenario. A first answer comes from the novel \ac{S-MIM} standard \cite{Scalise_2013}. It comprises forward and return links for bi-directional communication and for the return link (from gateway, through satellite to mobile terminal) one of the two options is to use \ac{RA} for communication. In particular \ac{E-SSA} \cite{delRioHerrero_2012} is foreseen as the \ac{RA} \ac{MAC} option.

Nowadays, the use of satellites for communication is the only option for ships that are in oceanic travel. In order to incorporate the \ac{AIS} and \ac{ASM} \cite{Ais_std} messaging systems in a more wide communication system, the novel \ac{VDES} has been recently introduced \cite{Vdes_std} and is currently under standardization by \ac{ITU}. Safety critical messages for collision avoidance, as provided by \ac{AIS}, are sharing the channel with other types of messages. \ac{VDES} aims at releasing the \ac{AIS} channels from non-safety critical traffic which is reverted to \ac{ASM} and other dedicated channels. Satellite and terrestrial components are both considered for non-safety critical messaging. For the former, \ac{RA} with both narrowband and spread spectrum communication systems is allowed. In this regard both \ac{CRA} \cite{Kissling2011a} \ac{ECRA} \cite{Clazzer2012} or \ac{E-SSA} \cite{delRioHerrero_2012} appear to be good candidates.

\subsubsection{\ac{LTE} and \ac{5G}}

The \ac{M2M} application in terrestrial mobile networks is a big driver as well. Many recent works focus on the investigation of \ac{RA} for \ac{M2M} applications in \ac{LTE} or \ac{LTE-A} \cite{Laya2014}. In today's \ac{LTE} the \ac{RA} procedure comprises a four messages handshake in order to establish a connection. Precisely, \ac{RA} is used in \ac{LTE} for the network logon or for request of resources for transmission or re-establish a connection after failure. Alternatives that avoid the need of a four message handshake providing the same reliability can be foreseen from the discussed protocols of the previous Sections.

One of the consortia looking at the upcoming \ac{5G} is proposing asynchronous \ac{RA} for some \ac{M2M} traffic classes, especially the ones of \ac{IoT} where sporadic and low duty cycles prevail over typical traffic \cite{5GNow}. Foreseen advantages are drastic reduction in signaling overhead as well as reduction of device energy consumption due to relaxation of synchronization requirements.

\subsubsection{Vehicle-to-Vehicle (V2V) Communication Systems}

A dedicated working group of IEEE is working on a modification of the 802.11 standard in order to satisfy the peculiar features of a \ac{V2V} scenario \cite{Menouar2006}. The \ac{WAVE} standard, \ie 802.11p, proposes the use of \ac{CSMA/CA} as \ac{MAC} protocol. Unfortunately, under several aspects \ac{CSMA/CA} appears to be not the optimal choice. In fact, it provides large throughput when the user population is rather limited, the traffic generated by each terminal is large and there are limited delay constraints. There are many situations in which \ac{V2V} faces different conditions from the one better suited for \ac{CSMA/CA}. In this regard alternatives are required. Ivanov and his co-authors propose a modification of the \ac{CSA} protocol \cite{Paolini2015}, the so-called \emph{all to all \ac{CSA}} \cite{Ivanov2015}, in which \ac{CSA} operates in a scenario where terminals are both transmitters and receivers. When a terminal is not transmitting, it has the receiver circuit activated and is able to possibly decode incoming messages. Many of the limitations of \ac{CSMA/CA} are shown to be overcome by this protocol \cite{Ivanov2015}.

\subsubsection{Underwater Communication Systems}

In recent years, driven by the development of underwater sensors and underwater manned and unmanned vehicles, underwater ad-hoc networks have been deeply investigated. Due to their peculiar features imposed by the use of acoustic waves as communication medium like high propagation delay, limited bandwidth and data rate, noise, energy consumption and high bit error rates \cite{Chen2014}, the design of efficient \ac{MAC} protocol is of upmost importance. Due to the similarity with the satellite channel, ALOHA-like protocols have been proposed as suitable alternatives \cite{Cui2006,Pompili2009}. Interestingly, advanced signal processing techniques that include \ac{SIC}, have not been studied in the context of underwater communication systems yet.

\section{Open Questions}

Although \ac{RA} has been well investigated in past years, it still present open research fields areas. In this Section we give a non-comprehensive list of unanswered questions. It is important to mention here that this is a partial and biased list and does not have the aim to cover all important open questions and issues of the field.

\begin{itemize}
\item \textbf{Comparison between time synchronous and asynchronous \ac{RA} with time diversity and \ac{SIC}.} Since the introduction of time diversity with the presence of replicas, e.g. \ac{CRDSA} or \ac{CRA}, and advanced signal processing at the receiver with \ac{SIC}, a fair and comprehensive comparison between time synchronous and asynchronous schemes is missing. Furthermore, the presence of combining, \eg \ac{ECRA}, provides a further dimension for the comparison that shall be taken into account as well. Finally, spread spectrum techniques with \ac{SIC} shall also be included in the picture. A first attempt was done in \cite{ESA_survey_2016} where a first comparison between \ac{CRDSA}, \ac{ACRDA} and \ac{E-SSA} was presented. Nevertheless, the focus of this Capter is different and therefore this comparison has been done for a single specific setting. Another work worth to be mentioned is \cite{Biason2016} where \ac{ACRDA} and \ac{E-SSA} are compared.
\item \textbf{Energy efficiency.} Traditionally, \ac{RA} schemes without sensing capabilities have been used in scenarios where the energy efficiency per successful transmission has never been an issue, like in satellite networks where the terminals are normally connected to the power line. Nevertheless, these schemes are gaining momentum for other type of applications, like sensor networks, were instead the energy consumption is one of the key design factors. In this regard, the energy efficiency of such advanced schemes is only partially considered as side effect in works as \cite{Liva2011} and requires much more attention. Not only the transmitter side is of upmost interest but also the receiver side. All schemes adopting \ac{SIC} will have a burden in complexity at the receiver that will definitively have a big impact on the energy consumption. Battery-based receivers are required to save energy as much as possible and such investigation has to be carried out for letting these schemes become appealing.
\item \textbf{Delay performance.} The use of \ac{SIC} brings undoubtable advantages. In many situations it has also a positive impact in the delay performance, \eg in ZigZag or \ac{SICTA} the possibility to use \ac{SIC} reduces the probability of retransmission in the first case and the \ac{CRI} average duration in the second case, such that in both cases lower average delay is expected. Unfortunately this is not the case when we consider \ac{CRDSA}  or \ac{CRA}. There, in fact, packets prior decoding need to wait for the reception of entire frames and the delay performance can become worse than in \ac{SA} or ALOHA.
\item \textbf{Robustness against traffic and channel conditions.} The more the protocol performance is pushed to the limit, the more the sensibility to change in the traffic conditions increases. In this way, the channel load, the channel conditions, or received \ac{SNR}, may impact negatively the scheme behaviour as soon as they differ from the values considered in the scheme optimisation. It would be important to assess robustness of such schemes to variations in the traffic conditions as well as in the channel conditions.
\end{itemize} 
\section{Conclusions}

In this Chapter the role of interference cancellation applied to recent \ac{RA} protocols was highlighted. Starting from the original ideas in ALOHA and \ac{SA}, several recent schemes were proposed. Many of them shared the use of \ac{SIC} at the receiver to improve the performance. Four schemes were selected, each one representing a specific class of \ac{RA} schemes and their basics were recalled. For some schemes, performance figures were presented, while for others the design basics were described. The Chapter gave an overview of scenarios that can particularly benefit from such advanced protocols, ranging from satellite networks, to terrestrial (as \ac{5G}) and going towards vehicular and underwater communication systems. Finally a selection of open questions was given.
 
\chapter[Asynchronous RA with Time Diversity and Combining: ECRA]{Asynchronous Random Access with Time Diversity and Combining: the ECRA Approach}
\label{chapter4}
\thispagestyle{empty}
\ifpdf
    \graphicspath{{chapter4/Images/}}
\fi
\epigraph{Some great things are born from laziness and meditation}{Elliott Erwitt}

The fourth Chapter presents the novel \ac{ECRA} decoding procedure for asynchronous random access protocols. We first focus on the system model detailing the transmitter and receiver operations while highlighting the use of \ac{SIC} in combination with combining techniques, as the key aspect of the \ac{ECRA} decoding procedure. We present an analytical approximation for the \ac{PLR} performance of asynchronous random access tight for moderate channel load conditions which includes \ac{ECRA} as a special case. Numerical results evaluate the goodness of such approximation and rise the attention to scenarios where \ac{ECRA} is able to outperform slot synchronous schemes as \ac{CRDSA}. In order to benefit from combining techniques, the perfect knowledge of replicas position has to be guaranteed prior decoding. To achieve this stringent requirement, the two last Sections present a modification of \ac{ECRA}. A two-phase approach that avoids the need to signal the replicas position within the header, but relies only on non-coherent soft-correlation for the detection and replicas matching is introduced. The Chapter finishes with concluding remarks.

\section{System Model}
\label{sec:sys_ov}
We assume an infinite user population generating traffic following a Poisson process of intensity $\load$. The channel load\footnote{The channel load corresponds to the \emph{logical load} $\load$ since it takes into consideration the net information transmitted, depurated from the number of replicas per user $\dg$.} $\load$ is measured in packet arrivals per packet duration $\pkLen$. Upon arrival, each user replicates his packet $\dg$ times, with $\dg$ the repetition degree of the system. The first replica is transmitted immediately while the remaining $\dg-1$ are sent within a \ac{VF} of duration $\fraLen$ starting at the beginning of the first replica.\footnote{It is important to underline that the concept of \ac{VF} has been firstly introduced in \ac{ACRDA} \cite{deGaudenzi2014_ACRDA} and was not present neither in \ac{CRA} nor in the first statement of \ac{ECRA} \cite{Clazzer2012}.} As a consequence, virtual frames are asynchronous among users. Replicas are sent such that self-interference is avoided. The time location within the \ac{VF} of each replica is stored in a dedicated portion of the packet header. Each replica is composed by $\numBit$ information bits. In order to protect the packets against channel impairments and interference, we adopt a channel code $\Code$ with Gaussian codebook. We define the coding rate $\rate=\numBit/\numSym$, where $\numSym$ is the number of symbols within each packet after channel encoding and modulation. We denote with $\symLen$ the duration of a symbol so that $\pkLen=\symLen \numSym$. Replicas are then transmitted through an \ac{AWGN} channel.

Let us consider the transmitted signal $\tx^{(\user)}$ of the $\user$-th user,
\begin{equation*}
\label{eq:tx_sig}
\tx^{(\user)}(\tm)= \sum_{i=0}^{\numSym-1} \symVal_i^{(\user)} \pulse(\tm-i \symLen).
\end{equation*}
Where $\bm{\symVal}^{(\user)}=\left( \symVal_0^{(\user)}, \symVal_1^{(\user)}, \dots, \symVal_{\numSym-1}^{(\user)} \right)$ is the codeword of user $\user$ and $\pulse(\tm)=\mathcal{F}^{-1}\left\{\sqrt{\mathtt{CR}(f)}\right\}$ is the pulse shape, being $\mathtt{CR}(f)$ the frequency response of the raised cosine filter. The signal of the generic user $\user$ is affected by a frequency offset modeled as an uniformly distributed random variable $\freq^{(\user)} \sim \mathcal{U}\left[-f_{\mathrm{max}};f_{\mathrm{max}}\right]$ and an epoch, also modeled as an uniformly distributed random variable $\epoch^{(u)} \sim \mathcal{U}\left[0;T_s\right)$. Both frequency offset and epoch are common to each replica of the same user, but independent user by user. The phase offset is modeled as a random variable uniformly distributed between $0$ and $2\pi$, \ie $\phase^{(\user,\replica)} \sim \mathcal{U}\left[0;2\pi\right)$, and it is assumed to be independent replica by replica. Assuming that $f_{\mathrm{max}}T_s\ll 1$, the received signal $\rx(\tm)$ after matched filtering can be approximated as
\begin{equation}
\label{eq:rx_sig}
\rx(\tm) \cong \sum_{\user} \sum_{\replica=0}^{\dg-1} \hat{\tx}^{(\user)}(\tm - \epoch^{(u)} - \VFStart^{(\user,\replica)} - \RefTime^{(\user)}) e^{j\left(2\pi \freq^{(\user)} + \phase^{(\user,\replica)}\right)} + \noise(\tm).
\end{equation}
With $\hat{\tx}^{(\user)}=\sum_{i=0}^{\numSym-1} \symVal_i^{(\user)} \hat{\pulse}(\tm-i \symLen)$, where $\hat{\pulse}(\tm) = \mathcal{F}^{-1}\left\{\mathtt{CR}(f)\right\}$. In equation~\eqref{eq:rx_sig}, $\VFStart^{(\user,\replica)}$ is the delay w.r.t. the \ac{VF} frame start for user $\user$ and replica $\replica$, while $\RefTime^{(\user)}$ is the $\user$-th user delay w.r.t. the common reference time. The noise term $\noise(\tm)$ is given by $\noise(\tm) \triangleq \noisePr(\tm) \ast \pulseRoot(\tm)$, where $\noisePr(\tm)$ is a white Gaussian process with single-sided power spectral density $\noiseSD$ and $\pulseRoot(\tm)$ is the \ac{MF} impulse response of the root raised cosine filter, \ie $\pulseRoot(\tm) = \mathcal{F}^{-1} \left\{ \sqrt{\mathtt{CR}(f)} \right\}$.

For the $\user$-th user, $\replica$-th replica, assuming an ideal estimate of the epoch $\epoch^{(u)}$, the frequency offset $\freq^{(\user)}$ and the phase offset $\phase^{(\user,\replica)}$, the discrete-time version of the received signal ${\rxVec^{(\user,\replica)} = (\rx_0^{(\user,\replica)}, ..., \rx_{\numSym-1}^{(\user,\replica)})}$ is given by
\begin{equation*}
\rxVec^{(\user,\replica)} = \txVec^{(\user)} + \intVec^{(\user,\replica)} + \noiseVec.
\end{equation*}
Here $\txVec^{(\user)} = \bm{\symVal}^{(\user)}$, and $\intVec^{(\user,\replica)}$ is the interference contribution over the user-$\user$ replica-$\replica$ signal and $\noiseVec=(\noise_0,...,\noise_{\numSym-1})$ are the samples of a complex discrete white Gaussian process with ${\noise_i \sim \mathcal{CN}(0,2\noiseVar)}$.

The instantaneous \ac{SINR} $\sinr$ for the $i$-th sample of the $\user$-th user $\replica$-th replica is
\begin{equation*}
\sinr_{i}^{(\user, \replica)} = \frac{\usPw_{i}^{(\user)}}{\noisePw + \intPw_{i}^{(\user, \replica)}}
\end{equation*}
with $\usPw_{i}^{(\user)} \triangleq \mathbb{E}\left[|\symVal_{i}^{(\user)}|^2 \right]$, $\noisePw=2\noiseVar$ and $\intPw_{i}^{(\user, \replica)} \triangleq \mathbb{E}\left[ |\intVal_{i}^{(\user,\replica)}|^2 \right]$, which is the aggregate interference power contribution on the $i$-th sample of the considered replica. Throughout the Chapter, we assume that all users are received with the same power, \ie perfect power control is adopted. Hence, $\usPw_{i}^{(\user)} = \usPw$ and $\intPw_{i}^{(\user, \replica)} = \intNum_i^{(\user, \replica)} \usPw$, where $\intNum_i^{(\user, \replica)}$ denotes the number of active interferers over the $i$-th symbol of the $\user$-th user $\replica$-th replica. The aggregate interference is modeled as a memoryless discrete Gaussian process with $\intVal_{i} \sim \mathcal{CN}\left(0,\intNum_i^{(\user, \replica)} \usPw\right)$ and the \ac{SINR} thus becomes
\begin{equation*}
\sinr_{i}^{(\user, \replica)} = \frac{\usPw}{\noisePw + \intNum_i^{(\user, \replica)} \usPw}.
\end{equation*}

\begin{sloppypar}
The \ac{SINR} vector over the $\numSym$ symbols of the considered replica is denoted with ${\sinrVec^{(\user, \replica)} = (\sinr_{0}^{(\user, \replica)}, \sinr_{1}^{(\user, \replica)}, ..., \sinr_{\numSym-1}^{(\user, \replica)})}$.
\end{sloppypar}

\subsection{Modeling of the Decoding Process}
\label{sec:int_model}

Typically, the destructive collision channel model is adopted \cite{Tong2004} in the analysis of the \ac{MAC} layer of \ac{RA} protocols. This physical layer abstraction assumes that only packets received collision-free can be correctly decoded, while all packets involved in collisions are lost. For asynchronous schemes, where packets are protected with a channel code, this assumption is particularly pessimistic. In fact, low levels of interference can be counteracted by the error correction code and some collisions can be resolved.

Motivated by this, we resort to a \emph{block interference model} \cite{McEliece1984} given by $\numSym$ parallel Gaussian channels \cite{cover2006} (one for each replica symbol), where the $i$-th channel is characterised by a \ac{SNR} $\sinr_{i}$.\footnote{We are omitting here the superscript $(\user, \replica)$ for ease of notation.} The instantaneous mutual information over the $i$-th channel $\mutInf(\sinr_{i})$ is
\begin{equation*}
\mutInf(\sinr_{i}) = \log_2( 1 + \sinr_{i} ).
\end{equation*}
Differently from the classical parallel Gaussian channel problem of finding the best power allocation per channel in order to maximise capacity (cf. Chapter~10.4 of \cite{cover2006}), here \ac{CSI} is not present at the transmitter since the interference contribution cannot be predicted due to the uncoordinated user transmissions. Therefore, the power allocation over the channels, \ie symbols of the replica, is kept constant and is not subject to optimisation. The instantaneous mutual information averaged over the $\numSym$ parallel channels is
\begin{equation}
\label{eq:mi_dec}
\mutInf(\sinrVec) = \frac{1}{\numSym} \sum_{i=0}^{\numSym-1} \mutInf(\sinr_{i}) = \frac{1}{\numSym} \sum_{i=0}^{\numSym-1} \log_2( 1 + \sinr_{i} ).
\end{equation}
Interference has been modeled similarly in \cite{Thomas_2000}. We introduce a binary variable $\rvDec$ modelling the decoding process, such that
\begin{equation*}
\begin{split}
\rvDec &= 1 \qquad \text{if decoding succeeds}\\
\rvDec &= 0 \qquad \text{otherwise}.
\end{split}
\end{equation*}
We have
\begin{equation}
\label{eq:dec_res}
\rvDec = \mathbb{I}\left\{\rate \leq \mutInf(\sinrVec)\right\}
\end{equation}
where $\mathbb{I}\{X\}$ denotes the indicator (Inverson) function.\footnote{This model allows to take into account features like channel coding, multi-packet reception and capture effect \cite{Ghez1988, Zorzi1994}.} Observe that, the destructive collision model is a special case, where the rate $\rate$ is chosen such that only packets collision-free can be succesfully decoded, \ie $\rate = \log_2\left(1+\frac{\usPw}{\noisePw}\right)$. The decoding process model based on the threshold induced by the selected rate, has some non-negligible effect on the performance with respect to more accurate models that take into account the specific channel code and block length. Nevertheless, it is a good first approximation for highlighting the improvements given by the proposed scheme.

\subsection{Enhanced Contention Resolution ALOHA Decoding Algorithm}

\begin{figure*}
\centering
\includegraphics[width=\textwidth]{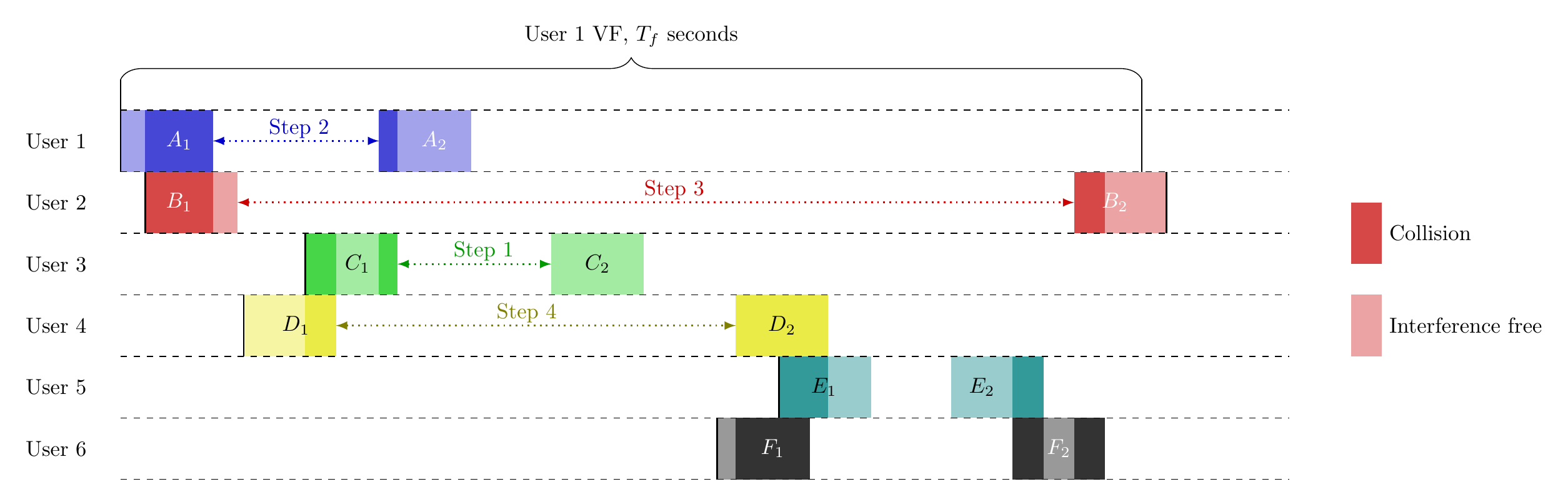}
\caption{\ac{SIC} procedure in \ac{ECRA}, first phase. The decoder starts looking for replicas that can be successfully decoded. The first to be found is replica $C_2$ which is collision-free. After successfully decoding, the information on the location of replica $C_1$ is retrieved from the header. So, the data carried by $C_2$ can be re-encoded, re-modulated, frequency offset and epoch are superimposed on the signal and its interference contribution is removed from both locations within the received signal. The interference caused on replica $A_2$ is now removed. The decoder can successfully decode also replica $A_2$ and - applying the same procedure - remove its interference together with the one of replica $A_1$. Now, replica $B_1$ is collision-free, can be successfully decoded and its interference contribution, together with the one of replica $B_2$, can be removed. Finally, replica $D_1$, also collision-free, is correctly decoded and removed from the received signal, together with its twin $D_2$. Unfortunately user $5$ and $6$ replicas are in a collision pattern that cannot be resolved by \ac{SIC} only, and still remain in the received signal after the end of the first phase.}
\label{fig:CRA_SIC}
\end{figure*}

At the receiver, \ac{ECRA} follows a two-phase procedure in order to decode the received packets. The receiver will operate with a sliding window, similarly to \cite{Meloni2012, deGaudenzi2014_ACRDA}. The decoder starts operating on the first $\Wind$ samples, with $\Wind$ the designed window size.
\subsubsection{SIC phase}
During the first phase, the decoder seeks for replicas that can be successfully decoded. Making the use of the example shown in Figure~\ref{fig:CRA_SIC} where a degree $\dg=2$ has been selected, we describe the \ac{SIC} procedure. The first replica that can be decoded is $C_2$. Thanks to the pointer to the position of all replicas of this user in the header, the decoder can retrieve the position of replica $C_1$ as well. In this way, replica $C_2$ can be re-encoded, re-modulated, frequency offset and epoch are superimposed on the signal and its interference contribution is removed from both locations within the received signal. In the following we assume ideal \ac{SIC}, \ie the entire interference contribution is removed from the received signal. Replica $A_2$ is now released from the interference and can also be correctly decoded. In this scenario, the \ac{SIC} procedure is iterated until none of the replicas can be successfully decoded anymore. At the end of \ac{SIC}, users $1, 2, 3, 4$ can be correctly decoded, while users $5$ and $6$ remain still unresolved, due to the presence of reciprocal interference that cannot be counteracted by the channel code.

\subsubsection{Combining phase}
\acreset{SC,EGC,MRC,ECRA-SC,ECRA-MRC}
In the second phase of \ac{ECRA}, combining techniques are applied on the received packets unable to be decoded in the first phase, and on these \emph{combined observations} decoding is attempted. The formal definition of a combined observation is as follows:
\begin{defin}[Combined observation]
\begin{sloppypar}
Consider the $\dg$ observations of the $\user$-th packet, $\rxVec^{(\user,1)}, \rxVec^{(\user,2)}, ..., \rxVec^{(\user,\dg)}$ with $\rxVec^{(\user,\replica)}=\left(\rx_0^{(\user,\replica)}, \rx_1^{(\user,\replica)}, ..., \rx_{\numSym-1}^{(\user,\replica)}\right)$. We define the \emph{combined observation} the vector
\[
\rxVec^{(\user)}=\left(\rx_0^{(\user)}, \rx_1^{(\user)}, ..., \rx_{\numSym-1}^{(\user)}\right)
\]
with $\rx_i^{(\user)}$ being a suitable function of the individual observation samples $\rx_i^{(\user,1)},\rx_i^{(\user,2)},...,\rx_i^{(\user,\dg)}$, \ie
\[
\rx_i^{(\user)}:=f \left(\rx_i^{(\user,1)},\rx_i^{(\user,2)},...,\rx_i^{(\user,\dg)}\right).
\]
\end{sloppypar}
\end{defin}
Any of \ac{SC}, \ac{EGC} or \ac{MRC} \cite{Brennan1959,Jakes1974} can be applied in the second phase of \ac{ECRA}, although our focus will be on \ac{SC} and \ac{MRC}. If \ac{SC} is adopted, the combined observation is composed by the replica sections with the highest \ac{SINR}, \ie for each observed symbol, the selection combiner chooses the replica with the highest \ac{SINR}. Hence, the instantaneous mutual information of the $\user$-th user combined observation, $i$-th symbol after \ac{SC} is
\begin{equation*}
\mutInf \left(\sinr_{i}^{\mathrm{S}}\right) = \log_2\left( 1 + \sinr_{i}^{\mathrm{S}} \right) = \log_2\left( 1 + \max_\replica \left[\sinr_{i}^{(\user, \replica)}\right] \right).
\end{equation*}
\begin{figure}
\centering
\includegraphics[width=0.6\textwidth]{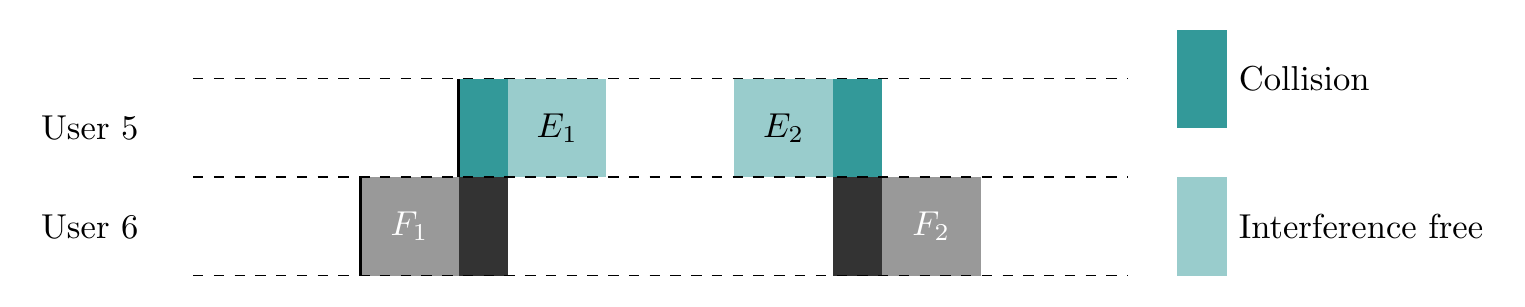}
\caption{Example of collision pattern blocking \ac{SIC}. Different portions of replicas $F_1$ and $F_2$ are collision-free. When \ac{SC} is applied, \ac{ECRA} selects these portions, creates a combined observation and attempts decoding on it.}
\label{fig:CRA_SIC_2}
\end{figure}
Figure~\ref{fig:CRA_SIC_2} depicts the situation at the beginning of the second phase for the example presented in Figure~\ref{fig:CRA_SIC}. The selection combiner selects the first part of replica $F_1$ and the second part of $F_2$ creating a combined observation free from interference.

In \ac{ECRA-MRC} instead, each replica's observed symbol of a given user is weighted proportionally to its squared root mean received signal level \cite{Brennan1959}. In this way, the \ac{SINR} at the output of the combiner is the sum of all replicas \ac{SINR}s. It is also known from literature that \ac{MRC} is optimal if the interference on each replicas is independent \cite{Winters1984}. The instantaneous mutual information of the $\user$-th user combined observation $i$-th symbol after \ac{MRC} is
\begin{equation*}
\mutInf \left(\sinr_{i}^{\mathrm{M}}\right) = \log_2\left(1+\sinr_{i}^{\mathrm{M}}\right) = \log_2\left(1+\sum_{\replica=1}^{\dg} \sinr_{i}^{(\user, \replica)}\right).
\end{equation*}
The decoder outcome after \ac{SC} or \ac{MRC} is modeled substituting the expression of $\mutInf(\sinr_{i})$ with $\mutInf \left(\sinr_{i}^{\mathrm{S}}\right)$ or $\mutInf \left(\sinr_{i}^{\mathrm{M}}\right)$ in equation~\eqref{eq:mi_dec} and adopting the same condition as in equation~\eqref{eq:dec_res}. When decoding is successful, the packet is re-encoded, re-modulated and its interference contribution is removed in all the positions within the frame where the replicas of the decoded user are placed. Combining and \ac{SIC} are iterated until either all users are correctly decoded, or no more packets are present in the receiver window $\Wind$. The receiver window is then shifted forward by $\WindSh$ samples and the procedure starts again.

\subsection{Summary and Comments}

The second step of \ac{ECRA} needs complete knowledge of the replicas position of the remaining users in the frame. Although stringent, this requirement can be addressed in two practical ways: either adopting dedicated pointers to the replicas locations in the header, or exploiting correlation techniques for detection and combining of the replicas prior to decoding. The former adopts a pseudo-random seed that is used at the receiver for retrieving the information on all replicas position of the decoded user. This option was proposed first in \cite{Casini2007} for slotted protocols, but can be extended also to \ac{ECRA}. The pseudo-random seed is used for generating the relative time offset between replicas and together with the replica sequence number allows to identify the replicas locations. In \cite{Clazzer2013} it is shown that in the low to moderate channel traffic regions, low probability of interference in the header can be found. In the high channel traffic region, instead, replicating the header twice is beneficial. Moreover, if a dedicated channel code is introduced for protecting the header, lower header loss probability are expected.\footnote{Dedicated channel code applied to the headers can allow retrieving the information about replica locations although the packet itself is not decodable due to collisions.}

When correlation techniques are adopted, no overhead due to a dedicated field in the header is necessary, and replicas are detected and combined before decoding \cite{Clazzer2017}. A two-phase procedure is proposed in \cite{Clazzer2017}. First the detection of the replicas on the channel is carried out and second, the matching between replicas belonging to the same user is performed. In both phases a simple non-coherent soft-correlation metric is applied. Details on the procedure as well as numerical results can be found in Sections~\ref{sec:sys_mod_ECRA_dis} and \ref{sec:num_res_ECRA_dis}.

The \ac{MRC} combining technique requires the knowledge of the \ac{SINR} symbol-by-symbol, in order to choose the optimal weights \cite{Brennan1959} beforehand the combination is done. In case this information cannot be retrieved, combining can be applied with equal weights for all the symbols, \ie \ac{EGC}.

The scenarios under consideration in the work of \cite{ZigZag} and its extension \cite{SigSag}, are similar to the one that can block the \ac{SIC} procedure (see Figure~\ref{fig:CRA_SIC_2}), although some differences in the solutions between their work and \ac{ECRA} can be identified. \ac{ECRA} creates the combined observation and tries decoding on it, while \cite{ZigZag} requires an iterative demodulation procedure within packet portions, that may increases the overall packet decoding delay. Furthermore, in \cite{ZigZag} an error in one decoded bit propagates to the entire packet unless compensated by further errors. This is due to the iterative procedure applied which subtracts the uncorrect bit from the same packet in the second collision, while in \ac{ECRA} an error in one decoded bit will not affect any other portion of the packet. 
\section{Packet Loss Rate Analysis at Low Channel Load}
\label{sec:PLR}
In this Section a \ac{PLR} approximation tight for low channel load conditions is derived. Packet losses are caused by particular interference patterns that \ac{SIC} is not able to resolve. In the slotted synchronous \ac{RA} protocols, these patterns are analogous to the stopping sets present in the \ac{LDPC} codes \cite{Ivanonv_2015_Letter} and can be analysed exploiting tools from coding theory and graph theory. In the asynchronous \ac{RA} schemes, a graph representation is not straightforward since no discrete objects as slots are present anymore. Therefore, we resort to investigate the collision patterns that involve two users only, with a generic degree $\dg$ and conjecture that these are the patterns driving the \ac{PLR}, especially at low channel loads. In the next Section the approximation of the \ac{PLR} is compared with Monte Carlo simulations in order to verify its tightness. A set of definitions are required for the analysis.
\begin{defin}[Collision cluster $\CollClus$]
Consider a subset $\CollClus$ of users. Assume that packets of all users in $\CollClus^{c}$ (complementary of the subset $\CollClus$) have been successfully decoded. The subset $\CollClus$ is referred to as \emph{collision cluster} iff no packet replicas for the users in $\CollClus$ is collision-free.
\end{defin}
Under the assumption of collision channel, none of the users in the collision cluster can be successfully decoded. Conversely, when a channel code $\Code$ is employed by each transmitted packet, the collision cluster might be resolvable, leading to the following definition.
\begin{defin}[$\Code$-unresolvable collision pattern]
\acreset{UCP}
Given each packet encoded with a channel code $\Code$, a \emph{\ac{UCP}} $\UCP$ is a collision cluster where no user in the set can be successfully decoded.
\end{defin}
Every \ac{UCP} is also a collision cluster, but not viceversa. In order to evaluate the probability of \ac{UCP} involving two users only, a generalization of the definition of vulnerable period \cite{Kleinrock1976_book} is required.
\begin{defin}[$\Code$-vulnerable period for $|\CollClus|=2$]
Consider the transmission of a packet protected with a channel code $\Code$ between time $\RStart$ and $\RStart + \pkLen$. The packet's \emph{$\Code$-vulnerable period} is the interval of time $\left[\RStart - \VL, \RStart + \VR\right]$ in which the presence of a single interferer leads to a failure in the decoding.
\end{defin}
Hence, the vulnerable period duration $\Vpd$ is defined as
\[
\Vpd = \VL + \VR.
\]
In slotted synchronous schemes under the collision channel model, $\VL=0$ and $\VR = \pkLen$ so $\Vpd = \pkLen$. For asynchronous schemes in general and therefore for \ac{ECRA}, it holds ${\VL = \VR \triangleq \Vg}$. The vulnerable period duration for asynchronous schemes is $\Vpd = 2 \Vg$. Considering the collision channel model, the vulnerable period duration is then $\Vpd = 2 \pkLen$. So, the duration of packets' vulnerable period is doubled in asynchronous schemes w.r.t. comparable synchronous ones \cite{Kleinrock1976_book}.

\subsection{Packet Loss Rate Approximation}
\label{sec:Approx_PER}
In this Section we derive an approximation of the \ac{PLR}, denoted as $\plr$. The approach follows \cite{Sandgren2016} extending the investigation to asynchronous schemes. Let us consider the user $\user$. We denote with $\AllUCP$ the set of all possible \ac{UCP} that cause the loss of user $\user$ packets and with $\UCPcons$ the unique type of \ac{UCP} that we consider to drive the \ac{PLR} performance $\plr$. Let $\fraTp = \fraLen/\pkLen$ denote the \ac{VF} length measured in packet durations and $\nVp=\lfloor \fraLen/\Vpd \rfloor$ denote the number of disjoint vulnerable periods per \ac{VF}. Clearly $\fraTp \geq \dg$. The \ac{PLR} can be approximated with
\begin{equation}
\begin{split}
\label{eq:plr_approx}
\plr &= \Pr\left\{\bigcup_{\UCP \in \AllUCP} \user \in \UCP \right\} \leq \sum_{\UCP \in \AllUCP} \Pr\left\{\user \in \UCP \right\} \approx \Pr\left\{\user \in \UCPcons \right\} \\
&= \sum_{m=2}^{\infty} \frac{e^{-\fraTp \load}\left( \fraTp \load \right)^m}{m!}\Pr\left\{ \user \in \UCPcons | m \right\}.
\end{split}
\end{equation}
The probability $\plr$ is first upper bounded with the union bound and then approximated considering only one type of \ac{UCP}, \ie $\UCPcons$. Finally we take the expectation of active users number over the $\user$-th user \ac{VF}. The \ac{UCP} $\UCPcons$ considered in the analysis is formed by two users only with a generic degree $\dg$. Let $\mU(\UCPcons,m)$ be the number of combinations of $m$ users taken two by two and $\sL(\UCPcons)$ be the number of ways in which $\dg$ replicas can be placed into a \ac{VF}. Finally, $\UL(\UCPcons)$ denotes the number of possible ways that the two users can place their replicas. Making the assumption that $\UCPcons$ spans at most $\fraLen$ seconds in order to simplify the derivation, the probability that the user $\user$ belongs to the \ac{UCP} $\UCPcons$ is
\begin{equation}
\label{eq:prob_UCP}
\Pr\left\{ \user \in \UCPcons | m \right\} \approx \frac{\mU(\UCPcons,m) \sL(\UCPcons)} {\UL(\UCPcons)} \frac{2}{m}
\end{equation}
The quantity $\mU$ is $\mU(\UCPcons,m) = \binom{m}{2}$. The considered user $\user$ sends its first replica immediately, while the remaining $\dg-1$ replicas start times are selected uniformly at random within $\fraLen$ seconds, so $\sL(\UCPcons) \approx \binom{\nVp-1}{\dg-1}$. Finally, the number of ways in which two users can select their position for the replicas follows $\UL(\UCPcons) \approx \nVp \binom{\nVp-1}{\dg-1}^2$. Substituting into equation~\eqref{eq:prob_UCP} these values
\begin{equation}
\label{eq:prob_UCP2}
\Pr\left\{ \user \in \UCPcons | m \right\} \approx \frac{\binom{m}{2}}{\nVp \binom{\nVp-1}{\dg-1}} \frac{2}{m} = \frac{\binom{m}{2}}{\dg \binom{\nVp}{\dg}} \frac{2}{m}.
\end{equation}
Finally, inserting the result of equation~\eqref{eq:prob_UCP2} in equation~\eqref{eq:plr_approx}, we can approximate the \ac{PLR} $ \plr$ as
\begin{equation}
\label{eq:plr_approx_final}
\plr \approx \sum_{m=2}^{\infty} \frac{e^{-\fraTp \load}\left( \fraTp \load \right)^m}{m!} \frac{\binom{m}{2}}{\dg \binom{\nVp}{\dg}} \frac{2}{m}.
\end{equation}

The \ac{PLR} approximation directly depends on the vulnerable period duration via $\nVp$. In the next Sections the vulnerable period duration is computed for two scenarios of interest, including the \ac{MRC} case.

\subsection{Vulnerable Period Duration for Asynchronous RA with FEC}

In this scenario, replicas are protected by a channel code so that not all collisions are destructive. The only \ac{UCP} to be considered is the one involving two users and their replicas. We recall that perfect power control is assumed so that both users are received with the same power $\usPw$. Without loss of generality, we focus on replica $\replica$ involved in an \ac{UCP} of type $\UCPcons$ which has a first section free of interference and a second part interfered. The selected rate $\rate$ determines what it is the minimum fraction of interference-free replica $\intFreeAsy$ that still allows correct decoding, \ie
\begin{equation}
\label{eq:app_a_3}
\intFreeAsy \log_2\left(1 + \frac{\usPw}{\noisePw}\right)+( 1 - \intFreeAsy )\log_2\left(1 + \frac{\usPw}{\noisePw + \usPw}\right) = \rate.
\end{equation}
For the sake of simplicity we denote as
\begin{equation*}
\begin{split}
\rateFree&= \log_2\left(1 + \frac{\usPw}{\noisePw}\right)\\
\rateInt&= \log_2\left(1 + \frac{\usPw}{\noisePw + \usPw}\right)
\end{split}
\end{equation*}
and we solve equation~\eqref{eq:app_a_3} for $\intFreeAsy$
\begin{equation}
\label{eq:app_a_4}
\intFreeAsy = \frac{\rate-\rateInt}{\rateFree-\rateInt}.
\end{equation}
Equation~\eqref{eq:app_a_4} is valid for $\rate \geq \log_2\left(1 + \frac{\usPw}{\noisePw + \usPw}\right)$. In fact, for $\rate < \rateInt$ no \ac{UCP}s involving only two users can be observed, and regardless the level of interference, packets involved in collisions with only one other packet can always be decoded. In this way,
\begin{equation*}
\label{eq:app_a_5}
\intFreeAsy = \left\{
\begin{array}{rl}
\frac{\rate - \rateInt}{\rateFree - \rateInt} & \mbox{for $\rate \geq \rateInt$}\\
0 & \mbox{for $\rate < \rateInt$}
\end{array}
\right .
\end{equation*}
It is worth noticing that $\intFreeAsy$ is constrained to $0\leq \intFreeAsy \leq1$, since the selectable rate $\rate$ is $\rate \leq \log_2\left(1 + \frac{\usPw}{\noisePw}\right) = \rateFree$ for reliable communication.

\begin{sloppypar}
In this way, $\Vg = \intFreeAsy \pkLen$ and therefore the vulnerable period is reduced to ${\Vpd = 2 \Vg = 2 \intFreeAsy \pkLen}$. And finally $\nVp=\lfloor \fraLen/\Vpd \rfloor=\lfloor \fraLen/(2 \intFreeAsy \pkLen) \rfloor$. Inserting the value of $\nVp$ in equation~\eqref{eq:plr_approx_final} we obtain the final expression of the \ac{PLR} approximation for asynchronous \ac{RA} schemes. Note that for $\intFreeAsy \rightarrow 0$, $\nVp \rightarrow + \infty$ and therefore the \ac{PLR} approximation in equation~\eqref{eq:plr_approx_final} tends to $0$.
\end{sloppypar}

\subsection{Vulnerable Period Duration for Asynchronous RA with MRC and $\dg=2$}
\label{sec:low_bound_ECRA_MRC}

Similarly to the previous Section, $\UCPcons$ is the considered \ac{UCP} where two users are interfering each other and they are received with the same power $\usPw$. In this scenario the degree is fixed to $\dg=2$. The focus is on the combined observation after \ac{MRC}. Without loss of generality, it is assumed that the first section of both replicas is free of interference, while there is a second part where just one replica is interfered and in the end there is the last part where both replicas are interfered. We aim at computing the minimum combined observation portion interference free $\intFreeMRC$ that is required for correctly decoding the user after \ac{MRC}. It holds
\begin{equation}
\label{eq:mrc1}
\intFreeMRC \log_2\left(1+2 \frac{\usPw}{\noisePw}\right) + \intOne \log_2\left(1+\frac{\usPw}{\noisePw} + \frac{\usPw}{\noisePw + \usPw}\right) + (1-\intFreeMRC -\intOne )\log_2\left(1+ 2 \frac{\usPw}{\noisePw + \usPw}\right) = \rate.
\end{equation}
For the sake of simplicity we use the following notation
\begin{equation*}
\begin{split}
\rateFree &= \log_2\left(1+2 \frac{\usPw}{\noisePw}\right)\\
\rateIntO &= \log_2\left(1+\frac{\usPw}{\noisePw} + \frac{\usPw}{\noisePw + \usPw}\right)\\
\rateIntT &= \log_2\left(1+ 2 \frac{\usPw}{\noisePw + \usPw}\right).
\end{split}
\end{equation*}
So that equation~\eqref{eq:mrc1} becomes
\begin{equation}
\label{eq:app_a_6}
\intFreeMRC \rateFree + \intOne \rateIntO + (1 - \intFreeMRC - \intOne ) \rateIntT = \rate.
\end{equation}

In order to solve equation~\eqref{eq:app_a_6}, $\intOne$ is expressed as a function of $\intFreeMRC$, as $\intOne = \alpha \intFreeMRC$, where $0\leq\alpha\leq (1-\intFreeMRC)/\intFreeMRC$. When $\alpha=0$, there are no portions where only one out of the two replicas is interfered, while $\alpha=(1-\intFreeMRC)/\intFreeMRC$ represents the case when there are no portions where both replicas are interfered. Solving \eqref{eq:app_a_6} for $\intFreeMRC$ gives
\begin{equation*}
\label{eq:app_a_7}
\intFreeMRC = \frac{\rate-\rateIntT}{\rateFree-\rateIntT+\alpha(\rateIntO-\rateIntT)}.
\end{equation*}
Also in this case, for $\rate<\rateIntT$ we have $\intFreeMRC=0$ which means that no \ac{UCP} involving two replicas can be found,
\begin{equation*}
\label{eq:app_a_8}
\intFreeMRC= \left\{
\begin{array}{rl}
\frac{\rate-\rateIntT}{\rateFree-\rateIntT+\alpha(\rateIntO-\rateIntT)} & \mbox{for $\rate\geq \rateIntT$}\\
0 & \mbox{for $\rate<\rateIntT$}
\end{array}
\right .
\end{equation*}
The average vulnerable period duration over the two replicas is $\Vpd = 2 \Vg = 2\left( \intFreeMRC + \frac{\intOne}{2} \right) \pkLen = 2\intFreeMRC\left(1 + \frac{\alpha}{2}\right) \pkLen$.\footnote{It is important to underline that the expression of the average vulnerable period duration presented is valid no matter how the two replicas are interfered, \ie also when the portions interfered are not both at the beginning of the packets.} And finally $\nVp=\lfloor \fraLen/\Vpd \rfloor=\lfloor \fraLen/\left(2\intFreeMRC\left(1 + \frac{\alpha}{2}\right) \pkLen \right) \rfloor$. 
\section{Performance Analysis}
\label{sec:simulations_all}

In this Section \acs{ECRA-SC} and \ac{ECRA-MRC} are compared to the reference \ac{CRA} protocol as well as with ALOHA. For this first comparison two metrics are considered, the \ac{PLR} and the throughput. The throughput $\tp$ is defined as the expected number of successfully decoded packets per packet duration $\pkLen$,
\[
\tp=(1-\plr)\, \load.
\]
The \ac{ECRA} algorithm is also compared to slotted synchronous \ac{RA} protocols as \ac{CRDSA}. Since a channel code $\Code$ is adopted by the proposed scheme, the throughput is not anymore a sufficient metric. While for slotted synchronous protocols packets are decoded only if they are received collision-free,\footnote{This holds assuming that no power unbalance is present between the received packets and the channel code cannot counteract any collision, \ie $\log_2 \left( 1 + \frac{\usPw}{\noisePw + \usPw}\right) < \rate \leq \log_2\left( 1 + \frac{\usPw}{\noisePw} \right)$.} in the asynchronous case up to a certain level of interference collisions can still be resolved. The level of interference that can be sustained depends on the selected rate $\rate$. In the former case, regardless of the selected rate, the throughput performance remains the same, while in the latter lower rates lead to higher throughput. Nevertheless, lowering the rate decreases the information carried by each packet. This tradeoff is captured by the spectral efficiency $\se$,
\begin{equation*}
\label{eq:thr2}
\se=(1-\plr) \, \load \, \rate \qquad \mathrm{[b/s/Hz]}.
\end{equation*}
Although \ac{ECRA} can outperform considerably the ALOHA protocol, on average it requires more power. In fact, this scheme assumes to replicate each packet sent in the frame $\dg$ times. In order to take into account the increase in average power, we follow the approach of \cite{Abramson1977}, that was extended for slotted synchronous protocols as \ac{CRDSA} and \ac{IRSA} in \cite{Liva2011}. The \emph{normalised capacity} $\normCap$ is defined as the ratio between the maximum achievable spectral efficiency of one of the \ac{RA} schemes and the channel capacity of multiple access Gaussian channel under the same average power constraint. The idea is to compute the maximum spectral efficiency of the asynchronous \ac{MAC} schemes (\ac{ECRA-SC} or \ac{ECRA-MRC}) and normalise it to the sum rate capacity of the multiple access Gaussian channel $\refCap =\log_2(1+\PwAg/\noisePw)$. This is achieved by fixing the average aggregate received signal power $\PwAg$ equal in all the schemes. In this way, for the \ac{RA} protocols the user transmission power $\usPwTx$ takes into account the fact that the channel is used intermittently but $\dg$ times w.r.t. ALOHA, \ie $\usPwTx=\frac{\PwAg}{\load\cdot \dg}$. The ultimate performance of the asynchronous \ac{RA} schemes is given by the maximum spectral efficiency $\maxSe$ defined as
\begin{equation}
\label{eq:max_se}
\maxSe=\max_{\rate\in \left[0,..,\maxRate\right]}\tp(\load) \, \rate
\end{equation}
where for each channel traffic value, the rate $\rate$ which maximises the spectral efficiency is chosen.\footnote{The maximum possible rate for reliable communication $\maxRate$ is $\maxRate=\log_2(1+\usPwTx/\noisePw)$ and depends upon the selected channel load $\load$.} Unfortunately, the throughput expression $\tp(\load)$ is not available in closed form for \ac{ECRA-SC} and \ac{ECRA-MRC}, so only a numerical evaluation of equation~\eqref{eq:max_se} is possible. The normalised capacity $\normCap$ is defined as
\begin{equation*}
\label{eq:efficiency}
\normCap = \frac{\maxSe}{\refCap},
\end{equation*}
where, depending on the \ac{RA}, a different expression of $\maxSe$ will be used.

\subsection{Numerical Results}
\label{sec:simulations}

In the following, numerical results for \ac{ECRA-SC} and \ac{ECRA-MRC} schemes are presented.
The packets sent by the users are composed by $\numBit=1000$ bits, which translate into ${\numSym=\left(\numBit/\rate\right)}$ symbols. The transmission period is then $\pkLen = \symLen \numSym$. The \ac{VF} duration $\fraLen$ is selected to be equal to $200$ packet durations, \ie $\fraLen=200\, \pkLen$. We recall that the number of users generating traffic follows a Poisson distribution with mean $\load$ measured in packets per $\pkLen$ durations, and each of the users transmits $\dg=2$ replicas per generated packet. The decoder operates on a window of $\Wind = 3\, \fraLen$ symbols and once either the maximum number of \ac{SIC} iterations is expired or no more packets can be successfully decoded, it is shifted forward by $\WindSh = 20\, \pkLen$. Ideal interference cancellation is assumed and the block interference model introduced in Section~\ref{sec:int_model} is used for determining the successful decoding of a packet.

\begin{sloppypar}
We present first the simulations of the throughput and \ac{PLR} for both \ac{ECRA-SC} and \ac{ECRA-MRC}. For reference purposes also \ac{CRA} and the ALOHA protocols are depicted in the figures. The assumptions are $\usPw/\noisePw=6$ dB and $\rate=1.5$ equal for all users.
\begin{figure}
\centering
\includegraphics[width=0.8\textwidth]{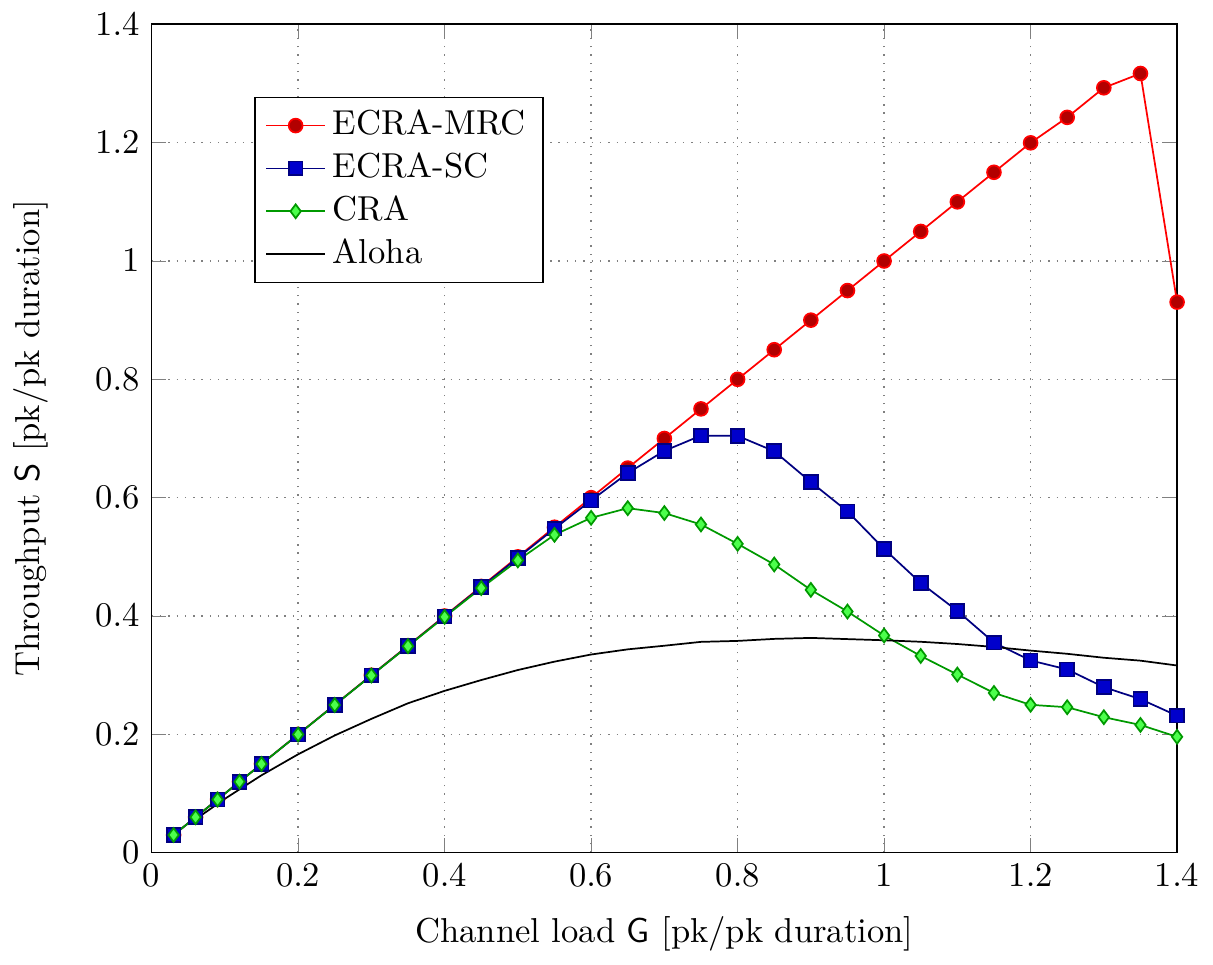}
\caption{Throughput $\tp$ vs. channel load $\load$ for ALOHA, \ac{CRA},
\ac{ECRA-SC} and \ac{ECRA-MRC}, {$\usPw/\noisePw=6$~dB} and $\rate=1.5$.}
\label{fig:T}
\end{figure}
In Figure~\ref{fig:T} we present the throughput $\tp$ vs. the channel load $\load$. \ac{ECRA-MRC} largely outperforms both \ac{ECRA-SC} and \ac{CRA}: reaching a maximum throughput of $\tp=1.32$ at $\load=1.35$, which is more than twice the one of \ac{CRA} ($\tp=0.58$) and it increases by $89$\% with respect to the one of \ac{ECRA-SC} ($\tp=0.70$). Furthermore, the throughput of \ac{ECRA-MRC} follows linearly the channel load up to $1.3$ packets per $\pkLen$, implying very limited \ac{PLR}. In fact, looking at the \ac{PLR} performance in Figure~\ref{fig:PER},
\ac{ECRA-MRC} is able to maintain the \ac{PLR} below $10^{-3}$ for channel load below $1.2$ packets per $\pkLen$.
\begin{figure}
\centering
\includegraphics[width=0.8\textwidth]{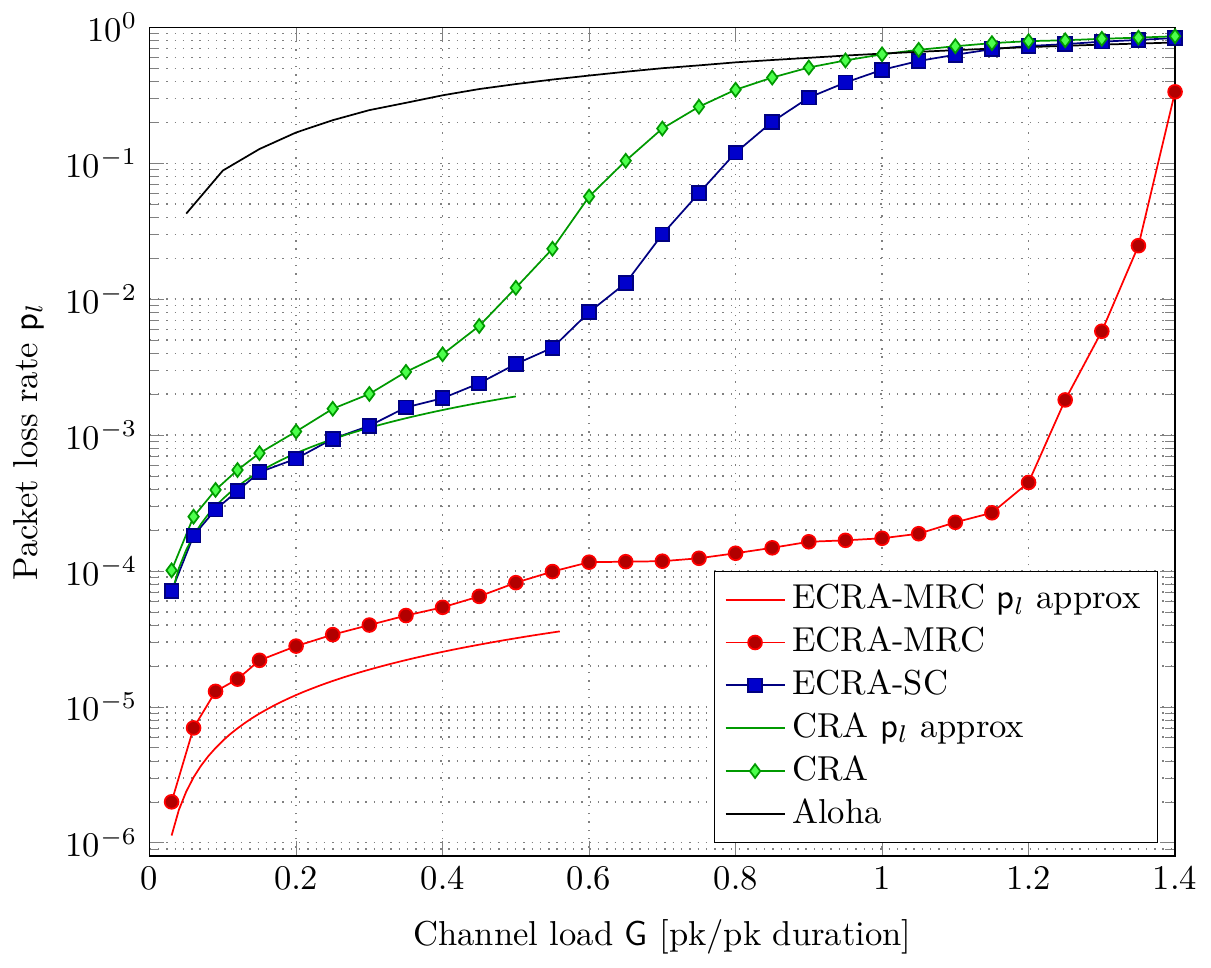}
\caption{Packet error rate $\plr$ vs. channel load $\load$ for ALOHA,
\ac{CRA}, \ac{ECRA-SC} and \ac{ECRA-MRC}, $\usPw/\noisePw=6$ dB and $\rate=1.5$. The
average value of $\alpha$ used in the approximation of $\plr$ for \ac{ECRA-MRC}
is derived through Monte Carlo simulations.}
\label{fig:PER}
\end{figure}
In other words for a target \ac{PLR} of $\plr=10^{-3}$, \ac{ECRA-MRC} can be operated up to $\load=1.2$, while both \ac{ECRA-SC} and \ac{CRA} only up to $\load\cong0.3$ and $\load\cong0.2$ respectively. The gain of \ac{ECRA-MRC} with respect to both \ac{ECRA-SC} and \ac{CRA} in terms of \ac{PLR} is of at least one order of magnitude, except in the very high channel load region, where it largely exceeds this value. It is also shown in the figure, that this protocol is the only one that can maintain $\plr\leq10^{-4}$ for channel load values up to $\load=0.6$. Very low \ac{PLR} are particularly appealing in specific scenarios as satellite applications or control channels where reliability can be as important as efficiency.
\end{sloppypar}

In Figure~\ref{fig:PER}, the approximation on the packet loss rate $\plr$ for both \ac{CRA} and \ac{ECRA-MRC}, derived in Section~\ref{sec:Approx_PER}, is also shown. This approximation takes into account only the errors coming from \ac{UCP}s involving two users, and for very limited channel load values is very close to the simulated $\plr$. For \ac{CRA}, when $\load\leq0.3$, the approximation approaches the $\plr$ simulated performance, while for increasing $\load$ the probability of having \ac{UCP}s involving more than two users starts to have an impact on $\plr$ and therefore the approximation starts to become loose. Although a similar behaviour can be found for the approximation of \ac{ECRA-MRC}, interestingly the relative distance between the approximation and the simulations remains almost constant for a large range of channel load values.

\begin{sloppypar}
In the second set of simulations, performance comparison of the slot synchronous scheme \ac{CRDSA} with the asynchronous schemes \ac{CRA}, \ac{ECRA-SC} and \ac{ECRA-MRC} is presented. The metric used for the comparison is the spectral efficiency $\se$. For the sake of completeness, we recall that \ac{CRDSA} has the same throughput performance $\tp$ for ${\log_2 \left( 1 + \frac{\usPw}{\noisePw + \usPw}\right) < \rate \leq \log_2\left( 1 + \frac{\usPw}{\noisePw} \right)}$, under the assumption of equal received power for all users and no multi-packet reception. Therefore, we select the rate $\rate^{\mathrm{s}}=\log_2\left( 1 + \frac{\usPw}{\noisePw} \right)$  for \ac{CRDSA} in both simulations. For asynchronous schemes (\ac{CRA}, \ac{ECRA-SC} and \ac{ECRA-MRC}) instead, a rate $\rate^{\mathrm{a}}$ with $\rate^{\mathrm{a}}<\log_2\left( 1 + \frac{\usPw}{\noisePw} \right)$ is chosen.
\begin{figure}
\centering
\subfigure[$\usPw/\noisePw=6$ dB and for the asynchronous schemes
$\rate^{\mathrm{a}}=1.5$.]{
    \includegraphics[width=0.8\textwidth]{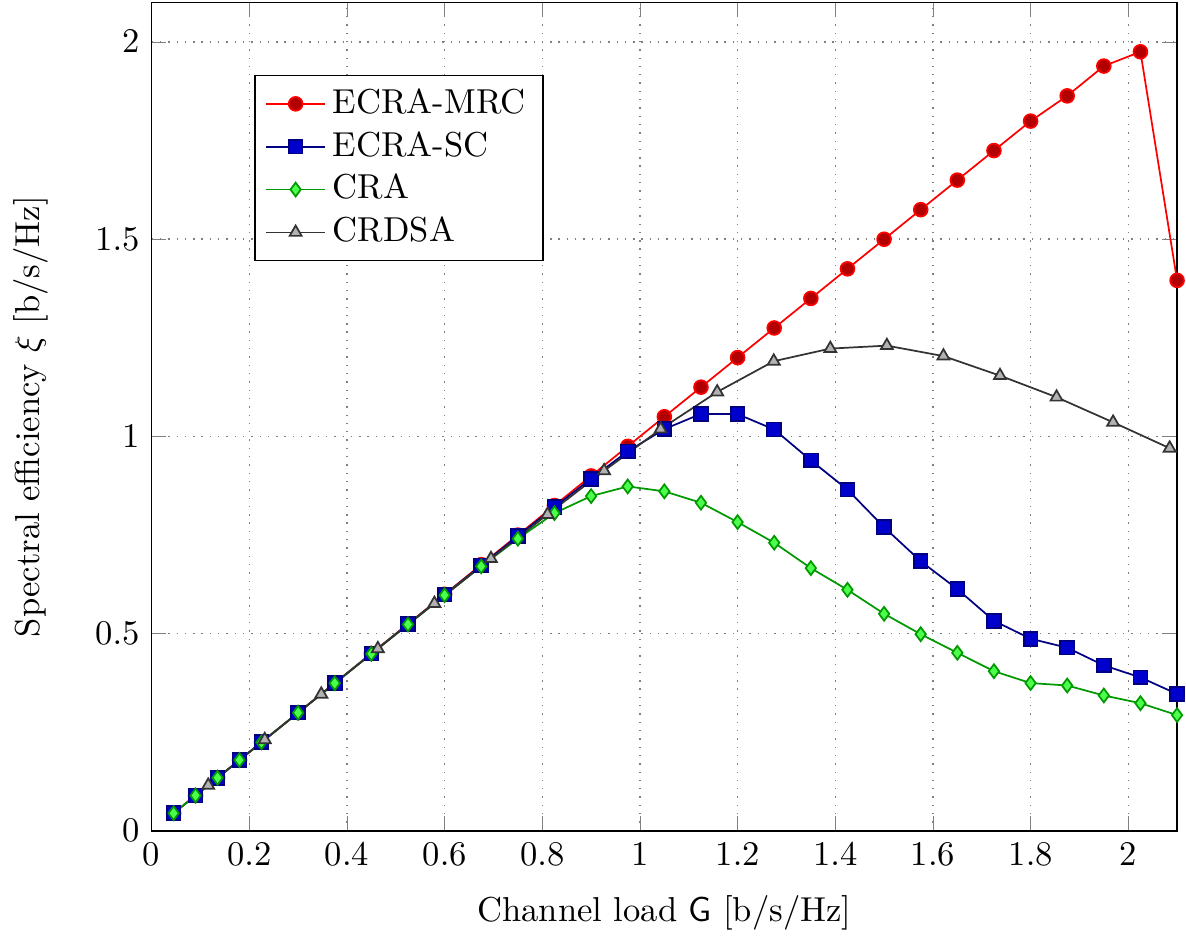}
    \label{fig:sp_eff_a}
}\\
\subfigure[$\usPw/\noisePw=6$ dB and for the asynchronous schemes
$\rate^{\mathrm{a}}=2$.]{
    \includegraphics[width=0.8\textwidth]{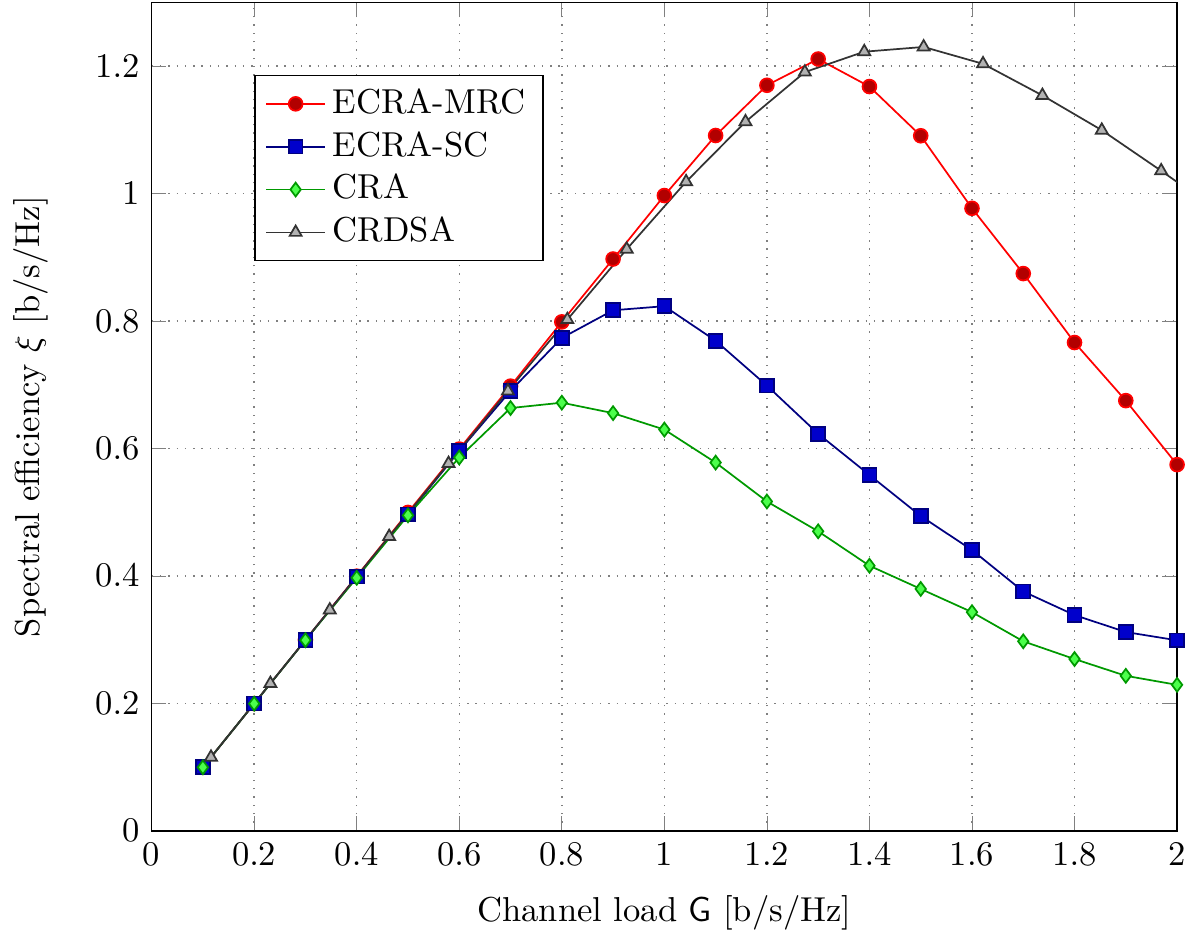}
    \label{fig:sp_eff_b}
}
\caption{Spectral efficiency $\se$ vs. channel load $\load$ for \ac{ECRA-SC}, \ac{ECRA-MRC}, \ac{CRA} and \ac{CRDSA}.}
\label{fig:sp_eff}
\end{figure}
In Figure~\ref{fig:sp_eff} we present the aforementioned comparison. We select $\usPw/\noisePw=6$ dB and in the first scenario the rate for the asynchronous schemes is $\rate^{\mathrm{a}}=1.5$, (Figure~\ref{fig:sp_eff_a}), while in the second one it is $\rate^{\mathrm{a}}=2$, (Figure~\ref{fig:sp_eff_b}). In the former scenario, \ac{ECRA-MRC} is able to largely outperform \ac{CRDSA} as well as all other asynchronous schemes. Outstandingly, \ac{ECRA-MRC} reaches spectral efficiencies up to $2$ b/s/Hz while under similar conditions \ac{CRDSA} can only exceed $1.2$ b/s/Hz. The result is a gain of $60\%$ in the maximum spectral efficiency of \ac{ECRA-MRC} over \ac{CRDSA}. On the other hand, \ac{ECRA-MRC} is particularly sensible to the channel load working point. In fact, once the maximum spectral efficiency is reached, the performance degrade drastically. This is not the case for \ac{CRDSA} and the other asynchronous schemes, which are subject to a more graceful degradation after the maximum spectral efficiency. Finally, observe that \ac{ECRA-SC} is able to reach spectral efficiencies of $1.05$ b/s/Hz which is $85\%$ of the one of \ac{CRDSA}.
\end{sloppypar}

In the second scenario, instead, \ac{CRDSA} is able to slightly outperform \ac{ECRA-MRC}. Note that \ac{CRDSA} uses the same rate in both scenarios, while the asynchronous schemes are using a rate of $\rate^{\mathrm{a}}=2$, greater than the one of the first scenario. \ac{ECRA-MRC} benefits from lower \ac{PLR} in the increasing slope of spectral efficiency thanks to a rate able to counteract small levels of interference. Nonetheless, after the maximum spectral efficiency of $1.21$ b/s/Hz is reached, the performance degrade rapidly. \ac{CRDSA} instead benefits from a more gradual increase in spectral efficiency towards the maximum and also a more graceful degradation. As a final remark, \ac{ECRA-MRC} \--- as well as asynchronous schemes in general \--- is very sensitive to the selected rate. More conservative rates, \ie lower rates lead to better performance with respect to more aggressive ones. Although we can expect great improvements adopting \ac{ECRA-MRC}, less robustness in terms of channel load conditions around the maximum spectral efficiency has to be taken into account w.r.t. \ac{CRDSA} or the other asynchronous schemes, including \ac{ECRA-SC}.

The last set of simulations shows the comparison among ALOHA, \ac{CRA}, \ac{ECRA-SC} and \ac{ECRA-MRC}, in terms of the normalised capacity $\normCap$. $\PwAg/\noisePw=6$ dB is selected and the results are presented in Figure~\ref{fig:norm_cap_tot}.
\begin{figure}
\centering
\subfigure[Normalised capacity $\normCap$ for ALOHA, \ac{CRA}, \ac{ECRA-SC} and \ac{ECRA-MRC} with $\PwAg/\noisePw=6$ dB.]{
    \includegraphics[width=0.8\textwidth]{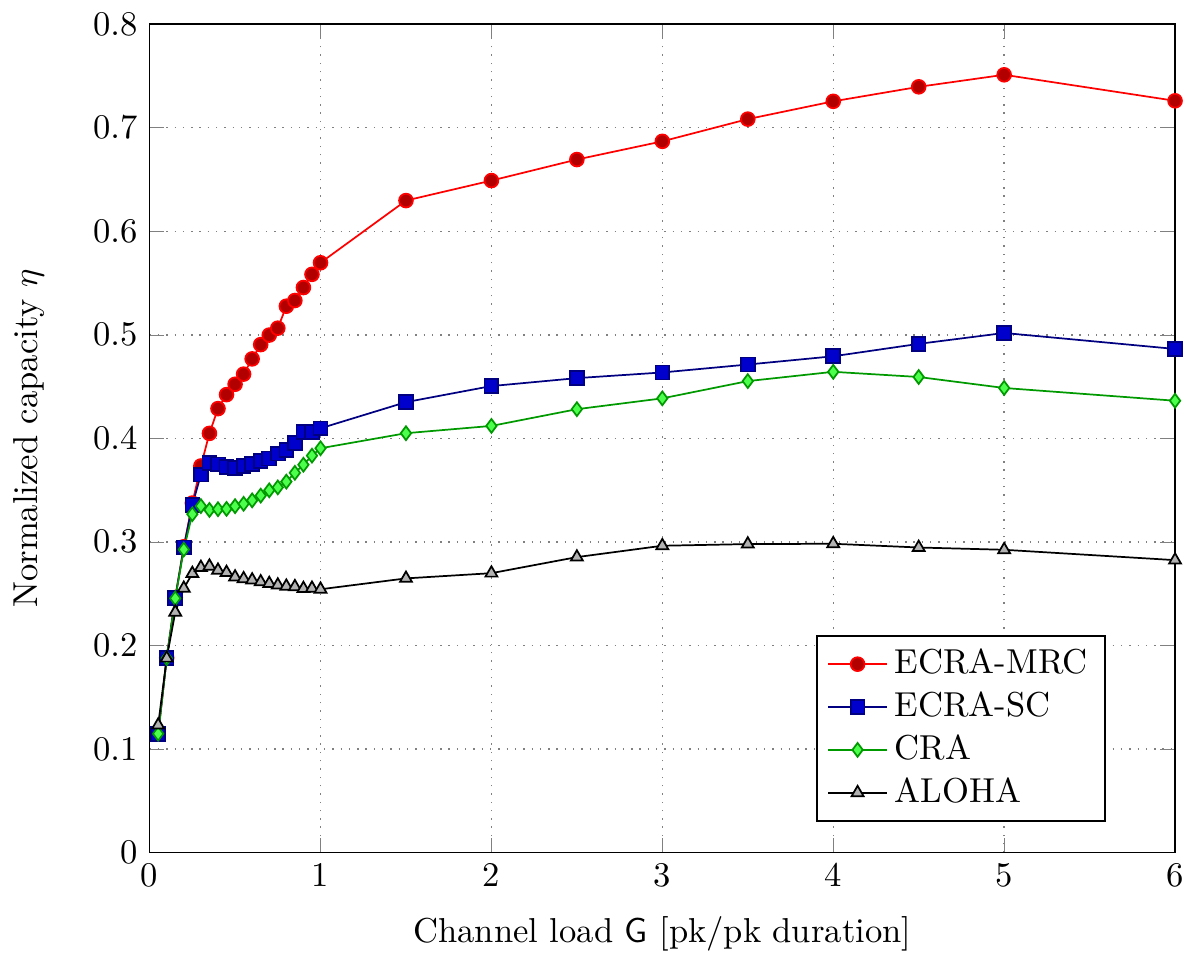}
    \label{fig:norm_cap}
}\hspace{0 mm}
\\
\subfigure[Rate maximizing the spectral efficiency for \ac{CRA}, \ac{ECRA-SC} and \ac{ECRA-MRC} with $\PwAg/\noisePw=6$ dB.]{
    \includegraphics[width=0.8\textwidth]{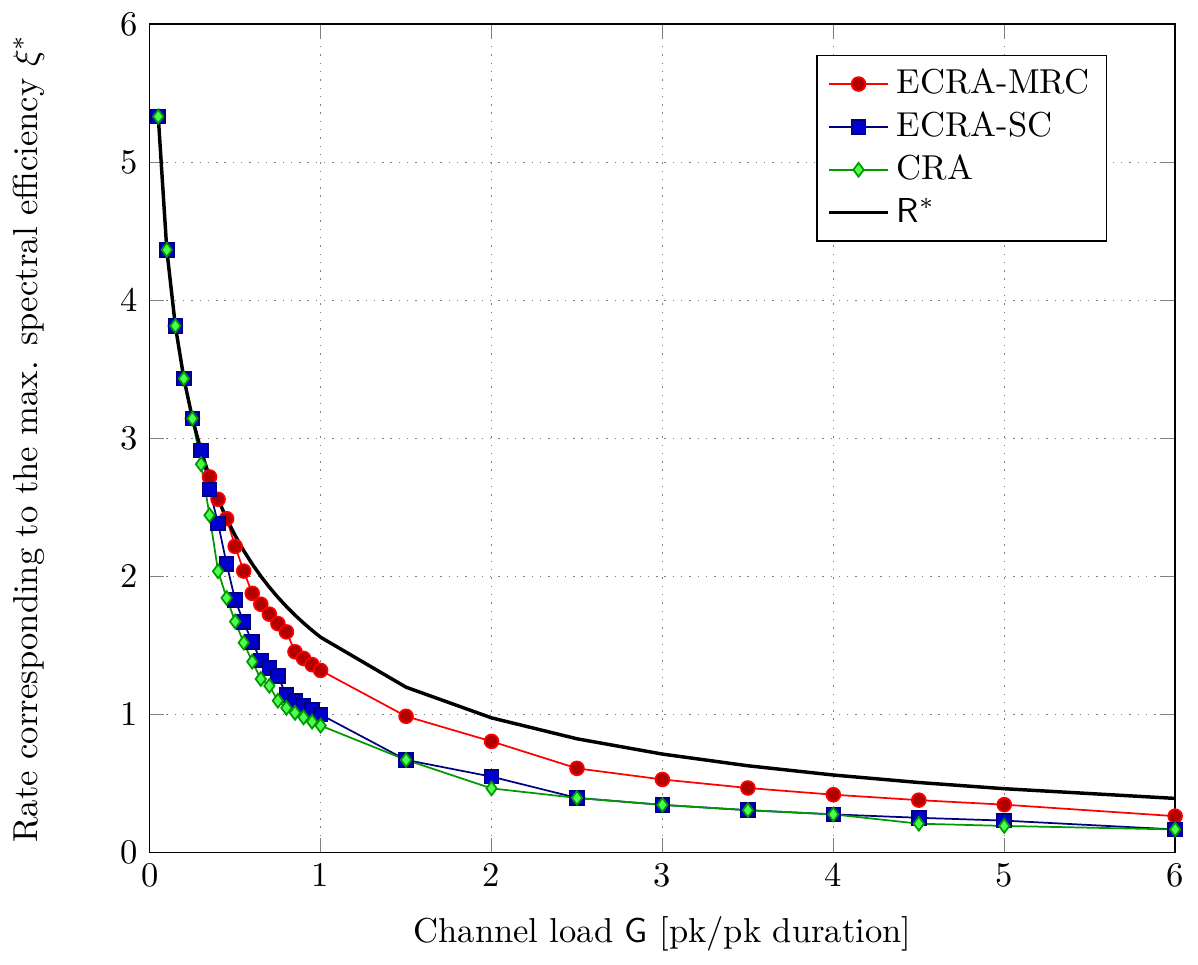}
    \label{fig:rate_max}
    }
\caption{Normalised capacity $\normCap$ for ALOHA, \ac{CRA}, \ac{ECRA-SC} and \ac{ECRA-MRC} with $\PwAg/\noisePw=6$ dB and corresponding rate.}
\label{fig:norm_cap_tot}
\end{figure}
The normalised capacity for \ac{ECRA-MRC} can reach up to 75\% of the \ac{MAC} channel capacity, for a channel load $\load=5$ with rate $\rate\cong0.35$; see Figure~\ref{fig:rate_max}. At this channel load, the gain is 50\% with respect to \ac{ECRA-SC} and 67\% with respect to \ac{CRA}. Interestingly, the normalised capacity for \ac{ECRA-MRC} as well as for both \ac{ECRA-SC} and \ac{CRA} is relatively constant for heavy channel load \ie $\load>3$. In this way, the schemes appear to be robust against channel load fluctuations. On the other hand, the rate for which the maximum spectral efficiency $\maxSe$ (and so the normalised capacity) of the schemes is achieved decreases as the channel load increases, see Figure~\ref{fig:rate_max}. Therefore the system would be required to adapt the rate in order to reach the best performance in terms of normalised capacity. Nevertheless, the adaptation of the rate remains quite limited in this channel load region, ranging from a maximum of 0.53 at $\load=3$ to a minimum of 0.27 at $\load=6$ for \ac{ECRA-MRC}. For limited channel load, all the schemes perform very close, with ALOHA being slightly the best option. This is due to the low collision probability and the benefit of double transmit power of ALOHA compared to \ac{CRA} or \ac{ECRA} since no replicas are sent.

In Figure~\ref{fig:rate_max}, the rate corresponding to the maximum spectral efficiency for \ac{ECRA-MRC}, \ac{ECRA-SC} and \ac{CRA} is shown. The maximum possible rate under this scenario is also depicted with a solid line in the figure. For limited channel load, the maximum spectral efficiency is achieved when using the maximum rate allowed, supporting the fact that collisions of received packets are seldom, and the spectral efficiency can be maximised pushing the rate as much as it is allowed. On the other hand, as soon as the channel load exceeds $\load=0.3-0.4$, the maximum spectral efficiency is reached for rate values below the maximum one. In this way, the maximum spectral efficiency under moderate to high channel load conditions can be maximised taking a margin with respect to the maximum rate. This margin is helpful to counteract part of the collisions and at the same time does not reduces heavily the spectral efficiency.\footnote{Please note that the rate for ALOHA is not depicted in Figure~\ref{fig:rate_max} because it has a different degree $\dg$, and therefore the results are not comparable.} 

\section{Detection, Combining and Decoding - A Two-Phase Procedure}
\label{sec:sys_mod_ECRA_dis}

In this Section we present the procedure that can be adopted for allowing \ac{ECRA} to do combining without the need of decoding the packets beforehand. The key idea is to do split the detection, combining and decoding into a two phase procedure:
\begin{enumerate}
\item \textbf{Detection Phase:} All candidate replicas are detected exploiting the sync word common to every user which is concatenated to the packet, in the first phase.
\item \textbf{Replica Matching Phase:} A data-aided non-coherent soft-correlation on candidates surviving the matching criterion is carried out and the first $\dg-1$ matches are declared as replicas belonging to the same user. On them we apply combining and decoding is attempted on the combined observation.
\end{enumerate}

Differently to the original \ac{ECRA} transmission procedure, here we introduce (optional) time slots. We shall note that the time slots are not necessary for the correct behaviour of the two-phase procedure, but are easing the second phase reducing the number of possible replicas' matches, as we will see later. In the same way as the original \ac{ECRA} transmission procedure, each user arranges its transmission within a \ac{VF}. The \ac{VF} are asynchronous among users. Differently from the original \ac{ECRA}, each \ac{VF} is divided in $\numSlot$ slots of duration $\SlotSize$, so that $\fraLen=\numSlot\, \SlotSize$, see also Figure~\ref{fig:MAC_frame_tx}. Users transmit $\dg$ replicas of duration $\pkLen$ seconds within the \ac{VF}. Each replica is transmitted over $\nSlotsSmall$ consecutive slots within the \ac{VF} and we have that a replica duration is a multiple of the slot duration, $\pkLen=\nSlotsSmall\, \SlotSize$. Each replica is composed by $\numSym$ modulated symbols and the symbol duration is $\symLen$, so ${\numSym\, \symLen=\nSlotsSmall\, \SlotSize=\pkLen}$. Each replica is transmitted starting from a slot index chosen uniformly at random in ${[0,\numSlot-\nSlotsSmall-1]}$, rejecting starting slot indexes which lead to self-interference among replicas of a user's packet.

\begin{figure*}
    \centering
    \includegraphics[width=\textwidth]{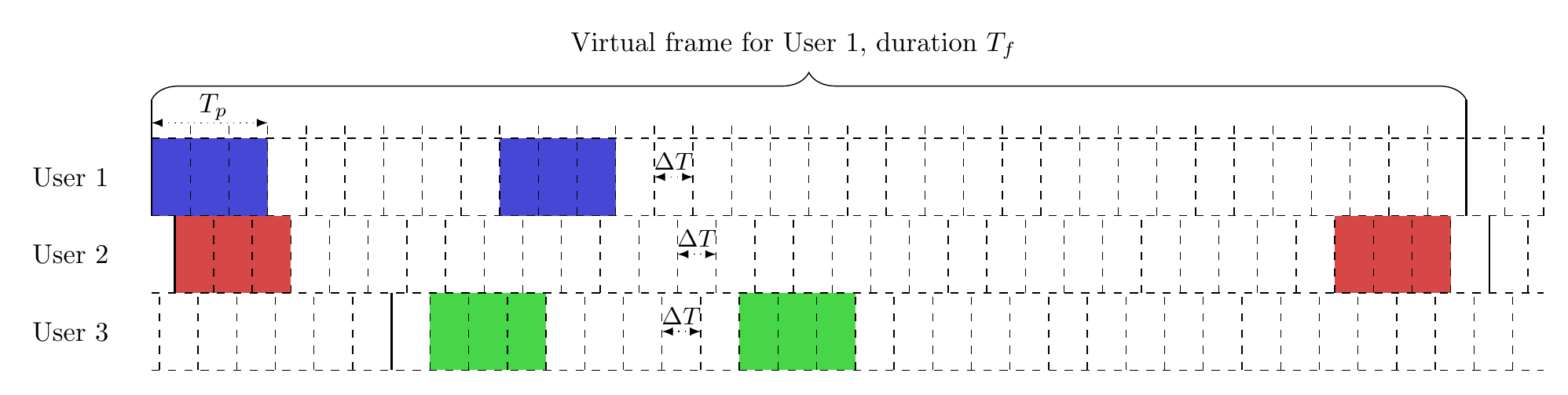}
    \caption{Transmitted signals. Each user sends two replicas of duration $\pkLen$ seconds that occupy $3$ time slots in the example.}
    \label{fig:MAC_frame_tx}
\end{figure*}
In contrast to \ac{CRA} \cite{Kissling2011a} and the first version of \ac{ECRA} \cite{Clazzer2012}, no pointer field is required in the header for localizing the replicas position. The first section of each replica is a sync word composed by $\syncSym$ binary symbols $\symVec=(\sym_0,...,\sym_{\syncSym-1})$ common to all users, with ${\sym_i\in \{-1,+1\}}$ for $i=0,...,\syncSym-1$. The sync word is concatenated with the BPSK modulated data part and sent through an \ac{AWGN} channel. The data part carries the actual information and the redundancy introduced by the channel code.

\subsection{Detection and Decoding}
\label{sec:rx}
At the receiver side, the incoming signal $\rx(\tm)$ is sampled and input to the frame start detector. The receiver will operate with a sliding window, similarly to \cite{Meloni2012, deGaudenzi2014_ACRDA}. The decoder starts operating on the first $\Wind$ samples, with $\Wind$ the designed window size. First it detects candidate replicas.
\begin{figure*}
\centering
 \subfigure[Non-coherent soft-correlator used for the detection of candidates replicas.]
   {\includegraphics[width=\textwidth]{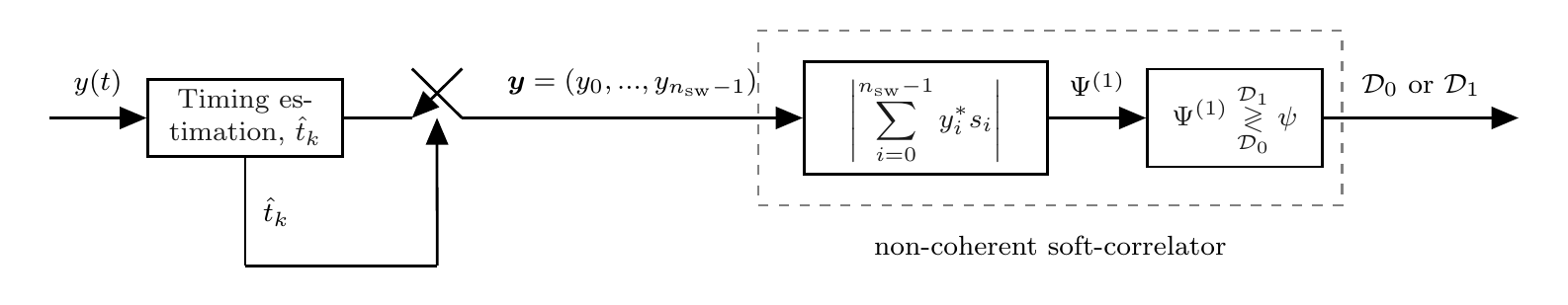}
   \label{fig:rule1}}
 \subfigure[Interference-aware soft-correlator used for the detection of candidates replicas.]
   {\includegraphics[width=\textwidth]{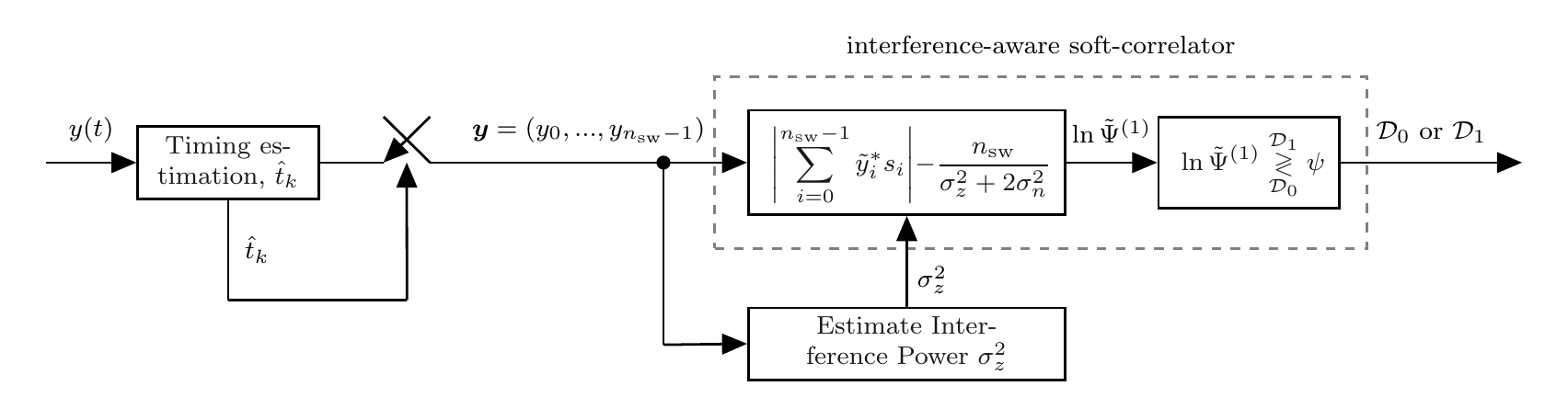}
   \label{fig:rule2}}
 \caption{The non-coherent soft-correlator and the interference-aware soft-correlator used for the detection of candidate replicas.}
 \label{fig:rules}
\end{figure*}

\subsubsection{Detection Phase}
In the first phase the non-coherent soft-correlation metric \cite{Chiani_2010} is used for identifying candidates replicas; see Figure~\ref{fig:rule1}. Within a receiver window, a threshold-based test is applied to each of the $\Wind-\syncSym$ sequences of $\syncSym$ consecutive samples (referred in the following as test intervals) to detect the presence of a sync word. We denote with
\begin{equation*}
\rxVec=(\rx_0,...,\rx_{\syncSym-1})
\end{equation*}
the sequence of $\syncSym$ samples on which the threshold test is applied. Here, we are implicitly assuming that the epoch is estimated prior to frame synchronization. Under the hypothesis that the test interval is aligned to a sync word, the epoch estimation can be reliably performed using pilot-aided\footnote{Observe that the sync word can be effectively used as pilot field for timing estimation.} techniques mutated from code synchronization algorithms used in spread-spectrum communications, see \eg \cite{Polydoros1984} and references therein. If the test window is not aligned with the sync word of any user, we assume the epoch estimator returning a random sampling offset, uniformly-distributed in $(0,\symLen]$. For each test interval \--- similarly to \cite{Chiani_2010} \--- the frame synchroniser has to decide among two hypothesis, \ie
\[
\begin{aligned}
\HypBelow &:\rxVec=\intVec+\noiseVec\\
\HypAbove &:\rxVec=\symVec\,e^{j\phase^{(\user,\replica)}}+\intVec+\noiseVec
\end{aligned}
\]
where the first hypothesis refers to the case of no sync word, while the second one refers to the case of sync word present. Here $\noiseVec=(\noise_0,...,\noise_{\sw-1})$ are samples of a discrete white Gaussian process with $\noise_i \sim \mathcal{CN}(0,2\noiseVar)$ and $\intVec$ is the interference contribution over the $\syncSym$ observed samples.

We adopt the threshold test
\begin{equation}
\label{eq:corr}
\TestThrOne(\rxVec)=\left|\sum_{i=0}^{\syncSym-1}\rx_{i}^* \sym_i\right|\underset{\DecBelow}{\overset{\DecAbove}{\gtrless}}\ThrCorr.
\end{equation}
Where decision $\DecAbove$ corresponds to hypothesis $\HypAbove$ and decision $\DecBelow$ corresponds to hypothesis $\HypBelow$ and the threshold $\ThrCorr$ is the discriminant between the two decision regions. We call ${\RStartSet=\left\{\RStart_1 ,\RStart_2,...\right\}}$ the set of candidate replicas starting positions, \ie the set containing the positions within the receiver window for which the test of equation~\eqref{eq:corr} outputs $\DecAbove$. The set of candidate replica positions is the outcome of the first phase.

\subsubsection{Replica Matching Phase}
Let us consider the first candidate replica identified in the first phase. We denote its starting position as $\RStart_1$, with $\RStart_1 \in \RStartSet$. The focus is on finding a subset $\RStartSet_1\subseteq \RStartSet$ containing the initial positions of bursts that are likely replicas of the (hypothetical) burst starting in position $\RStart_1$. To do so, we define the following compatibility criterion:
\begin{defin}[Compatibility Criterion] \label{def:comp_crit}
	A start position $\RStart_i \in \RStartSet$ is said to be compatible with $\RStart_1$ iff
\begin{equation}
\RStart_i=\RStart_1+k\, \SlotSize
\end{equation}
for some positive integer $k$, $\RStart_1<\RStart_i<\Wind\, \symLen-\SlotSize$.
\end{defin}
The set $\RStartSet_1$ is hence formally defined as
\begin{equation}
\RStartSet_1\triangleq \left\{\RStart_i \in \RStartSet| \RStart_i=\RStart_1+k\, \SlotSize, k \in \mathbb{Z}^+\right\}.
\end{equation}
The subset $\RStartSet_1$ contains the starting positions that are compatible (given the \ac{VF} structure) with $\RStart_1$, \ie their associated burst are likely replicas of the burst starting at position $\RStart_1$.

Denote with $\rxVec^{(i)}=(\rx_0^{(i)},...,\rx_{\numSym-1}^{(i)})$ the $\numSym$ samples of the received signal starting in position $\RStart_i$ within the window. For each $\RStart_i \in \RStartSet_1$, we compute the non-coherent correlation
\begin{equation}
\label{eq:corr_comb}
\TestThrTwo_{1,i}(\rxVec)\triangleq\left|\sum_{j=0}^{\numSym-1}\rx_{j}^{(1)}\left[\rx_j^{(i)} \right]^*\right|.
\end{equation}
We order the $\TestThrTwo_{1,i}$ in descending order and we mark the first $\dg-1$ as replicas of the same user.

On these replicas we apply combining techniques as \acs{SC}, \ac{MRC} or \ac{EGC}. If decoding is successful, all the replicas are removed from the received signal. The set $\RStartSet$ is updated accordingly by removing the starting positions of the cancelled replicas. The process is iterated until $\RStartSet$ is empty, or if decoding fails for all remaining candidates in $\RStartSet$. The channel decoder is assumed to be capable of identifying unrecoverable errors with high probability.\footnote{Error detection can be implemented either by using an incomplete channel decoder or by concatenating an outer error detection code with the inner channel code.} Once no more packets can be decoded within the window, the receiver's window is shifted forward by $\WindSh$ samples and the procedure starts again.

\subsection{Hypothesis Testing, Interference-Aware Rule}
\label{sec:hyp}
We derive here an advanced correlation rule, named $\TestThrInt$, which takes into consideration the presence of interference. We resort to a Gaussian approximation of the interference contribution. The interference term $\intVal_i$ is modeled as $\intVal_i\sim \mathcal{CN}(0,\intVar)$. Furthermore, we assume $\intVar$ to be constant for the entire test interval. The joint noise and interference contribution is given by $\noiseInt_i =\intVal_i +\noise_i$, so that $\noiseInt_i\sim \mathcal{CN}(0,\intVar+2\noiseVar)$.
The approximated \ac{LRT} is then obtained by evaluating,
\begin{align}
\label{eq:adv_corr1}
\TestThrInt(\rxVec) &= \frac{f_{\rxrvVec|\HypAbove}(\rxVec|\HypAbove)} {f_{\rxrvVec|\HypBelow}(\rxVec|\HypBelow} \underset{\DecBelow}{\overset{\DecAbove}{\gtrless}} \ThrCorrInt\\
\intertext{
where $f_{\rxrvVec|\Hyp_i}(\rxVec|\Hyp_i)$ is the approximated distribution of the random vector $\rxrvVec=(\rxrv_0,...,\rxrv_{\syncSym-1})$ under the hypothesis $\Hyp_i$. For the $\HypBelow$ hypothesis we can write}
\label{eq:H0}
f_{\rxrvVec|\HypBelow}\left(\rxVec|\HypBelow\right) &= \prod_{i=0}^{\syncSym-1}\frac{1}{\pi \left(\intVar+2\noiseVar\right)} e^{-\frac{|\rx_i|^2} {\intVar+2\noiseVar}}.\\
\intertext{For the $\HypAbove$ hypothesis we can write}
\label{eq:H1}
f_{\rxrvVec|\HypAbove,\phase}(\rxVec|\HypAbove,\phase) &= \prod_{i=0}^{\syncSym-1}\frac{1}{\pi \left(\intVar+2\noiseVar\right)} e^{-\frac{|\rx_i-\symVal_ie^{j\phase}|^2} {\intVar+2\noiseVar}}\\
\intertext{We define $\rxTwo_i=\rx_i/\left(\intVar+2\noiseVar\right)$. Averaging \eqref{eq:H1} over $\phase$ we find,}
\label{eq:H1_2}
f_{\rxrvVec|\HypAbove}(\rxVec|\HypAbove) &= \left[\prod_{i=0}^{\syncSym-1}\frac{1}{\pi \left(\intVar+2\noiseVar\right)} e^{-\frac{|\rx_i|^2+1} {\intVar+2\noiseVar}}\right] \cdot I_0\left(\left|\sum_{i=0}^{\syncSym-1}\rxTwo_i^*\symVal_i\right|\right).\\
\intertext{Substituting equations~\eqref{eq:H1_2} and \eqref{eq:H0} in the expression of equation~\eqref{eq:adv_corr1} we get}
\label{eq:adv_corr2}
\TestThrInt(\rxVec) &= e^{-\frac{\syncSym}{\intVar+2\noiseVar}} I_0\left(\left|\sum_{i=0}^{\syncSym-1}\rxTwo_i^*\symVal_i\right|\right) \underset{\DecBelow}{\overset{\DecAbove}{\gtrless}} \ThrCorrInt.\\
\intertext{\begin{sloppypar}Applying the natural logarithm of both sides and making the use of the approximation ${\ln (I_0(x))\cong |x|-\ln\sqrt{2\pi |x|}\cong |x|}$ \cite{Chiani_2010}, we can rework equation~\eqref{eq:adv_corr2} as\end{sloppypar}}
\label{eq:adv_corr3}
\ln \TestThrInt(\rxVec) &\cong \left|\sum_{i=0}^{\syncSym-1}\rxTwo_i^*\symVal_i\right| - \frac{\syncSym}{\intVar+2\noiseVar} \underset{\DecBelow}{\overset{\DecAbove}{\gtrless}} \ThrCorr
\end{align}
where $\ThrCorr=\ln\left(\ThrCorrInt\right)$. With respect to the non-coherent soft-correlation rule of equation~\eqref{eq:corr}, we can observe that in \eqref{eq:adv_corr3} the correlation term is followed by a correction term that depends on the sync word length and on the interference level. The latter is required to be estimated (See Figure~\ref{fig:rule2}). 
\section{Two-Phase Procedure Numerical Results}
\label{sec:num_res_ECRA_dis}

We first compare the two non-coherent soft-correlation rules derived in Sections \ref{sec:rx} and \ref{sec:hyp} in terms of \ac{ROC}. In the second part we show the performance of the \ac{ECRA} receiver in terms of the probability of correct detection of the replicas and the probability of correct combining of replicas from the same user.

\subsection{ROC Comparison}
\begin{sloppypar}
We compare the performance of the two correlation rules $\TestThrOne$ and $\TestThrInt$ via Monte Carlo simulations. The comparison is done in terms of \ac{ROC}. The false alarm probability $\probFalse$ is defined as ${\probFalse=\Pr\{\Lambda>\lambda|\DecBelow\}}$. The detection probability $\probDet$ is defined as ${\probDet=\Pr\{\Lambda>\lambda|\DecAbove\}}$. We set $f_{\mathrm{max}} = 0.01/ \symLen$. The aggregate signal is then summed with Gaussian noise. The selected $E_s/\noiseSD$ is $\EnSym/\noiseSD=10$~dB. A sync word of $32$ bits of hexadecimal representation $\{1ACFFC1D\}$ has been adopted, which results in $\syncSym=32$ symbols.
\end{sloppypar}

\begin{figure}
\centering
 \subfigure[\ac{ROC} for $\TestThrOne$, $\TestThrInt$, $\load=0.5$.]
   {\includegraphics[width=0.48\textwidth]{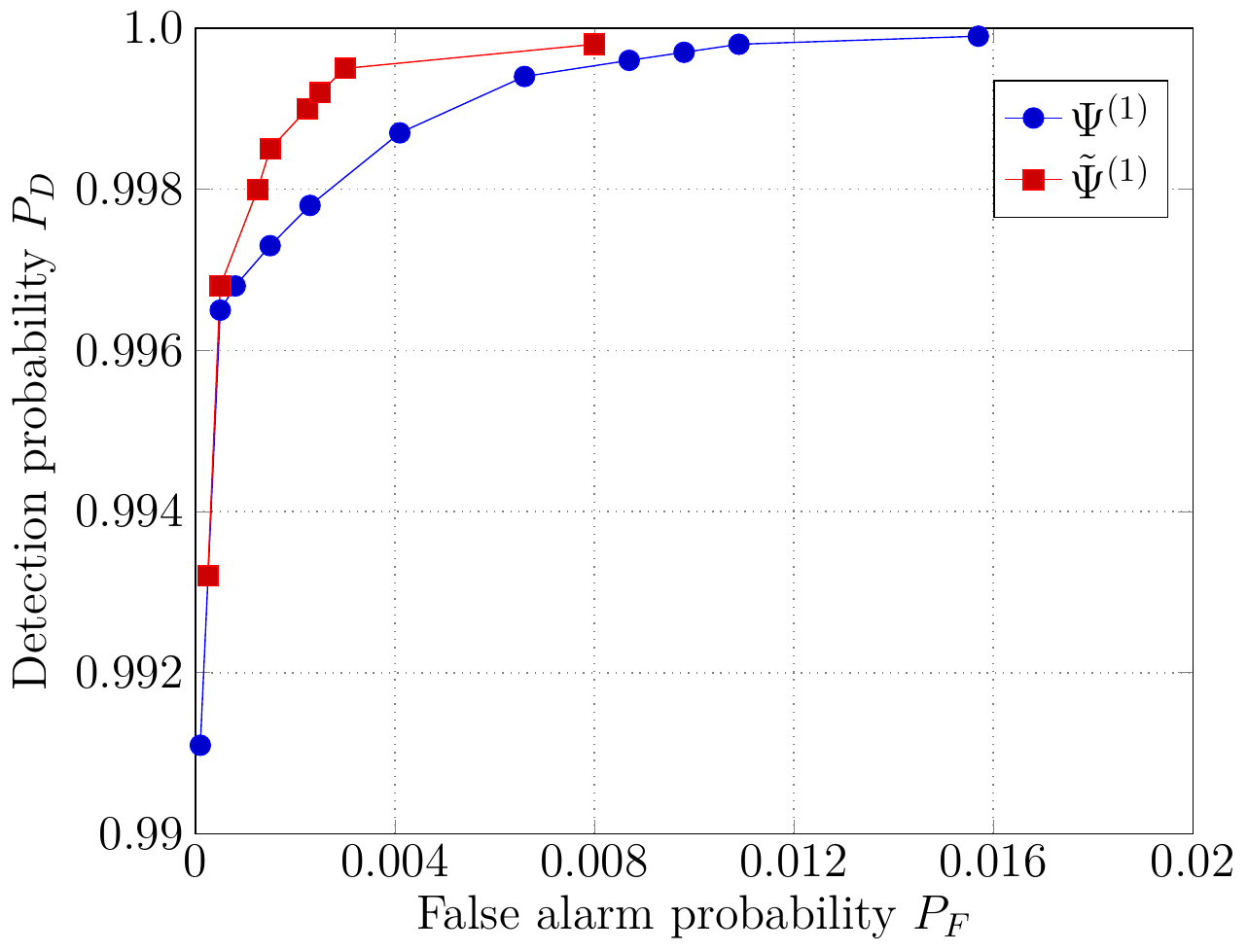}
   \label{fig:Roc_0_5}}
 \subfigure[\ac{ROC} for $\TestThrOne$, $\TestThrInt$, $\load=1.5$.]
   {\includegraphics[width=0.48\textwidth]{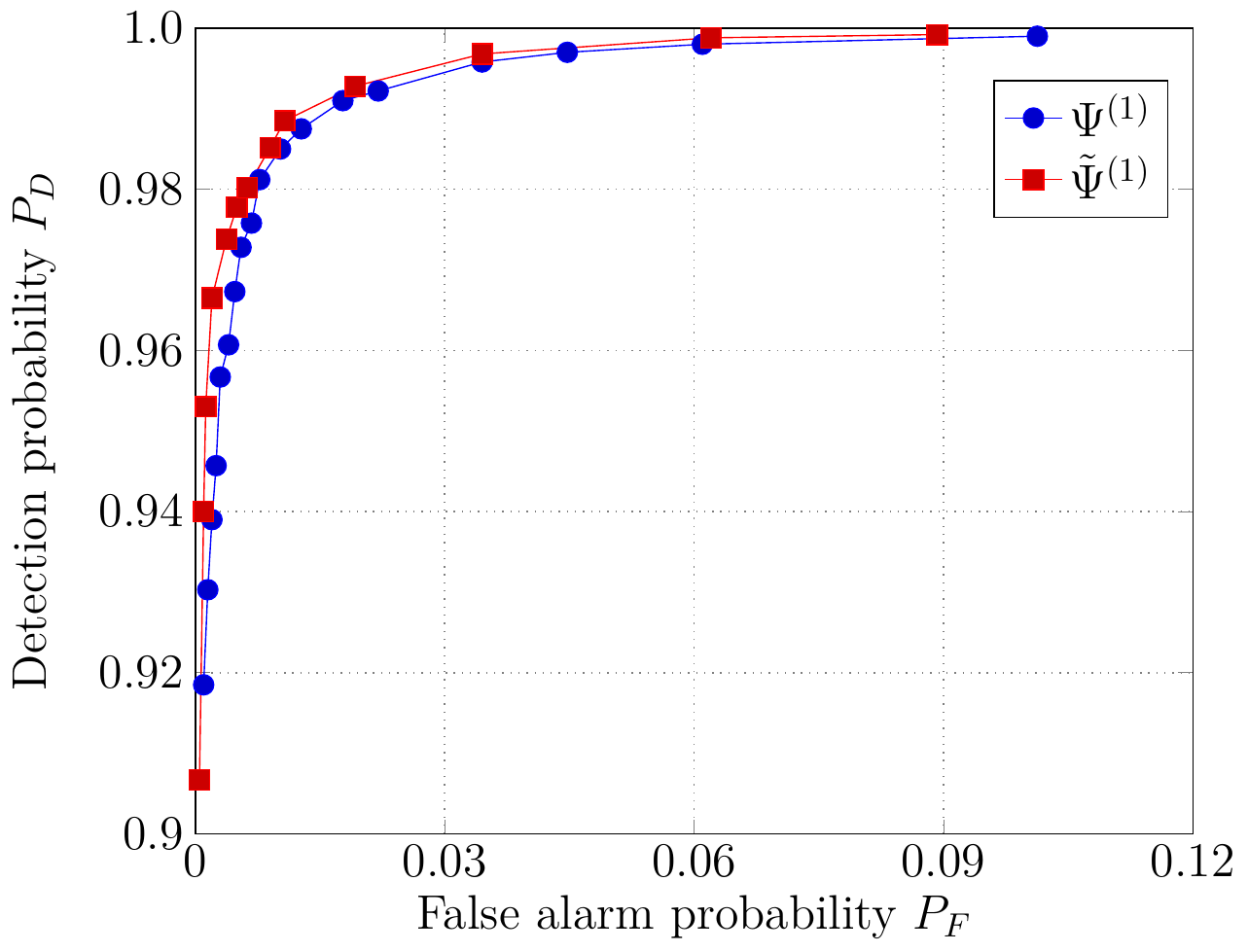}
   \label{fig:Roc_1_5}}
 \caption{\Ac{ROC} for the non-coherent and interference-aware soft-correlation synchronization rules, with $\load=\{0.5,1.5\}$, equal received power, $\EnSym/\noiseSD=10$~dB and $\syncSym=32$ symbols.}
 \label{fig:Roc}
\end{figure}

Results for channel traffic values $\load=\{0.5,1.5\}$ are presented in Figure~\ref{fig:Roc}. As expected, the knowledge on the interference level exploited in rule $\TestThrInt$ leads to a better \ac{ROC} performance, regardless of the channel traffic conditions. Nevertheless, the gain compared to the non-coherent correlation rule $\TestThrOne$ is rather limited. In general, both rules show good performances, reaching a detection probability $\probDet>0.99$ for $\probFalse>0.02$ in the worst case (channel traffic $\load=1.5$).

\subsection{ECRA Detection and Replicas Coupling Performance}
We present here the results for the detection and the correct combining probabilities. We focus on the particular setting where $\dg=2$ (\ie users transmit $2$ replicas of their packets). The detection probability $\probDet$ has been defined in the previous subsection. We define the correct combining probability $\probCorComb$ as the probability that two replicas of a burst are correctly selected for combining after the two-phase procedure. Obviously, $\probCorComb\leq \probDet^2$, \ie a necessary condition for correct combination is the actual detection of the sync words associated with the two replicas, during the first phase. We select a fixed threshold $\ThrCorr^*$ equal for all the channel traffic values and we use the non-coherent soft-correlation rule $\TestThrOne$. The threshold $\ThrCorr^*$ has been selected through numerical simulations. We show the results in Figure~\ref{fig:Coup_P}, for a \ac{SNR} of $\EnSym/\noiseSD=10$~dB. The discretization interval equals to one physical layer packet duration, \ie $\SlotSize = \pkLen$. Each packet is composed by a sync word of $\syncSym=32$ symbols (as the one already presented) and a total of $\symLen=1000$ BPSK antipodal modulated symbols (including the sync word symbols), the \ac{VF} duration as well as the window duration $\Wind\, \symLen$ are $100$ times the packet duration, $\fraLen=\Wind\, \symLen=100\, \pkLen$.

\begin{figure}
    \centering
    \includegraphics[width=0.8\textwidth]{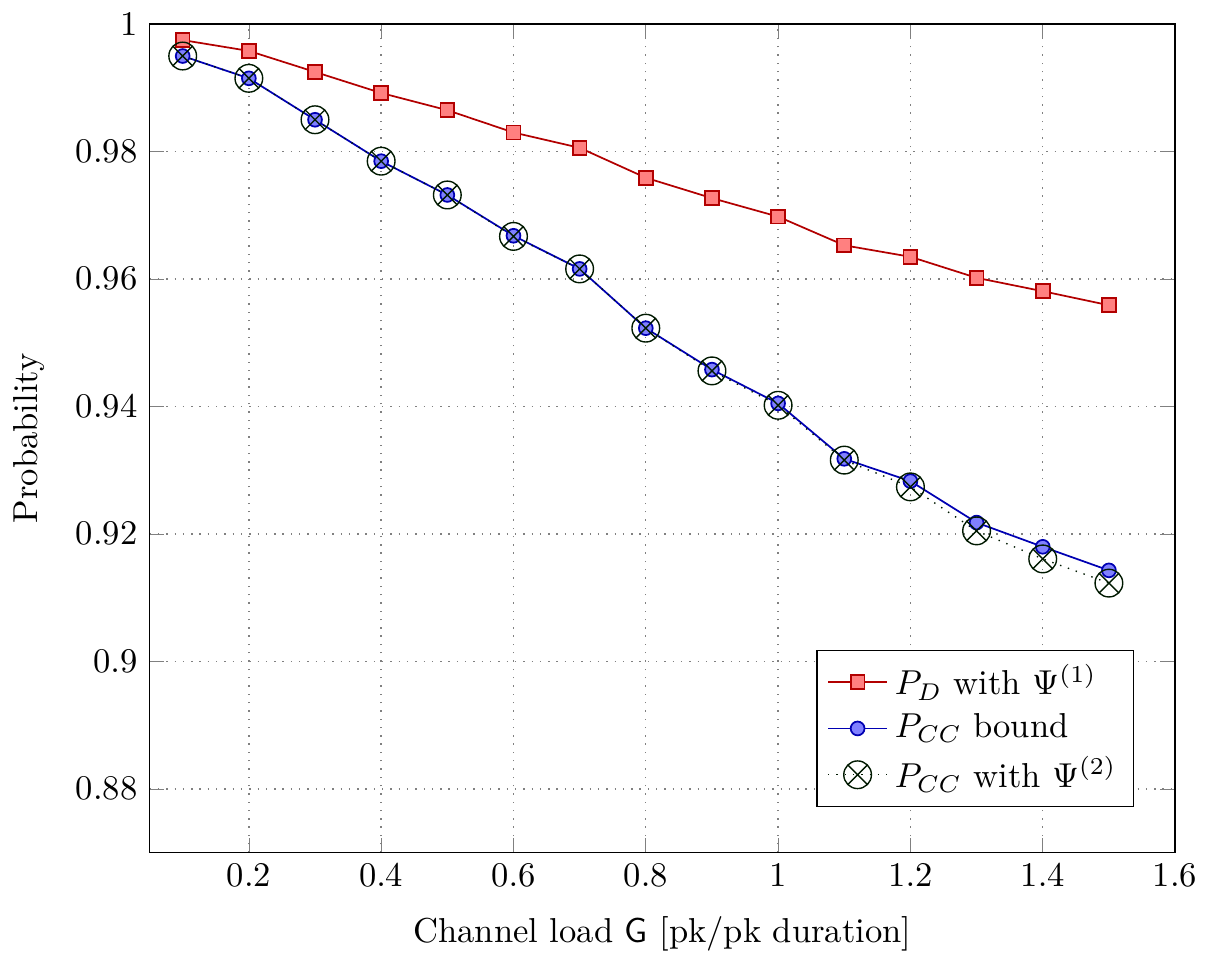}
    \caption{Detection probability $\probDet$ for a fixed threshold $\ThrCorr^*$ independent from the channel traffic using $\TestThrOne$ and correct combining probability $\probCorComb$ with $\TestThrTwo$.}
    \label{fig:Coup_P}
\end{figure}

Observe that the detection probability remains above $95\%$ for all the channel traffic values $\load$, up to $\load=1.5$. The non-coherent soft-correlation rule $\TestThrOne$ is particularly robust to variations in the channel traffic, since the presented results are obtained for a single threshold value $\ThrCorr^*$ which has been kept constant for all the channel traffic values. For all values of channel traffic simulated, the correct combining probability is very close to the bound $\probDet^2$.

\subsection{Spectral Efficiency}
We compare the simulation results in terms of spectral efficiency achieved by \ac{ECRA} with \ac{MRC}, after the two-phase detection process described in Section~\ref{sec:rx} and under ideal replicas matching. The proposed technique is compared to the idealised case in which all replicas positions are known to the receiver prior to decoding. We select $\SlotSize=\pkLen$ and again the window duration is $\fraLen = \Wind\, \symLen=100\, \pkLen$. Perfect \ac{CSI} at the receiver is assumed for enabling \ac{MRC}.

We adopt the non-coherent soft-correlation rule $\TestThrOne$ and a fixed threshold kept constant, regardless the channel load $\load$. All replicas are received with equal power $\EnSym/\noiseSD=2$~dB. We assume a capacity-achieving code which adopts a Gaussian codebook with rate $\rate=1$, so that if the mutual information at the output of the combiner exceeds the rate $\rate$, then the packet is considered to be successfully decoded. Further refinements of the decoding model can be adopted following a realistic \ac{PLR} performance of a specific code for example. Nonetheless, for the present work such a model is sufficient to show the goodness of the detection and identification approach. The maximum number of \ac{SIC} iterations is set to $10$, and \ac{SIC} is assumed ideal. That is, if the position of both replicas of one user is known at the receiver, \ac{MRC} is applied and if the packet can be decoded its interference contribution is fully removed from the received signal.
\begin{figure}
    \centering
    \includegraphics[width=0.8\textwidth]{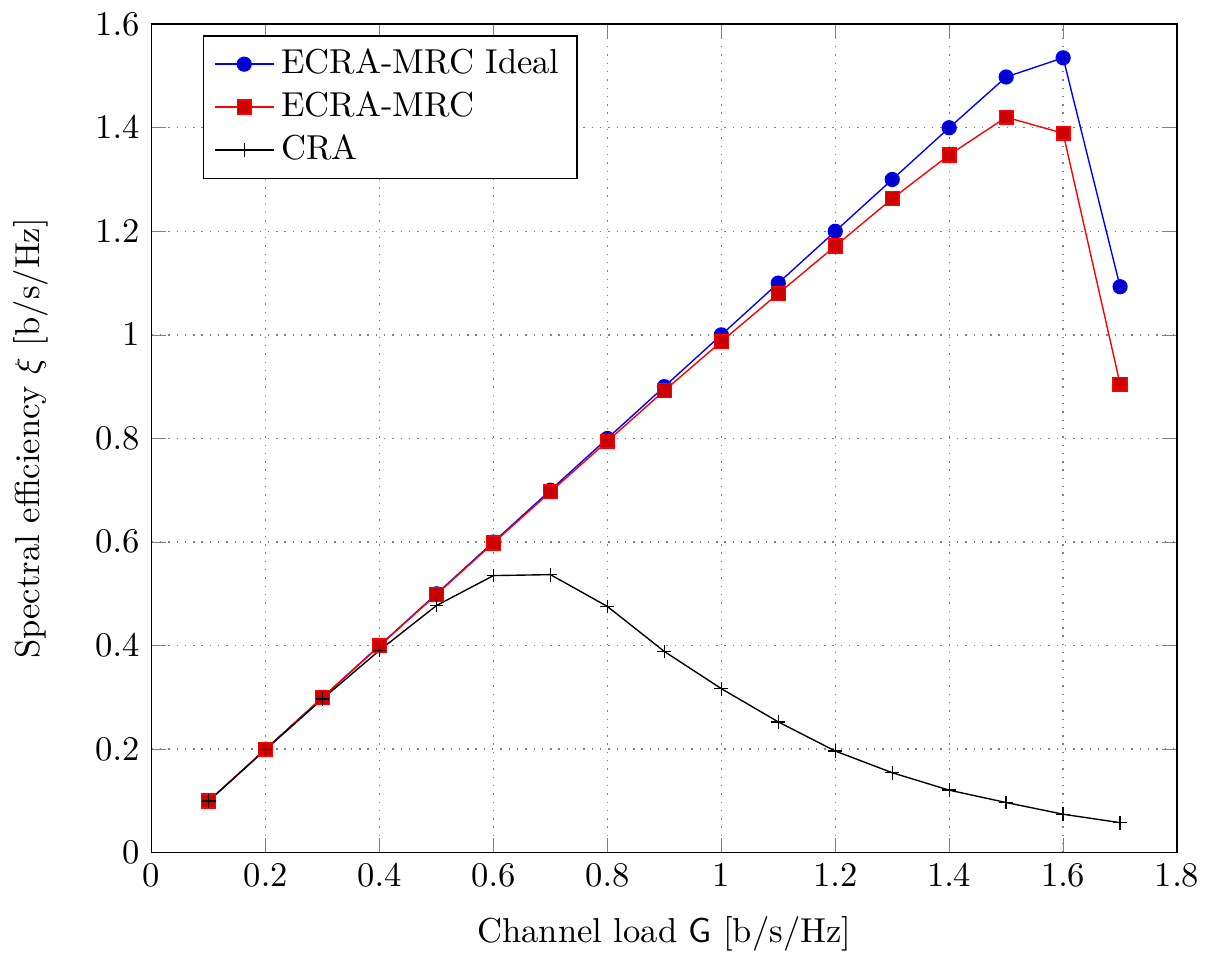}
    \caption{Spectral efficiency of \ac{ECRA}-\ac{MRC} with the proposed two phase detection and combining technique compared to the ideal \ac{ECRA}-\ac{MRC}.}
    \label{fig:Thr}
\end{figure}
In Figure~\ref{fig:Thr}, we present the spectral efficiency results for the proposed two phase detection and combining technique (called \ac{ECRA}-\ac{MRC} in the legend) the ideal \ac{ECRA}-\ac{MRC} where all the replica positions are known at the receiver. The proposed technique is close to the performance of the ideal case. The maximum spectral efficiency exceeds $1.4$ b/s/Hz, which is only $8\%$ less than the maximum spectral efficiency of the ideal case. For reference, also the performance without \ac{MRC} is depicted in the figure (\ac{CRA} in the legend). 
 
\chapter[Layer~$3$ Throughput and PLR Analysis for Advanced RA]{Layer~$3$ Throughput and Packet Loss Rate Analysis for Advanced Random Access}
\label{chapter5}
\thispagestyle{empty}
\ifpdf
    \graphicspath{{chapter5/Images/}}
\fi
\epigraph{The most exciting phrase to hear in science, the one that heralds the most discoveries, is not ``Eureka!" (I found it!) but ``That's funny..."}{Isaac Asimov}

In the previous Chapter we focused on the investigation of \ac{ECRA}. We showed that in specific scenarios (code rate, received \ac{SNR}) asynchronous \ac{RA} was able to outperform also slot synchronous \ac{RA} as \ac{CRDSA}. In this Chapter we further elaborate on this comparison, moving one layer above the \ac{MAC}. We want to verify if the identified benefits of asynchronous \ac{RA} are still present when higher layers generate packets with variable data content. While with asynchronous \ac{RA}, packets can be transmitted whichever size they have, for slot synchronous systems packets might not fit into one \ac{MAC} time slot, requiring fragmentation and eventually padding. Nonetheless, it is known from the literature that for ALOHA under the collision channel model, the packet size distribution maximising the throughput is the one having all packets with the same size \cite{Ferguson1977,Abramson1977,Bellini1980}.

The degradation due to the variable packet size for asynchronous \ac{RA} schemes, and the degradation due to fragmentation and padding overhead for the time synchronous \ac{RA} schemes, lead to interesting \--- yet unexplored \--- research which option leads to better results in terms of layer~$3$ throughput and \ac{PLR}.

\section{System Model}



We consider a population of $\numUs$ users sharing the medium. Each user generates packets of layer~$3$ where the duration is subject to a generic probability measure $\LDist$.\footnote{The probability measure $\LDist$ corresponds to the \ac{PDF} in case of continuous random variable, or corresponds to the \ac{PMF} in case of discrete random variable.} The probability measure captures the behaviour of the traffic profile followed by the users. We assume that the probability measure $\LDist$ is equal for all the users and can be either continuous or discrete. The user cannot forward the next layer~$3$ packet to the layer~$2$ until the all the previous layer~$2$ packet segments have been sent. Furthermore, for the investigation carried out in this Chapter, the protocol is assumed to work in open loop, \ie no feedback and packet acknowledgment procedure is here considered.

All the layer~$2$ \ac{MAC} protocols evaluated in the following are \ac{RA}, either asynchronous or slot synchronous. We consider framed ALOHA \cite{Okada1978,Wieselthier1989}, \ac{SA}, \ac{CRA}, \ac{CRDSA}, \ac{ECRA} and \ac{IRSA}. For all protocols, a frame structure is included, in which appear all packet transmissions of the active users. In this way, every time a layer~$3$ packet is generated, it is forwarded to layer~$2$ and it is sent in the upcoming frame. For slot synchronous systems, before it can be sent, it is eventually fragmented and sent in consecutive frames. This requires also the asynchronous schemes to have a (possibly loose) time synchronization to the frame start. The analysis performed in Section~\ref{sec:L3_T_analys} is independent of the specific protocol and can be applied also to other \ac{RA} schemes. The time is composed by frames of duration $\fraLen$. Each user can try to transmit only once for each \ac{MAC} frame and we assume that all users have always packets to be sent. Upon the generation of more than one packet per frame, the users need to store or discard the exceeding packets (depending on the protocol implementation). The physical layer packets are supposed to have a duration of $\pkLen$. Concerning slot synchronous protocols, the frames are further subdivided in slots of a fixed duration $\slLen$. In slot synchronous schemes, the physical layer packet duration has to be equal to the slot duration in order to fit in the time slot, \ie $\pkLen=\slLen$, while in asynchronous schemes this constraint is relaxed.

We define the offered traffic load $\load$ as the fraction of time occupied by transmissions. In formulas, $\load=\left(\numUs \, \pkLen\right)/\,\fraLen$. In the slot synchronous schemes we can write also $\load=\numUs / \numSlot$, where $\numSlot=\fraLen/\,\pkLen=\fraLen/\,\slLen$ is the number of slots that constitute each \ac{MAC} frame. 
\section{L3 Throughput and Packet Loss Rate Analysis}
\label{sec:L3_T_analys}

The throughput is one of the mostly used performance metrics for \ac{RA} protocols and it is defined as the average number of decoded packets per packet duration. We recall that $\psucc$ is the probability of success which is in general function of both the channel load $\load$ but also of the packet duration distribution $\LDist$,
\begin{equation}
\label{eq:T_L3T}
\tp = \load \, \psucc(\load,\LDist).
\end{equation}
The packet loss rate $\plr$ is the probability of packet loss or unsuccessful decoding, \ie
\begin{equation}
\label{eq:PLR_L3T}
\plr = 1- \psucc(\load,\LDist).
\end{equation}
For simplicity of notation these dependencies are neglected in the following.

\subsection{Asynchronous RA Protocols}

When employing asynchronous framed protocols, as framed ALOHA or \ac{CRA}, all layer~$3$ packet durations not exceeding the frame duration can be accommodated in the \ac{MAC} frame. In this way, no fragmentation of layer~$3$ packets is required and the successful decoding probability of layer~$3$ packets $\psuccLT$ coincides with the successful decoding probability of layer~$2$ packets $\psucc$. Therefore, the layer~$3$ throughput $\tpLT$ of asynchronous \ac{RA} protocols is
\begin{equation}
\label{eq:T_L3_unslotted}
\tpLT = \load \, \psuccLT = \load \, \psucc = \tp.
\end{equation}
And the layer~$3$ \ac{PLR} is
\begin{equation}
\label{eq:PLR_L3_unslotted}
\plrLT = 1- \psuccLT = 1- \psucc = \plr.
\end{equation}

\subsection{Slot Synchronous RA Protocols}

Although users at layer~$3$ generate packets whose durations are ruled by $\LDist$ probability measure, slot synchronous schemes cannot accommodate packets with generic duration at \ac{MAC} layer, and fragmentation is needed. When a layer~$3$ packet has to be fragmented, additional overhead is added to the fragmented layer~$2$ packets. Two types of overhead are added to the layer~$2$ packets: 1) fragmentation overhead and 2) padding overhead. The former is due to the need to have a layer~$3$ header for each fragment, where the fields to both route the packet in the network and to recompose the packet at the destination are contained. Header fields dedicated to fragmentation must include: original-not-fragmented packet identifier, fragment identifier, and either overall number of fragments or indicator about last/not last fragment. For example, concerning \ac{IP} layer~$3$ protocol, the fields identification, offset and flag M are used, respectively, for the aims listed above. The padding overhead is needed in order to fulfill layer~$2$ slot duration requirements.

Layer $3$ throughput $\tpLT$ for slotted schemes is
\begin{equation}
\label{eq:T_L3_slotted}
\tpLT = \load \, \psuccLT \,(1-\FragO) \,(1-\PadO),
\quad 0\leq \FragO\leq1, \quad 0\leq \PadO\leq1
\end{equation}
where $\FragO$ is the fragmentation overhead and $\PadO$ is the padding overhead. Layer $3$ \ac{PLR} $\plrLT$ for slotted schemes is
\begin{equation}
\label{eq:PLR_L3_slotted}
\plrLT = 1- \psuccLT.
\end{equation}
The successful decoding probability of layer~$3$ packets $\psuccLT$ in the slot synchronous schemes is the probability that all the layer~$2$ packets composing the layer~$3$ packet are received correctly weighted with the probability that this specific layer~$3$ duration has been selected,
\begin{equation}
\label{eq:P_succ_L3}
\psuccLT = \prob_1 \psucc + \prob_2 \psucc^2 + ... + \prob_n \psucc^n + ... = \sum_{i=1}^{\infty} \prob_i \psucc^i, \quad \sum_{i=1}^{\infty} \prob_i = 1.
\end{equation}

\begin{exmp} Case $\LDist = \frac{1}{2}\delta(x-1) + \frac{1}{2}\delta(x-2)$:
\label{ex:pk_dist_two_pk}

We can suppose for example, that the layer~$3$ packet duration distribution \ac{PMF} $\LDist$ results in $50$\% of layer~$3$ packets with a duration of one layer~$2$ packet and $50$\% of layer~$3$ packets with a duration of two layer~$2$ packets, \ie $\LDist = \frac{1}{2} \delta(x-1) + \frac{1}{2} \delta(x-2)$. The layer~$3$ successful decoding probability for the first packet duration (one layer~$2$ packet) is simply $\psucc$. Actually, we need only that the layer~$2$ received packet is successfully decoded. While for the second packet duration (two layer~$2$ packets) the probability is $\psucc^2$ because two consecutive layer~$2$ packets must be successfully decoded. In this way, equation~\eqref{eq:P_succ_L3} becomes for this example
\begin{equation}
\label{eq:exemp_1_Psucc_L3}
\psuccLT = \frac{1}{2}\psucc + \frac{1}{2}\psucc^2.
\end{equation}
Note that we can write the closed form expression of $\psucc$ when considering \ac{SA} protocol. In this case $\psucc=e^{-G}$ assuming an infinite user population \cite{Abramson1977} and we can further write
\begin{equation}
\label{eq:exemp_1_Psucc_SA}
\psuccLT = \frac{1}{2}e^{-G} + \frac{1}{2}e^{-2G}.
\end{equation}
\end{exmp}

\subsubsection{Computation of $p_i$}

In general, $\LDist$ can be continuous, \ie the layer~$3$ packet durations may not be constrained to be a multiple of layer~$2$ packet durations. In these scenarios the computation of $p_i$ is of paramount importance to determine the layer~$3$ successful decoding probability $\psuccLT$. The generic $p_i$ value is computed as the definite integral of the \ac{PDF} $\LDist$ in the interval from the layer~$2$ duration $i-1$ to the layer~$2$ duration $i$, where $i=1,...,\infty$ are discrete layer~$2$ durations ($i=1$ means one layer~$2$ packet duration),

\begin{equation}
\label{eq:p_i}
p_i=\int_{i-1}^{i}\LDist dx \quad i=1,...,\infty.
\end{equation}

\begin{exmp} Case $\LDist = e^{-x}$:
\label{ex:pk_dist_exp}

As second example, we assume $\LDist = e^{-x}$, the exponential distribution with mean of one layer~$2$ packet duration. In this example, according to equation~\eqref{eq:p_i}, $p_i$ are
\begin{equation}
\label{eq:exemp_2_pi}
p_i = \int_{i-1}^{i}\LDist dx = \int_{i-1}^{i}e^{-x} dx = \left[-e^{-x} \right]_{i-1}^{i} = e^{-i}\left(e-1\right) \quad i=1,...,\infty.
\end{equation}
After the discretization, the mean of the exponential distribution will be greater than one layer~$2$ packet duration. This is due to the discretization process, which is a ceiling operation. In general, the expected value of $\LDist$ after discretization, $\mathbb{E}[\LDistDisc]$ can be written as
\begin{equation}
\label{eq:exp_val}
\mathbb{E}[\LDistDisc]=\sum_{i=1}^{\infty}i\, p_i,
\end{equation}
where the result is expressed in layer~$2$ packet durations. The values of $p_i$ are computed according to equation~\eqref{eq:p_i}. In the case of the exponential distribution, equation~\eqref{eq:exp_val} becomes
\[
\mathbb{E}[\LDistDisc]=\frac{e}{e-1}\cong 1.58.
\]
\end{exmp}

\subsubsection{Distribution $\LDist$ Unknown}

In some situations $\LDist$ or its expression after discretization are not known. This prevents the analytical computation of $p_i$, but we would like to be able to analytically evaluate the successful decoding probability of layer~$3$ packets $\psuccLT$ also in these situations. We assume that the expected value of the probability measure after discretization $\mathbb{E}[\LDistDisc]$ is available or can be measured at the receiver.\footnote{The packets correctly decoded will contain the sequence number needed for layer~$3$ packet reconstruction that can be used to numerically evaluate $\mathbb{E}[\LDistDisc]$.} We can then compute a lower bound to the successful decoding probability of layer~$3$ packets exploiting Jensen's inequality \cite{Jensen} as
\begin{equation}
\label{eq:P_succ_L3_appr}
\psuccLT \geq \psuccLTAppr = \left(\psucc\right)^{\mathbb{E}[\LDistDisc]}.
\end{equation}
In this way the throughput in equation~\eqref{eq:T_L3_slotted} can be bounded from below by
\begin{equation}
\label{eq:T_L3_slotted_2}
\tpLT \geq \load \, \psuccLTAppr \,(1-\FragO) \,(1-\PadO), \quad 0\leq\FragO\leq1, \quad 0\leq \PadO\leq1.
\end{equation}
While the \ac{PLR} in equation~\eqref{eq:PLR_L3_slotted} can be bounded from above by
\begin{equation}
\label{eq:PLR_L3_slotted_2}
\plrLT \leq 1-\psuccLTAppr.
\end{equation}
Numerical results showing the difference between $\psuccLT$ and $\psuccLTAppr$ in terms of layer~$3$ throughput and \ac{PLR} are given in the next Section.

\subsubsection{Additional Remarks}

The analytical framework developed provides the possibility to evaluate the layer~$3$ throughput $\tpLT$ and \ac{PLR} $\plrLT$ in case of slot synchronous schemes without the need of numerical simulation of this layer. Two possibilities can be followed:
\begin{enumerate}
\item{When $\psucc$ and $\LDist$ are available, one can use equations~\eqref{eq:T_L3_slotted}, \eqref{eq:PLR_L3_slotted}, \eqref{eq:P_succ_L3}, \eqref{eq:p_i}};
\item{If instead $\psucc$ and $\mathbb{E}[\LDistDisc]$ are available, one can use equations~\eqref{eq:T_L3_slotted_2}, \eqref{eq:PLR_L3_slotted_2}, \eqref{eq:P_succ_L3_appr}}.
\end{enumerate}
In both cases the probability $\psucc$ of successful decoding at layer~$2$ is required. It is normally available or can be easily derived either in close form or through \ac{MAC} layer simulations.

\section{Numerical Results}

In this Section a comparison between asynchronous and time synchronous \ac{RA} protocols is provided. We consider two distributions for the layer~$3$ packet durations:
 \begin{enumerate}
 \item equal probability of one layer~$2$ and two layer~$2$ packet durations (see example \ref{ex:pk_dist_two_pk}), \ie $\LDist = \frac{1}{2}\delta(x-1)+ \frac{1}{2}\delta(x-2)$;
 \item exponential packet length distribution with mean one layer~$2$ packet duration (see example \ref{ex:pk_dist_two_pk}), \ie $\LDist = e^{-x}$.
 \end{enumerate}
We first present the throughput and packet loss rate performance at layer~$3$ with a comparison between all the considered schemes, \ie ALOHA, \ac{SA}, \ac{CRA}, \ac{CRDSA}, \ac{ECRA} and \ac{IRSA}. In order to see the degradation when looking at the layer~$3$ performance for time slotted systems, a comparison with the layer~$2$ throughput and \ac{PLR} is provided. Finally, we consider the time slotted approximation of the layer~$3$ throughput and \ac{PLR}, using only the average layer~$3$ packet duration.

Before discussing the numerical results, we briefly explain the simulation setup. Upon the generation of a layer~$3$ packet, its duration is drawn following one of the two aforementioned probability measures. For time synchronous schemes, when required, fragmentation in multiple layer~$2$ packets is carried out. This happens when a layer~$3$ packets exceed one layer~$2$ packet duration, which is set to $1000$ information bits. If multiple segments have to be sent, the user terminal is marked as busy and no further layer~$3$ packet is queued until all segments are sent. In this way, for slotted schemes, multiple frames might be required for transmitting a single layer~$3$ packet, while for asynchronous schemes the layer~$3$ packet can be accommodated in one single frame. For \ac{CRDSA}, \ac{CRA} and \ac{ECRA} at \ac{MAC} layer, every user sends two replicas within the frame, since a system degree $\dg=2$ has been selected. For \ac{IRSA}, instead, the \ac{PMF} from which every user selects the number of replicas follows $\Ld(x) = 0.5x^2+0.28x^3+0.22x^8$ \cite{Liva2011}. In this case, on average $50$\% of the transmitters will send two replicas, $28$\% three replicas and the rest $22$\% eight replicas. User transmission are organised into frames of duration $100\, \pkLen$, which corresponds, for the time-slotted systems, to frames composed of $100$ time slots. Please note that here no sliding window decoder is adopted for none of the schemes, differently from the previous Chapter. Packets are encoded with a rate $\rate=1$ Gaussian codebook channel code and are sent through an \ac{AWGN} channel. The received \ac{SNR} is set to $2$ dB and it is equal for all users. At the receiver we adopt the capacity based decoding condition under block interference (see Definition~\ref{def:decoding_capacity} in Chapter~\ref{chapter2}) for all schemes including ALOHA and \ac{SA}. In ALOHA, the capture effect is possible, while for \ac{SA} this is not the case.\footnote{Since in this scenario, for the slots in which two packets are sent it holds $\rate \geq \log_2(1+\frac{\usPw}{\noisePw+\usPw})$, no capture effect is possible for \ac{SA} or any other time synchronous scheme.} For any of \ac{CRA}, \ac{CRDSA}, \ac{IRSA} and \ac{ECRA} the receiver performs interference cancellation for a maximum of $10$ iterations. Additionally, in \ac{ECRA-MRC} combining is also adopted when packets are still present in the frame and \ac{SIC} is further performed on the decoded packet observation built after \ac{MRC}. Additional $10$ \ac{SIC} iterations are here enabled. For slot synchronous schemes, a layer~$3$ packet is declared as not correctly decoded if any of the fragments is not correctly received. No fragmentation or padding overhead is here considered for these schemes. We finally recall that the layer~$3$ performance of the time synchronous schemes are analytically derived from the layer~$2$ performance.

\subsection{Asynchronous and Slot Synchronous RA Layer~$3$ Performance Comparison}

In Figure~\ref{fig:T_L3_equal} the layer~$3$ throughput results for the considered schemes are presented with layer~$3$ packet duration distribution $\LDist = \frac{1}{2}\delta(x-1) + \frac{1}{2}\delta(x-2)$. The first observation is that \ac{ECRA} with \ac{MRC} is able to largely outperform all other asynchronous and slot synchronous schemes. Its peak throughput is the only one exceeding one packet per packet duration (or slot in the time synchronous case) and more then doubles the peak throughput of \ac{IRSA}. Furthermore \ac{IRSA} would require on average more power, since the its average degree is $3.6$ packets, while \ac{ECRA} transmits only two replicas per layer~$2$ packet. Nonetheless, \ac{ECRA} requires a decoding algorithm by far more complex than \ac{IRSA}, due to the need of performing \ac{MRC}. Finally, asynchronous schemes can benefit from the capture effect thanks to the selected rate, which is not the case for any of the slot synchronous ones. Similarly, \ac{CRA} with two replicas is able to outperform \ac{CRDSA} in terms of layer~$3$ throughput for the entire simulated channel load range. In this second case, no complexity increase is found comparing the two schemes in terms of the receiver algorithm. Finally, for reference purposes, also ALOHA and \ac{SA} performance are depicted in the figure. Interestingly, we can observe that also ALOHA is able to outperform \ac{SA} in this scenario, thanks to the capture effect that triggers correct decoding of packets even in presence of a collision.

\begin{figure}
\centering
\includegraphics[width=0.8\textwidth]{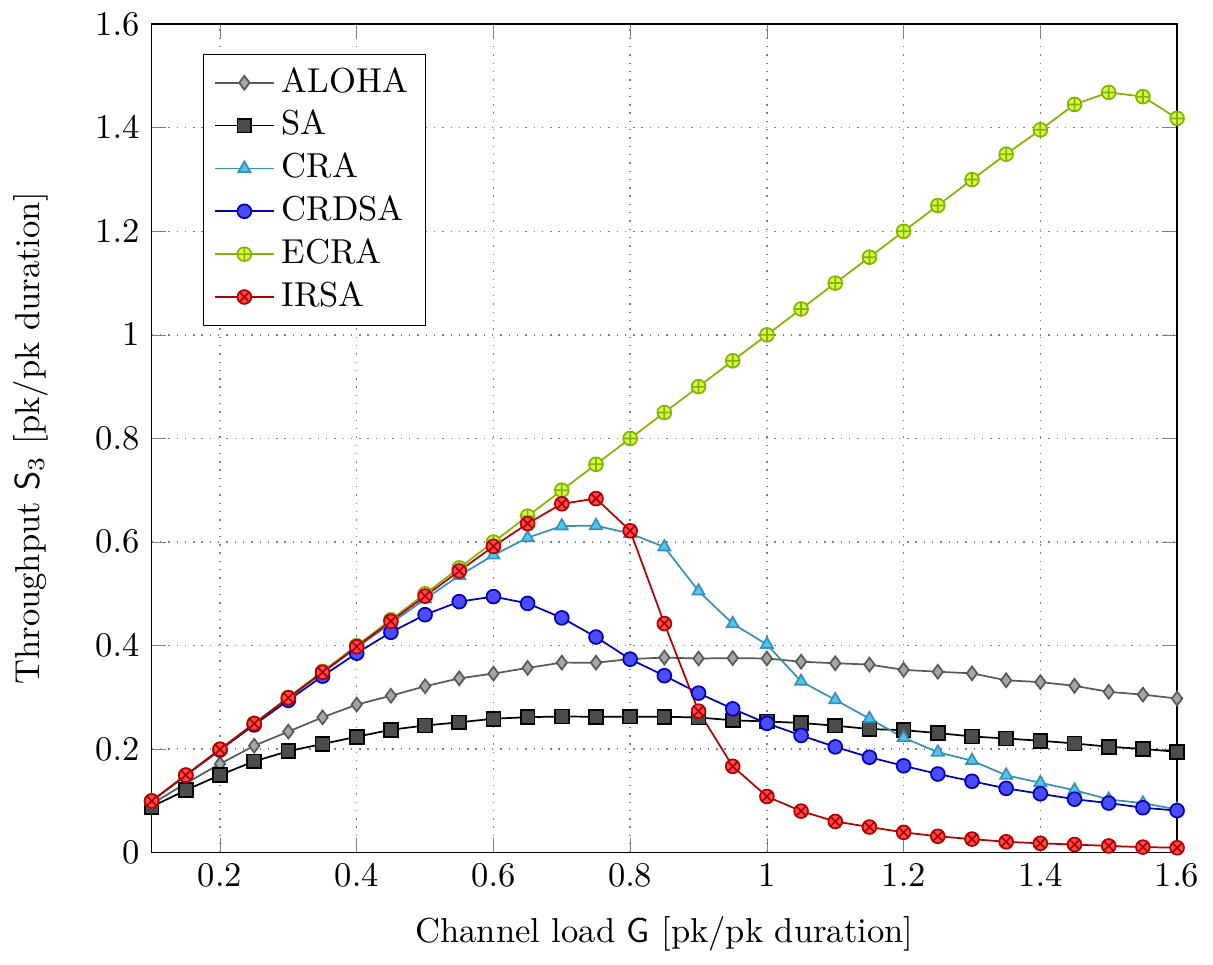}
\caption{Layer $3$ throughput comparison with $\LDist = \frac{1}{2}\delta(x-1) + \frac{1}{2}\delta(x-2)$.}
\label{fig:T_L3_equal}
\end{figure}

We present in Figure~\ref{fig:PER_L3_equal} the layer~$3$ \ac{PLR} results for the considered schemes under the same probability measure of Figure~\ref{fig:T_L3_equal}. As expected, \ac{ECRA} is the scheme that shows the best performance providing a \ac{PLR} lower than $10^{-3}$ for channel loads up to $\load \leq 1.35$ and exhibiting \ac{PLR} below $10^{-4}$. Fixing a fictitious target \ac{PLR} at $10^{-2}$, \ac{IRSA} can guarantee channel load values up to $\load \leq 0.55$, \ac{CRA} up to $\load \leq 0.4$ and \ac{CRDSA} up to $\load \leq 0.2$. None of ALOHA or \ac{SA} is able to reach such \ac{PLR} values for the considered channel load values.

\begin{figure}
\centering
\includegraphics[width=0.8\textwidth]{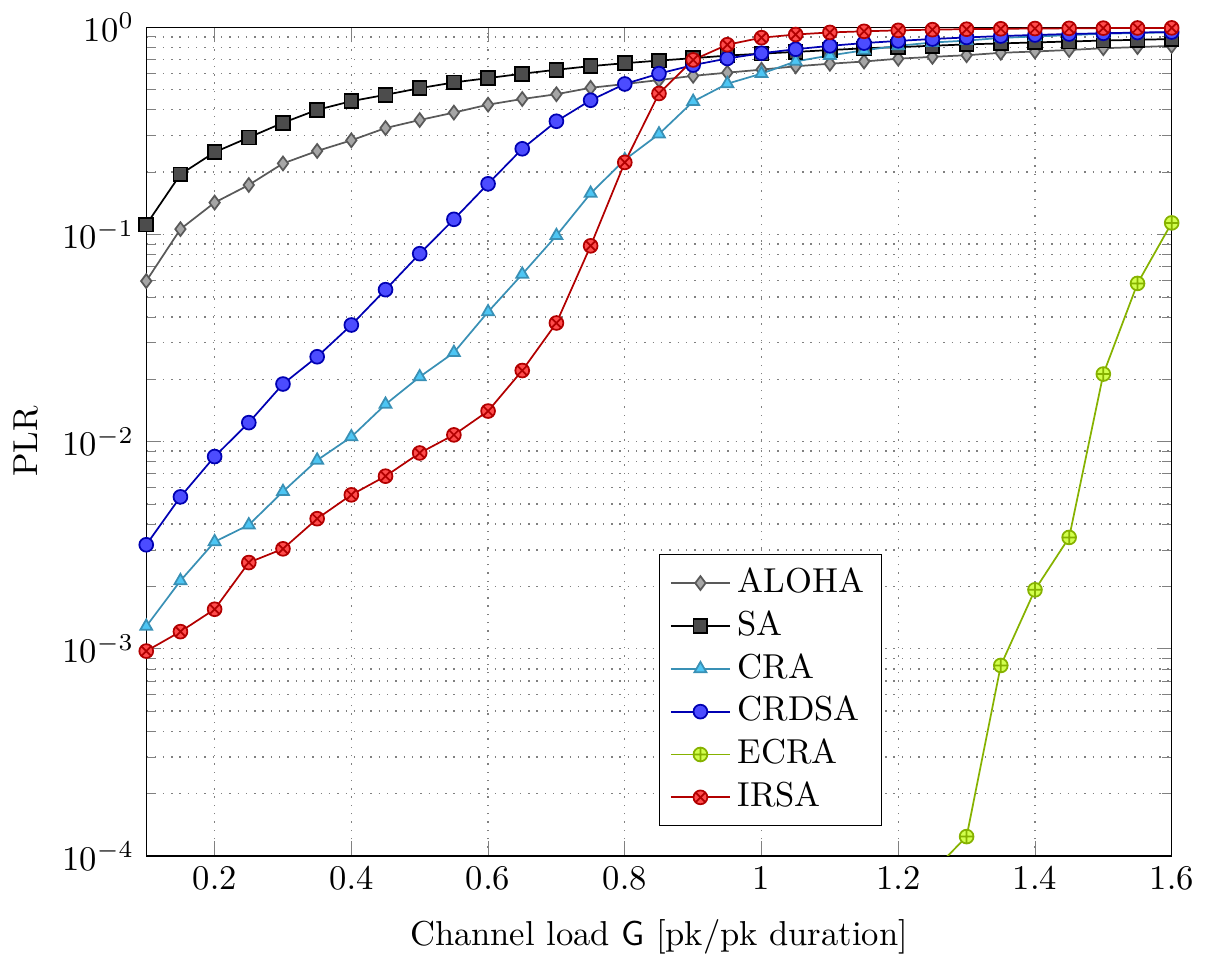}
\caption{Layer $3$ \ac{PLR} comparison with $\LDist = \frac{1}{2}\delta(x-1) + \frac{1}{2}\delta(x-2)$.}
\label{fig:PER_L3_equal}
\end{figure}

In Figure~\ref{fig:T_L3_exp} we focus on another layer~$3$ packet duration distribution $\LDist = e^{-x}$ and we show the layer~$3$ throughput results. The major difference w.r.t. the results considering the other probability measure can be found observing \ac{ECRA}. There, the layer~$3$ peak throughput of $1.35$ packet is reached which is a $8$\% degradation w.r.t. the results with the other probability measure.

\begin{figure}
\centering
\includegraphics[width=0.8\textwidth]{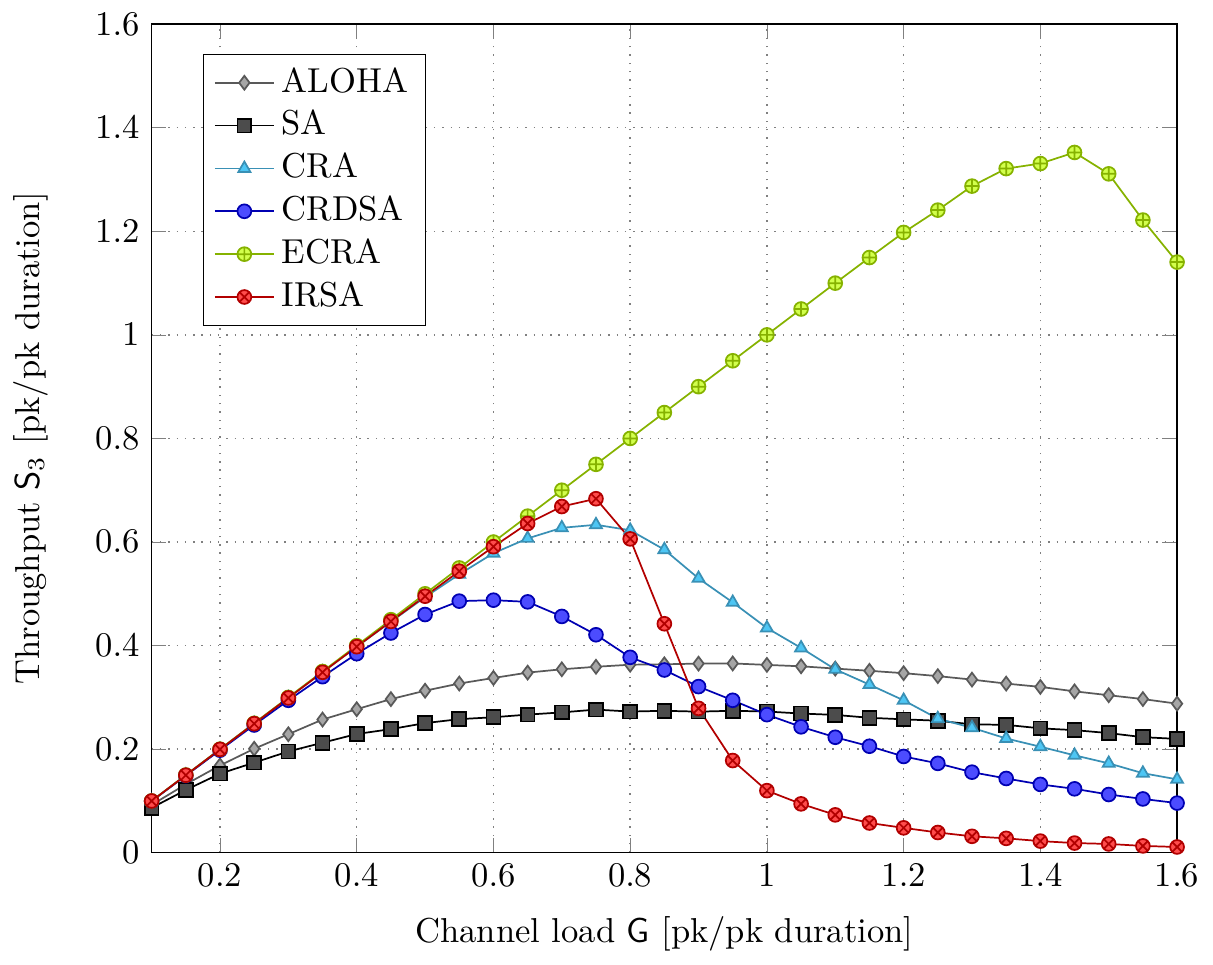}
\caption{Layer $3$ throughput simulations with $\LDist = e^{-x}$.}
\label{fig:T_L3_exp}
\end{figure}

In Figure~\ref{fig:PER_L3_exp} we focus on the layer~$3$ \ac{PLR}. For \ac{ECRA}, a visible degradation of the \ac{PLR} performance brings the channel load to $1.15$ for guaranteeing a \ac{PLR} lower than $10^{-3}$. Another interesting observation comes from the comparison between \ac{CRA} and \ac{IRSA}. The two \ac{PLR} curves intersect two times. At low channel load, \ac{CRA} slightly outperforms \ac{IRSA} (for $\load \leq 0.4$), while in the moderate channel load regime \ac{IRSA} is able to give better \ac{PLR} performance (up to $\load \leq 0.8$). At high load, instead, \ac{CRA} again performs better than \ac{IRSA}, although this is a region where none of the two protocols shall be operated.

\begin{figure}
\centering
\includegraphics[width=0.8\textwidth]{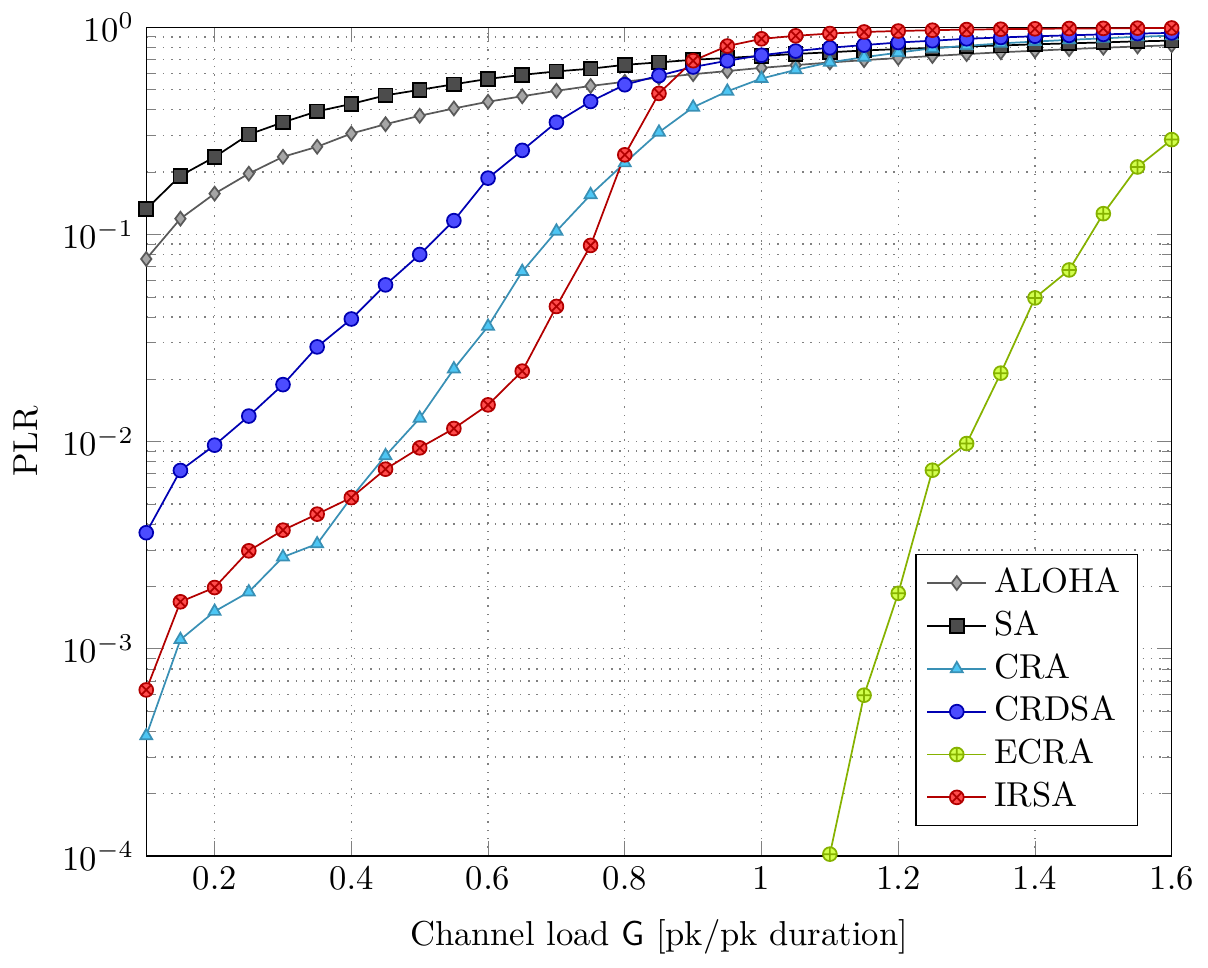}
\caption{Layer $3$ \ac{PLR} simulations with $\LDist = e^{-x}$.}
\label{fig:PER_L3_exp}
\end{figure}

The presented results show that asynchronous \ac{RA} schemes can be even more competitive against slot synchronous ones when considering layer~$3$ performance. At physical layer, nevertheless, more complex encoding and decoding algorithms are required. These stem from different block length sizes that depend on the transmitted packet and possibly require calling to more than one packet detector and/or decoder.

\subsection{Slot Synchronous Layer~$2$ and Layer~$3$ Comparison}

In this Section we present the layer~$2$ and layer~$3$ performance comparison for the slot synchronous schemes. Since the results are very similar for both probability measures of the packet duration distribution at layer~$3$, we are presenting only the one for the case ${\LDist = \frac{1}{2}\delta(x-1) + \frac{1}{2}\delta(x-2)}$. In Figure~\ref{fig:T_L3_L2_equal_Slotted} we show the layer~$3$ and $2$ throughput results. When considering layer~$3$ with respect to layer~$2$, the degradation for \ac{IRSA} is quite limited, while for both \ac{CRDSA} and \ac{SA} this is not the case. In \ac{IRSA} the peak throughput of layer~$3$ is $3$\% lower than the peak layer~$2$ throughput, while for \ac{CRDSA} it is $7$\% lower and for \ac{SA} even $29$\% lower. Furthermore, in both \ac{CRDSA} and \ac{SA} the peak throughputs at layer~$3$ and $2$ are achieved for different channel load values.
\begin{figure}
\centering
\includegraphics[width=0.8\textwidth]{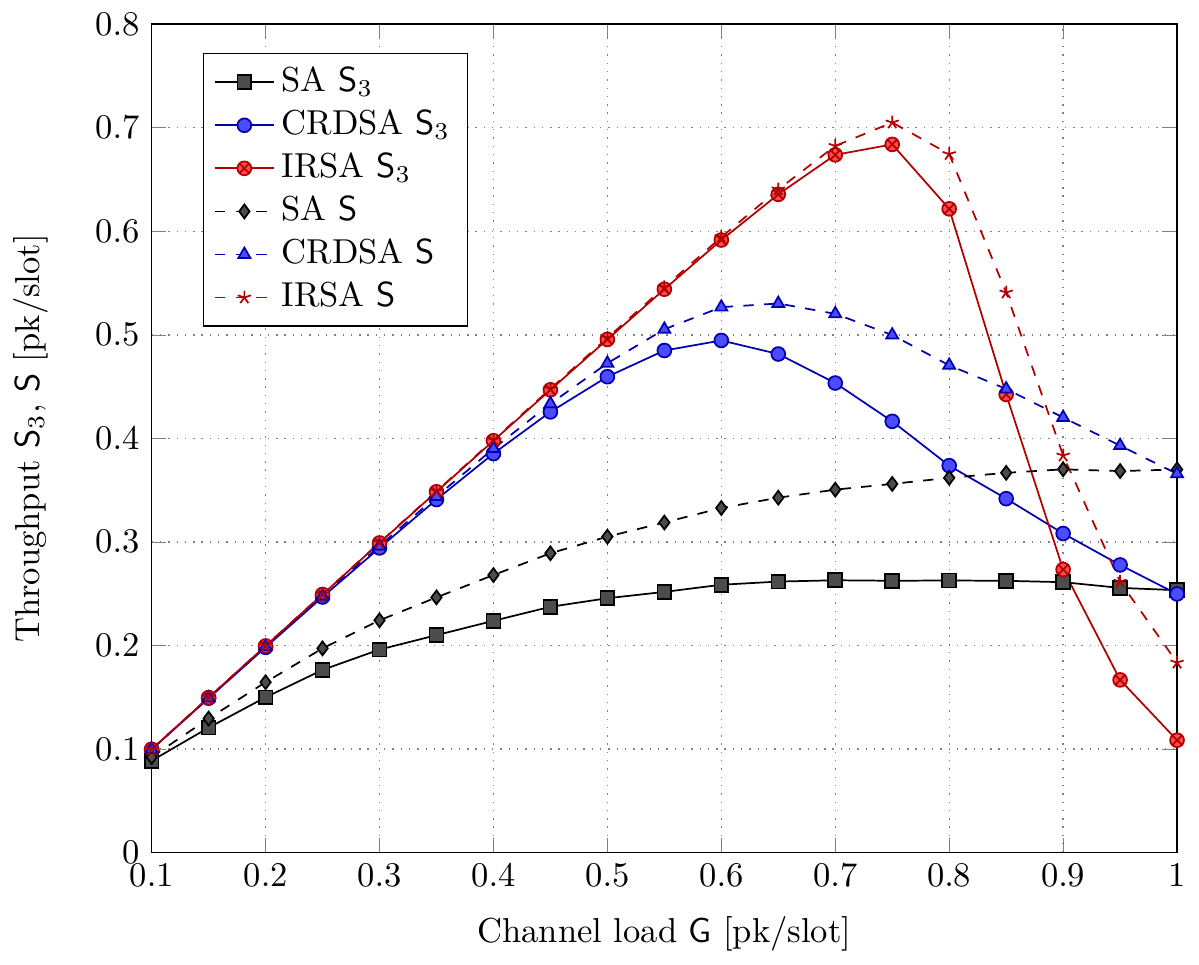}
\caption{Layer $2$ and $3$ throughput simulations with $\LDist = \frac{1}{2}\delta(x-1) + \frac{1}{2}\delta(x-2)$.}
\label{fig:T_L3_L2_equal_Slotted}
\end{figure}
The main reason of such difference between the schemes can be seen observing the \ac{PLR} performance in Figure~\ref{fig:PER_L3_L2_equal_Slotted}. The relative distance between the curves of layer~$3$ and $2$ is similar for all the schemes, nevertheless the absolute values are very different, ranging among three orders of magnitude depending on the considered scheme. This effect, coupled with the different steepness of the curves explains the throughput behaviour.

\begin{figure}
\centering
\includegraphics[width=0.8\textwidth]{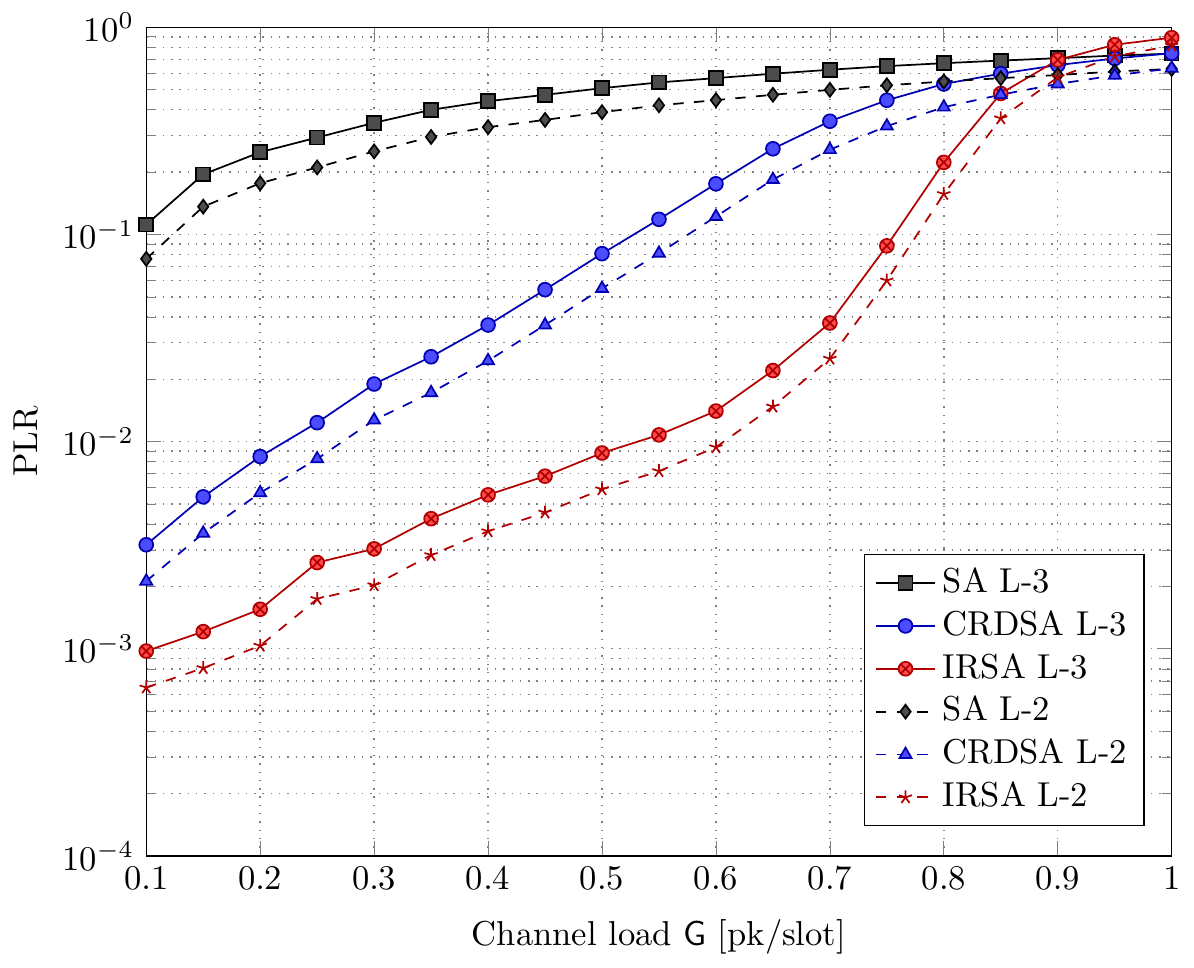}
\caption{Layer $2$ and $3$ \ac{PLR} simulations with $\LDist = \frac{1}{2}\delta(x-1) + \frac{1}{2}\delta(x-2)$.}
\label{fig:PER_L3_L2_equal_Slotted}
\end{figure}

If we consider advanced \ac{RA} schemes, as \ac{IRSA} or \ac{CRDSA}, the performance degradation when looking at the layer~$3$ throughput is quite limited and below $10$\% in terms of peak throughput. It has to be noted that no padding or fragmentation overhead is here considered, which will further degrade the layer~$3$ performance of slot synchronous schemes.

\subsection{Layer~$3$ Slot Synchronous Bounds}

In this Section we present the layer~$3$ comparison between the exact layer~$3$ performance and the one using the successful probability $\psuccLTAppr$, which turns into a layer~$3$ throughput lower bound and \ac{PLR} upper bound. As for the previous Section, since the results are very similar for both probability measures of the packet duration distribution at layer~$3$, we are presenting only the one for the case ${\LDist = \frac{1}{2}\delta(x-1) + \frac{1}{2}\delta(x-2)}$. In Figure~\ref{fig:T_L3_L2_equal_Slotted_Appr} the throughput results are presented. The dashed curves represent the lower bounds to the throughput. As we can observe, it is particularly tight for all channel load values of interest for all schemes.

\begin{figure}
\centering
\includegraphics[width=0.8\textwidth]{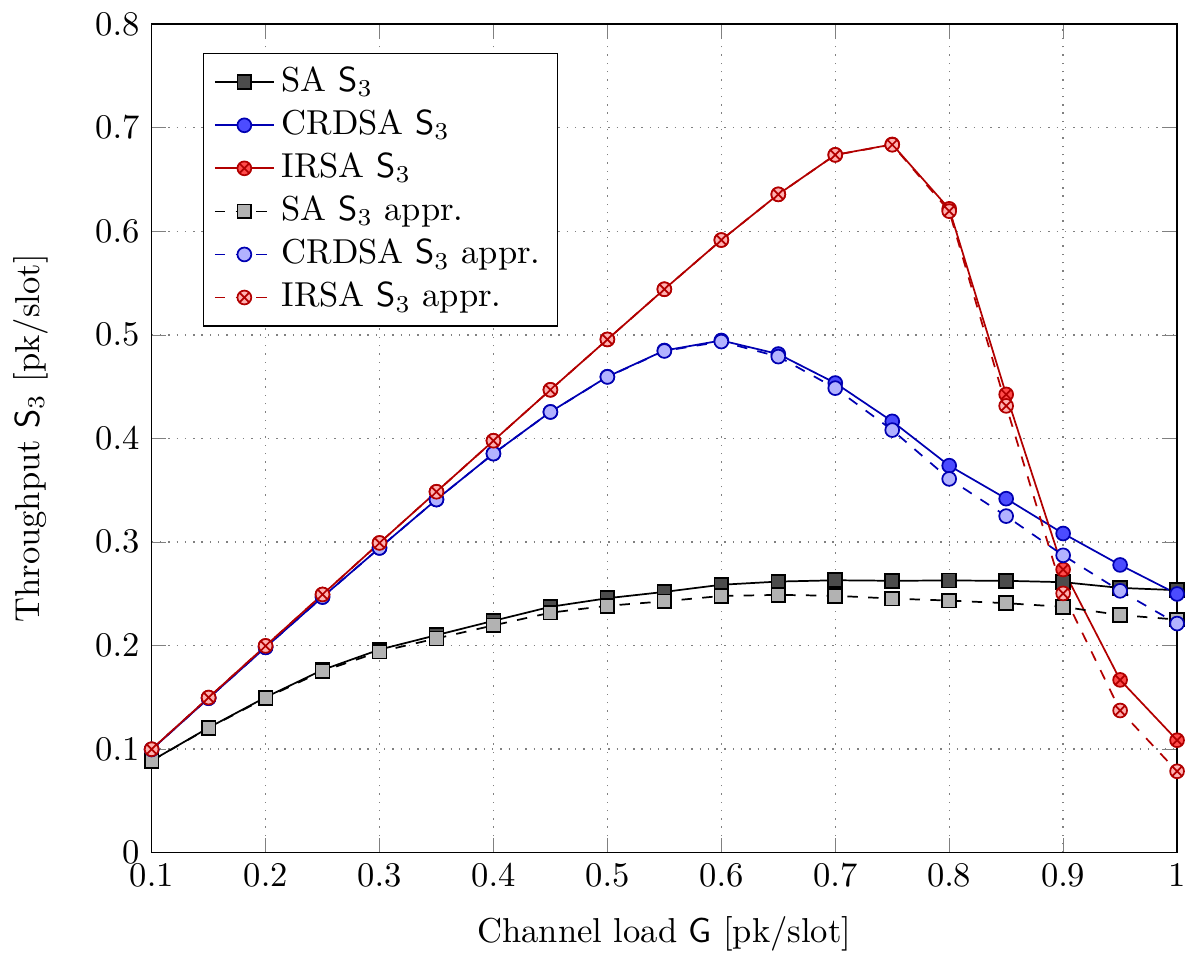}
\caption{Layer $3$ throughput comparison between exact and approximated performance considering $\LDist = \frac{1}{2}\delta(x-1) + \frac{1}{2}\delta(x-2)$.}
\label{fig:T_L3_L2_equal_Slotted_Appr}
\end{figure}

In Figure~\ref{fig:PER_L3_L2_equal_Slotted_Appr} we present the \ac{PLR} results. Here the dashed lines represent the upper bounds to the \ac{PLR}. As for the throughput, also here the upper bound is very tight for all schemes, proving the suitability of the presented approach.

\begin{figure}
\centering
\includegraphics[width=0.8\textwidth]{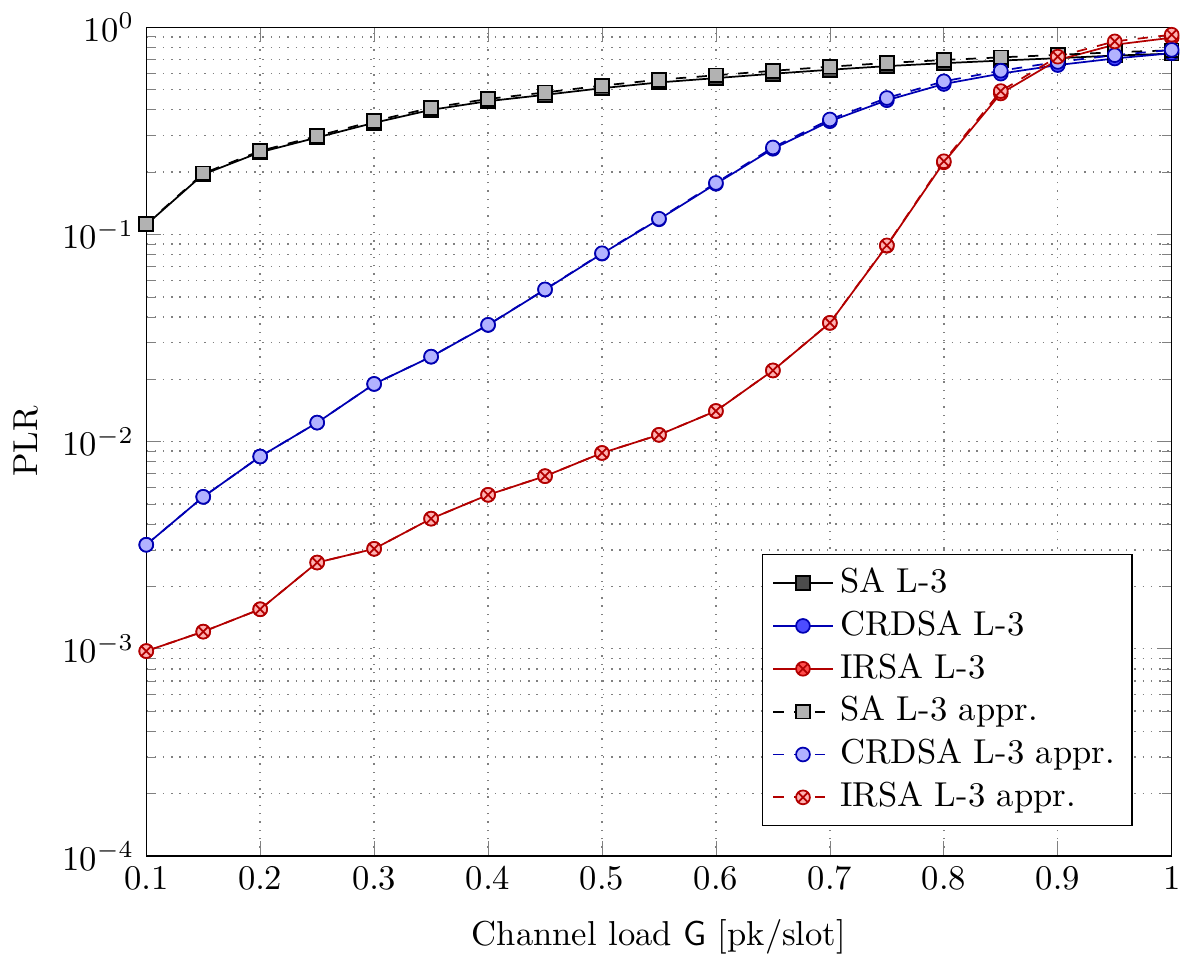}
\caption{Layer $3$ \ac{PLR} comparison between exact and approximated performance considering ${\LDist = \frac{1}{2}\delta(x-1) + \frac{1}{2}\delta(x-2)}$.}
\label{fig:PER_L3_L2_equal_Slotted_Appr}
\end{figure}

\chapter[IRSA over the Rayleigh Block Fading Channel]{Irregular Repetition Slotted ALOHA over the Rayleigh Block Fading Channel}
\label{chapter6}
\thispagestyle{empty}
\ifpdf
    \graphicspath{{chapter6/figures/}}
\fi
\epigraph{The littlest thing can cause a ripple effect that changes your life}{Ted Mosby}

Until now we have shown the performance of some advanced asynchronous and slot synchronous \ac{RA} schemes without taking into consideration the possibility of fading. The recently proposed slot synchronous \ac{IRSA} \cite{Liva2011} scheme is subject to optimisation of the user degree distribution for the collision channel model. In this Chapter we aim at extending the analysis for the Rayleigh block fading channel. The result is two-folded, on the one hand we extend the optimisation procedure for fading channels, showing that for a certain average \ac{SNR} and decoding threshold, throughput exceeding $1$ $\mathrm{[pk/slot]}$ can be achieved. On the other hand, the benefit of such optimisation compared to a mismatched one that uses the collision channel model in spite of the correct Rayleigh block fading one.


\section{Introduction}\label{sec:Intro}



The collision channel model is a rather simple one, which assumes that 1) noise can be neglected, such that a transmission can be decoded from a singleton slot by default, and 2) no transmission can be decoded from a collision slot. This model has a limited practical applicability and does not describe adequately the wireless transmission scenarios where the impact of fading and noise cannot be neglected. In particular, fading may incur power variations among signals observed in collisions slot, allowing the \emph{capture effect} to occur, when sufficiently strong signals may be decoded.
In the context of slotted ALOHA, numerous works assessed the performance of the scheme for different capture effect models \cite{Roberts1975,N1984,Wieselthier1989,Zorzi1994,Z1997,ZZ2012}.
One of the standardly used models is the threshold-based one, in which a packet is captured, \ie decoded, if its \ac{SINR} is higher then a predefined threshold, c.f. \cite{Zorzi1994,Z1997,NEW2007,ZZ2012}.

A brief treatment of the capture effect in \ac{IRSA} framework was made in \cite{Liva2011}, pointing out the implications related to the asymptotic analysis. In \cite{SMP2014}, the method for the computation of capture probabilities for the threshold-based model in single-user detection systems with Rayleigh fading was presented and instantiated for the frameless ALOHA framework \cite{Stefanovic2012}.

In this Chapter, we extend the treatment of the threshold-based capture effect for \ac{IRSA} framework. First, we derive the exact expressions of capture probabilities for the threshold-based model and Rayleigh block-fading channel. Next we formulate the asymptotic performance analysis. We then optimise the scheme, in terms of deriving the optimal repetition strategies that maximise throughput given a target \ac{PLR}. Finally, the obtained distributions are investigated in the finite frame length scenario via simulations. We show that \ac{IRSA} exhibits a remarkable throughput performance that is well over $1$ $\mathrm{[pk/slot]}$, for target \ac{PLR}, SNR and threshold values that are valid in practical scenarios. This is demonstrated both for asymptotic and finite frame length cases, showing also that the finite-length performance indeed tends to the asymptotic one as the frame length increases.



\section{System Model}\label{sec:Preliminaries}

\subsection{Access Protocol}
\begin{sloppypar}
For the sake of simplicity, we focus on a single batch arrival of $\numUs$ users having a single packet (or \emph{burst}) each, and contending for the access to the common receiver. The link time is organised in a \ac{MAC} frame of duration $\fraLen$, divided into $\numSlot$ slots of equal duration ${\slLen =\fraLen / \numSlot}$, indexed by $j \in \{1, 2, \dots,  \numSlot\}$. The transmission time of each packet equals the slot duration. The system load $\load$ is defined as
\begin{align}
\load = \frac{\numUs}{\numSlot}\,\,\,\, [\mathrm{pk}/\mathrm{slot}] \, .
\end{align}
\end{sloppypar}

According to the \ac{IRSA} protocol, each user selects a repetition degree $\dg$ by sampling a \ac{PMF} $\BNdegDist_{\dg=2}^{\dmax}$ and transmits $\dg$ identical replicas of its burst in $\dg$ randomly chosen slots of the frame. It is assumed that the header of each burst replica carries information about the locations (\ie slot indexes) of all $\dg$ replicas. The \ac{PMF} $\BNdegDist$ is the same for all users and is sampled independently by different users, in an uncoordinated fashion.
The average burst repetition degree is $\davg =\sum_{\dg=2}^{\dmax} \dg \, \Ld_\dg$, and its inverse
\begin{align}\label{eq:rate}
\rateIrsa = \frac{1}{\davg}
\end{align}
is called the \emph{rate} of the IRSA scheme. Each user is then unaware of the repetition degree employed by the other users contending for the access.
The number of burst replicas colliding in slot $j$ is denoted by $c_j \in \{1,2,\dots,\numUs\}$. Burst replicas colliding in slot $j$ are indexed by $i \in \{1,2,\dots,c_j\}$.

\subsection{Received Power and Fading Models}

We consider a Rayleigh block fading channel model, \ie fading is Rayleigh distributed, constant and frequency flat in each block, while it is \ac{i.i.d.} on different blocks. Independent fading between different burst replicas is also assumed. In this way, the power of a burst replica $i \in \{1,2,\dots,c_j\}$ received in slot $j$, denoted as $\usPw_{ij}$, is modeled as a \ac{r.v.} with negative exponential distribution
\[
\powerdist\left(\power\right)=
\begin{cases}
\frac{1}{\PwAvg} \exp \left[-\frac{\power}{\PwAvg} \right] , & \power\geq 0 \\
0, & \text{otherwise}
\end{cases}
\]
where $\PwAvg$ is the average received power. This is assumed to be the same for all burst replicas received in the \ac{MAC} frame by using, \eg a long-term power control. The \acp{r.v.} $\usPw_{ij}$ are \ac{i.i.d.}  for all pairs $(i,j)$. If we denote by $\noisePw$ the noise power, the \ac{SNR} \ac{r.v.} $\snrrv_{ij} = \usPw_{ij}/\noisePw$ is also exponentially distributed as
\[
\snrdist\left(\snr\right)=
\begin{cases}
\frac{1}{\snravg}\exp \left[ -\frac{\snr}{\snravg} \right], & \snr \geq 0 \\
0, & \text{otherwise}
\end{cases}
\]
where the average \ac{SNR} is given by
\[
\snravg=\frac{\PwAvg}{\noisePw}.
\]

\subsection{Graph Representation}\label{subsec:graph}

\begin{figure}[tb]
\begin{center}
\includegraphics[width=0.45\columnwidth,draft=false]{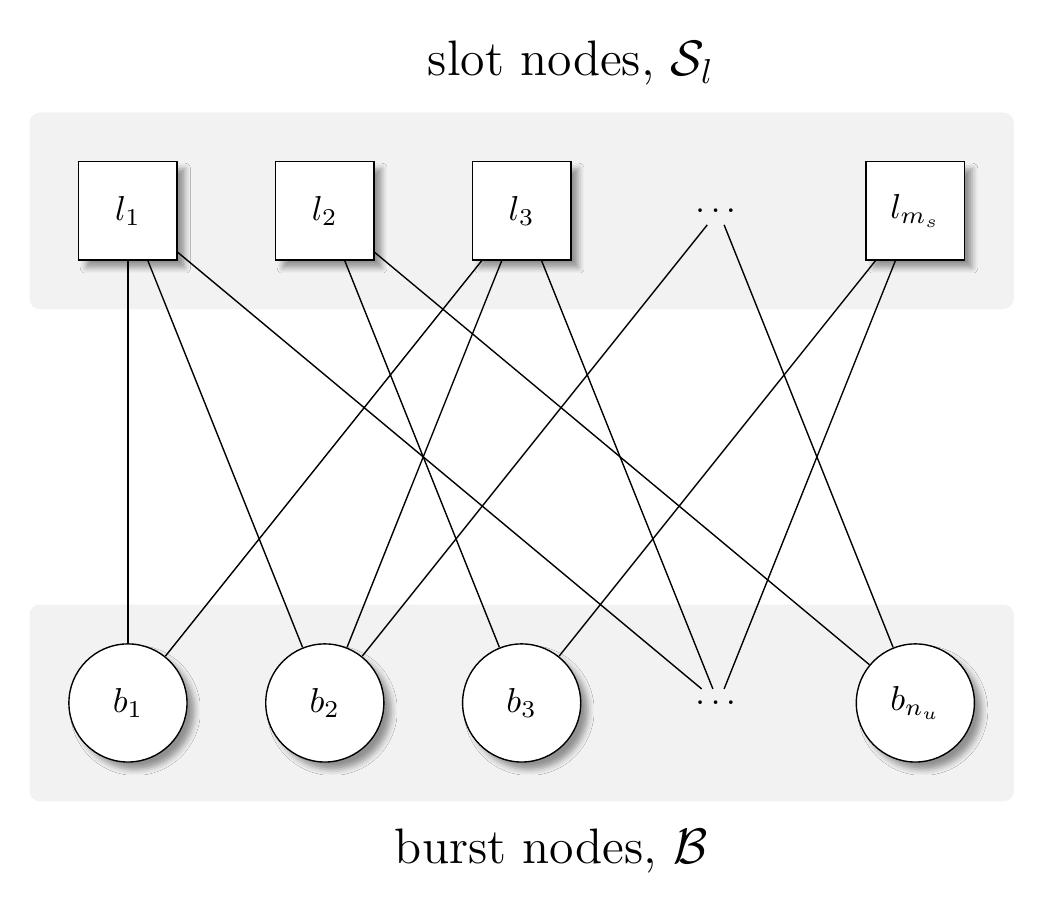}
\end{center}
\caption{Graph representation of MAC frame.}\label{fig:graph}
\end{figure}

\begin{sloppypar}
In order to analyse the \ac{SIC} process, we introduce the graph representation of a \ac{MAC} frame\cite{Liva2011}. As depicted in Figure~\ref{fig:graph}, a \ac{MAC} frame is represented as a bipartite graph ${\mathcal{G}=(\calB,\calS,E)}$ consisting of a set $\calB$ of $\numUs$ burst nodes (or user nodes), one for each user, a set $\calS$ of $\numSlot$ slot nodes, one per slot, and a set $E$ of edges, one per transmitted burst replica. A burst node $b_k\in \calB$ is connected to a slot node $\slot_j\in \calS$ if and only if user $k$ has a burst replica sent in the $j$-th slot of the frame. The \emph{node degree} represents the number of edges emanating from a node.
\end{sloppypar}

For the upcoming analysis it is convenient to resort to the concept of \emph{node-} and \emph{edge-perspective degree distributions}. The burst node degree distribution from a node perspective is identified by the above-defined \ac{PMF} $\BNdegDist_{\dg=2}^{\dmax}$. Similarly, the slot node degree distribution from a node perspective is defined as $\SNdegDist_{\slotind=0}^{\numUs}$, where $\Rd_\slotind$ is the probability that a slot node has $\slotind$ connections (\ie that $\slotind$ burst replicas have been received in the corresponding slot). The probability $\Rd_\slotind$ may be easily calculated by observing that $(\load / \rateIrsa) / \numUs$ is the probability that the generic user transmits a burst replica in a specific slot. Since users behave independently of each other, we obtain
\begin{align}
\Rd_\slotind={\numUs \choose \slotind}
\left(\frac{\load/\rateIrsa}{\numUs}\right)^{\slotind} \left(1-\frac{\load/\rateIrsa}{\numUs}\right)^{\numUs - \slotind}\, .
\end{align}

The polynomial representations for both node-perspective degree distributions are given by
\begin{align}
\Ld(x) = \sum_{\dg=2}^{\dmax} \Ld_\dg \, x^\dg \quad \mathrm{and} \quad \Rd(x) = \sum_{\slotind=0}^{\numUs} \Rd_\slotind \, x^\slotind
\end{align}
where, for $\numUs \rightarrow \infty$ and constant $ \load / \rateIrsa$, ${\Rd(x) = \exp \left\{ -\frac{\load}{\rateIrsa}(1-x) \right\}}$.
Degree distributions can also be defined from an edge-perspective. Adopting a notation similar to the one used for the node-perspective distributions, we define the edge-perspective burst node degree distribution as the \ac{PMF} $\EBdegDist_{\dg=2}^{\dmax}$, where $\ld_\dg$ is the probability that a given edge is connected to a burst node of degree $\dg$. Likewise, we define the edge-perspective slot node degree distribution as the \ac{PMF} $\ESdegDist_{\slotind=0}^{\numUs}$, where $\rd_\slotind$ is the probability that an edge is connected to a slot node of degree $\slotind$. From the definitions we have $\ld_\dg = \dg \, \Ld_\dg / \left(\sum_t t \, \Ld_t \right)$ and $\rd_\slotind = \slotind \, \Rd_\slotind / \left(\sum_t t \, \Rd_t \right)$;
it can be shown that, for $\numUs \rightarrow \infty$ and constant $ \load / \rateIrsa$,  $ \rd_\slotind  = \exp\{-\load / \rateIrsa\} (\load / \rateIrsa)^{\slotind-1} / (\slotind-1)! $.
The corresponding polynomial representation are
$\ld(x) = \sum_{\dg=2}^{\dmax} \ld_\dg \, x^{\dg-1}$ and $\rd(x) = \sum_{\slotind=0}^{\numUs} \rd_\slotind \, x^{\slotind-1}$.
Note that $\ld(x)=\Ld'(x)/\Ld'(1)$ and $\rd(x)=\Rd'(x)/\Rd'(1)$.\footnote{We recall that, notation $f'(x)$ denotes the derivative of $f(x)$.}

\subsection{Receiver Operation}
\label{subsec:receiver}

In our model, the receiver is always able to detect burst replicas received in a slot, \ie to discriminate between an empty slot where only noise samples are present and a slot in which at least one burst replica has been received. Moreover, a threshold-based capture model for the receiver is assumed, by which the generic burst replica $i$ is successfully decoded (\ie captured) in slot $j$ if the \ac{SINR} exceeds a certain threshold $\snrthr$, namely,
\begin{align}\label{eq:capture_model}
\Pr \{ \text{burst replica $i$ decoded} \} =
\begin{cases}
1, & \frac{\usPw_{ij}}{\noisePw + \intPw_{ij}} \geq \snrthr \\
0, & \text{otherwise}.
\end{cases}
\end{align}
The quantity $\intPw_{ij}$ in \eqref{eq:capture_model} denotes the power of the interference impairing replica $i$ in slot $j$. In our system model the threshold $\snrthr$ fulfills $\snrthr \geq 1$, which corresponds to a conventional narrowband single-antenna system. As we are considering a \ac{SIC}-based receiver, under the assumption of perfect \ac{IC} the quantity $\intPw_{ij}$ is equal to sum of the powers of those bursts that have not yet been cancelled from slot $j$ in previous iterations (apart form burst $i$).
Specifically,
\begin{align}\label{eq:interference_expression}
\intPw_{ij} = \sum_{u \in \mathcal{R}_j \setminus \{ i \} } \usPw_{uj} 
\end{align}
where $\usPw_{uj}$ is the power of burst replica $u$ not yet cancelled in slot $j$ and $\mathcal{R}_j$ denotes the set of remaining burst replicas in slot $j$.
Exploiting \eqref{eq:interference_expression}, after simple manipulation we obtain
\begin{align}
\frac{\usPw_{ij}}{\noisePw + \intPw_{ij}}
= \frac{\snrrv_{ij}}{ 1 + \sum_{u  \in \mathcal{R}_j \setminus \{i \} } \snrrv_{uj}} \, .
\end{align}
Hence, in the adopted threshold-based capture model the condition $\frac{\usPw_{ij}}{\noisePw + \intPw_{ij}} \geq \snrthr$ in \eqref{eq:capture_model} may be recast as
\begin{align}\label{eq:thr}
\frac{\snrrv_{ij}}{ 1 + \sum_{u  \in \mathcal{R}_j \setminus \{ i \} } \snrrv_{uj}} \geq \snrthr \, .
\end{align}

When processing the signal received in some slot $j$, if burst replica $i$ is successfully decoded due to fulfillment of \eqref{eq:thr}, then 1) its contribution of interference is cancelled from slot $j$, and 2) the contributions of interference of all replicas of the same burst are removed from the corresponding slots.\footnote{We assume that the receiver is able to estimate the channel coefficients required for the removal of the replicas.} Hereafter, we refer to the former part of the \ac{IC} procedure as \emph{intra-slot} \ac{IC} and to the latter as \emph{inter-slot} \ac{IC}. Unlike \ac{SIC} in \ac{IRSA} protocols over a collision channel, which only rely on inter-slot \ac{IC}, \ac{SIC} over a block fading channel with capture takes advantage of intra-slot \ac{IC} to potentially decode burst replicas interfering with each other in the same slot. In this respect, it effectively enables \emph{multi-user decoding} in the slot.

Upon reception of a new \ac{MAC} frame, slots are processed sequentially by the receiver. By definition, one \emph{\ac{SIC} iteration} consists of the sequential processing of all $\numSlot$ slots. In each slot, intra-slot \ac{IC} is performed repeatedly, until no burst replicas exist for which \eqref{eq:thr} is fulfilled. When all burst replicas in slot $j$ have been successfully decoded, or when intra-slot \ac{IC} in slot $j$ stops prematurely, inter-slot \ac{IC} is performed for all burst replicas successfully decoded in slot $j$ and the receiver proceeds to process slot $j+1$. When all $\numSlot$ slots in the \ac{MAC} frame have been processed, there are three possible cases: 1) a success is declared if all user packets have been successfully received; 2) a new iteration is started if at least one user packet has been recovered during the last iteration, its replicas removed via inter-slot IC, and there still are slots with interfering burst replicas; 3) a failure is declared if no user packets have been recovered during the last iteration and there are still slots with interfering burst replicas, or if a maximum number of \ac{SIC} iterations has been reached and there are still slots with interfering burst replicas.

Exploiting the graphical representation reviewed in Section~\ref{subsec:graph}, the \ac{SIC} procedure performed at the receiver may be described as a successive removal of graph edges. Whenever a burst replica is successfully decoded in a slot, the corresponding edge is removed from the bipartite graph as well as all edges connected to the same burst node, due to inter-slot \ac{IC}. A success in decoding the \ac{MAC} frame occurs when all edges are removed from the bipartite graph.
We should remark two important features pertaining to the receiver operation, when casted into the graph terms. The first one is that, due to the capture effect, an edge may be removed from the graph when it is connected to a slot node with residual degree larger than one. The second one is that an edge connected to a slot node with residual degree one may not be removed due to poor \ac{SNR}, when  \eqref{eq:thr}, with $\mathcal{R}_{j} \setminus \{i\} = \emptyset$, is not fulfilled. 

\section{Decoding Probabilities}\label{sec:decoding_probs}

Consider the generic slot node $j$ at some point during the decoding of the \ac{MAC} frame and assume it has degree $\sDegJ$ under the current graph state. This means that $\sDegJ$ could be the original slot node degree $\slotind_j$ or the residual degree after some inter-slot and intra-slot \ac{IC} processing. Note that, as we assume perfect \ac{IC}, the two cases $\sDegJ=c_j$ and $\sDegJ<c_j$ are indistinguishable.

Among the $\sDegJ$ burst replicas not yet decoded in slot $j$, we randomly choose one and call it the \emph{reference burst replica}. Moreover, we denote by $D(\sDegJ)$ the probability that the reference burst replica is decoded starting from the current slot setting and only running intra-slot \ac{IC} within the slot. As we are considering system with $ \snrthr \geq 1$, the threshold based criterion \eqref{eq:thr} can be satisfied only for one single burst replica at a time. Therefore there may potentially be $\sDegJ$ decoding steps (and $\sDegJ-1$ intra-slot \ac{IC} steps), in order to decode the reference burst replica. Letting $D (\sDegJ,\decStep)$ be the probability that the reference burst replica is successfully decoded in step $\decStep$ and not in any step prior to step $\decStep$, we may write
\begin{align}
D( \sDegJ ) = \sum_{\decStep=1}^{\sDegJ} D (\sDegJ,\decStep) \, . \label{eq:D(d,r)}
\end{align}

Now, with a slight abuse of the notation, label the $r$ burst replicas in the slot from $1$ to $r$, arranged such that: (i) the first $t-1$ are arranged by their \acp{SNR} in the descending order (\ie $\snrrv_{1} \geq \snrrv_{2} \geq \ldots \snrrv_{t-1} ) $, (ii) the rest have \ac{SNR} lower than $\snrrv_{t-1}$ but do not feature any particular \ac{SNR} arrangement among them, (iii) the reference burst is labeled by $t$, \ie its \ac{SNR} by $\snrrv_{t}$, and (iv) the remaining $ r - t$ bursts are labeled arbitrarily.
The probability of having at least $t$ successful burst decodings through successive intra-slot \ac{IC} for such an arrangement is
\begin{align}
& \Pr \left\{\frac{\snrrv_{1}}{1 + \sum_{i=2}^{r} \snrrv_{i} } \geq \snrthr, \dots, \frac{\snrrv_{t}}{1 + \sum_{i=t+1}^{r} \snrrv_{i} } \geq \snrthr \right\} \\
&= \frac{1}{ \snravg^r} \int_{0}^{\infty} \mathrm{d} b_{r} \cdots \int_{0}^{\infty}  \mathrm{d} b_{t + 1} \notag \\
& \times \int_{ \snrthr ( 1 + \sum_{i = t+1}^{r} b_i ) }^{\infty} \mathrm{d} b_{t} \, \cdots  \nonumber  \int_{ \snrthr ( 1 + \sum_{i = 2}^{r} b_i )}^{\infty} \mathrm{d} b_{1} e^{- \frac{b_{r}}{\snravg}} \, \cdots \, e^{- \frac{b_{1}}{\snravg}} \\
&=  \frac{e^{-\frac{\snrthr}{\snravg} \sum_{i=0}^{t-1} (1+ \snrthr)^i}}{(1+ \snrthr )^{t ( r - \frac{t+1}{2})} }  = \frac{e^{- \frac{1}{\snravg} ( ( 1 + \snrthr )^t - 1 ) }}{(1+ \snrthr )^{t \left( r - \frac{t+1}{2} \right)}} .
\label{eq:ordered_prob}
\end{align}
Further, the number of arrangements in which the power of the reference replica is not among the first $\decStep -1$ largest is $\frac{(\sDegJ-1)!}{(\sDegJ-\decStep)!}$, which combined with \eqref{eq:ordered_prob} yields the probability that the reference burst is decoded (\ie captured) in the $\decStep$-th step
\begin{align}
\label{eq:D(r|t)}
D (\sDegJ,\decStep) = \frac{(\sDegJ-1)!}{(\sDegJ-\decStep)!}   \frac{e^{ - \frac{1}{\snravg} ((1 + \snrthr)^\decStep - 1) }}{(1+ \snrthr )^{\decStep \left( \sDegJ - \frac{\decStep+1}{2} \right)}} , \; 1 \leq \decStep \leq \sDegJ.
\end{align}

We conclude this Section by noting that  $D ( 1 ) =  e^{ - \frac{\snrthr}{\snravg} } \leq 1$, \ie a slot of degree 1 is decodable with probability that may be less than 1 and that depends on the ratio of the capture threshold and the expected SNR.
Again, this holds both for slots whose original degree was 1 and for slots whose degree was reduced to 1 via IC, as these two cases are indistinguishable when the IC is perfect.


\section{Density Evolution Analysis and Decoding Threshold Definition}\label{sec:DE}

\begin{figure}[tb]
\begin{center}
\includegraphics[width=\columnwidth,draft=false]{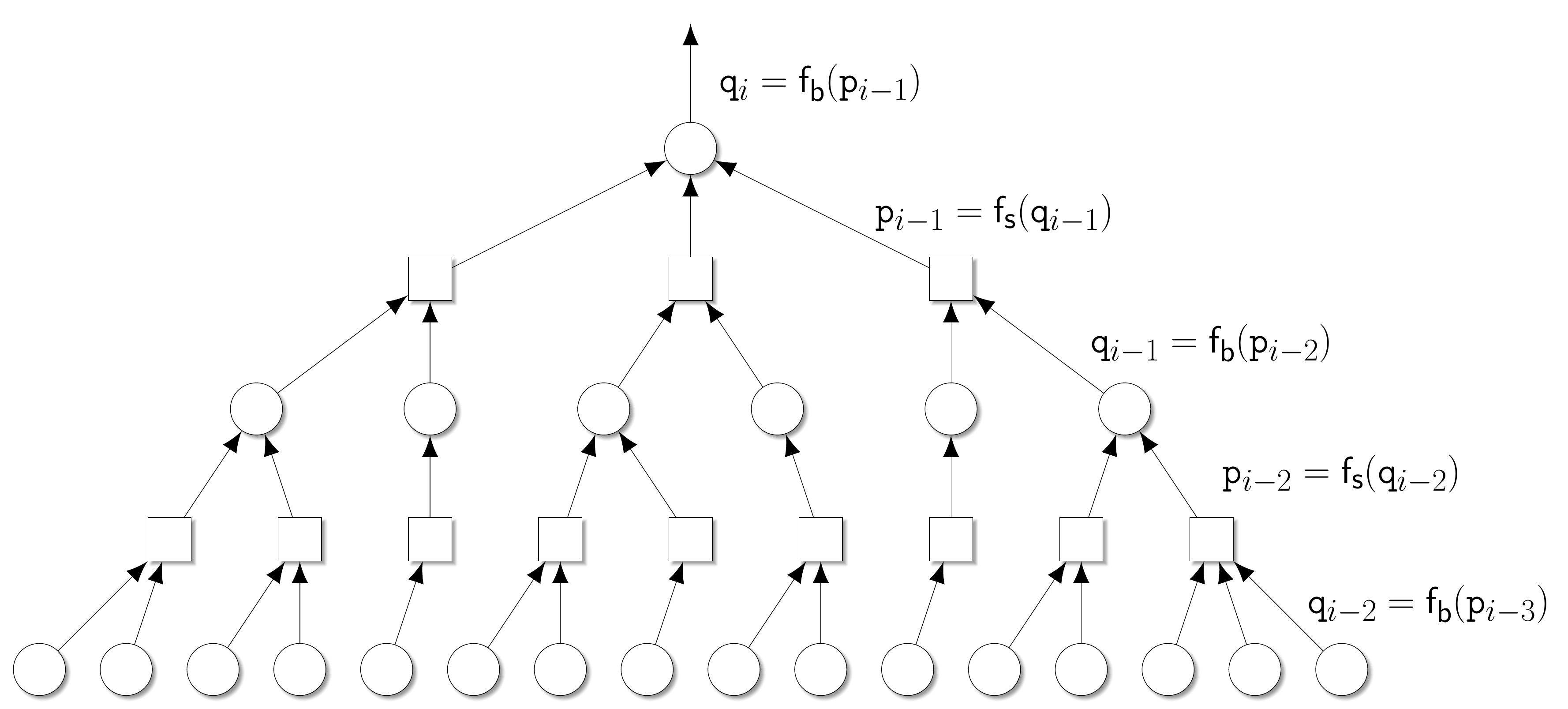}
\end{center}
\caption{Tree representation of the MAC frame.}\label{fig:and_or_tree}
\end{figure}

In this Section, we apply the technique of \ac{DE} in order to evaluate asymptotic performance of the proposed technique, \ie when $\numUs \rightarrow \infty$ and $ \numSlot \propto \numUs$.
For this purpose, we unfold the graph representation of the MAC frame (Figure~\ref{fig:graph}) into a tree, choosing a random burst node as its root, as depicted in Figure~\ref{fig:and_or_tree}.
The evaluation is performed in terms of probabilities that erasure messages are exchanged over the edges of the graph, where the erasure message denotes that the associated burst is not decoded.\footnote{For a more detailed introduction to the \ac{DE}, we refer the interested reader to \cite{RU2007}.}
The message exchanges are modeled as successive (\ie iterative) process, corresponding to the decoding algorithm described in Section~\ref{subsec:receiver} in the asymptotic case, when the lengths of the loops in the graph tends to infinity.
Specifically, the $i$-th iteration consists of the update of the probability $\q_i$ that an edge carries an erasure message from a burst node to a slot node, followed by the update of the probability $\p_i$ that an edge carries an erasure message from a slot node to a burst node. These probabilities are averaged over all edges in the graph. We proceed by outlining the details.

The probability that an edge carries an erasure message from burst nodes to slot nodes in the $i$-th iteration is
\begin{align}
\label{eq:final_q}
\q_i = \sum_{\dg=1}^{\dmax} \ld_\dg \, \q_i^{(\dg)}  = \sum_{\dg=1}^{\dmax} \ld_\dg \, \p_{i-1}^{\dg-1} =: \exitb(\p_{i-1})
\end{align}
where $\ld_\dg$ is the probability that an edge is connected to a burst node of degree $\dg$ (see Section~\ref{subsec:graph}) and $\q_i^{(\dg)}$ is the probability that an edge carries an erasure message given that it is connected to a burst node of degree $\dg$. In the second equality we used the fact that the an outgoing message from a burst node carries an erasure only if all incoming edges carry an erasure, \ie $ \q_i^{(\dg)} = \p_{i-1}^{\dg-1}$.

Similarly, the probability that an edge carries an erasure message from SNs to BNs in $i$-th iteration is
\begin{align}
\label{p_i_avg}
\p_i & = \sum_{\slotind=1}^{+\infty} \rd_\slotind \, \p_i^{(\slotind)}
\end{align}
where $\rd_\slotind$ is the probability that an edge is connected to a slot node of degree $\slotind$, and where $\p_i^{(\slotind)}$ is the probability that an edge carries an erasure message given that it is connected to a slot node of degree $\slotind$.
This probability may be expressed as
\begin{align}
\label{p_i_d}
\p_i^{(\slotind)} = 1 - \sum_{\sDegJ=1}^{\slotind} D( \sDegJ ) { \slotind - 1 \choose \sDegJ - 1} \q_{i}^{\sDegJ-1} ( 1 - \q_{i} )^{\slotind-\sDegJ},
\end{align}
where summation is done over all possible values of the reduced degree $\sDegJ$, \ie $ 1 \leq \sDegJ \leq \slotind$, the term ${ \slotind - 1 \choose \sDegJ - 1} \q_{i}^{\sDegJ-1} ( 1 - \q_{i} )^{\slotind - \sDegJ}$ corresponds to the probability that the degree of the slot node is reduced to $\sDegJ$ and $D (\sDegJ )$ is the probability that the burst corresponding to the outgoing edge is decoded when the (reduced) degree of the slot node is $\sDegJ$.\footnote{As it happens in the asymptotic case, the loops in the graph are assumed to be of infinite length, such that the tree representation in Figure~\ref{fig:and_or_tree} holds, the reduction of the slot degree happens only via inter-slot IC, which is implicitly assumed in the term ${ \slotind - 1 \choose \sDegJ - 1} \q_{i}^{\sDegJ-1} ( 1 - \q_{i} )^{\slotind - \sDegJ}$. On the other hand, $D (\sDegJ)$ expresses the probability that an outgoing edge from the slot node is decoded using intra-slot IC (see Section~\ref{sec:decoding_probs}). In other words, inter- and intra-slot IC are in the asymptotic evaluation separated over \ac{DE} iterations.}

Combining \eqref{p_i_d} and the expression for the edge-oriented slot-node degree distribution (see Section~\ref{subsec:graph}) into \eqref{p_i_avg} yields
\begin{equation}
\p_i = 1 - e^{ - \frac{\load }{ \rateIrsa } } \sum_{\slotind=1}^{\infty} \left( \frac{\load}{ \rateIrsa} \right)^{\slotind - 1} \sum_{\sDegJ=1}^{\slotind} \frac { D ( \sDegJ ) }{ ( \sDegJ - 1 ) ! } \q_i^{\sDegJ-1} ( 1 - \q_i )^{\slotind-\sDegJ}.\label{eq:p_i}
\end{equation}
It can be shown that in case of perfect IC, \eqref{eq:p_i} becomes
\begin{align}
\p_i & = 1 - e^{-\frac{\load}{\rateIrsa}} \sum_{\sDegJ = 1}^{+\infty} \frac{D ( \sDegJ )}{ ( \sDegJ - 1 )! } \left( \frac{\load}{\rateIrsa} \q_i \right)^{ \sDegJ - 1 } \sum_{\slotind = 0}^{+\infty} \frac{\left( \frac{\load}{\rateIrsa} ( 1 - \q_i ) \right)^{ \slotind } }{\slotind !}\\
& = 1 - e^{-\frac{\load}{\rateIrsa} \q_i } \sum_{\sDegJ=1}^{+\infty} \frac{D( \sDegJ )}{ ( \sDegJ - 1 )!} \left( \frac{\load}{\rateIrsa} \q_i \right)^{\sDegJ - 1 }, \label{eq:p_i_1}
\end{align}
where $ D ( \sDegJ ) = \sum_{\decStep =1}^{\sDegJ} D ( \sDegJ ,\decStep )$, see \eqref{eq:D(r|t)}.
Further, defining $\zt=(1+\snrthr)^\decStep$, \eqref{eq:p_i_1} becomes
\allowdisplaybreaks
\begingroup
\begin{align}
\label{eq:final_p}
\p_i & = 1 - e^{-\frac{\load}{\rateIrsa} \q_i } \sum_{\sDegJ=1}^{+\infty} \left( \frac{\load}{\rateIrsa} \q_i \right)^{\sDegJ - 1 } \sum_{\decStep = 1}^{\sDegJ} \frac{e^{ - \frac{1}{\snravg} ( \zt - 1 )}}{( \sDegJ - \decStep )! \zt^{ \left(\sDegJ - \frac{\decStep+1}{2}\right) }} \notag \\
& = 1 - e^{-\frac{\load}{\rateIrsa} \q_i } \sum_{\decStep=1}^{+\infty} \frac{ \left( \frac{\load}{\rateIrsa} \q_i \right)^{\decStep-1} }{ \zt^{\left(\frac{ \decStep - 1 }{ 2 }\right)}} e^{ - \frac{1}{\snravg} ( \zt - 1 )} \sum_{\sDegJ = 0 }^{+\infty} \frac{ \left( \frac{\load}{\rateIrsa} \q_i \right)^\sDegJ }{\sDegJ ! \zt^\sDegJ } \notag \\
& = 1-  \sum_{\decStep=1}^{+\infty} \frac{ \left( \frac{\load}{\rateIrsa} \q_i \right)^{\decStep-1} }{ \zt^{\left(\frac{ \decStep - 1 }{ 2 }\right)}} e^{ - ( \zt - 1 ) \left (\frac{1}{\snravg} + \frac{\frac{\load}{\rateIrsa} \q_i} {\zt }\right)  } =: \exits(\q_i) \, .
\end{align}
\endgroup

A \ac{DE} recursion is obtained combining \eqref{eq:final_q} with \eqref{eq:final_p}, consisting of one recursion for $\q_i$ and one for $\p_i$. In the former case, the recursion takes the form $\q_i = (\exitb \circ \exits) (\q_{i-1})$ for $i \geq 1$, with initial value $\q_0 = 0$. In the latter case, it assumes the form $\p_i = (\exits \circ \exitb) (\p_{i-1})$ for $i \geq 1$, with initial value $\p_0 = \exits(0)$. Note that the \ac{DE} recursion for $\p_i$ allows expressing the asymptotic \ac{PLR} of an \ac{IRSA} scheme in a very simple way. More specifically, let $\p_{\infty} (\load, \BNdegDist, \snravg,\snrthr) = \lim_{i \rightarrow \infty} \p_i$ be the limit of the \ac{DE} recursion, where we have explicitly indicated that the limit depends on the system load, on the burst node degree distribution, on the average \ac{SNR}, and on the threshold for successful intra-slot decoding. Since $\left[ \p_{\infty}(\load, \BNdegDist, \snravg, \snrthr) \right]^{\dg}$ represents the probability that a user packet associated with a burst node of degree $d$ is not successfully received at the end of the decoding process, the asymptotic \ac{PLR} is given by
\begin{align}
\plr(\load,\BNdegDist,\snravg, \snrthr) = \sum_{\dg=2}^{\dmax} \Ld_\dg \left[ \p_{\infty} (\load, \BNdegDist, \snravg, \snrthr) \right]^{\dg} \, .
\end{align}

Next, we introduce the concept of \emph{asymptotic decoding threshold} for an \ac{IRSA} scheme over the considered block fading channel model and under the decoding algorithm described in Section~\ref{subsec:receiver}. Let $\PLRtarget$ be a target \ac{PLR}. Then, the asymptotic decoding threshold, denoted by $\load^{\star} = \load^{\star}(\BNdegDist, \snravg, \snrthr, \PLRtarget)$, is defined as the supremum system load value for which the target \ac{PLR} is achieved in the asymptotic setting:
\begin{align}
\load^{\star} = \sup_{\load \geq 0} \{ \load : \plr(\load,\BNdegDist,\snravg, \snrthr) < \PLRtarget \}.
\end{align} 

\section{Numerical Results}\label{sec:Performance}

Table~\ref{tab:Distribution} shows some degree distributions designed combining the \ac{DE} analysis developed in Section~\ref{sec:DE} with the differential evolution optimisation algorithm proposed in \cite{diffEvol1997}. For each design we set $\PLRtarget=10^{-2}$, $\snravg=20$ dB, and $\snrthr=3$ dB, and we constrained the optimisation algorithm to find the distribution $\BNdegDist$ with the largest threshold $\load^{\star}$, subject to a given average degree\footnote{A constraint on the average degree $\davg$ can be turned into a constraint on the rate $\rateIrsa$ as there is a direct relation between the two; see also equation~\eqref{eq:rate}.} $\davg$ and maximum degree $\dmax=16$.

For all chosen average degrees, the $\load^{\star}$ threshold of the optimised distribution largely exceeds the value $1\,\mathrm{[pk/slot]}$, the theoretical limit under a collision channel model. In general, the higher is the average degree $\davg$, the larger is the load threshold $\load^{\star}$. However, as $\davg$ increases, more complex burst node distributions are obtained. For instance, under a $\davg=4$ constraint, the maximum degree is $\dmax=16$ (\ie a user may transmit up to $16$ copies of its packet); when reducing $\davg$, the optimisation converges to degree distributions with a lower maximum degree, and degree-$2$ nodes become increasingly dominant.

To assess the effectiveness of the proposed design approach, tailored to the block fading channel with capture, we optimised a distribution $\Ld_5(x)$ using the \ac{DE} recursion over the collision channel \cite{Liva2011} and again constraining the optimisation to $\davg=4$ and $\dmax=16$. As from Table~\ref{tab:Distribution}, due to the mismatched channel model, we observe a $7\%$ loss in terms of $\load^{\star}$ threshold w.r.t. the distribution $\Ld_1(x)$ that fulfills the same constraints but was obtained with the \ac{DE} developed in this Chapter.

\begin{table}[tb]

\caption{Optimised user node degree distribution and corresponding threshold  $\thr$ for $\PLRtarget=10^{-2}$.}\label{tab:Distribution}

\begin{center}
\begin{tabularx}{\columnwidth}{c|>{\centering}X|c}
\hline\hline
\rule{0pt}{3ex}
$\davg$ & Distribution $\Ld(x)$ & $\thr$ \\[1mm]
\hline
\rule{0pt}{2.5ex}
$4$ & $\Ld_1(x)=0.59 x^2 + 0.27 x^3 + 0.02 x^5 + 0.12 x^{16}$ & $1.863$ \\
\rule{0pt}{0ex}
$3$ & $\Ld_2(x)=0.61 x^2 + 0.25 x^3 + 0.03 x^6 + 0.02 x^7 + 0.07 x^8 + 0.02 x^{10}$ & $1.820$ \\
$2.5$ & $\Ld_3(x)=0.66 x^2 + 0.16 x^3 + 0.18 x^4$ & $1.703$ \\
$2.25$ & $\Ld_4(x)=0.65 x^2 + 0.33 x^3 + 0.02 x^4$ & $1.644$ \\
$4$ & $\Ld_5(x)=0.49 x^2 + 0.25 x^3 + 0.01 x^4 + 0.03 x^5 + 0.13 x^6 + 0.01 x^{13} + 0.02 x^{14} + 0.06 x^{16}$ & $1.734$ \\
\hline\hline
\end{tabularx}
\end{center}
\end{table}

\begin{figure}[tb]
\begin{center}
\includegraphics[width=0.8\columnwidth,draft=false]{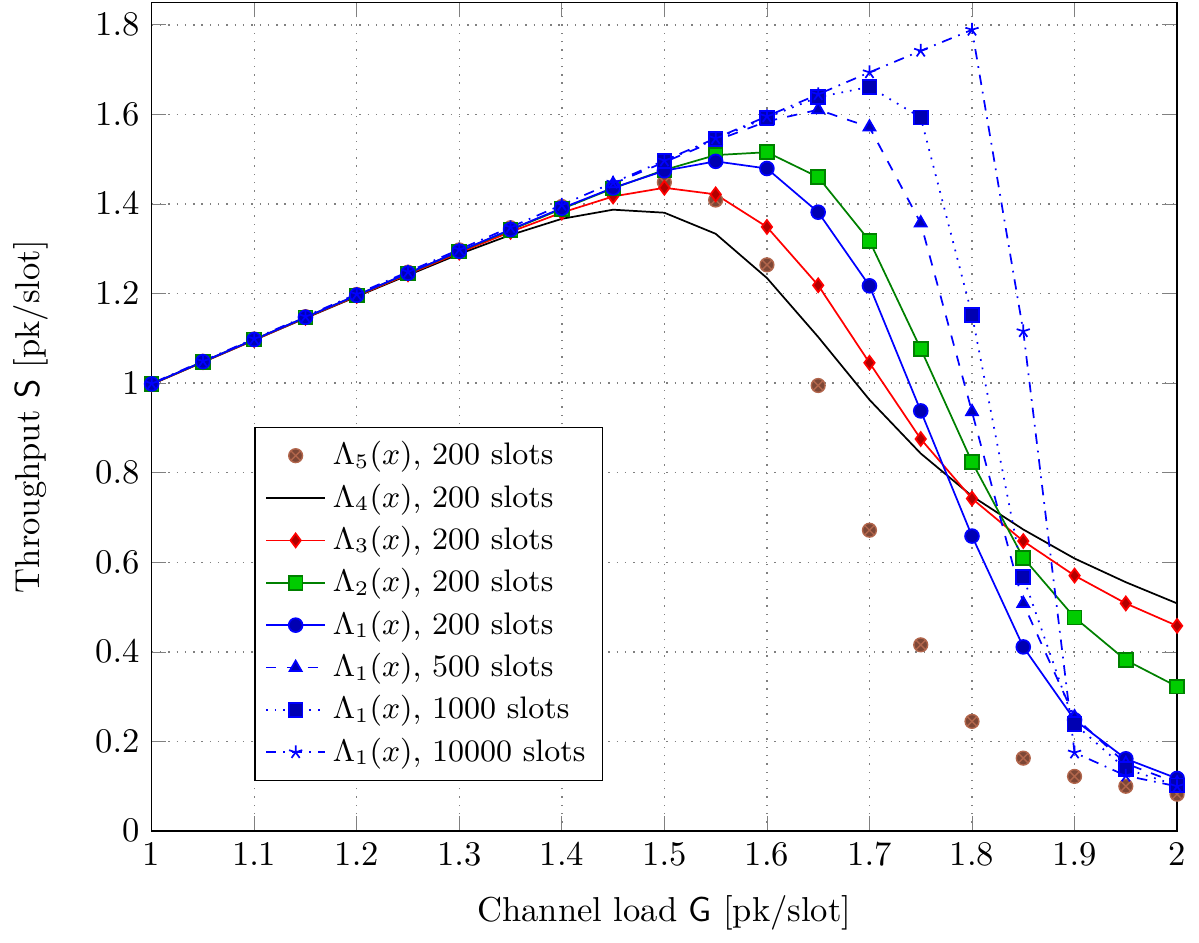}
\end{center}
\caption{Throughput values achieved by the burst node distributions in Table~\ref{tab:Distribution}, versus the channel load. Various frame sizes $\numSlot$, $\snravg=20$ dB, $\snrthr=3$ dB.}\label{fig:thr}
\end{figure}

\begin{sloppypar}
In order to investigate the performance of the optimised distributions in a finite frame length setting, we ran Monte Carlo simulations setting $\numSlot=200$ (unless otherwise stated), $\snravg=20$ dB, $\snrthr=3$ dB, and the maximum number of \ac{IC} iterations to $20$. Figure~\ref{fig:thr} illustrates the throughput $\tp$, defined as the average number of successfully decoded packets per slot, versus the channel load. Even in the relatively short frame regime $\numSlot=200$, all distributions exhibit peak throughput exceeding $1\,\mathrm{[pk/slot]}$. The distribution $\Ld_2(x)$ is the one achieving the highest throughput of $1.52\, \mathrm{[pk/slot]}$, although $\Ld_1(x)$ is the distribution with the highest threshold $\thr$. It is important to recall that the threshold is computed for a target \ac{PLR} of ${\PLRtarget=10^{-2}}$ and that when we move to finite frame lengths, the threshold effect on the \ac{PLR} tends to vanish and a more graceful degradation of the \ac{PLR} curve as the channel load increases is expected. Moreover, as there are user nodes transmitting as high as $16$ replicas, for short frames sizes the distribution $\Ld_1(x)$ is more penalised w.r.t. to $\Ld_2(x)$ for which there are at most $10$ replicas per user node.\footnote{Results for frame size of $500$ slots, not presented in the figures, show that the peak throughput for $\Ld_1(x)$ is $1.61$ while for $\Ld_2(x)$ is $1.60$, which is in line with this observation.} This effect, coupled with the fact that the threshold $\thr$ of $\Ld_1(x)$ is only slightly better than the one of $\Ld_2(x)$, explains the peak throughput behaviour. To investigate the benefit of larger frames, we selected $\Ld_1(x)$ and we increased the frame size up to $10000$ slots. As expected, the peak throughput is greatly improved from $1.49$ to $1.79\, \mathrm{[pk/slot]}$, \ie $20\%$ of gain.
\end{sloppypar}

\begin{figure}[tb]
\begin{center}
\includegraphics[width=0.8\columnwidth,draft=false]{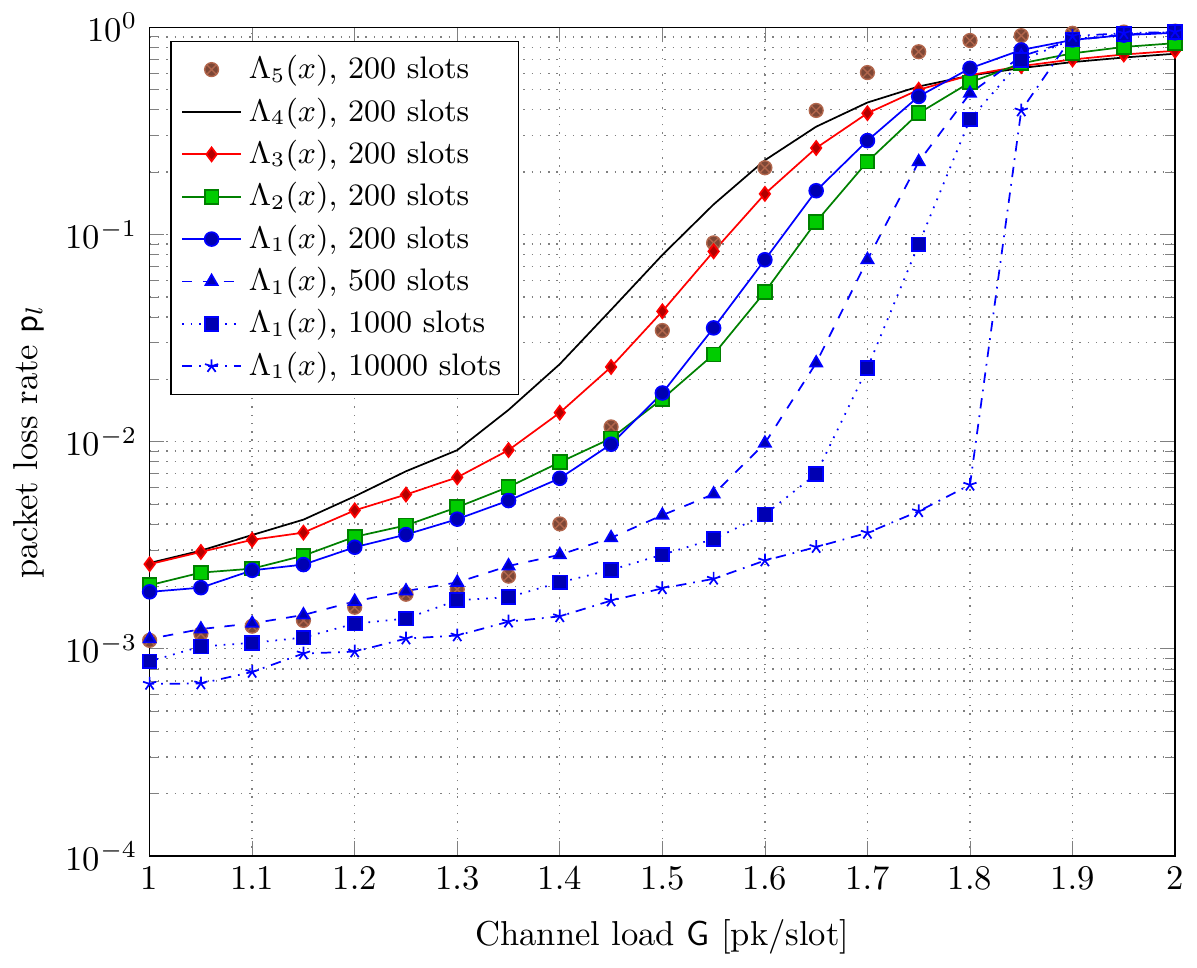}
\end{center}
\caption{\ac{PLR} values achieved by the burst node distributions in Table~\ref{tab:Distribution}, versus the channel load. Various frame sizes $\numSlot$, $\snravg=20$ dB, $\snrthr=3$ dB.}\label{fig:plr}
\end{figure}

The \ac{PLR} performance is illustrated in Figure~\ref{fig:plr}. Coherently with the optimisation results, the $\Ld_1(x)$ distribution achieves $\plr=10^{-2}$ for values of the channel load slightly larger than the ones required by $\Ld_2(x)$. Indeed, the steeper \ac{PLR} curve of $\Ld_1(x)$ is the reason for the slightly larger peak throughput of $\Ld_2(x)$ observed in Figure~\ref{fig:thr}.
Finally, as expected, an increment of the number of slots per frame yields to an increase in the channel load for which the target \ac{PLR} is achieved. 

\section{Conclusions}\label{sec:Conclusions}

The asymptotic analysis of \ac{IRSA} access schemes, assuming both a Rayleigh block fading channel and capture effect, was presented in this Chapter. The decoding probability of a burst replica due to intra-slot \ac{IC} was derived. The \ac{DE} analysis was modified considering the Rayleigh block fading channel and the user and slot nodes updates of the iterative procedure were explicitly derived. Due to the presence of fading, the optimisation procedure target was modified as well, to select distributions that were able to achieve higher channel load values without exceeding a properly defined target \ac{PLR}. Some degree distributions were designed for different values of average degree. Remarkably, all presented a load threshold that guaranteed \ac{PLR} below $10^{-2}$ for values well above $1\,\mathrm{[pk/slot]}$. In particular, the best distribution exceeded $1.8\,\mathrm{[pk/slot]}$. The derived distributions were shown to perform well also for finite frame durations. Even for relatively short frames with $200$ slots, the peak throughput exceeded $1.5\,\mathrm{[pk/slot]}$ and up to $1.45\,\mathrm{[pk/slot]}$ the \ac{PLR} remained below $10^{-2}$.  
\chapter{Random Access with Multiple Receivers}
\label{chapter7}
\thispagestyle{empty}
\ifpdf
    \graphicspath{{chapter7/figures/}}
\fi
\epigraph{It is the weight, not numbers of experiments that is to be regarded}{Isaac Newton}

\section{System Model and Preliminaries} \label{sec:sysModel}

Throughout this Chapter, we focus on the topology depicted in Figure~\ref{fig:simple_topology}, where an infinite population of users wants to deliver information in the form of data packets to a collecting \ac{gateway}. The transmission process is divided in two phases, referred to as \emph{uplink} and \emph{downlink}, respectively. During the former, data are sent in an uncoordinated fashion over a shared wireless channel to a set of $\nrx$ receivers or relays, which, in turn, forward collected information to the \ac{gateway} in the downlink.

As to the uplink, time is divided in successive slots, and transmission parameters in terms of packet length, coding and modulation are fixed such that one packet can be sent within one time unit. Users are assumed to be slot-synchronized, and \ac{SA} \cite{Abramson1970} is employed as medium access policy. Furthermore, the number of users accessing the channel in a generic slot is modelled as a Poisson-distributed r.v. $\usersRV$ of intensity $\load$, with:
\begin{equation}
\textrm{Pr}\{ \usersRV = \user \} = \frac{\load^\user e^{-\load}}{\user!}\,.
\end{equation}

\begin{figure}
\centering
\includegraphics[width=0.8\columnwidth]{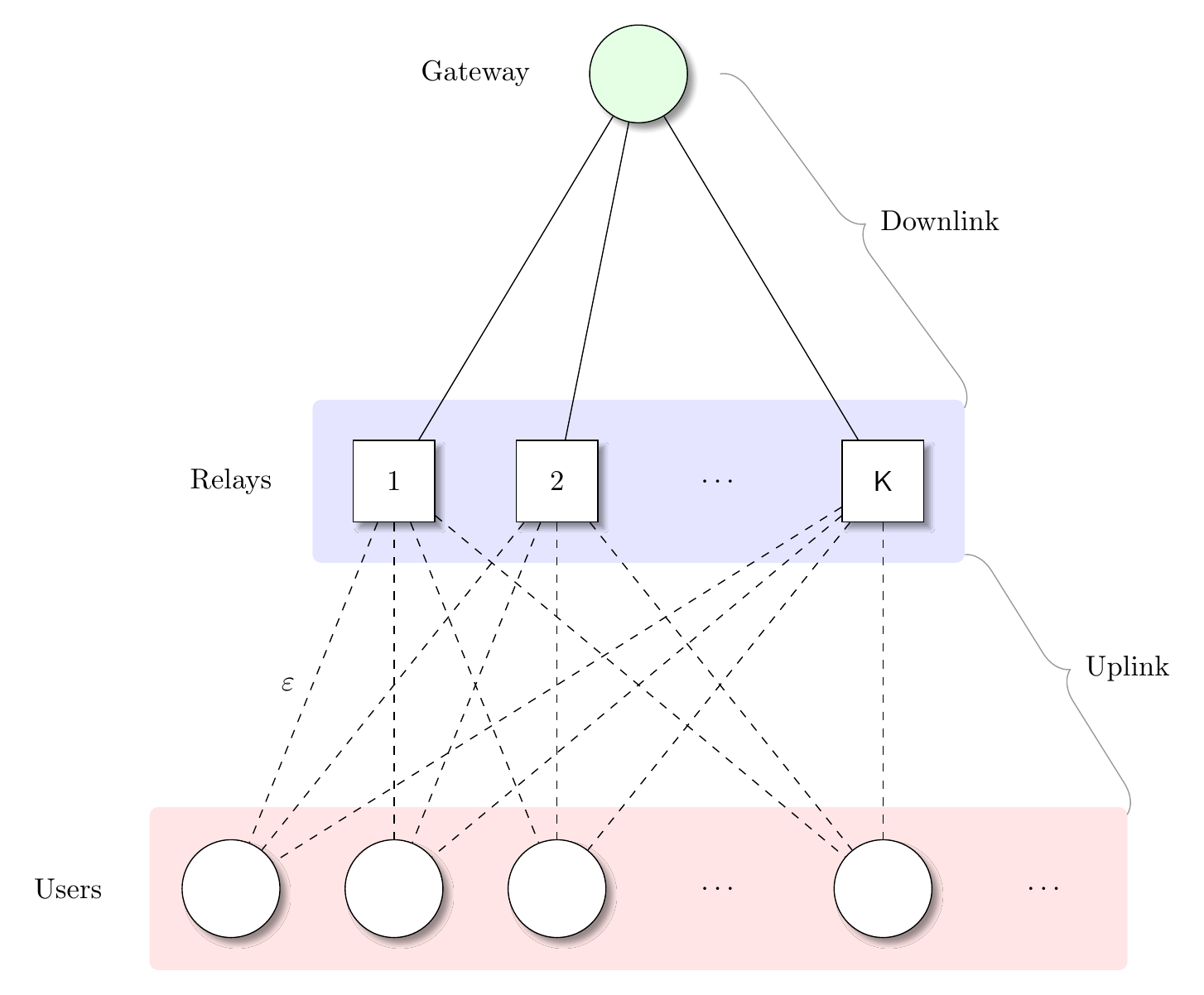}
\caption{Reference topology for the system under consideration.}
\label{fig:simple_topology}
\end{figure}

The uplink wireless link connecting user $i$ and receiver $j$ is described by a packet erasure channel with erasure probability $\peras_{i,j}$, where we assume independent realisations for any $(i,j)$ pair, as well as for a specific user-receiver couple across time slots. For the sake of mathematical tractability, we set $\peras_{i,j} = \peras, \: \forall \,\, i,\, j$. Following the on-off fading description \cite{OnOff2003}, we assume that a packet is either completely shadowed, not bringing any power or interference contribution at a receiver, or it arrives unfaded. While, such a model is especially useful to develop mathematically tractable approaches to the aim of highlighting the key tradeoffs of the considered scenario, it also effectively captures effects like fading and short-term receiver unavailability due, for instance, to the presence of obstacles.
Throughout our investigation, no multi-user detection capabilities are considered at the relays, so that collisions among non-erased data units are regarded as destructive and prevent decoding at a receiver.

Within this framework, the number of non-erased packets that arrive at a relay when $\user$ concurrent transmissions take place follows a binomial distribution of parameters $(\user,1-\peras)$ over one slot. Therefore, a successful reception occurs with probability $\user(1-\peras)\peras^{\user-1}$, and the average throughput experienced at each of the $\nrx$ receivers, in terms of decoded packets per slot, can be computed as:
\begin{equation}
\tpSA = \sum\limits_{\user=0}^{\infty} \frac{\load^\user e^{-\load}}{\user!} \, \user(1-\peras)\peras^{\user-1} = \load (1-\peras) e^{-\load(1-\peras)}\,,
\label{eq:Tsa}
\end{equation}
corresponding to the performance of a SA system with erasures.
On the other hand, a spatial diversity gain can be triggered when the  relays are considered jointly, since independent channel realisations may lead them to retrieve different information units over the same time slot. In order to quantify this beneficial effect, we label a packet as \emph{collected} when it has been received by at least one of the relays, and we introduce the \emph{uplink throughput} $\tpULk{\nrx}$ as the average number of collected packets per slot. Despite its simplicity, such a definition offers an effective characterisation of the beneficial effects of diversity, by properly accounting for both the possibility of retrieving up to $\min\{\user,\nrx\}$ distinct data units or multiple times the same data unit over a slot, as will be discussed in details in Section~\ref{sec:uplinkThroughput}. On the other hand, $\tpULk{\nrx}$ also quantifies the actual amount of information that can be retrieved by the set of receivers, providing an upper bound for the overall achievable end-to-end performance, and setting the target for the design of any relay-to-\ac{gateway} delivery strategy.

For the downlink phase, we focus on a \ac{DF} approach, so that each receiver re-encodes and transmits only packets it has correctly retrieved during the uplink phase, or possibly linear combinations thereof. A finite downlink capacity is assumed, and relays have to share a common bandwidth to communicate to the \ac{gateway} by means of a \ac{TDMA} scheme. In order to get an insightful characterisation of the optimum achievable system performance, we assume relay-to-\ac{gateway} links to be error free, and let resource allocation for the D\&F phase be performed ideally and without additional cost by the central collecting unit.

\subsection{Notation}
Prior to delving into the details of our mathematical framework, we introduce in the following some useful notation. All the variables will be properly introduced when needed in the discussion, and the present Section is simply meant to offer a quick reference point throughout the reading.

$\nrx$ relays are available, and, within time slot $\slot$, the countably infinite set of possible outcomes at each of them is labeled as $\evSlotSet_\slot:=\{\evSlot_0^\slot,\evSlot_1^\slot,\evSlot_2^\slot,\ldots,\evSlot_\infty^\slot\}$ for each $\slot=1,2,\ldots,\numulslot$. Here, $\evSlot_0^\slot$ denotes the erasure event (given either by a collision or by an idle slot), while $\evSlot_j^\slot$ indicates the event that the packet of the $j$-th user arriving in slot $\slot$ was received. According to this notation, we define as
$\obsSlot_\rxEl^\slot$ the random variables with alphabet $\mathbb N$, where $\obsSlot_\rxEl^\slot = j$ if $\evSlot_j^\slot$ was the observation at relay $\rxEl$.
When needed for mathematical discussion, we let the uplink operate for $\numulslot$ time slots. In this case, let $\setColPack_\rxEl^{\numulslot}$ be the set of collected packets after $\numulslot$ time slots at receiver $\rxEl$, where $\setColPack_\rxEl^{\numulslot} \subsetneq \bigcup_{\slot=1}^{\numulslot} \{ \evSlotSet_\slot \backslash \evSlot_0^\slot\}$. That is, we do not add the erasure events to $\setColPack_\rxEl^{\numulslot}$. The number of received packets at relay $\rxEl$ after $\numulslot$ time slots is thus $|\setColPack_\rxEl^{\numulslot}|$.

In general, the complement of a set $\setColPack$ is indicated as $\setColPack^{c}$. We write vectors as bold variables, \eg  $\obsPk$, while matrices and their transposes are labeled by uppercase bold letters, \eg  $\bm{B}$ and $\bm{B}^T$.

\section{Uplink Performance}  \label{sec:uplink}

With reference to the topology of Figure~\ref{fig:simple_topology}, we first consider the uplink phase. In order to gather a comprehensive description of the improvements enabled by receiver diversity, we characterise the system by means of two somewhat complementary metrics: uplink throughput (Section~\ref{sec:uplinkThroughput}) and packet loss rate (Section~\ref{sec:PLR}).

\subsection{Uplink Throughput} \label{sec:uplinkThroughput}
Let us focus on the random access channel, and, following the definition introduced in Section~\ref{sec:sysModel}, let $\collRV$ be the number of packets collected by the relays over one slot. $\collRV$ is a r.v. with outcomes in the set $\{0,1,2, \ldots, \nrx\}$, where the maximum value occurs when the $\nrx$ receivers decode distinct packets due to different erasure patterns. The average uplink throughput can thus be expressed by conditioning on the number of concurrent transmissions as:
\begin{equation}
\label{eq:truUplink_general}
\tpULk{\nrx} =\!\mathbb E_U [ \, \mathbb E[ \,\collRV \,|\, \usersRV \,]\, ] \!=\!\! \sum\limits_{\user=0}^\infty \frac{\load^\user e^{-\load}}{\user!}  \sum\limits_{\coll=0}^\nrx \coll\,\textrm{Pr}\{\collRV=\coll \,|\,\usersRV=\user\}.
\end{equation}
While equation~\eqref{eq:truUplink_general} formula holds for any $\nrx$, the computation of the collection probabilities intrinsically depends on the number of available relays. In this perspective, we articulate our analysis by first considering the two-receiver case, to then extend the results for an arbitrary topology.

\subsubsection{The Two-Receiver Case}

Let us first then focus on the case in which only two relays are available. Such a scenario allows a compact mathematical derivation of the uplink throughput, as the events leading to packet collection at the relays set can easily be expressed. On the other hand, it also represents a case of practical relevance, as it can be instantiated by simply adding a receiver to an existing \ac{SA}-based system.
\begin{figure}
\centering
\includegraphics[width=0.8\columnwidth]{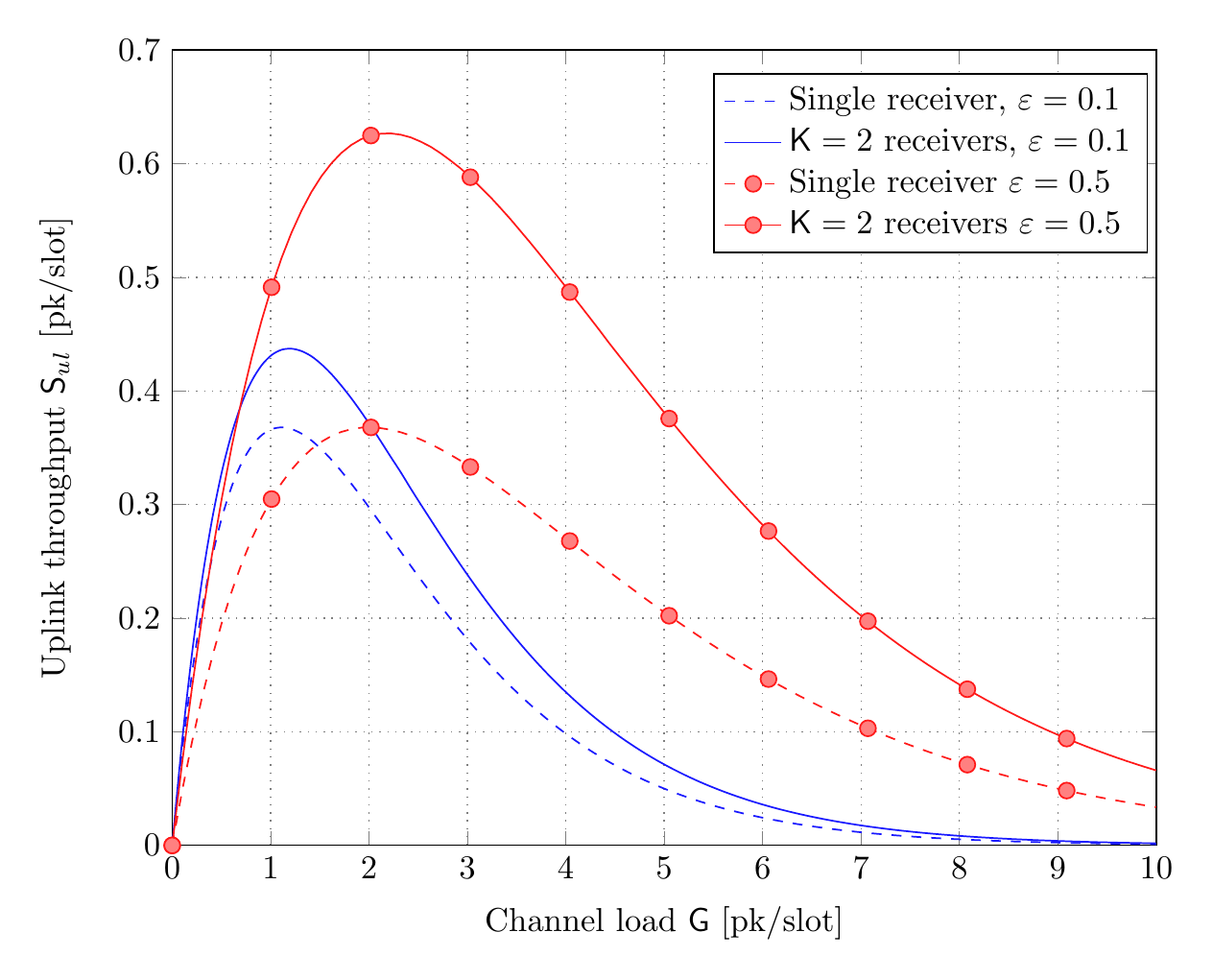}
\caption{Average uplink throughput vs channel load under different erasure probabilities. Continuous lines indicate the performance in the presence of two receivers, whereas dashed lines report the behaviour of pure \ac{SA}.}
\label{fig:truputUplink}
\end{figure}
When $\nrx=2$, the situation for $\collRV=1$ can be easily accounted for, since a single packet can be collected as soon as at least one of the relays does not undergo an erasure, \ie with overall probability $1-\peras^2$. On the other hand, by virtue of the binomial distribution of $\usersRV$, the event of collecting a single information unit over one slot occurs with probability
\begin{align}
\textrm{Pr}\{\collRV =1 \,|\,\usersRV=\user\} =& \,2u(1-\peras) \peras^{\user-1}\left[ 1 - \user(1-\peras) \peras^{\user-1} \right] + \user(1-\peras)^2 \peras^{2(\user-1)}
\end{align}
where the former addend accounts for the case in which one relay decodes a packet while the other does not (either due to erasures or to a collision), whereas the latter tracks the case of having the two relays decoding the same information unit. Conversely, a reward of two packets is obtained only when the receivers successfully retrieve distinct units, with probability
 \begin{equation}
\textrm{Pr}\{\collRV=2 \,|\,\usersRV=\user\} = \user(\user-1)(1-\peras)^2 \peras^{2(\user-1)}.
\end{equation}
Plugging these results into \eqref{eq:truUplink_general} we get, after some calculations, a closed-form expression for the throughput in the uplink, and thus, as discussed, also for the end-to-end \ac{DF} case with infinite downlink capacity:
\begin{equation}
\tpULk{2} = 2\load (1-\peras)\, e^{-\load (1-\peras)} - \load (1-\peras)^2 \, e^{-\load (1-\peras^2)}.
\label{eq:truUplink_closedForm}
\end{equation}
The trend of $\tpULk{2}$ is reported in Figure~\ref{fig:truputUplink} against the channel load $\load$ for different values of the erasure probability, and compared to the performance in the presence of a single receiver, \ie $\tpSA$.
equation~\eqref{eq:truUplink_closedForm} conveniently expresses $\tpULk{2}$ as twice the throughput of \ac{SA} in the presence of erasures, reduced by a loss factor which accounts for the possibility of having both relays decode the same information unit. In this perspective, it is interesting to evaluate the maximum throughput $\tpULkMax{2}(\peras)$ as well as the optimal working point $\loadMax(\peras)$ achieving it for the system uplink. The transcendental nature of \eqref{eq:truUplink_closedForm} does not allow to obtain a closed formulation of these quantities, which, on the other hand, can easily be estimated by means of numerical optimisation techniques.
\begin{figure}
\centering
\includegraphics[width=0.8\columnwidth]{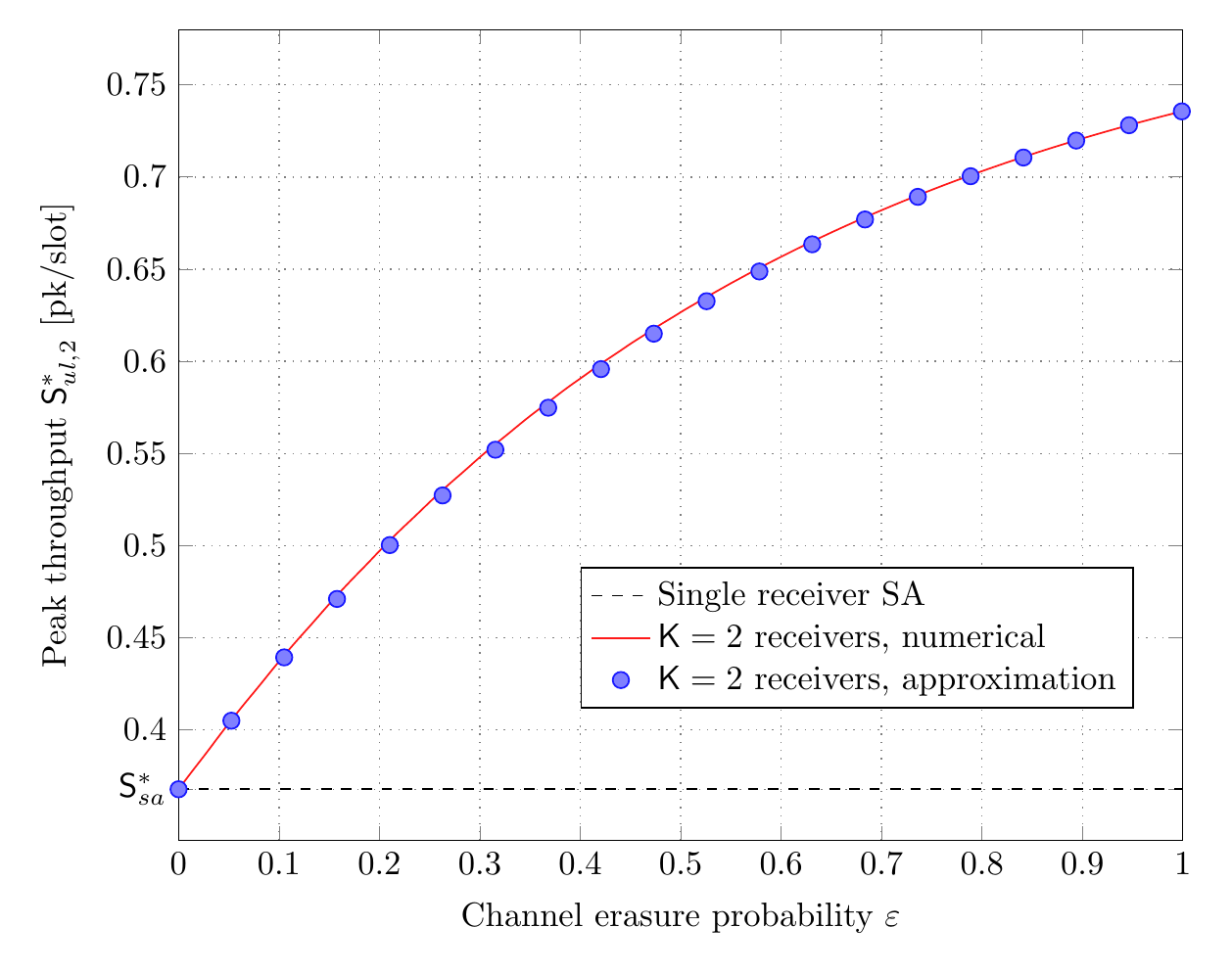}
\caption{Maximum uplink throughput vs erasure rate. The red continuous line reports the performance $\tpULkMax{2}$ of a two-receiver scheme, while blue circled markers indicate $\tpULk{2}(1/(1-\peras))$, and the dashed black line shows the behaviour of pure \ac{SA}.}
\label{fig:maxTruVsEpsilon}
\end{figure}
The results of this analysis are reported in Figure~\ref{fig:maxTruVsEpsilon}, where the peak throughput $\tpULMax$ is depicted by the red curve as a function of $\peras$ and compared to the performance of \ac{SA}, which clearly collects on average at most 0.36~ $\mathrm{[pk/slot]}$ regardless of the erasure rate. In ideal channel conditions, \ie $\peras=0$, no benefits can be obtained by resorting to multiple relays, as all of them would see the same reception set across slots. Conversely, higher values of $\peras$ favour a decorrelation of the pattern of packets that can be correctly retrieved, and consequently improve the achievable throughput at the expense of higher loss rates. The result is a monotonically increasing behaviour for $\tpULkMax{2}(\peras)$, prior to plummeting with a singularity to a null throughput for the degenerate case $\peras=1$. Figure~\ref{fig:maxTruVsEpsilon} also reports (circled-blue markers) the average throughput obtained for $\load=1/(1-\peras)$, \ie when the uplink of the system under consideration operates at the optimal working point for a single-receiver \ac{SA}, showing a tight match. In fact, even though the abscissa of the maximum $\loadMax(\peras)$ may differ from this value (they coincide only for the ideal case $\peras=0$), the error which is committed when approximating $\tpULkMax{2}$ with $\tpULk{2}(1/(1-\peras))$ can easily be shown numerically to never exceed $0.6$\%, due to the very small slope of the function in the neighborhood of $\loadMax(\peras)$. We can thus provide a very precise estimate of the peak uplink performance for a specific erasure rate as:
\begin{equation}
\tpULkMax{2}(\peras) \simeq \frac{2}{e} - (1-\peras)\, e^{-1-\peras}, \hspace{.5cm} 0\leq \peras < 1.
\label{eq:truUplink_peakApprox}
\end{equation}
which once again compactly captures the behaviour of a two-receiver scenario by quantifying the loss with respect to twice the performance of \ac{SA}.
In this perspective, two remarks shall be made. Firstly, in order to approach the upper bound, the system has to be operated at very high load, as $\loadMax \simeq 1/(1-\peras)$). These working points are typically not of interest, since very low levels of reliability can be provided by a congested channel with high erasure rates. Nevertheless, the presence of a second receiver triggers remarkable improvements already for loss probabilities that are of practical relevance, \eg under harsh fading conditions or for satellite networks. Indeed, with $\peras=0.1$ a $\sim 15$\% raise can be spotted, whereas a loss rate of $20$\% already leads to a $50$\% throughput gain.
Secondly, the proposed framework highlights how no modifications in terms of system load are needed with respect to plain \ac{SA} for a two-receiver system to be very efficiently operated. Such a result suggests that a relay node can be seamlessly and efficiently added to an already operating \ac{SA} uplink when available, triggering the maximum achievable benefit without the need to undergo a re-tuning of the system which might be particularly expensive in terms of resources.

\subsubsection{The General Case, $\nrx>2$}
\label{sec:general_throughput}

Let us now focus on the general topology reported in Figure~\ref{fig:simple_topology}, where $\nrx$ relays are available. While conceptually applicable, the approach presented to compute the uplink throughput in the two-receiver case becomes cumbersome as $\nrx$ grows, due to the rapidly increasing number of events that have to be accounted for.
In order to characterise $\tpULk{\nrx}$, then, we follow a different strategy.
With reference to a single slot $\slot$, let $\evSlotSet_\slot:=\{\evSlot_0^\slot,\evSlot_1^\slot,\evSlot_2^\slot,\ldots,\evSlot_\infty^\slot\}$ for each $\slot=1,2,\ldots,\numulslot$ be the countably infinite set of possible outcomes at each relay, where $\evSlot_0^\slot$ denotes the erasure event while $\evSlot_j^\slot$ indicates the event that the packet of the $j$-th user arriving in slot $\slot$ was received. Let us furthermore define as
$\obsSlot_\rxEl^\slot$ the random variables with alphabet $\mathcal X=\{0,1,2,\ldots,\infty\}$, where $\obsSlot_\rxEl^\slot = j$ if $\evSlot_j^\slot$ was the observation at relay $\rxEl$, so that $\obsSlot_\rxEl^1,\obsSlot_\rxEl^2,\ldots, \obsSlot_\rxEl^{\numulslot}$ is an i.i.d. sequence for each relay $\rxEl$. We let the uplink operate for $\numulslot$ time slots, and indicate as $\setColPack_\rxEl^{\numulslot}$ the set of packets collected at receiver $\rxEl$ over this time-span, where $\setColPack_\rxEl^{\numulslot} \subsetneq \bigcup_{\slot=1}^{\numulslot} \{ \evSlotSet_\slot \backslash \evSlot_0^\slot\}$ (\ie we do not add the erasure events to $\setColPack_\rxEl^{\numulslot}$). The number of received packets at relay $\rxEl$ after $\numulslot$ time slots is thus $|\setColPack_\rxEl^{\numulslot}|$ and, with reference to this notation, we prove the following result:
\begin{prop}
 For an arbitrary number of $\nrx$ relays, the throughput $\tpULk{\nrx}$ is given by
 \begin{align}
 \tpULk{\nrx} = \sum_{\rxEl=1}^\nrx (-1)^{\rxEl-1} {\nrx \choose \rxEl} \load (1-\peras)^\rxEl e^{-\load(1-\peras^\rxEl)}
\end{align}
\end{prop}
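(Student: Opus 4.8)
The plan is to exploit the linearity of expectation together with the symmetry among users. Conditioning on $\usersRV=\user$ concurrent transmissions in a slot, I would write $\collRV$ as a sum of $\user$ indicator variables, one per transmitting user, each signalling whether that user's packet is \emph{collected}, i.e. decoded by at least one of the $\nrx$ relays. Since all users are statistically identical and the erasure realisations are i.i.d. across every user-relay pair, each indicator has the same mean, so that $\mathbb{E}[\collRV\mid\usersRV=\user]=\user\,\Pr\{\text{a tagged packet is collected}\mid\usersRV=\user\}$.

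First I would evaluate the single-relay decoding probability. Because the receivers have no multi-user detection capability, the tagged packet is decoded at a given relay iff its link to that relay is unfaded (probability $1-\peras$) while the remaining $\user-1$ links to the same relay are all erased (probability $\peras^{\user-1}$), yielding $(1-\peras)\peras^{\user-1}$. The crucial structural observation is that the events ``relay $\rxEl$ decodes the tagged packet'' for distinct $\rxEl$ depend on disjoint collections of erasure variables, and are therefore mutually independent once $\user$ is fixed. The collection probability then follows from the complementary-product form
\[
\Pr\{\text{collected}\mid\user\}=1-\bigl(1-(1-\peras)\peras^{\user-1}\bigr)^{\nrx},
\]
and expanding the $\nrx$-th power with the binomial theorem rewrites this as the alternating sum $\sum_{\rxEl=1}^{\nrx}(-1)^{\rxEl-1}\binom{\nrx}{\rxEl}(1-\peras)^{\rxEl}\peras^{(\user-1)\rxEl}$.

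The remaining step is to average $\mathbb{E}[\collRV\mid\usersRV=\user]=\user\sum_{\rxEl}(\cdots)$ over the Poisson law of $\usersRV$. After interchanging the finite sum over $\rxEl$ with the Poisson expectation, the inner series $\sum_{\user}\tfrac{\load^{\user}e^{-\load}}{\user!}\,\user\,\peras^{(\user-1)\rxEl}$ can be recognised, upon setting $s=\peras^{\rxEl}$, as the derivative of the Poisson probability generating function $\mathbb{E}[s^{\usersRV}]=e^{-\load(1-s)}$; differentiating gives $\load\,e^{-\load(1-\peras^{\rxEl})}$. Collecting terms reproduces the claimed expression for $\tpULk{\nrx}$. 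As consistency checks I would verify that $\nrx=1$ recovers $\tpSA$ of equation~\eqref{eq:Tsa}, and that $\nrx=2$ reduces to the closed form $\tpULk{2}$ of equation~\eqref{eq:truUplink_closedForm}.

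The main obstacle here is conceptual rather than computational: justifying the cross-relay independence cleanly. One must argue that, conditioned on the number $\user$ of transmitters in the slot, the per-relay decoding events factorise precisely because the on-off fading is drawn independently on each user-relay link. This is exactly what collapses an otherwise unwieldy inclusion-exclusion over all possible collection patterns into the compact complementary-product form, which the binomial expansion and the Poisson generating-function identity then unravel into the stated alternating sum.
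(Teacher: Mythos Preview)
Your argument is correct and in fact more direct than the paper's. The paper proceeds by letting the uplink run for $\numulslot$ slots, invoking the weak law of large numbers to identify $\tpULk{\nrx}$ with $\lim_{\numulslot\to\infty}|\bigcup_{\rxEl}\setColPack_\rxEl^{\numulslot}|/\numulslot$, and then applying set-theoretic inclusion--exclusion to the union $\bigcup_{\rxEl}\setColPack_\rxEl^{\numulslot}$; the intersection terms $|\setCommPack_{\setRel}^{\numulslot}|$ are shown, again by the law of large numbers, to converge to $\Pr\{0\neq\obsSlot_{\rxEl_1}^\slot=\cdots=\obsSlot_{\rxEl_{|\setRel|}}^\slot\}=(1-\peras)^{|\setRel|}\load e^{-\load(1-\peras^{|\setRel|})}$. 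You instead stay within a single slot, tag one of the $\user$ transmitted packets, and use linearity of expectation together with the conditional independence of the per-relay decoding events; the alternating sum then emerges from the binomial expansion of the complementary probability rather than from inclusion--exclusion on packet sets. Your route avoids the asymptotic machinery entirely and isolates the one structural fact that drives the result, namely that the decoding events at distinct relays are conditionally independent given $\usersRV=\user$. The paper's more elaborate set-based formulation, on the other hand, introduces the objects $\setColPack_\rxEl^{\numulslot}$ and $\setCommPack_{\setRel}^{\numulslot}$ that are later reused in the downlink rate bounds of Section~\ref{sec:bounds}, so its extra overhead is amortised there.
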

\vspace{2mm}
\begin{proof}
We have $|\setColPack_\rxEl^{\numulslot}| = \sum_{\slot=1}^{\numulslot} \mathbb{I}_{\{\obsSlot_\rxEl^\slot\not = 0\}}$, where $\mathbb{I}_{\{E\}}$ denotes the indicator random variable that takes on the value $1$ if the event $E$ is true and $0$ otherwise.
The throughput seen by a single relay can then be written as $ \tpULk{1}=\mathbb E[\mathbb{I}_{\{\obsSlot_\rxEl^\slot\not = 0\}}] = \Pr\{\obsSlot_\rxEl^\slot\not = 0\}$, and does not depend on the specific receiver being considered. By the weak law of large numbers,
\begin{align}
 \tpULk{1} = \lim_{\numulslot\rightarrow \infty} \frac{|\setColPack_\rxEl^{\numulslot}| }{\numulslot}
\end{align}
or, more formally,
\begin{align}
 \lim_{\numulslot\rightarrow\infty} \Pr\left\{ \left| \frac{|\setColPack_\rxEl^{\numulslot}| }{\numulslot} - \tpULk{1} \right| > \epsilon \right\} = 0 \quad ~\text{for some }\epsilon > 0.
\end{align}
Similarly, for $\nrx$ relays we have
\begin{align}
\tpULk{\nrx} = \lim_{\numulslot\rightarrow\infty} \frac{|\bigcup_{\rxEl=1}^\nrx \setColPack_\rxEl^{\numulslot}|}{\numulslot}
\end{align}
By the inclusion-exclusion principle (see, \eg \cite{slomson1991introduction}), we have
\begin{align}
 \left|\bigcup_{\rxEl=1}^\nrx \setColPack_\rxEl^{\numulslot}\right| = \sum_{\setRel\subseteq \{1,\ldots,\nrx\}, \setRel \not = \emptyset} (-1)^{|\setRel|-1} \left| \setCommPack_{\setRel}^{\numulslot} \right| \quad
 \text{with } \setCommPack_{\setRel}^{\numulslot} = \bigcap_{\rxEl \in \setRel} \setColPack_\rxEl^{\numulslot} \label{eq:defI_S}
\end{align}
Here, $\setCommPack_{\setRel}^{\numulslot}$ denotes the set of packets that all the relay nodes specified by $\setRel = \{\rxEl_1,\rxEl_2,\ldots,\rxEl_{|\setRel|}\}$ have in common:
\begin{align}
 \left| \setCommPack_{\setRel}^{\numulslot}\right|=\left|\bigcap_{\rxEl \in \setRel} \setColPack_\rxEl^{\numulslot}\right| = \sum_{\slot=1}^ {\numulslot} \mathbb{I}_{\{0 \not = \obsSlot_{\rxEl_1}^\slot = \obsSlot_{\rxEl_2}^\slot = \ldots = \obsSlot_{\rxEl_{|\setRel|}}^\slot \} }
\end{align}
Due to symmetry in the setup, the value of $\left| \setCommPack_{\setRel}^{\numulslot}\right|$ only depends on the cardinality of $\setRel$ but not the explicit choice, so that $\left|\setCommPack_{\setRel}^{\numulslot}\right| = \carI_\rxEl^{\numulslot}$ for $\rxEl = |\setRel|$, and,
$$\left|\bigcup_{\rxEl=1}^\nrx \setColPack_\rxEl^{\numulslot}\right| = \sum_{\rxEl=1}^\nrx (-1)^{\rxEl-1} {\nrx \choose \rxEl} \carI_\rxEl^{\numulslot}.$$
As $\obsSlot_\rxEl^1,\obsSlot_\rxEl^2,\ldots, \obsSlot_\rxEl^{\numulslot}$ are i.i.d., by the weak law of large numbers we have:
\begin{align}
 \lim_{\numulslot\rightarrow\infty}\frac{\left|\setCommPack_{\setRel}^{\numulslot} \right|}{\numulslot} =\Pr[\{0 \not = \obsSlot_{\rxEl_1}^\slot = \obsSlot_{\rxEl_2}^\slot = \ldots = \obsSlot_{\rxEl_{|\setRel|}}^\slot \}].
\end{align}
We can compute the latter probability as
\begin{align}
\Pr\{0 &\not = \obsSlot_{\rxEl_1}^\slot = \ldots = \obsSlot_{\rxEl_{|\setRel|}}^\slot \}
 = \sum_{\user} \Pr\{0 \not = \obsSlot_{\rxEl_1}^\slot = \ldots = \obsSlot_{\rxEl_{|\setRel|}}^\slot |\usersRV = \user\}\Pr\{\usersRV=\user\} \nonumber\\
 = & \sum_{\user=1}^{\infty} \frac{e^{-\load}\load^\user}{\user!} {\user \choose 1} \left((1-\peras) \peras^{\user-1}\right)^{|\setRel|}
 = (1-\peras)^{|\setRel|}\load e^{-\load(1-\peras^{|\setRel|})}
\end{align}
As $\lim_{\numulslot\rightarrow\infty} \frac{\carI_\rxEl^{\numulslot}}{\numulslot} = (1-\peras)^{\rxEl}\load e^{-\load(1-\peras^{\rxEl})}$, the proposition follows.
\end{proof}

\begin{figure}
\centering
\includegraphics[width=0.8\columnwidth]{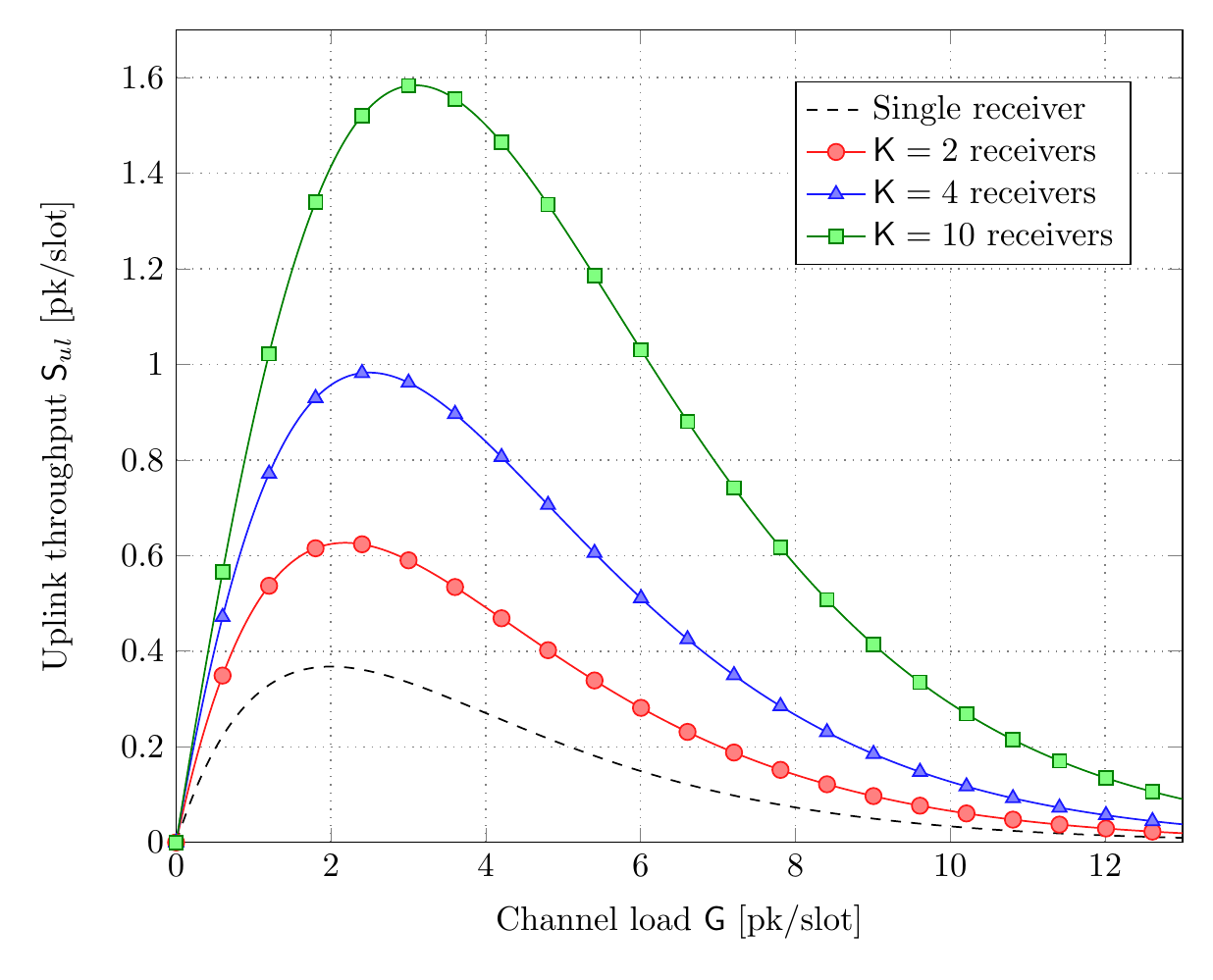}
\caption{Average uplink throughput vs channel load for different number of relays $\nrx$. The erasure probability has been set to $\peras=0.5$.}
\label{fig:truputUplink5rec}
\end{figure}

The performance achievable by increasing the number of relays is reported against the channel load in Figure~\ref{fig:truputUplink5rec} for a reference erasure rate $\peras=0.5$. As expected, $\tpULk{\nrx}$ benefits from a higher degree of spatial diversity, showing how the system can collect more than one packet per uplink slot as soon as more than four receivers are available, for the parameters under consideration. Such a result stems from two main factors. On the one hand, increasing $\nrx$ enables larger peak throughput over a single slot, as up to $\nrx$ different data units can be simultaneously retrieved. On the other hand, broader receiver sets improve the probability of decoding packets in the presence of collisions even when less than $\nrx$ users accessed the channel, by virtue of the different erasure patterns they experience.
The uplink throughput characterisation is complemented by Figure~\ref{fig:asymptoticULTru}, which reports the peak value for $\tpULkMax{\nrx}$ (solid red curve), obtained by properly setting the channel load to $\loadMax$ (whose values are shown by the solid blue curve), for an increasing relay population.\footnote{As discussed for the $\nrx=2$ case, a mathematical derivation of the optimal working point load $\loadMax$ is not straightforward, and simple numerical maximisation techniques were employed to obtain the results of Figure~\ref{fig:asymptoticULTru}.} The plot clearly highlights how the benefit brought by introducing an additional receiver to the scheme, quantified by equation~\eqref{eq:incrementalGain}, progressively reduces, leading to a growth rate for the achievable throughput that is less than linear and that exhibits a logarithmic-like trend in $\nrx$.
\begin{align}
\tpULk{\nrx} - \tpULk{\nrx-1} = \sum_{\rxEl=1}^\nrx (-1)^{\rxEl-1} {\nrx-1 \choose \rxEl-1} \load (1-\peras)^\rxEl e^{\load (1-\peras^\rxEl)}.
\label{eq:incrementalGain}
\end{align}
\begin{figure}
\centering
\includegraphics[width=0.8\columnwidth]{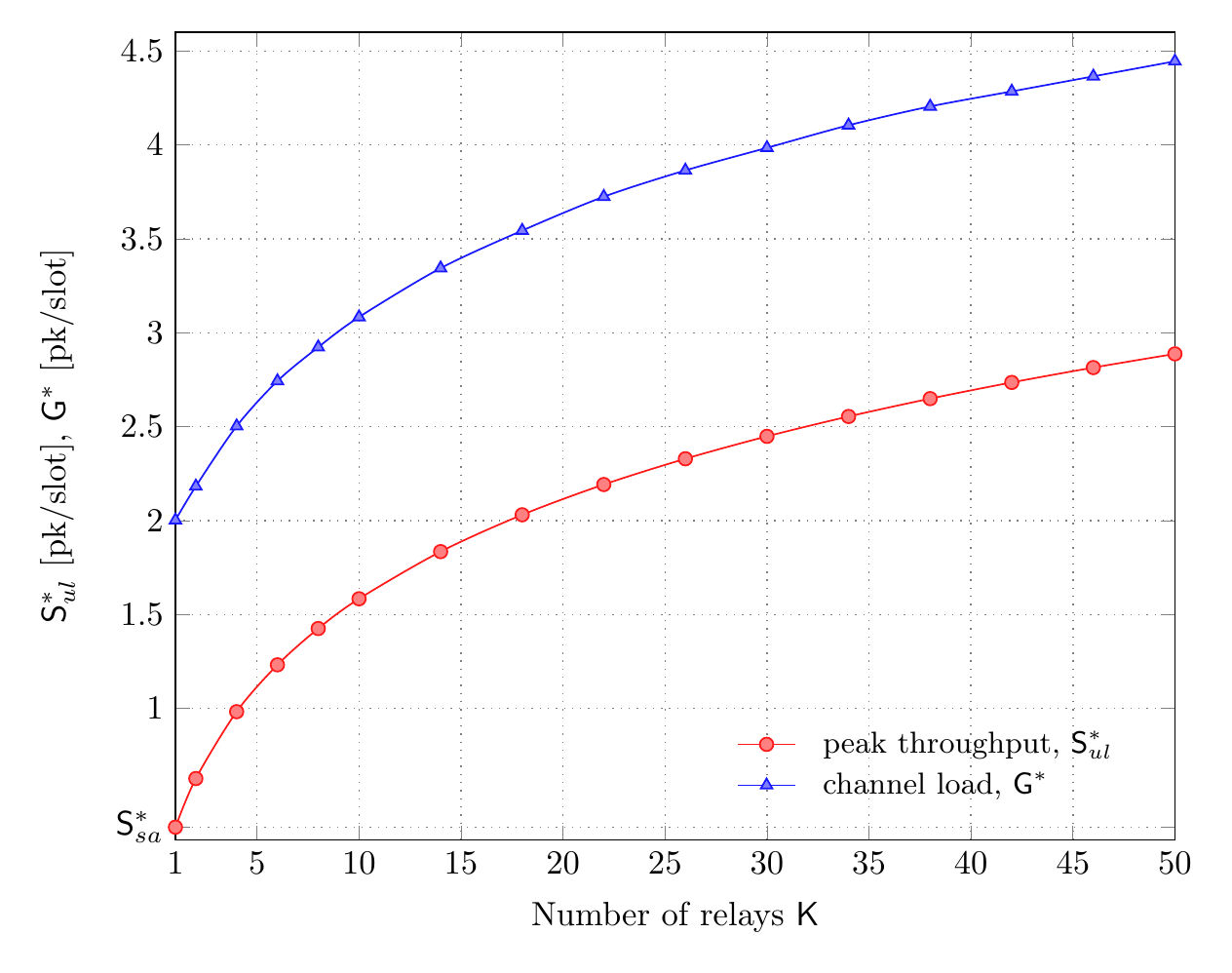}
\caption{Maximum achievable throughput $\tpULkMax{\nrx}$ as a function of the number of relays $\nrx$ for an erasure rate $\peras=0.5$. The gray curve reports the load on the channel $\loadMax_{\nrx}$ needed to reach $\tpULkMax{\nrx}$.}
\label{fig:asymptoticULTru}
\end{figure}

\subsection{Packet Loss Probability} \label{sec:PLR}
The aggregate throughput derived in Section~\ref{sec:uplinkThroughput} represents a metric of interest towards understanding the potential of \ac{SA} with diversity when aiming at reaping the most out of uplink bandwidth. On the other hand, operating an ALOHA-based system at the optimal load $\loadMax$ exposes each transmitted packet to a loss probability that may not be negligible. In the classical single-receiver case without fading, for instance, the probability for a data unit not to be collected evaluates to $1-e^{-1}\simeq 0.63$. From this standpoint, in fact, several applications may resort to a lightly loaded random access uplink, aiming at a higher level of delivery reliability rather than at a high throughput. This is the case, for example, of channels used for logon and control signalling in many practical wireless networks. In order to investigate how diversity can improve performance in this direction, we extend our framework by computing the probability $\ulrk{\nrx}$ that a user accessing the channel experiences a data loss, \ie that the information unit he sends is not collected, either due to fading or to collisions, by any of the $\nrx$ relays.

To this aim, let $\evNotRx$ describe the event that the packet of the observed user sent over time slot $\slot$ is not received by any of the receivers. Conditioning on the number of interferers $\intNum$, \ie data units that were concurrently present on the uplink channel at $\slot$, the sought probability can be written as:
\begin{align}
 \ulrk{\nrx} = \sum_{\intNum=0}^\infty \Pr[\evNotRx|\intRV=\intNum] \Pr[\intRV=\intNum].
\end{align}
Here, the conditional probability can easily be determined recalling that each of the $\nrx$ relays experiences an independent erasure pattern, obtaining $\Pr[\evNotRx|\intRV=\intNum] = (1-(1-\peras)\peras^{\intNum})^\nrx$ for an individual packet and for $\nrx$ relays with independent erasures on all individual links.
By resorting to the binomial theorem, such an expression can be conveniently reformulated as:
\begin{align}
 \Pr[\evNotRx|\intRV=\intNum] = \sum_{\rxEl=0}^\nrx (-1)^\rxEl {\nrx \choose \rxEl} \left((1-\peras)\peras^{\intNum}\right)^\rxEl.
\end{align}
On the other hand, the number of interferers seen by a user that accesses the channel at time $\slot$ still follows a Poisson distribution of intensity $\load$, so that, after simple calculations we finally get:
\begin{align}
 \ulrk{\nrx} = \sum_{\rxEl=0}^\nrx (-1)^\rxEl {\nrx \choose \rxEl} (1-\peras)^\rxEl e^{-\load(1-\peras^\rxEl)}.
 \label{eq:zeta_K}
\end{align}
\begin{figure}
\centering
\includegraphics[width=0.8\columnwidth]{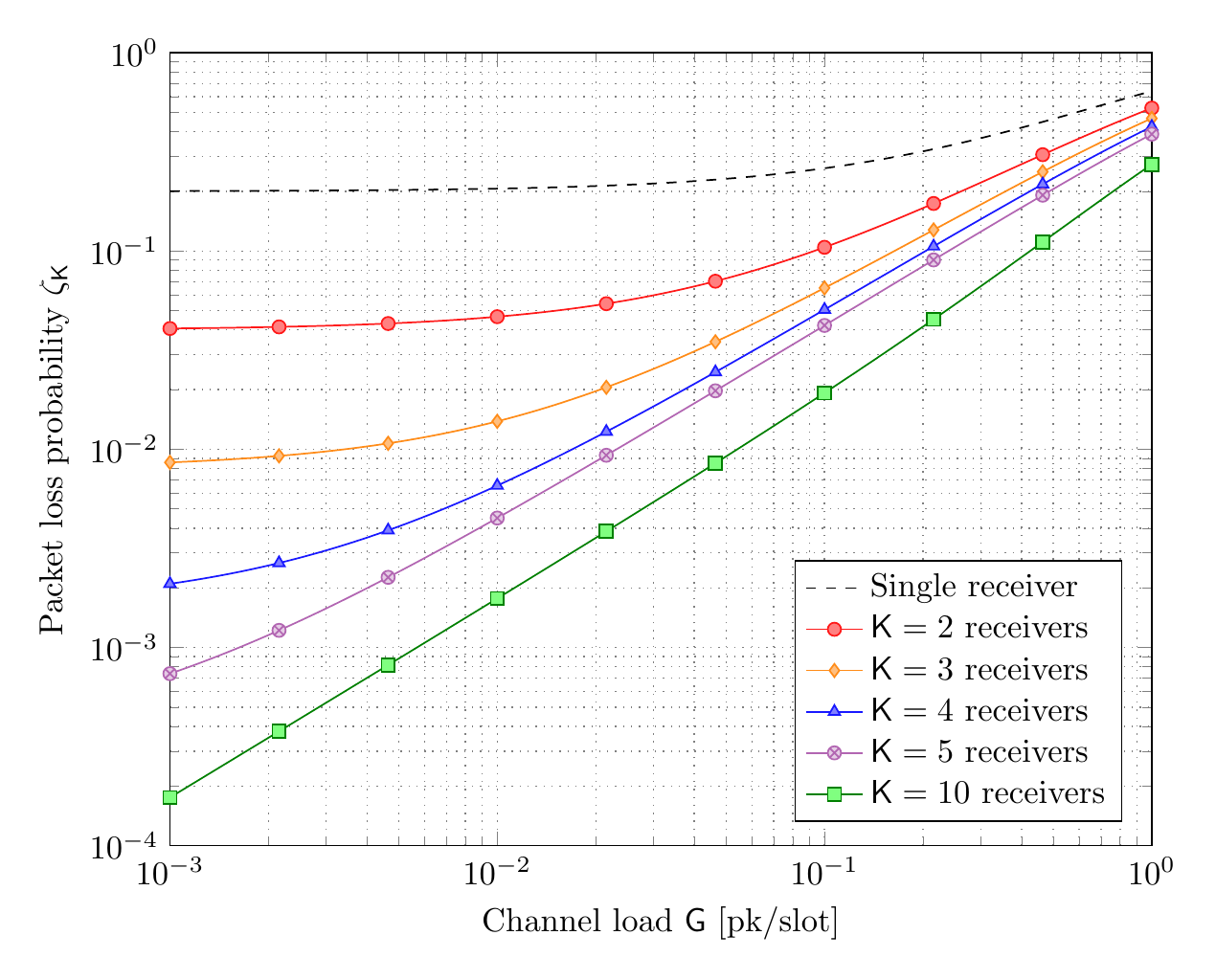}
\caption{Probability $\ulrk{\nrx}$ that a packet sent by a user is not received by any of the relays. Different curves indicate different values of $\nrx$, while the erasure probability has been set to $\peras=0.2$.}
\label{fig:erasure_eff}
\end{figure}

Figure~\ref{fig:erasure_eff} reports the behaviour of $\ulrk{\nrx}$ as a function of $\load$ when the erasure rate over a single link is set to $\peras = 0.2$. Different lines indicate the trend when increasing the number of receivers from $1$ to $10$. As expected, when $\load\rightarrow 0$, a user accessing the channel is not likely to experience any interference, so that failures can only be induced by erasures, leading to an overall loss probability of $\peras^\nrx$. In this perspective, the availability of multiple receivers triggers a dramatic improvement, enabling levels of reliability that would otherwise not be possible irrespective of the channel configuration. On the other hand, equation~\eqref{eq:zeta_K} turns out to be useful for system design, as it allows to determine the load that can be supported on the uplink channel while guaranteeing a target loss rate. Also in this case diversity can significantly ameliorate the performance. As shown in Figure~\ref{fig:erasure_eff}, for example, a target loss rate $\ulrk{\nrx}=5\cdot 10^{-2}$ is achieved by a three- and four-receiver scheme under $6-$ and $10-$fold larger loads compared to the $\nrx=2$ case, respectively.

\section{An Achievable Downlink Upper Bound} \label{sec:downlink_bound}

The analysis carried out in Section~\ref{sec:uplink} has characterised the average number of packets that can be decoded at the relay set when \ac{SA} is used in the uplink. We now consider the complementary task of delivering what has been collected to a central \ac{gateway}. In doing so, we aim at employing the minimum number of resources in terms of transmissions that have to be performed by the relays, while not allowing any information exchange among them. In particular, we consider a finite-capacity downlink, where the $\nrx$ receivers share a common bandwidth to communicate with the \ac{gateway} by means of a \ac{TDMA} scheme, and we assume that each of them can reliably deliver exactly one packet, possibly composed of a linear combination of what has been collected, over one time unit. It is clear that $\nrx\,\tpSA$ slots are sufficient to deliver all the collected data on average, but it can be inefficient since the same packet can be collected by more than one relay. We focus on a horizon of $\numulslot$ slots to operate the uplink, after which the downlink phase starts.

We structure our analysis in two parts. First, in Section~\ref{sec:bounds}, we derive lower bounds for the rates (in terms of downlink slots allocated per uplink slot) that have to be assigned to receivers in order to deliver the whole set of data units collected in the uplink over the $\numulslot$ slots. Then, Section~\ref{sec:nc} shows how a simple forwarding strategy based on random linear network coding suffices to achieve optimality, completing the downlink phase in $\tpULk{\nrx}$ slots for asymptotically large values of $\numulslot$.

\begin{sloppypar}
Prior to delving into the details, let us introduce some useful notation. We denote the $L$-bit data part of packets of the $j$-th user arriving in time slot $\slot$ as $\dataPart_\rxEl^{\slot} \in \dataSet$, with ${\dataSet = \mathbb{F}_{2^L} \cup e}$, where $e$ is added as the erasure symbol. We furthermore assume that the receiver can determine the corresponding user through a packet header, \ie the receiver knows both $j$ and $\slot$ after successful reception.
As the uplink operates over $\numulslot$ time slots, relay $\rxEl$ observes the vector $\obsPk_\rxEl = [\dataPart_\rxEl^1, \ldots, \dataPart_\rxEl^{\numulslot}]$. In each time slot, the tuple $(\dataPart_1^{\slot}, \dataPart_2^{\slot}, \ldots, \dataPart_\nrx^{\slot})$ is drawn from a joint probability distribution $\Prob_{\dataPart_1 \ldots \dataPart_\nrx}$ which is governed by the uplink, and different relays might receive the same packet.
\end{sloppypar}

\subsection{Bounds for Downlink Rates} \label{sec:bounds}
Each relay $\rxEl$ transmits a packet in each of its $\numulslot \rateDw_\rxEl$ downlink slots.
We are interested in the set of rates $(\rateDw_\rxEl)_{\rxEl=1}^\nrx$ such that the gateway can recover all packets (with high probability).

This is essentially the problem of distributed source coding (SW-Coding \cite{slepian1973noiseless}), with the following modification: SW-coding ensures that the gateway can recover all $\nrx$ observed strings $\obsPk_\rxEl$, $\rxEl=1,2,\ldots,\nrx$ perfectly.
In this setup, the gateway should be able to recover every packet that was received at any relay. However, is the gateway neither interested in erasures symbols at the relays, \ie whenever $\dataPart_\rxEl^{\slot}=e$ for any $\rxEl$, $\slot$, nor in reconstructing each relay sequence perfectly. The authors in \cite{dana2006capacity} overcame this problem by assuming that the decoder knows all the erasure positions of the whole network. This assumption applies in our case as packet numbers are supposed to be known via a packet header.
Let all erasure positions be represented by $\erPos$.

The rates $(\rateDw_1, \ldots, \rateDw_\nrx)$ are achievable \cite{slepian1973noiseless} if
\begin{align}
 \sum_{\rxEl \in \setRel} \rateDw_\rxEl \geq \entropy(\dataPart_{\setRel}|\dataPart_{\overline{\setRel}},\erPos), \quad \forall~ \setRel\subseteq [1,2,\ldots, \nrx]
\end{align}
where $ \dataPart_{\setRel} = (\dataPart_{\rxEl_1}^{\slot}, \dataPart_{\rxEl_2}^{\slot}, \ldots, \dataPart_{\rxEl_{|\setRel|}}^{\slot})$,  denotes the observations at some time $\slot$ at the subset of receivers specified by $\setRel = \{\rxEl_1, \rxEl_2, \ldots, \rxEl_{|\setRel|}\}$.
The quantity $\erPos$ has the effect of removing the influence of the erasure symbols on the conditional entropies. Computing the entropies however requires the full probability distribution $\Prob_{\dataPart_1 \ldots \dataPart_\nrx}$ which is a difficult task in general. By different means, we can obtain the equivalent conditions:
\vspace{1mm}
\begin{prop}
 The rates $(\rateDw_\rxEl)_{\rxEl=1}^\nrx$ have to satisfy
 \begin{align}
  \sum_{\rxEl \in \setRel} \rateDw_\rxEl \!\geq \tpULk{\nrx} \!+\!\! \sum_{\rxEl=1}^{\nrx-|\setRel|} (-1)^\rxEl {\nrx-|\setRel| \choose \rxEl} \load (1-\peras)^\rxEl e^{-\load (1-\peras^\rxEl)}, \quad \forall \setRel \subseteq \{1,\ldots,\nrx\}
 \end{align}
\end{prop}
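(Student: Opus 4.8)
The plan is to show that every Slepian--Wolf constraint $\sum_{\rxEl\in\setRel}\rateDw_\rxEl \geq \entropy(\dataPart_{\setRel}\,|\,\dataPart_{\overline{\setRel}},\erPos)$ collapses to the claimed closed form once the conditional entropy is evaluated for the present source model. First I would exploit the fact that the uplink is memoryless across slots: the tuples $(\dataPart_1^{\slot},\dots,\dataPart_\nrx^{\slot})$ are i.i.d.\ in $\slot$, so the block entropy over the $\numulslot$ uplink slots factorises into $\numulslot$ copies of the single-slot conditional entropy. Since relay $\rxEl$ is granted $\numulslot\rateDw_\rxEl$ downlink slots, each carrying $L$ bits, the constraint reduces (after dividing by $\numulslot L$) to a per-slot, per-bit inequality, and it suffices to treat one generic slot.

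The core step is the evaluation of the single-slot quantity $\entropy(\dataPart_{\setRel}\,|\,\dataPart_{\overline{\setRel}},\erPos)$. Conditioning on the network erasure/incidence pattern $\erPos$ fixes, in every slot, which relays received a packet and which of those received packets coincide; indeed a packet is uniquely identified by its $(\text{user},\slot)$ label carried in the header, which is the side-information device used in \cite{dana2006capacity}. Given $\erPos$ together with the data $\dataPart_{\overline{\setRel}}$ already observed by the complementary relays, the only residual uncertainty lies in the data parts of those packets that are seen by at least one relay in $\setRel$ but by no relay in $\overline{\setRel}$. Because each data part is uniform on $\mathbb{F}_{2^L}$ and independent across distinct packets, every such packet contributes exactly $L$ bits, whereas a packet also present at some relay in $\overline{\setRel}$ is already determined by $\dataPart_{\overline{\setRel}}$ and carries no entropy. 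Hence $\entropy(\dataPart_{\setRel}\,|\,\dataPart_{\overline{\setRel}},\erPos) = L\,\mathbb{E}[N]$, where $N$ is the number of packets collected within $\setRel$ but by none in $\overline{\setRel}$ over a slot.

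It then remains to compute $\mathbb{E}[N]$, which I would do with the same set-theoretic/inclusion--exclusion argument used to establish $\tpULk{\nrx}$ in the previous Proposition. The packets collected by the whole relay set partition, disjointly, into those collected by some relay in $\overline{\setRel}$ and those collected only inside $\setRel$; therefore $\mathbb{E}[N] = \tpULk{\nrx} - \tpULk{|\overline{\setRel}|}$, and by exchangeability of the relays (all links share the common erasure probability $\peras$) the subtracted term equals the uplink throughput of a system with $|\overline{\setRel}| = \nrx-|\setRel|$ receivers. Substituting the closed form of $\tpULk{\nrx-|\setRel|}$ and dividing the reduced constraint by $L$ gives $\sum_{\rxEl\in\setRel}\rateDw_\rxEl \geq \tpULk{\nrx} - \tpULk{\nrx-|\setRel|}$, and flipping the sign $(-1)^{\rxEl-1}\mapsto(-1)^{\rxEl}$ in the subtracted sum yields exactly
\[
\sum_{\rxEl\in\setRel}\rateDw_\rxEl \geq \tpULk{\nrx} + \sum_{\rxEl=1}^{\nrx-|\setRel|}(-1)^\rxEl\binom{\nrx-|\setRel|}{\rxEl}\load(1-\peras)^\rxEl e^{-\load(1-\peras^\rxEl)}.
\]

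I expect the main obstacle to be the rigorous justification of the entropy evaluation, namely that conditioning on $\erPos$ genuinely reduces $\entropy(\dataPart_{\setRel}\,|\,\dataPart_{\overline{\setRel}},\erPos)$ to $L$ times a packet count. One must argue carefully that the side information $\erPos$ encodes the full ``who-received-which'' incidence structure, so that packets coincident across relays contribute no additional entropy, while $\erPos$ is itself not charged against the downlink rate; and one must connect the normalised block entropy to the single-slot expectation $\mathbb{E}[N]$ through the i.i.d.\ structure and the weak law of large numbers already invoked for the uplink throughput. The remaining algebra is then a routine substitution of the previously derived throughput formula.
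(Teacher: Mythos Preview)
Your argument is correct and arrives at the right closed form. The route, however, differs from the paper's in one notable respect. The paper explicitly avoids evaluating the Slepian--Wolf conditional entropy $\entropy(\dataPart_{\setRel}\mid\dataPart_{\overline{\setRel}},\erPos)$, remarking that this ``requires the full probability distribution $\Prob_{\dataPart_1\ldots\dataPart_\nrx}$ which is a difficult task in general,'' and instead derives the bound \emph{operationally}: the relays in $\setRel$ must forward, over $\numulslot$ uplink slots, at least the packets that were collected only inside $\setRel$, i.e.\ $\sum_{\rxEl\in\setRel}\numulslot\,\rateDw_\rxEl \geq \bigl|\bigcup_{\rxEl\in\setRel}\setColPack_\rxEl^{\numulslot}\setminus\bigcup_{\rxEl\in\overline{\setRel}}\setColPack_\rxEl^{\numulslot}\bigr|$. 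It then computes the right-hand side via inclusion--exclusion on $\overline{\setRel}$ and passes to the limit with the weak law of large numbers, exactly as in the proof of the uplink-throughput proposition.

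You instead carry out the entropy computation that the paper sidesteps, and your justification is sound: conditioning on $\erPos$ fixes the incidence structure, so the residual entropy is $L$ times the number of distinct packets seen only in $\setRel$, and the i.i.d.\ slot structure reduces the block constraint to the per-slot expectation. Both approaches therefore collapse to the same counting quantity. Your decomposition $\mathbb{E}[N]=\tpULk{\nrx}-\tpULk{\nrx-|\setRel|}$, obtained from the disjoint partition of the collected set into ``seen in $\overline{\setRel}$'' and ``seen only in $\setRel$'', is in fact a slightly cleaner way to reach the formula than the paper's direct inclusion--exclusion on $\overline{\setRel}$, though the two are algebraically identical once one substitutes the closed form for $\tpULk{\nrx-|\setRel|}$. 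What your route buys is a transparent link between the Slepian--Wolf region and the final expression; what the paper's route buys is that it never needs to model the joint distribution of the $\dataPart_\rxEl$ at all, only the set of collected packet identifiers.
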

\vspace{1mm}
\begin{proof}
Consider a subset of relays $\setRel \subseteq \{1,\ldots,\nrx\}$ and their buffer contents $\bigcup_{\rxEl\in \setRel}\setColPack_\rxEl^{\numulslot}$ after $\numulslot$ time slots. In order to satisfy successful recovery at the gateway, at least all packets that have been collected only by nodes in the set $\setRel$ and not by anyone else have to be communicated to the gateway. That is,
 \begin{align}
  \sum_{\rxEl \in \setRel} \numulslot \, \rateDw_\rxEl \geq \left| \bigcup_{\rxEl\in \setRel} \setColPack_\rxEl^{\numulslot} \backslash \bigcup_{\rxEl \in \overline{\setRel}} \setColPack_\rxEl^{\numulslot} \right|,
  \label{eq:ratebound1}
 \end{align}
 with $\overline{\setRel} = \{1,\ldots,\nrx\} \backslash \setRel$.
 Note that $$\bigcup_{\rxEl\in \setRel} \setColPack_\rxEl^{\numulslot} \backslash \bigcup_{\rxEl \in \overline{\setRel}} \setColPack_\rxEl^{\numulslot} = \setColPack^{\numulslot}\backslash \bigcup_{\rxEl \in \overline{\setRel}} \setColPack_\rxEl^{\numulslot},$$ so by the inclusion-exclusion principle and due to $|\overline{\setRel}| = \nrx - |\setRel|$
 \begin{align}
  \left|\bigcup_{\rxEl\in \setRel} \setColPack_\rxEl^{\numulslot} \backslash \bigcup_{\rxEl \in \overline{\setRel}} \setColPack_\rxEl^{\numulslot} \right| = |\setColPack_\nrx^{\numulslot}| + \sum_{\rxEl=1}^{\nrx-|\setRel|} (-1)^\rxEl {\nrx-|\setRel| \choose \rxEl} a_\rxEl^{\numulslot}
 \end{align}
with $\left|\setCommPack_{\setRel}^{\numulslot}\right| = a_\rxEl^{\numulslot}$ for $\rxEl = |\setRel|$ as before.
By plugging in the value for $\lim_{\numulslot\rightarrow\infty} \frac{a_\rxEl^{\numulslot}}{\numulslot}$, the proposition follows.
\end{proof}

\vspace{-1mm}
\subsection{Random Linear Coding} \label{sec:nc}
By means of Proposition 2, we have derived a characterisation of the rates that have to be assigned to relays in order to deliver the whole set of collected packets to the \ac{gateway}. In this Section, we complete the discussion by proposing a strategy that is capable of matching such conditions, thus achieving optimality. The solution that we employ is based on a straight-forward application of the well-known random linear coding scheme in \cite{ho2006random}, and will be therefore only briefly sketched in the following.

Each relay $\rxEl$ generates a matrix $\MxComb_\rxEl \in \mathbb F_{2^L}^{\numulslot \rateDw_\rxEl \times \numulslot}$ and obtains the data part of its $\numulslot \rateDw_\rxEl$ transmit packets by $\colVec_\rxEl^T = \MxComb_\rxEl \obsPk_\rxEl^T$. Whenever an element of $\obsPk_\rxEl$ is an erasure symbol, the corresponding column of $\MxComb_\rxEl$ is an all-zero column. Erasure symbols thus have no contributions to the transmit packets $\colVec_\rxEl$. All other elements of $\MxComb_\rxEl$  are drawn uniformly at random from $\mathbb F_{2^L}^*$, where $\mathbb F_{2^L}^*$ denotes the multiplicative group of $\mathbb F_{2^L}$.

The gateway collects all incoming packets and obtains the system of linear equations
\begin{align}
 \underbrace{\left( \begin{array}{c} \colVec_1^T\\ \colVec_2^T \\ \vdots \\ \colVec_\nrx^T \end{array} \right)}_{\colVec^T} =
 \underbrace{\left(\begin{array}{cccc} \MxComb_1 & 0 & \ldots & 0\\0 & \MxComb_2 & \ldots & 0 \\ 0 & 0 & \ddots & 0 \\ 0 & 0 & \ldots & \MxComb_\nrx \end{array}\right)}_{\MxComb}
 \underbrace{\left( \begin{array}{c} \obsPk_1^T\\ \obsPk_2^T \\ \vdots \\ \obsPk_\nrx^T \end{array} \right)}_{\obsPk^T}
\end{align}
where $\MxComb\in \mathbb F_{2^L}^{\numulslot\sum_\rxEl \rateDw_\rxEl \times \numulslot \nrx}$. Note that some elements of $\obsPk$ can be identical because they were received by more than one relay and thus are elements of some $\obsPk_{\rxEl_1}$, $\obsPk_{\rxEl_2}, \ldots$. One can merge these entries in $\obsPk$ that appear more than once. Additionally, we drop all erasure-symbols in $\obsPk$ and delete the corresponding columns in $\MxComb$ to obtain the reduced system of equations
\begin{align}
 \colVec^T = \MxCombTwo \obsPkTwo^T
\end{align}
where $\obsPkTwo \in \mathbb F_{2^L}^{|\setColPack^{\numulslot}|}$ contains only distinct received packets and no erasure symbols. Clearly, there are $\left| \bigcup_{\rxEl\in \setRel} \setColPack_\rxEl^{\numulslot} \right| $ elements in $\obsPkTwo$.

\begin{sloppypar}
We partition the entries in $\obsPkTwo$ into $2^\nrx-1$ vectors $\obsPkTwo_{\setRel}$ for each nonempty subset ${\setRel\subseteq \{1,2,\ldots,\nrx\}}$: Each vector $\obsPkTwo_{\setRel}$ contains all packets that have been received only by all relays specified by $\setRel$ and not by anyone else. That is, $\obsPkTwo_{\setRel}$ corresponds to the set ${\setUnique_{\setRel}^{\numulslot} = \bigcap_{\rxEl\in \setRel} \setColPack_\rxEl^{\numulslot} \backslash \bigcup_{\rxEl \in \overline{\setRel}} \setColPack_\rxEl^{\numulslot}}$, and its length is $|\setUnique_{\setRel}^{\numulslot}|$.
\end{sloppypar}

The columns in $\MxCombTwo$ and rows in $\obsPkTwo^T$ can be permuted such that one can write
\begin{align}
 \colVec_\rxEl^T = \MxCombTwo_\rxEl \obsPkTwo:= \sum_{\setRel\subseteq \{1,2,\ldots,\nrx\}} \MxCombTwo_{\rxEl,\setRel} \cdot \obsPkTwo_{\setRel}, \quad \forall~\rxEl=1,\ldots,\nrx.
\end{align}
Each of the matrices $\MxCombTwo_{\rxEl,\setRel} \in \mathbb F_{2^L}^{\numulslot \rateDw_\rxEl \times |\setUnique_{\setRel}^{\numulslot}|}$ contains only elements from $\mathbb F_{2^L}^*$ if $\rxEl\in \setRel$ and is an all-zero matrix otherwise. A compact representation for $\nrx=3$ is shown in \eqref{eq:exampleG_K=3} at the bottom of the page.

\begin{sloppypar}
The variables that are involved only in $\numulslot \sum_{\rxEl \in \setRel}\rateDw_\rxEl$ equations are those in ${\obsPkTwo_{\setVar},~ \setVar \subseteq \setRel}$, for each subset $\setRel\subseteq \{1,2,\ldots, \nrx\}$.
For decoding, the number of equations has to be larger or equal to the number of variables, so a necessary condition for decoding is that $\numulslot \sum_{\rxEl \in \setRel}\rateDw_\rxEl \geq \sum_{\setVar \subseteq \setRel}  \left|\setUnique_{\setVar}^{\numulslot}\right|$. This is satisfied by (\ref{eq:ratebound1}), since ${\sum_{\setVar \subseteq \setRel}  \left|\setUnique_{\setVar}^{\numulslot}\right| = \left|\bigcup_{\rxEl\in \setRel} \setColPack_\rxEl^{\numulslot} \backslash \bigcup_{\rxEl \in \overline{\setRel}} \setColPack_\rxEl^{\numulslot}\right|}$.
\end{sloppypar}

A sufficient condition is that the matrix $\MxCombTwo_\rxEl,~ \rxEl\in \setRel$,
representing $\numulslot \sum_{\rxEl \in \setRel}\rateDw_\rxEl$ equations, has rank $\sum_{\setVar \subseteq \setRel}  \left|\setUnique_{\setVar}^{\numulslot}\right|$
for each subset $\setRel\subseteq \{1,2,\ldots, \nrx\}$.
Denote the set of indices of nonzero columns of matrix $\MxCombTwo_\rxEl$ as the support of $\MxCombTwo_\rxEl$.
Note that a row of matrix $\MxCombTwo_\rxEl$ has a different support than a row of matrix $\MxCombTwo_l$, for $\rxEl\not = l$. These rows are thus linearly independent. It thus suffices to check that all rows of matrix $\MxCombTwo_\rxEl$ are linearly independent. As all nonzero elements are randomly drawn from $\mathbb F_{2^L}^*$, the probability of linear dependence goes to zero as $L$ grows large, completing our proof, and showing that the presented forwarding scheme achieves the bounds of Proposition 2.


\begin{figure*}[!bp]
\normalsize
\hrulefill
\begin{gather}
 \left( \begin{array}{c} \colVec_1^T\\ \colVec_2^T \\  \colVec_3^T \end{array} \right) =
  \left(\begin{array}{c}\MxCombTwo_1\\ \MxCombTwo_2 \\ \MxCombTwo_3
                            \end{array}\right)
                            \obsPkTwo^T
 =\\
 =
 \left(\begin{array}{ccc} \MxCombTwo_{1,\{1\}} & 0 & 0\\
                        0 & \MxCombTwo_{2,\{2\}} & 0\\
                        0 & 0 & \MxCombTwo_{3,\{3\}}\\
                        \MxCombTwo_{1,\{1,2\}} & \MxCombTwo_{2,\{1,2\}} & 0\\
                        \MxCombTwo_{1,\{1,3\}} & 0 & \MxCombTwo_{3,\{1,3\}}\\
                        0 & \MxCombTwo_{2,\{2,3\}} & \MxCombTwo_{3,\{2,3\}}\\
                        \MxCombTwo_{1,\{1,2,3\}} & \MxCombTwo_{2,\{1,2,3\}} & \MxCombTwo_{3,\{1,2,3\}}\\
                            \end{array}\right)^T
 \left( \begin{array}{l} \obsPkTwo_{\{1\}}^T\\ \obsPkTwo_{\{2\}}^T \\ \obsPkTwo_{\{3\}}^T \\ \obsPkTwo_{\{1,2\}}^T \\ \obsPkTwo_{\{1,3\}}^T \\ \obsPkTwo_{\{2,3\}}^T \\ \obsPkTwo_{\{1,2,3\}}^T
 \end{array} \right)
 \label{eq:exampleG_K=3}
\end{gather}
\end{figure*} 
\section{Simplified Downlink Strategies} \label{sec:capacityRegion}

The analysis developed in Section~\ref{sec:downlink_bound} showed how a strategy based on RLNC is capable of delivering to the gateway the whole information collected over the uplink resorting to the minimum amount of resources. When brought to implementation, however, network-coded schemes incur in drawbacks that partly counter-balance such benefits. On the one hand, an increased complexity is triggered both at the relays and at the collecting node to process linear combinations of data units. Moreover, an efficiency cost in terms of bandwidth is undergone to notify the gateway of the coefficients employed to encode the transmitted  packets, all the more so when the uplink is observed over a long time interval prior to triggering the network-coded phase. From this standpoint, the definition of alternative and simpler downlink strategies becomes relevant to unleash the potential of receiver diversity in practical settings.

Let us focus in particular on a scenario where $\nrx=2$ relays communicate with the collecting unit as described in Section~\ref{sec:sysModel}, \ie being coordinated to share a finite bandwidth via \ac{TDMA} over an error-free channel with the gateway, while no information exchange among them is possible. Recalling that the uplink channel is fed via \ac{SA}, each relay can deliver on average to the gateway all its incoming traffic as soon as at least  $\tpSA$ transmission opportunities per uplink slot are allocated to it in the downlink. Conversely, if the available resources are not sufficient to forward the whole set of incoming data and no coding across packets is permitted, the relay has to selectively decide which units to place on the downlink channel.\footnote{While such a configuration is not particularly meaningful when a single relay is available, it becomes interesting when multiple receivers are considered. In fact, should each of them be allocated on average $\tpSA$ or more downlink transmission opportunities per uplink slot, the gateway would receive redundant information in the form of duplicate packets. A tradeoff between the amount of downlink resources and collected packets at the gateway then arises, leaving space for non-trivial optimisations.} This condition is epitomised by a policy in which, upon retrieval of a packet, the relay may either drop it or enqueue it for later transmission. In general, the decision can be made considering side-information, \eg the state of the uplink channel, leading to the following definition which will be used as reference in the remainder of our discussion:

\begin{defin}[Dropping policy]
\label{def:dropping_policy}
Let $\setInterest$ be the event set of a probability space of interest, and let $\eventInterest_j$, $j=1,\dots,\cardSpace+1$ be events in $\setInterest$ such that $\bigcup_{j=1}^{\cardSpace+1} \eventInterest_j = \setInterest$. Furthermore, let each relay $\rxEl$, $\rxEl\in\{1,2\}$ be associated with a vector $\dropVec_{\rxEl} = [\drop^{(1)}_\rxEl, \dots, \drop^{(\cardSpace+1)}_{\rxEl}]$ of cardinality $\cardSpace+1$, and such that $\drop^{(j)}_\rxEl\in[0,1] \,\, \forall j$. Following a \emph{dropping policy}, whenever a packet is decoded over the uplink, relay $\rxEl$ discards it with probability $1-\drop^{(j)}_\rxEl$ or enqueues it in a FIFO buffer with probability $\drop^{(j)}_\rxEl$, where the superscript $j$ indicates that event $\eventInterest_j$ was observed at the time of reception.
\end{defin}

This family of strategies do not entail any complexity in terms of packet-level coding, and represent a simple and viable alternative for the downlink. On the other hand, in contrast to NC-based solutions, dropping policies are inherently not able to ensure delivery of the whole retrieved information set as soon as non-null dropping probabilities are considered, since a data unit may be discarded by all the relays that successfully decoded it. In order to further investigate this tradeoff, we will present and discuss some variations of such an approach, optimising the probability of discarding a packet to maximise information delivered to the gateway for a given downlink rate. From this standpoint, while the provided definition is fairly general, out of practical considerations we will restrict ourselves to strategies that base the buffering decision on observations of the uplink channel, \eg on the presence or absence of interference affecting the retrieved packet.

Let us introduce some notation and specify the framework that will be later used throughout this Section. We model each relay as an infinite buffer which, under any dropping policy, experiences an arrival rate that is a fraction of $\tpSA$. Furthermore, we assume the downlink to be dimensioned for the system to be stable, \ie such that all enqueued packets can be eventually delivered to the gateway and, consistently with Section~\ref{sec:downlink_bound}, indicate with $\rateDw_\rxEl$ the average number of transmissions performed by relay $\rxEl$ in the downlink per uplink slot. Under these hypotheses, the aggregate rate $\rateDw = \sum_\rxEl \rateDw_\rxEl$ can be regarded as a key design parameter, as it characterises the minimum amount of downlink resources that have to be allocated to properly implement a dropping policy under consideration.\footnote{From a queueing theory viewpoint, the downlink is stable as soon as the actual rate $\rateDw_\rxEl^*$ offered to relay $\rxEl$ satisfies $\rateDw^*_\rxEl > \rateDw_\rxEl$, $\forall \rxEl$.}
In turn, we evaluate the performance of such strategies by means of the downlink throughput $\tpDL$, computed as the average number of packets collected at the gateway per each uplink slot. Clearly, this quantity is bounded from above by $\tpUL$, \ie the amount of information successfully gathered by the set of relays from the user population. Finally, we introduce the downlink capacity of the system as $\capDL(\rateDw) = \sup\{ \tpDL \,|\, \rateDw, \plr \}$, where $\plr$ indicates the probability that a packet collected by the set of relays is not retrieved at the gateway.\footnote{In the case of dropping policies, this may happen when all the relays that received the data unit decide not to enqueue it in their buffers for subsequent transmission.} Within this framework, the insights of Section~\ref{sec:bounds} can be revisited and summarised by the following result:
\begin{coroll}
The downlink capacity of the system satisfies:
\begin{equation*}
\capDL(\rateDw) = \left\{
\begin{aligned}
\rateDw& \quad \mbox{for} \quad \rateDw < \tpUL \\
\tpUL & \quad \mbox{for} \quad \rateDw \geq \tpUL \\
\end{aligned}
\right.
\end{equation*}
\label{th:capRegion}
\end{coroll}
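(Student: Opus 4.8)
The plan is to treat Corollary~\ref{th:capRegion} as a matching converse-and-achievability pair in each of the two load regimes, invoking the rate characterisation of Section~\ref{sec:downlink_bound}. Since $\capDL(\rateDw) = \sup\{\tpDL \mid \rateDw, \plr\}$, I would first establish the converse $\capDL(\rateDw) \leq \min\{\rateDw, \tpUL\}$ from two elementary bounds. On the one hand, the error-free downlink delivers at most one packet per downlink slot, and the relays are allotted on average $\rateDw$ downlink slots per uplink slot, so $\tpDL \leq \rateDw$. On the other hand, the relays operate in decode-and-forward and can therefore deliver only units they have themselves collected over the uplink; hence the number of distinct packets reaching the gateway per uplink slot cannot exceed $\tpUL$, which is exactly the bound already noted in the discussion preceding the corollary (recall $\tpUL = \tpULk{\nrx}$ for the fixed relay count). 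Taking the minimum yields the converse.

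For achievability I would split on the two regimes. When $\rateDw \geq \tpUL$, I instantiate Proposition~2 with $\setRel = \{1,\dots,\nrx\}$: here $\nrx - |\setRel| = 0$, so the trailing sum is empty and the constraint collapses to the single inequality $\sum_{\rxEl} \rateDw_\rxEl \geq \tpULk{\nrx} = \tpUL$. This is feasible precisely when $\rateDw \geq \tpUL$, and the random-linear-coding scheme of Section~\ref{sec:nc} was shown to meet the bounds of Proposition~2, recovering the entire collected set at the gateway (so that $\plr \to 0$) as $\numulslot \to \infty$. Consequently $\tpDL = \tpUL$ is achievable, and together with the converse this gives $\capDL(\rateDw) = \tpUL$ throughout this regime.

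When $\rateDw < \tpUL$, full recovery of the entire collected set is ruled out by the same Proposition~2 constraint, yet I would show that $\tpDL = \rateDw$ remains attainable. The relay set holds, on average, $\tpUL > \rateDw$ distinct packets per uplink slot, so there is always a fresh, distinct unit to place in each of the $\rateDw$ available downlink slots and none is spent on a duplicate. Concretely, one fixes a target sub-collection of the distinct collected packets of aggregate rate $\rateDw$ and applies the construction of Section~\ref{sec:nc} restricted to it, the relays forming their linear combinations from this sub-collection only (all remaining columns being zeroed). The reduced system $\colVec^T = \MxCombTwo \obsPkTwo^T$ is then square of dimension $\approx \numulslot \rateDw$ and has full rank with probability approaching one as $L$ grows, so the gateway recovers exactly $\rateDw$ distinct packets per uplink slot and $\tpDL = \rateDw$. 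Combined with the converse, this gives $\capDL(\rateDw) = \rateDw$ here, completing both branches.

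I expect the delicate point to be the $\rateDw < \tpUL$ achievability: one must argue carefully that no downlink slot is expended on an already-delivered or duplicate unit, which relies on the idealized, cost-free resource allocation by the collecting unit (allowing the relays to agree on the target sub-collection) and on the asymptotic regime in which, by the weak law of large numbers, the empirical fraction of distinct collected packets concentrates at $\tpUL$ exactly as in the proofs of Section~\ref{sec:uplink} and Proposition~2. The remainder is bookkeeping around the inclusion-exclusion counting already carried out there, together with the order of limits ($\numulslot \to \infty$ first, then $L \to \infty$) used in Section~\ref{sec:nc}.
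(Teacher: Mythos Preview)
Your converse and the $\rateDw \geq \tpUL$ achievability match the paper's reasoning, and you are actually more explicit about the two elementary bounds $\tpDL \leq \rateDw$ and $\tpDL \leq \tpUL$ than the paper, which simply calls that regime ``a reformulation of the propositions of Section~\ref{sec:bounds}.''

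The genuine difference is in the $\rateDw < \tpUL$ achievability. You restrict \ac{RLC} to a pre-selected sub-collection of rate $\rateDw$ and argue that the reduced system is square and full rank. The paper takes a different and more elementary route: it lets \emph{each relay independently} drop every decoded packet with probability $1-\beta$, $\beta = \rateDw/\tpUL$, and observes that this yields an equivalent uplink with collected throughput $\beta\,\tpUL = \rateDw$, to which the $\rateDw \geq \tpUL$ case (Proposition~2 plus the \ac{RLC} construction) applies verbatim. The advantage of the paper's reduction is precisely that it dissolves the ``delicate point'' you flag: no agreement on a sub-collection is needed, because independent Bernoulli thinning requires neither relay-to-relay communication nor any knowledge at the gateway of the relays' buffer contents. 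Your approach can be made rigorous as well (for instance by having all relays include only packets from a predetermined fraction $\beta$ of uplink slots, which is a rule fixed in advance and needs no per-packet coordination), but as written it leans on the gateway being able to pick a specific sub-collection of \emph{distinct} collected packets, which goes beyond the ``resource allocation'' the model grants it.
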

\begin{proof}
For $\rateDw\geq \tpUL$, the result is simply a reformulation of the propositions of Section~\ref{sec:bounds} when $\nrx=2$. Conversely, let $\beta=\rateDw/\tpUL<1$, and assume that each relay drops a packet received over the uplink with probability $1-\beta$. It immediately follows that the average number of collected data units evaluates to $\beta\, \tpUL$, so that the downlink phase is equivalent to the one of a system serving an uplink throughput of $\rateDw$ packets per slot. The propositions of Section~\ref{sec:bounds} apply again to the scaled downlink, proving the result.
\end{proof}

Theorem~{\ref{th:capRegion}} characterises the capacity region of the complete topology under consideration, specifying for any uplink configuration, \ie for any $(\load,\peras)$ pair, the achievable throughput $\tpDL$ when a certain amount of resources is allocated to the downlink. In this perspective, thus, it offers an upper bound to the performance of any forwarding strategy, and allows us to evaluate the efficiency of the different dropping policies that will be introduced in the following.

\subsection{Common Transmission Probability}
As a starting point, consider the basic case in which both relays employ a common and single probability $1-\drop$ to drop packets irrespectively of any external observation (\ie $\cardSpace=0$). In this condition, the average traffic forwarded by each receiver to the gateway scales by a factor $\drop$ with respect to the incoming uplink flow, and the overall average amount of employed resources in the downlink evaluates to $\rateDw = 2\drop\,\tpSA$.
Due to the lack of coordination among receivers, not all the performed transmissions bring innovative information to the collecting unit, so that the downlink throughput exhibits a loss with respect to $\rateDw$. More precisely, by the combinatorial technique discussed in Section~\ref{sec:uplink}, the probability of having the same packet forwarded twice can be expressed as $\drop^2\,\user(1-\peras)^2 \peras^{2(\user-1)}$, under the assumption that $\user$ users concurrently accessed the uplink channel in the slot observed by the receivers.  Taking the expectation over the Poisson traffic distribution, we then get
\begin{equation}
\tpDL= \rateDw-\drop^2 \, \load(1-\peras)^2 e^{-\load(1-\peras^2)} = \rateDw - \rateDw^2\, \left(e^{\load(1-\peras)^2}/4\load \right),
\label{eq:tru_common_tx_prob}
\end{equation}
where the second equality directly follows from the expression of $\tpSA$ in (\refeq{eq:Tsa}).
The achievable performance is summarized in Figure~\ref{fig:capacityRegionTheoretical}, where the red slid line (labelled \emph{Common Prob.}) reports the dependency of $\tpDL$ on the overall downlink rate as per equation (\refeq{eq:tru_common_tx_prob}) in the exemplary case $\peras = 0.3$ and $\load=1/(1-\peras)$. The plot also reports the system capacity curve $\capDL(\rateDw)$, which divides the plane in two regions and highlights downlink throughput values that can be aimed for. Moreover, we show results for $\tpSA \leq \rateDw \leq 2\,\tpSA$. In fact, while for $\rateDw > 2\, \tpSA$ no dropping is required, and on average all the traffic can be delivered to the gateway, the investigated rates correspond to configurations in which a non-trivial optimisation can be performed to leverage receiver diversity.
It is apparent that the simple dropping scheme under analysis exhibits quite a large performance gap with respect to the upper bound.  This inefficiency is further stressed by the fact that there exists a threshold rate $\rateDw^* > \tpSA$ such that for downlink rates lower than $\rateDw^*$, it holds $\tpDL(\rateDw) < \tpSA$.\footnote{By simple manipulations of (\refeq{eq:tru_common_tx_prob}), $\rateDw^*=2\load\,e^{-\load (1-\peras)^2}\, \left( 1 - \sqrt{1 - (1-\peras) \,e^{-\load \peras (1-\peras)}} \right)$.} In other words, in the operating region  $\tpSA<\rateDw< \rateDw^*$, a system enjoying receiver diversity in the uplink is seen at the gateway as performing worse than a simpler configuration with a single relay with allocated fewer downlink resources.\footnote{Notice in fact that a single relay without any dropping policy would employ on average $\tpSA$ downlink transmissions to deliver to the gateway the whole information set it collects in the uplink.}
This comes as no surprise, since not only does the considered policy resort to no form of coding, but also, in the absence of coordination among relays, it makes no attempt to reduce the likelihood of forwarding duplicate packets.
In order to overcome these limitations, we propose in the following the use of distinct dropping probabilities for the relays and a careful tuning of their value exploiting available side information.

\begin{figure}
\centering
\includegraphics[width=0.8\columnwidth]{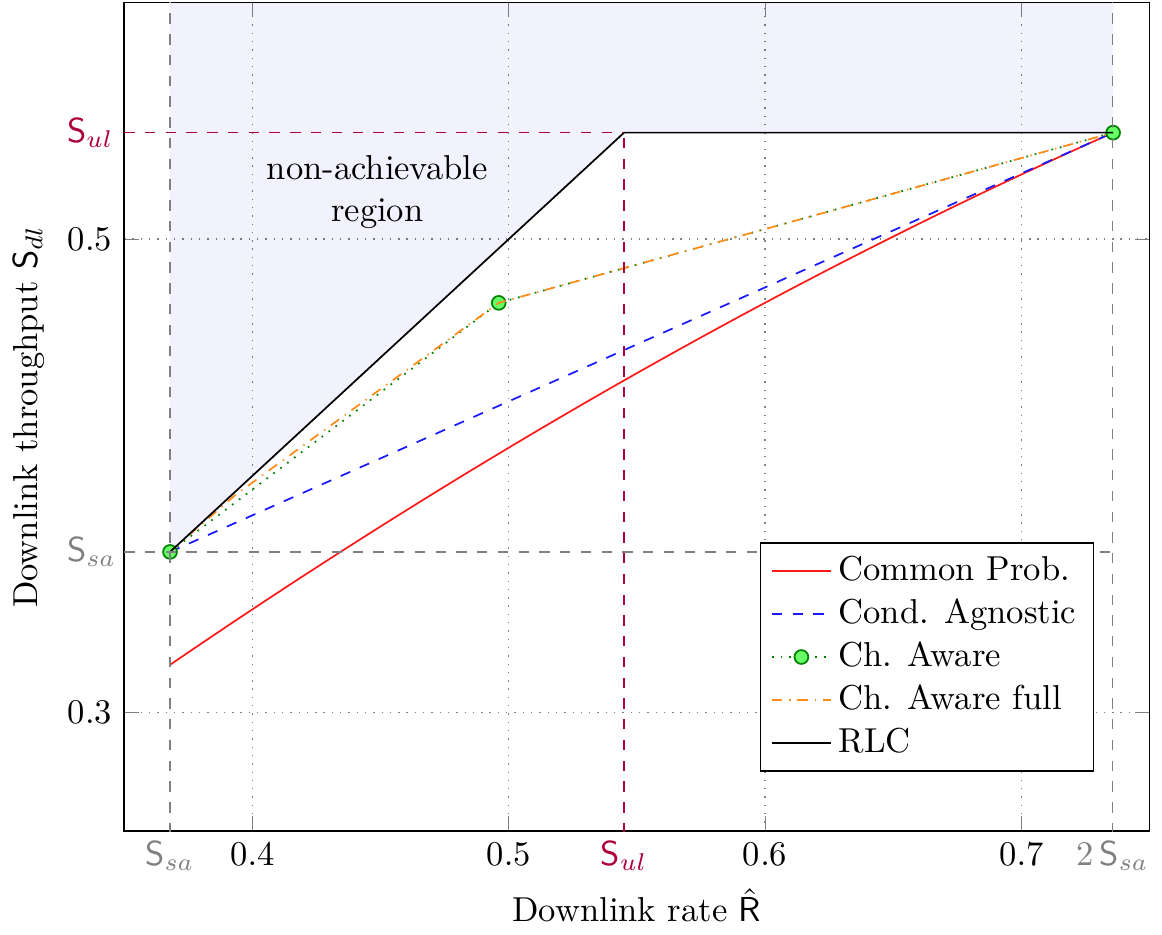}
\caption{Downlink throughput vs rate for different implementation of a dropping policy. $\peras=0.3$, $\load=1/(1-\peras)\sim1.43$.}
\label{fig:capacityRegionTheoretical}
\end{figure}

\subsection{Distinct Transmission Probabilities}

\subsubsection{Uplink-Condition Agnostic Policy}
A straightforward modification of the \emph{balanced} policy consists in letting receivers drop incoming packets with different probabilities. We still assume that buffering decisions are made at a receiver blindly, \ie they are not based on any observation of the uplink channel ($\cardSpace=0$). For this \emph{uplink-agnostic} strategy, the downlink phase is then completely specified by the pair $(\drop_1,\,\drop_2)$.\footnote{We omit the superscripts in $\drop_1^{(1)}$ and $\drop_2^{(1)}$ for the sake of notation readability.}  In this case,  the average number of transmissions in the downlink evaluates to $\rateDw = (\drop_1 + \drop_2)\,\tpSA$, and the throughput can  readily be expressed following combinatorial arguments as:
\begin{equation}
\tpDL = \rateDw - \drop_1 \drop_2 \, \load(1-\peras)^2 e^{-\load(1-\peras^2)}.
\label{eq:tru_channelAgnostic}
\end{equation}
The introduction of distinct dropping probabilities, thus, paves the road for a system optimisation, tuning $\drop_1$ and $\drop_2$ so as to maximise the downlink throughput while satisfying the constraint $\rateDw = (\drop_1 + \drop_2)\,\tpSA$ on the downlink rate, \ie on the minimum dimensioning in terms of resources allocated for transmission to the gateway.
From this standpoint, it is easy to observe that, for any $\rateDw$, the maximum throughput is achieved by minimising the loss factor expressed by the second addend in (\refeq{eq:tru_channelAgnostic}). In other words, given an uplink configuration in terms of $(\load,\peras)$, we are interested in finding the $(\drop_1,\,\drop_2)$ pair that minimises the product $\drop_1 \drop_2$ under the constraint $\drop_1+\drop_2 \leq \rateDw/\tpSA$. The general solution to this problem is provided by the following lemma.
\lemma \label{lemma:sim_pol} For any $\alpha \in [0,2]$, let $\hat{x}$ and $\hat{y}$ be real numbers such that $\hat{x},\hat{y} \in [0,1]$ and $\hat{x}+\hat{y}= \alpha$.
Then, the $\{\hat{x},\hat{y}\}$ pairs that minimise the product $\hat{z}= \hat{x}\hat{y}$ are given by
\begin{align}
\textrm{for} \quad \alpha\in[0,1]:& \quad \{\hat{x}=0, \hat{y}= \alpha \} \quad \textrm{or,} \quad \{\hat{x}=\alpha, \hat{y}=0\} \\
\textrm{for} \quad \alpha\in(1,2]:& \quad \{\hat{x}=1, \hat{y}= \alpha-1\} \quad \textrm{or,} \quad \{\hat{x}=\alpha-1, \hat{y}=1\} \\
\end{align}
\begin{proof}
\begin{sloppypar}
Writing $\hat{z}$ as a function of $\hat{x}$, we get $\hat{z}=-\hat{x}^2+\alpha \hat{x}$, which represents a parabola with downward concavity and zeroes for $\hat{x}=0$ and $\hat{x}=\alpha$. Furthermore, imposing the conditions $\hat{x}\in[0,1]$ and $\alpha-\hat{x}=\hat{y}\in[0,1]$, the region of interest restricts to ${\max\{0,\alpha-1\}\leq \hat{x}\leq \min \{ \alpha, 1 \}}$. When $\alpha\leq 1$, the minimum value of $\hat{z}$ in the studied domain is thus reached either when $\hat{x}=0$ or $\hat{x}=\alpha$, while, for $\alpha\in(1,2]$ the minimum value is obtained for $\hat{x}=\alpha-1$ and $\hat{x}=1$. The values of $\hat{y}$ follow immediately.
\end{sloppypar}
\end{proof}

In this way, the optimal setup for an uplink-agnostic policy is given by picking $\drop_1=1$ and $\drop_2=\rateDw/\tpSA-1$.\footnote{Clearly, the symmetric configuration $\drop_1=\rateDw/\tpSA-1$, $\drop_2=1$  yields the same result. This remark holds for all the considered policies, and will not be further stressed.} Under these conditions, the maximum throughput evaluates to
\begin{equation}
\tpDL = \rateDw \left ( \,1-(1-\peras)\,e^{-\load \peras (1-\peras)}\, \right ) + \load(1-\peras)^2 e^{-\load(1-\peras^2)},
\label{eq:maxTru_channelAgnostic}
\end{equation}
reported as a function of $\rateDw$ by the dashed blue line in Figure~\ref{fig:capacityRegionTheoretical}. It is clear that the linear trend exhibited by the presented solution triggers remarkable improvement over the \emph{balanced} approach, especially when little resources are available in the downlink. Moreover, straightforward manipulations of (\refeq{eq:maxTru_channelAgnostic}) show how allowing the relays to employ different dropping probabilities is sufficient for the system to always outperform a single-receiver configuration in the downlink rate region of interest (\ie $\tpDL \geq \tpSA$), thus overcoming the inefficiency that beset the baseline scheme. Such a result is achieved by simply letting one of the two receivers always forward all its incoming packets, allocating to it (on average) $\tpSA$ downlink transmissions per uplink slot, while resorting to the contributions of its fellow relay only when additional resources are available. In this perspective, the policy under discussion also represents a smart way to seamlessly take advantage of diversity in already deployed systems. In fact, a scenario operated via a single relay can be upgraded by plugging in an additional receiver, and by incrementally allocating to it downlink bandwidth, without any change to the forwarding policy of the original relay.

\subsubsection{Channel-Aware Policies}

A further extension of the considered policies consists in having relays make an educated choice on whether to drop or enqueue a data unit based on the observation of what happened on the uplink channel. The intuition in fact suggests that it is more likely for a packet to be retrieved by both receivers if it was the only one sent over the slot of interest, whereas the presence of several information units in an uplink slot reduces the chance for one of them to be decoded twice. A reasonable approach, even in the absence of any communication among relays, would then then be to increase the dropping probability in the former case, and reduce it in the latter to prevent duplicate transmissions in the downlink.

More formally, let us first consider the simple case in which each collector distinguishes between two situations, basing its decision on whether the retrieved packet was the only one on the channel (\ie no interference) or whether more than one users accessed the uplink channel over that slot  (\ie presence of interference). In the former condition, receiver $i$ drops with probability $\drop_i^{(1)}$, while in the latter case the packet is discarded with probability $\drop_i^{(2)}$. While this decision rule may appear elusive under the abstraction of on-off fading, as no incoming power is detected when a packet is erased, it turns out to be of high relevance in practical systems where a data unit can often be decoded even in the presence of a certain level of interference \eg by leveraging the capture effect \cite{Munari15_Massap}.\footnote{From this viewpoint, we also remark that detecting the presence or absence of interference can be rather easily accomplished by evaluating the noise level that affects the reception of a decoded data packet.}

\begin{sloppypar}
In accordance with the introduced framework, we can express the average number of downlink transmissions per uplink slot as
\begin{equation}
\rateDw = \left(\drop_1^{(1)} + \drop_2^{(1)}\right) (1-\peras) \load e^{-\load} + \sum\limits_{\user=2}^{\infty} \frac{\load^\user e^{-\load} }{\user!} \,\user\, (1-\peras) \peras^{\user-1} \left(\drop_1^{(2)} + \drop_2^{(2)}\right)\,,
\end{equation}
where the first member accounts for the case in which no interference was detected, whereas the summation considers all the situations in which at least one interfering packet affected the reception of the data unit of interest. Similarly, we can compute the probability that a data unit is forwarded twice towards the gateway, so that the overall throughput evaluates to
\begin{equation}
\tpDL = \rateDw - \drop_1^{(1)} \drop_2^{(1)} \load (1-\peras)^2  e^{-\load} - \sum\limits_{\user=2}^{\infty} \frac{\load^\user e^{-\load} }{\user!} \,\user\, (1-\peras)^2 \peras^{2(\user-1)} \drop_1^{(2)} \drop_2^{(2)}\,.
\end{equation}
Simple  manipulations  lead to the tackled optimisation problem:
\begin{align}
\begin{aligned}
\textrm{maximise} \quad &\tpDL = \rateDw - \coeffc\, \drop_1^{(1)} \drop_2^{(1)}- \coeffd \, \drop_1^{(2)} \drop_2^{(2)} \\
\textrm{s.t.}  \quad &\rateDw = \coeffa \left( \drop_1^{(1)} + \drop_2^{(1)} \right) + \coeffb \left(\drop_1^{(2)}+\drop_2^{(2) }\right)
\end{aligned}
\label{eq:opt_problem_interfAware}
\end{align}
with coefficients $\coeffa = \load (1-\peras ) e^{-\load}$, $\coeffb = \load (1-\peras ) e^{-\load} \left( e^{\load \peras} - 1 \right)$, $\coeffc = \load (1-\peras )^2 e^{-\load}$, and ${\coeffd = \load (1-\peras )^2 e^{-\load} \left( e^{\load \peras^2} - 1 \right)}$.
The solution is offered by the following result:
\prop \label{prop:sim_pol} Let $\rateDwSet_{1} = \left\{ \rateDw \,|\, \rateDw \in [ \tpSA, \coeffa+2\coeffb) \right\}$ and $\rateDwSet_{2} = \left\{ \rateDw \,|\, \rateDw \in [ \coeffa+2\coeffb, 2 \tpSA] \right\}$. Then, the maximum downlink throughput achievable with the proposed interference-aware policy is given by
\begin{equation}
\tpDL =
\begin{cases}
\rateDw \, (1 - \coeffd/\coeffb) + \coeffd \tpSA/\coeffb &\text{for } \rateDw \in \rateDwSet_1\\
\rateDw \, (1 - \coeffc/\coeffa)  + \coeffc (\tpSA + \coeffb )/\coeffa - \coeffd  &\text{for } \rateDw \in \rateDwSet_2
\end{cases}
\label{eq:opt_interfAware}
\end{equation}
As a preliminary remark, note that all the coefficients $\coeffa,\coeffb,\coeffc,\coeffd$ in (\refeq{eq:opt_problem_interfAware}) are strictly positive, and that, for any admissible values of $\load$ and $\peras$, the inequality $\coeffc/\coeffa > \coeffd/\coeffb$ holds. For the sake of a simplified notation, let $\drop_1^{(1)} = \hat{x}_1$, $\drop_2^{(1)} = \hat{x}_2$, $\drop_1^{(2)} = \hat{y}_1$, and $\drop_2^{(2)} = \hat{y}_2$. Furthermore, let us introduce $\alpha = \hat{x}_1+\hat{y}_1$ and $\beta = \hat{x}_2 + \hat{y}_2$, with $\alpha,\, \beta \in [0,2]$, as well as the auxiliary function $f(\bm{\hat{x}}) = \coeffc\, (\hat{x}_1 \hat{y}_1) + \coeffd\,(\hat{x}_2 \hat{y}_2)$, where $\bm{\hat{x}} =[\hat{x}_1,\hat{x}_2,\hat{y}_1,\hat{y}_2]$ and  $f(\bm{\hat{x}}) \geq 0$. We are then interested in maximising $\tp = \rateDw - f(\bm{\hat{x}})$ subject to $\rateDw = \coeffa\alpha + \coeffb \beta$ or, equivalently in minimising $f(\bm{\hat{x}})$ under the same constraint. Let us notice that the first addend of $f(\bm{\hat{x}})$ only contains the variables that determine $\alpha$, while the second addend of $f(\bm{\hat{x}})$ solely defines the value of $\beta$. It is then possible to solve the optimisation problem by considering four non-overlapping regions: $\load_1 = \{\bm{\hat{x}} \, | \, \alpha \in [0,1], \beta \in [0,1] \}$, $\load_2 = \{\bm{\hat{x}} \, | \, \alpha \in (1,2], \beta \in [0,1] \}$, $\load_3 = \{\bm{\hat{x}} \, | \, \alpha \in [0,1], \beta \in (1,2] \}$, $\load_4 = \{\bm{\hat{x}} \, | \, \alpha \in (1,2], \beta \in (1,2] \}$. $\load_1$ can immediately be discarded as $\alpha,\beta < 1$ imply ${\rateDw < \coeffa+\coeffb = \tpSA}$, identifying a condition which is not of interest. In the remaining regions, for any $\alpha$ and $\beta$ the  values of the optimisation variables that maximise the throughput can be found resorting to Lemma~\ref{lemma:sim_pol}. In particular:
\begin{itemize}
\item for $\bm{\hat{x}} \in \load_2$: $\alpha \in [0,1]$ implies $\hat{x}_1=1$ and $\hat{y}_1=\alpha-1$, while $\beta \in (1,2]$ implies $\hat{x}_2=\beta$ and $\hat{y}_2=0$. By the last condition we can write $f(\bm{\hat{x}}) = \coeffc(\alpha-1)$ so that the optimum lies in the $(\alpha,\beta$) pair that satisfies the constraint on $\rateDw$ with minimum $\alpha$. The solution follows as $\beta=1$, $\alpha= (\rateDw-\coeffb)/\coeffa$ with a  corresponding throughput $\tp=\rateDw-(\coeffc/\coeffa) (\rateDw-\tpSA)$.
\item by a symmetrical reasoning, for $\bm{\hat{x}} \in \load_3$ the optimal solution is given by $\alpha=1$ and $\beta= (\rateDw-\coeffa)/\coeffb$, with a throughput $\tp=\rateDw-(\coeffd/\coeffb) (\rateDw-\tpSA)$ achieved for $\hat{x}_1=1$, $\hat{y}_1=0$, $\hat{x}_2=1$ and $\hat{y}_2=(\rateDw-\tpSA)/\coeffb$.
\item for $\bm{\hat{x}} \in \load_4$: by Lemma~\ref{lemma:sim_pol}, $\hat{x}_1=1$, $\hat{x}_2=1$, so that $f(\bm{\hat{x}})=\coeffc(\alpha-1)+\coeffd(\beta-1)$. Recalling that $\beta=(\rateDw-\coeffa\alpha)/\coeffb$, we can then write $f(\bm{\hat{x}})=\alpha(\coeffc-\coeffa \coeffd/\coeffb) - \coeffc - \coeffd - \rateDw \coeffd/\coeffb$, which represents a straight line with positive slope and minimum in the left extremal point of the $\alpha$ domain. Imposing $\beta \in (1,2]$, the support of interest follows as: ${\max \{ 1, (\rateDw-2 \coeffb)/\coeffa\} \leq \alpha \leq \min \{ 2, (\rateDw-\coeffb)/\coeffa\}}$. Two cases have then to be distinguished. When $(\rateDw-2\coeffb)/\coeffa < 1$, $\alpha=1$ and the problem collapses to the solution found for region $\load_3$. Conversely, when $\rateDw \geq \coeffa+2\coeffb$, the optimum is achieved for $\alpha=(\rateDw-2\coeffb)/\coeffa$ and $\beta=2$, for a throughput $\tp=\rateDw-(\coeffc/\coeffa)(\rateDw-\tpSA-\coeffb)-\coeffd$ with $\hat{x}_1=1$, $\hat{y}_1=(\rateDw-\tpSA-\coeffb)/\coeffa$, $\hat{x}_2=1$, $\hat{y}_2=1$.
\end{itemize}
Comparing the  throughputs of the different configurations and taking advantage of the inequality $\coeffc/\coeffa > \coeffd/\coeffb$ it is immediate to verify that the optimal solution is to pick $\bm{\hat{x}} \in \load_3$ for $\rateDw\in [ \tpSA, \coeffa+2\coeffb )$ and $\bm{\hat{x}} \in \load_4$ for $\rateDw\in [ \coeffa+2\coeffb, 2\tpSA ]$, stating the result of the proposition.  \hfill $\blacksquare$
\end{sloppypar}

\begin{sloppypar}
The solution is obtained by setting the dropping probabilities for relay $\rxEl=1$ as ${[\,\drop_1^{(1)}=1, \,\drop_1^{(2)}=1\,]}$ and for relay $\rxEl=2$ as $[\,\drop_2^{(1)}=0, \,\drop_2^{(2)}=(\rateDw -\tpSA)/\coeffb\,]$ when $\rateDw \in \rateDwSet_1$.  Conversely, for $\rateDw \in \rateDwSet_2$, the  optimal working point is achieved for $[\,\drop_1^{(1)}=1, \,\drop_1^{(2)}=1\,]$, and ${[\,\drop_2^{(1)}=(\rateDw-\tpSA-\coeffb)/\coeffa, \,\drop_2^{(2)}=1\,]}$.
\end{sloppypar}

The performance offered by the considered scheme is once again reported in Figure~\ref{fig:capacityRegionTheoretical} (green dotted line). As per (\refeq{eq:opt_interfAware}), the plot highlights two regions both exhibiting a linear dependence of the achievable downlink throughput on $\rateDw$. In particular, when the overall downlink rate is lower that $\coeffa+2\coeffb$, the optimal allocation consists in having one of the relay, say $\rxEl=1$, forwarding all the incoming traffic, while the other, say $\rxEl=2$, only forwarding packets that were decoded in the uplink in the presence of interference. This confirms the intuition that data units collected when the \ac{SA} channel was accessed by more than one user bring a higher reward in terms of throughput when forwarded to the gateway. On the other hand, when enough resources in the downlink are available for $\rxEl=2$ to deliver all such packets, the policy naturally enables it to gradually enqueue and transmit also information units collected in the absence of interference.  The higher probability for them to be duplicates of what is forwarded by $\rxEl=1$ is reflected in the lower slope of the throughput  curve in the rightmost region. The figure clearly stresses the remarkable improvement unleashed by taking into account even partial information on the state of the uplink channel, proving how simple strategies can indeed provide performance that are not too far from the bound represented by \ac{RLC}. From this standpoint, two remarks are in order. In the first place, we notice that the switching point between $\rateDwSet_1$ and $\rateDwSet_2$ can be expressed as $\coeffa+2\coeffb=2\tpSA - \load(1-\peras)e^{-\load}$. For a given congestion level $\load$ in the uplink, then, higher erasure rates result in an extension of the region with higher throughput slope, further reducing the gap of the interference-aware dropping policy with respect to the upper bound. This trend is consistent with the stronger uncorrelation induced by larger values of $\peras$ over the sets of collected packets at the two relays and hints at how the additional diversity likely to characterise uplink channels in real systems (see, \eg \cite{Munari15_Massap}) may further benefits the class of proposed downlink strategies.
Moreover, while interference detection represents a simple and practically viable basis to tune the dropping probabilities, the question on how more detailed side-information would impact the performance naturally arises.

In order to investigate this aspect, let us focus on the ideal case in which, at every given time slot, both relays seamlessly and perfectly know how many packets were concurrently transmitted over the uplink. Under this hypothesis, we consider a dropping policy in which each relay has $\cardSpace+1$ transmission probabilities. More specifically, collector $\rxEl$ enqueues a decoded packet with probability $\drop_\rxEl=1^{(j)}$, if $j = 1,\dots,\,\cardSpace$ users accessed the uplink in the slot under consideration (\ie $j-1$ packets were erased, given  the model under analysis). Conversely, if more than $\cardSpace$ data units populate the slot, the retrieved packet is buffered for later forwarding with probability $\drop_\rxEl=1^{(\cardSpace+1)}$. The combinatorial approach followed so far can be employed in this case as well to evaluate the downlink throughput and the average number of transmissions towards the gateway. After some simple yet tedious calculations due to the larger number of cases that need to be considered, we can express the sought quantities as:
\begin{equation}
\begin{cases}
\rateDw  &=  \sum\limits_{j=1}^{\cardSpace+1} \coeffa_j \left( \drop_1^{(j)} + \drop_2^{(j)} \right)\\
\tpDL &= \rateDw-  \sum\limits_{j=1}^{\cardSpace+1}  \coeffb_j \, \drop_1^{(j)} \drop_2^{(j)} \\
\end{cases}
\label{eq:tru_channelAware}
\end{equation}
where the coefficients $\coeffa_j$ and $\coeffb_j$, $j=1,\dots, \cardSpace+1$ are reported in equation (\refeq{eq:coeffs_channelAware}) and  $\Gamma(s,x)$ is the incomplete superior gamma function, defined as $\Gamma(s,x)=\int_x^{+\infty}t^{s-1}\, e^{-t} \, dt$.

\begin{EqAtPageBottom}
\hrulefill
\begin{align}
\begin{split}
\coeffa_i &= \load(1-\peras) e^{-\load} \, \frac{(\load \peras)^{i-1}}{(i-1)!}, \quad 1\leq i \leq \cardSpace  \\
\coeffa_{\cardSpace+1} &= \load(1-\peras) e^{-\load} \, \sum_{i=\cardSpace+1}^{\infty}
\frac{(\load \peras)^{i-1}}{(i-1)!} = \tpSA \left( 1-\frac{\Gamma(\cardSpace,\load \peras)}{\Gamma(\cardSpace)}\right) \\
\coeffb_i &= \load(1-\peras)^2 e^{-\load} \,  \frac{(\load \peras^2)^{i-1}}{(i-1)!} , \quad 1\leq i \leq \cardSpace\\
\coeffb_{\cardSpace+1} &= \load(1-\peras)^2 e^{-\load} \, \sum\limits_{i=\cardSpace+1}^{\infty}
\frac{(\load \peras^2)^{i-1}}{(i-1)!} = \load (1-\peras)^2 e^{-\load(1-\peras^2)}\left( 1-\frac{\Gamma(\cardSpace,\load \peras^2)}{\Gamma(\cardSpace)}\right) \\
\end{split}
\label{eq:coeffs_channelAware}
\end{align}
\end{EqAtPageBottom}

Starting from (\refeq{eq:tru_channelAware}), an optimisation problem analogous to the one in (\refeq{eq:opt_problem_interfAware}) can be stated, aiming at the dropping probabilities that maximise the downlink throughput for a given rate $\rateDw$. For arbitrary and potentially large values of $\cardSpace$, however, an analytical solution of the problem can be elusive. We thus follow a different approach, and conjecture that the idea underpinning the optimal working point for $\cardSpace=1$ derived in Proposition \ref{prop:sim_pol} extends to any value of $\cardSpace$. More specifically, for the rate-region of interest, we let one of the relays, say $\rxEl=1$, always enqueue and forward all the received packets, \ie $\drop_1^{(j)}=1$, $\forall j \in[1,\cardSpace+1]$. On the other hand, when $\rateDw > \tpSA$, the second relay starts by only buffering the data units it receives in the uplink that are less likely to have also been decoded at $\rxEl=1$. This translates to a choice of $\drop_2^{(j)=0}$, $j=1,\dots,\cardSpace$, while $\drop_2^{(\cardSpace+1)}$ is linearly increased with the amount of resources available in the downlink. Further increasing $\rateDw$, $\drop_2^{(\cardSpace+1)}$ eventually saturates at one. The corresponding value of the downlink rate can easily be computed from (\refeq{eq:tru_channelAware})-(\refeq{eq:coeffs_channelAware}), by setting $\rateDw = \tpSA + \coeffa_{\cardSpace+1} = \tpSA \left( 2 - \frac{\Gamma(\cardSpace,\load \peras)}{\Gamma(\cardSpace)} \right)$. Here, the first addend accounts for the resources allocated to the first relay $\rxEl=1$ to forward to the gateway all the incoming uplink packets, whereas the second term is simply the rate given to $\rxEl=2$ when $\drop_2^{(\cardSpace+1)}=1$ and $\drop_2^{(j)=0}$, $j=1,\dots,\cardSpace$. After this point, additional resources allocated to the second relay will be used to store and forward packets received over slots accessed by $\cardSpace$ users in the uplink, \ie $\drop_2^{\cardSpace}>0$.  Along the same line of reasoning, $\drop_2^{\cardSpace}$ is linearly increased until it reaches one, \ie for $\rateDw = \tpSA + \coeffa_{\cardSpace+1} + \coeffa_\cardSpace$. Iterating this approach, $\cardSpace+1$ rate regions can be identified, where relay $\rxEl=2$ is progressively allowed to deliver information units which are more likely to be duplicates of what forwarded by its fellow relay. More formally, we provide the following result.

For the channel-aware dropping policy defined by (\refeq{eq:tru_channelAware}), let $\rateDwSet_i$, $i=1,\dots,\cardSpace+1$ be $\cardSpace+1$ downlink rate regions defined as:
\begin{align}
\begin{split}
&\rateDwSet_{\cardSpace+1} = \left\{ \rateDw \,|\, \rateDw \in [ \tpSA, \rateDw_{\cardSpace+1}) \right\}, \,\,  \rateDw^*_{\cardSpace+1} = \tpSA \left( 2 - \frac{\Gamma(\cardSpace,\load \peras)}{\Gamma(\cardSpace)} \right)  \\
&\rateDwSet_{i} = \left\{ \rateDw \,|\, \rateDw \in [ \rateDw^*_{i+1}, \rateDw^*_{i}) \right\}, \,\, \rateDw^*_i = \tpSA \left( 2  - \frac{\Gamma(i,\load \peras) - e^{-\load \peras} (\load \peras)^{i-1}}{\Gamma(i)} \right), \, i = \cardSpace,\dots,1
\end{split}
\label{eq:regions_channelAware}
\end{align}
We conjecture that the probability vectors $\dropVec_1$, $\dropVec_2$ maximising the achievable downlink throughput for any value of $\rateDw \in [\tpSA, 2 \tpSA]$ are given by:
\begin{equation}
\drop_1^{(i)} = 1, \, i=1,\dots,\cardSpace+1,  \quad
\drop_2^{(i)}=
\begin{cases}
1 \quad & i=\hat{k}+1,\dots,\cardSpace+1 \\
\frac{\rateDw-\tpSA - \sum_{j=\bar{i}+1}^{\cardSpace+1}\coeffa_j}{\coeffa_k} \quad & i=\hat{k}\\
 0 \quad & i=1,\dots,\hat{k}-1 \\
\end{cases}
\label{eq:optQ_channelAware}
\end{equation}
Under this choice, introducing the auxiliary variable $\coeffe_i = \coeffb_i/\coeffa_i$, the downlink throughput evaluates to:
\begin{equation}
\tpDL =
\begin{cases}
\left(1-\coeffe_{\cardSpace+1} \right) \rateDw + \tpSA \coeffe_{\cardSpace+1}  & \,  \rateDw \in \rateDwSet_{\cardSpace+1} \\[2mm]
\left( 1 - \coeffe_i \right) \rateDw + \left( 2 - \frac{\Gamma(i,\load \peras)}{\Gamma(i)} \right) \tpSA \coeffe_i
- \load (1-\peras)^2 e^{-\load(1-\peras^2)}\,\left( 1 - \frac{\Gamma(i,\load \peras^2)}{\Gamma(i)} \right)  & \, \rateDw \in \rateDwSet_{i\leq \cardSpace}
\end{cases}
\label{eq:opt_channelAware}
\end{equation}

The derivation of (\refeq{eq:opt_channelAware}) follows directly by some manipulation after plugging the probability values of (\refeq{eq:optQ_channelAware}) into (\refeq{eq:tru_channelAware}). Secondly, the accuracy of the conjecture has been verified by means of constrained numerical optimisation techniques applied to (\refeq{eq:tru_channelAware}) for a variety of uplink configurations $(\load,\peras)$, always obtaining values in excellent agreement with the presented analytical expressions.
Leaning on this result, we report in Figure~\ref{fig:capacityRegionTheoretical} the achievable throughput against the downlink rate when a very accurate knowledge of the uplink channel conditions in terms of size of the collision sets is available at relays, \ie $\cardSpace = 20$ (dash-dotted orange line). An accurate inspection of (\refeq{eq:regions_channelAware}) reveals how the starting point of the rightmost region $\rateDwSet_1$ (and the downlink throughput achieved therein) does not vary with $\cardSpace$. Increasing the size of the dropping probability vectors, thus, leads to a larger number of smaller partitions of the region $\tpSA \leq \rateDw \leq \rateDw_1^*$. On the other hand, such leftmost regions are precisely the ones characterised by a stronger slope of the throughput curve, earned leveraging additional side information. The combination of the two effects significantly curbs the benefits brought by a more accurate knowledge of the number of users accessing the uplink channel. This is clearly highlighted in the plot, where the $\cardSpace=20$ curve exhibits a trend which is very close to the one of its $\cardSpace=1$ counterpart, with a limited gain only in the downlink rate region which is in fact of less interest for multi-receiver systems (\ie when the total available rate is slightly larger than the one necessary to collect the traffic of a single relay). Such a result is remarkable, and suggests how a simple and practically viable strategy which makes forwarding decisions only based on interference detection can indeed reap a noticeable fraction of the downlink throughput  achievable by means of a large family dropping policies, offering performance not too far away from the ones of the \ac{RLC} upper bound.

\section{On the Impact of Finite-Buffer Size on Downlink Strategies}
\label{sec:buffer_downlink}

We observed in Section~\ref{sec:downlink_bound}, that \ac{RLC} is able to achieve the downlink rate upper bound. Nevertheless, \ac{RLC} requires an infinite observation window of the uplink to achieve such upper bound. In this Section, we derive the analytical model for the \ac{RLC} under a finite observation window.

We focus on the $\nrx=2$ relay case. In this scenario, the gateway collects the incoming packets forwarded by the relays building the system of linear equations
\begin{align}
\left( \begin{array}{c} \linCombK{1}^T\\ \linCombK{2}^T \end{array} \right) =
\MxComb \colVec^T =
\left(\begin{array}{ccc} \MxCombK{1,1} & \bm 0 &  \MxCombK{1,1\wedge2} \\
\bm 0 & \MxCombK{2,2} & \MxCombK{2,1\wedge2}
\end{array}\right)
\left( \begin{array}{l} \colVecK{1}^T\\
\colVecK{2}^T \\
\colVecK{1\wedge2}^T
\end{array} \right),
\label{eq:exampleG_K=2}
\end{align}
where $\MxCombK{1,1} \in \mathbb{F}_{\fieldOrd}^{\linllK{1} \times \collK{1}}$ is the matrix representing the coefficients applied to linear combinations of uplink decoded packets by the first relay only. The number of columns in $\MxCombK{1,1}$, \ie $\collK{1}$, corresponds to the number of uplink decoded packets by the first relay only while the number of rows, \ie $\linllK{1}$, corresponds to the number of linear combination of these packets generated and sent to the gateway by the first relay only. Similarly, $\MxCombK{2,2} \in \mathbb{F}_{\fieldOrd}^{\linllK{2} \times \collK{2}}$ is the matrix representing the coefficients applied to linear combinations of uplink decoded packets by the second relay only. Finally, $\MxCombK{1,1\wedge2} \in \mathbb{F}_{\fieldOrd}^{\linllK{1} \times \collK{1\wedge2}}$ and $\MxCombK{2,1\wedge2} \in \mathbb{F}_{\fieldOrd}^{\linllK{2} \times \collK{1\wedge2}}$ are the coefficients of the packets linear combinations generated respectively by the first and second relay involving uplink decoded packets by both relays. The vector $\colVec$ corresponds to the uplink decoded packets, where $\colVecK{1}$, $\colVecK{2}$ are the uplink decoded packets by the first relay only and second relay only respectively, and $\colVecK{1\wedge2}$ are the uplink decoded packets by both of them.

\begin{sloppypar}
The gateway resolves the system of equations applying Gauss-Jordan elimination exploiting the knowledge of the linear combinations coefficients. Two Markov chains are needed to track the rank of the sub-matrices and the number of correctly decoded uplink packets at the relays. We start applying Gauss-Jordan elimination to $\MxComb$ and we obtain the matrix $\MxComb'$ in the following form
\begin{align}
\MxComb' =
\left( \begin{array}{ccccc}
\bm{I}_{\rankK{1}} & \bm{A} & \bm{0} & \bm{0} & \bm{U}_1 \\
\bm{0} & \bm{0} & \bm{0} & \bm{0} & \bm{L}_1 \\
\hline
\bm{0} & \bm{0} & \bm{I}_{\rankK{2}} & \bm{B} & \bm{U}_2 \\
\bm{0} & \bm{0} & \bm{0} & \bm{0} & \bm{L}_2
\end{array}
\right).
\label{eq:Mx_prime}
\end{align}
The size of the two identity matrices are $\rankK{1}$ and $\rankK{2}$ and represent the rank of $\MxCombK{1,1}$ and $\MxCombK{2,2}$ respectively, while $\bm{A} \in \mathbb{F}_{\fieldOrd}^{\rankK{1}\times \mxColK{1}}$ and $\bm B \in \mathbb{F}_{\fieldOrd}^{\rankK{2} \times \mxColK{2}}$. The following relations hold
\begin{align}
\mxColK{1}=\collK{1}-\rankK{1}\\
\mxColK{2}=\collK{2}-\rankK{2}.
\end{align}
We reorder the rows inverting the second and third block in matrix $\MxComb'$. Applying Gauss-Jordan elimination on the sub-matrices $\left(\begin{smallmatrix}
\bm{L}_1\\ \bm{L}_2
\end{smallmatrix}\right)$ and we obtain $\MxComb''$ in the form
\begin{align}
\MxComb'' =
\left( \begin{array}{cccccc}
\bm{I}_{\rankK{1}} & \bm{A} & \bm{0} & \bm{0} & \bm{0} & \bm{U'}_1 \\
\bm{0} & \bm{0} & \bm{I}_{\rankK{2}} & \bm{B} & \bm{0} & \bm{U'}_2 \\
\bm{0} & \bm{0} & \bm{0} & \bm{0} &  \bm{I}_{\rankK{3}} & \bm{L'} \\
\bm{0} & \bm{0} & \bm{0} & \bm{0} & \bm{0} & \bm{0}
\end{array}
\right).
\label{eq:Mx_second_two}
\end{align}
In $\MxComb''$, $\bm{U'}_1 \in \mathbb{F}_{\fieldOrd}^{\rankK{1} \times \mxColK{3}}$, $\bm{U'}_2 \in \mathbb{F}_{\fieldOrd}^{\rankK{2} \times \mxColK{3}}$ and $\bm{L'} \in \mathbb{F}_{\fieldOrd}^{\rankK{3} \times \mxColK{3}}$, where $\rankK{3}\leq \left[(\linllK{1} + \linllK{2}) - (\rankK{1} + \rankK{2})\right]$. The following relation holds
\[
\mxColK{3}=\collK{1\wedge2}-\rankK{3}.
\]
We now rearrange the columns resulting in
\begin{align}
\MxComb'' =
\left( \begin{array}{ccc|ccc}
\bm{I}_{\rankK{1}} & \bm{0} & \bm{0} & \bm{A} & \bm{0} & \bm{U'}_1 \\
\bm{0} & \bm{I}_{\rankK{2}} & \bm{0} & \bm{0} & \bm{B} & \bm{U'}_2 \\
\bm{0} & \bm{0} & \bm{I}_{\rankK{3}} & \bm{0} & \bm{0} & \bm{L'} \\
\hline
\bm{0} & \bm{0} & \bm{0} & \bm{0} & \bm{0} & \bm{0}.
\end{array}
\right)
\label{eq:Mx_second_two}
\end{align}
The metric of interest is the donwlink throughput $\tpDL$, that is
\[
\tpDL = \frac{\mathbb{E}[\widehat N_1] + \mathbb{E}[\widehat N_2] + \mathbb{E}[\widehat N_{12}]}{\numulslot},
\]
being $\mathbb{E}[\widehat N_1]$ the expectation of the number of decoded packets forwarded only by the first relay to the gateway. Similarly, $\mathbb{E}[\widehat N_2]$ is the expected number of decoded packets forwarded to the gateway only by the second relay. Finally, $\mathbb{E}[\widehat N_{12}]$ is the expected number of decoded packets forwarded by both relays to the gateway. We define $\colNumVec\triangleq\left[\collK{1}, \collK{2}, \collK{1\wedge2}\right]$ and ${\rankNumVec\triangleq\left[\rankK{1}, \rankK{2}, \rankK{3}\right]}$. We first elaborate on $\mathbb{E}[\widehat N_1]$
\[
\mathbb{E}[\widehat N_1] = \sum_{\colNumVec} \sum_{\rankNumVec} \mathbb{E}[\widehat N_1|\colNumVec ,\rankNumVec] \Pr(\rankNumVec |\colNumVec) \Pr(\colNumVec).
\]
One packet can be successfully decoded at the gateway iff the row corresponding to the specific packet in the matrix $\MxComb''$ resulting after Gauss-Jordan elimination, has only one non-zero element. The probability for this to happen is $\frac{1}{{\fieldOrd}^{\mxColK{1}+\mxColK{3}}}$, for uplink packets successfully decoded only by the first relay. It corresponds to the probability that all entries in the corresponding row of the matrices $\bm{A}$ and $\bm{U'}_1$ are zeros. The $\mathbb{E}[\widehat N_1|\colNumVec ,\rankNumVec]$ is the expectation of a binomial distribution of parameters $\mathcal{B}\left(\collK{1}, \frac{1}{{\fieldOrd}^{\mxColK{1}+\mxColK{3}}}\right)$, so it holds
\begin{equation}
\mathbb{E}[\widehat N_1|\colNumVec ,\rankNumVec] = \collK{1}\left( \frac{1}{\fieldOrd}\right)^{\mxColK{1}+\mxColK{3}}.
\end{equation}
Now, we can write
\begin{align}
\mathbb{E}[\widehat N_1] &= \sum_{\colNumVec} \sum_{\rankNumVec} \collK{1}\left( \frac{1}{\fieldOrd}\right)^{\mxColK{1}+ \mxColK{3}} \Pr(\rankNumVec| \colNumVec) \Pr(\colNumVec) =\\
&= \sum_{\collK{1}=0}^{\numulslot} \sum_{\collK{2}=0}^{\numulslot} \sum_{\collK{1\wedge2}=0}^{\numulslot} \sum_{\rankK{1}=0}^{\rankKMax{1}} \sum_{\rankK{2}=0}^{\rankKMax{2}} \sum_{\rankK{3}=0}^{\rankKMax{3}} \collK{1}\left( \frac{1}{\fieldOrd}\right)^{\mxColK{1}+\mxColK{3}} \Pr(\rankK{1}|\collK{1}) \Pr(\rankK{2}|\collK{2})  \Pr(\rankK{3}|\collK{1\wedge2}) \\
&\cdot \Pr(\collK{1},\collK{2},\collK{1\wedge2}),
\label{eq:n1_avg}
\end{align}
where $\rankKMax{1}=\min \{\linllK{1},\collK{1}\}$, $\rankKMax{2}=\min \{\linllK{2},\collK{2}\}$ and $\rankKMax{3}=\min \{(\linllK{1} + \linllK{2}) - (\rankK{1} + \rankK{2}), \collK{1\wedge2}\}$. The last equality holds because the rank of the sub-matrices $\rankK{1}$, $\rankK{2}$ and $\rankK{3}$ are independent with each other. The Markov chains that show how the rank of the matrix evolves as row vectors $\mathbb{F}_{\fieldOrd}^{1 \times \coll}$ are added one by one as shown in Figure~\ref{fig:Mk_chain_rank}. The probability that a matrix belonging to $\mathbb{F}_{\fieldOrd}^{\rank \times \coll}$ has rank $x_{\rank}= \min\{\rank,\coll\}$ can be written in a recursive way \cite{Landsberg_1893, Kolchin_RG_1999} as
\begin{equation}
\Pr(\rank| \coll) =
\begin{cases}
\begin{aligned}
\Prob_{1,\coll}(0) &= \frac{1}{\fieldOrd^{\coll}}\\
\Prob_{1,\coll}(1) &= 1-\frac{1}{\fieldOrd^{\coll}}\\
\Prob_{\rank,\coll}(x_{\rank}) &= \left(\frac{\fieldOrd^{x_{\rank}}}{\fieldOrd^{\coll}}\right) \Prob_{\rank-1,\coll}(x_{\rank})+ \left(1-\frac{\fieldOrd^{x_{\rank}-1}}{\fieldOrd^{\coll}}\right) \Prob_{\rank-1,\coll}(x_{\rank}-1),\quad x_{\rank}=\min\{\rank,\coll\}.
\end{aligned}
\end{cases}
\label{eq:Mk_rank_m1_n1}
\end{equation}
\end{sloppypar}

\begin{figure}
\centering
\includegraphics[width=0.8\columnwidth]{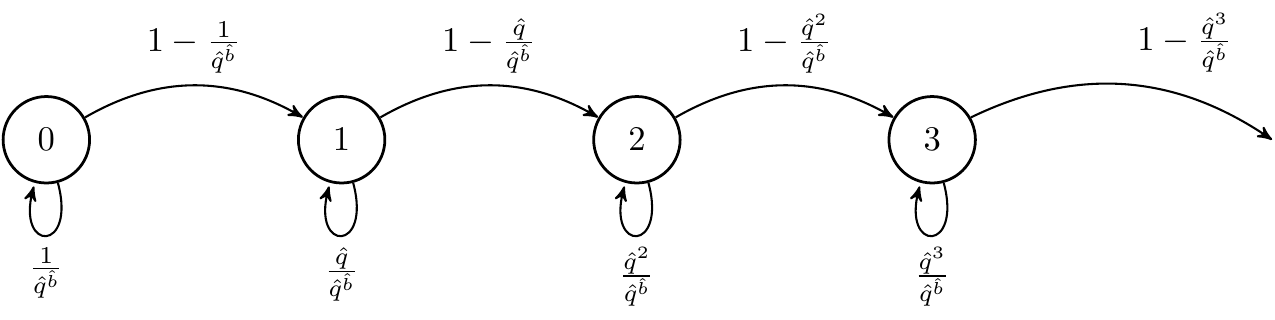}
\caption{Markov chain of the evolution of the matrix rank as row vectors are added. The state number represents the rank value.}
\label{fig:Mk_chain_rank}
\end{figure}
Where $\rank$ and $\coll$ can be substituted with any couple $(\rankK{1}, \collK{1})$, or $(\rankK{2}, \collK{2})$ or also $(\rankK{3}, \collK{1\wedge2})$. The quantity $\mathbb{E}[\widehat N_2|\colNumVec ,\rankNumVec]$ is the expectation of the binomial distribution $\mathcal{B}\left(\collK{2}, \frac{1}{\fieldOrd^{\mxColK{2}+\mxColK{3}}}\right)$, while $\mathbb{E}[\widehat N_{12}|\colNumVec ,\rankNumVec]$ is the expectation of the binomial distribution $\mathcal{B}\left(\collK{1\wedge2}, \frac{1}{\fieldOrd^{\mxColK{3}}}\right)$, so we can write,
\begin{align}
\mathbb{E}[\widehat N_2] &= \sum_{\colNumVec} \sum_{\rankNumVec} \collK{2}\left( \frac{1}{\fieldOrd}\right)^{\mxColK{2}+ \mxColK{3}} \Pr(\rankNumVec| \colNumVec) \Pr(\colNumVec)\\
\mathbb{E}[\widehat N_{12}] &= \sum_{\colNumVec} \sum_{\rankNumVec} \collK{1\wedge 2}\left( \frac{1}{\fieldOrd}\right)^{ \mxColK{3}} \Pr(\rankNumVec| \colNumVec) \Pr(\colNumVec).
\label{eq:n2_n12_avg}
\end{align}
The joint probability mass function $\Pr(\colNumVec)$ for the random vector $\{\collK{1}, \collK{2}, \collK{1\wedge2}\}$ can be tracked effectively by means of a homogeneous Markov chain, leaning on the assumed independence of channel realizations across uplink slots. To this aim, let $\obsSlot_\slot = \{ \collK{1}^{(\slot)}, \collK{2}^{(\slot)}, \collK{1\wedge2}^{(\slot)} \}$ be the state at the start of slot $\slot$, indicating the number of packets received so far by the first relay solely, by the second relay and by both of them, respectively. For the sake of compactness, let us furthermore denote $\textrm{Pr}\left\{ \obsSlot_{\slot+1} = \{ i',j',k'\}  \, \big{|} \, \obsSlot_\slot = \{ i,j,k  \} \right\}$ as $\Prob_{(i,j,k)\rightarrow(i',j',k')}$. Following this notation, each time unit can see five possible transitions for the chain, whose probabilities follow by simple combinatorial arguments similar to the ones discussed in Section~\ref{sec:uplink}:
\begin{equation}
\begin{cases}
\begin{aligned}
\Prob_{(i,j,k)\rightarrow(i,j,k)}
&= 1 - 2\load (1-\peras) e^{-\load (1-\peras)} + \load ( 1-\peras )^2\, e^{-\load (1-\peras^2)} \left( 1+\load \peras^2\right) \\
\Prob_{(i,j,k)\rightarrow(i+1,j,k)}
&= \load (1-\peras) e^{-\load (1-\peras)} - \load ( 1-\peras )^2\, e^{-\load (1-\peras^2)} \left( 1+\load \peras^2\right) \\
\Prob_{(i,j,k)\rightarrow(i,j+1,k)}
&= \load (1-\peras) e^{-\load (1-\peras)} - \load ( 1-\peras )^2\, e^{-\load (1-\peras^2)} \left( 1+\load \peras^2\right) \\
\Prob_{(i,j,k)\rightarrow(i,j,k+1)}
&= \load ( 1-\peras )^2\, e^{-\load (1-\peras^2)} \\
\Prob_{(i,j,k)\rightarrow(i+1,j+1,k)}
&=  (\load \peras)^2\,( 1-\peras )^2\, e^{-\load (1-\peras^2)} \\
\end{aligned}
\end{cases}
\label{eq:markov_transitions}
\end{equation}

The defined probabilities uniquely identify the transition matrix for the Markov chain under consideration, so that the sought probability mass function follows as its $\numulslot$-th step evolution when forcing the initial state as $\obsSlot_0 = \{0,0,0\}$.

We present first the results for \ac{RLC} for finite buffer size in Figure~\ref{fig:NC_finite}. If the buffer size is small, \ie $25$ slots, the performance of \ac{RLC} is particularly degraded with respect to the results for infinite buffer length, with a loss that can exceed $50\%$ for small downlink rates. A very tight match of the Monte Carlo simulations with the Markov model presented in this Section can also be observed, especially for buffer sizes of $80$ slots. For this reason, we depicted only the Monte Carlo simulations for higher buffer sizes. The results show a threshold behaviour: below the downlink rate that allows the two relays to send to the gateway all the received packets, which corresponds to $\rateDw=\tpUL=0.545$ in this scenario, the downlink throughput is particularly limited. The reason relies on the fact that in most of the cases the matrix is rank-deficient and very few packets can be decoded. This is true in general, regardless of the buffer size. Instead, for $\rateDw>0.545$, \ac{RLC} allows the correct decoding of most of the packets and, for high buffer sizes, \ac{RLC} is able to reach the infinite buffer bound. This is the ultimate limit of the scheme and determines the non-achievable region of downlink throughput. The higher the buffer size, the lower is the downlink rate necessary for reaching the infinite buffer throughput bound. The high sensitivity of \ac{RLC} on the downlink rate is detrimental, especially for scenarios where the uplink throughput is subject to quick and large variations. In these scenarios, in fact, the downlink rate shall be adapted accordingly, in order to avoid the region of low throughput.

\begin{figure}
\centering
\includegraphics[width=\figw]{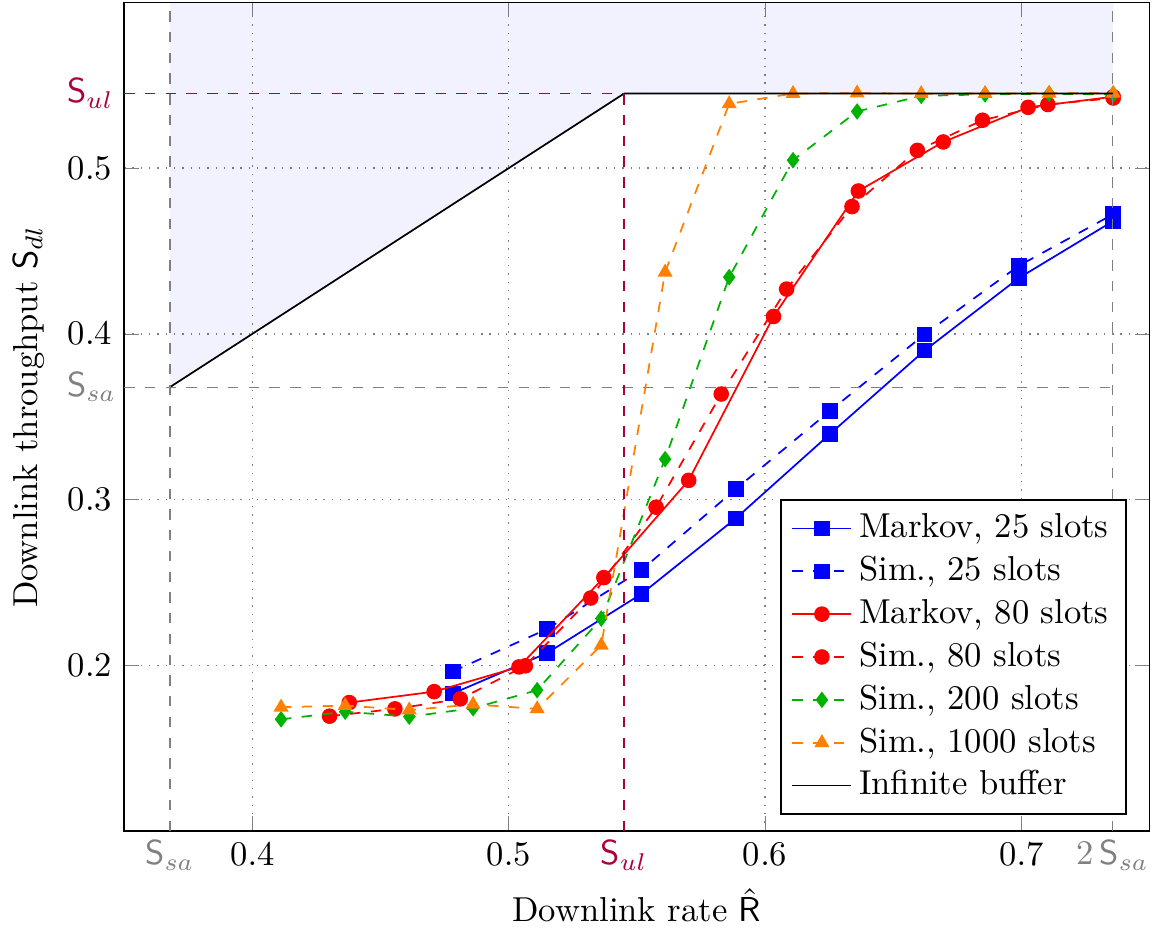}
\caption{\Ac{RLC} under finite buffer size. We compare the Monte Carlo simulations with the developed analytical tool using Markov chains, up to 80 slots as buffer size. Higher buffer sizes are obtained via Monte Carlo simulations.}
\label{fig:NC_finite}
\end{figure}

In Figure~\ref{fig:Comparison2} we show the comparison between \ac{RLC} and two dropping policies, the common transmission probability and the channel-aware policy. The choice of showing only two dropping policies is driven by the sake of clarity. The two chosen policies can be regarded as worst and best case respectively. Three buffer sizes are selected, \ie $25$, $100$ and $500$ slots for comparing three different scenarios: small buffer size, which is the case when packets have stringent delay constraint, medium buffer size and large buffer size, where packets have very relaxed delay constraints. We can observe that both dropping policies are particularly robust against reduction in the buffer size, showing limited performance loss w.r.t. the highest buffer size. Already for $100$ slots buffer size, both dropping policies reach almost their infinite buffer bound. Remarkably, for $25$ slots buffer size, both dropping policy outperform \ac{RLC} for any dowlink rate $\rateDw$. Increasing the buffer size allow \ac{RLC} to outperform both dropping policies but only for high downlink rates, \eg or $100$ slots buffer size $\rateDw>0.65$ against the channel-aware policy, and $\rateDw>0.63$ against the common transmission policy. Regardless the buffer size, for low donwlink rates both dropping policies largerly outperform \ac{RLC}, with gains exceeding $100\%$.

\begin{figure}
\centering
\includegraphics[width=\figw]{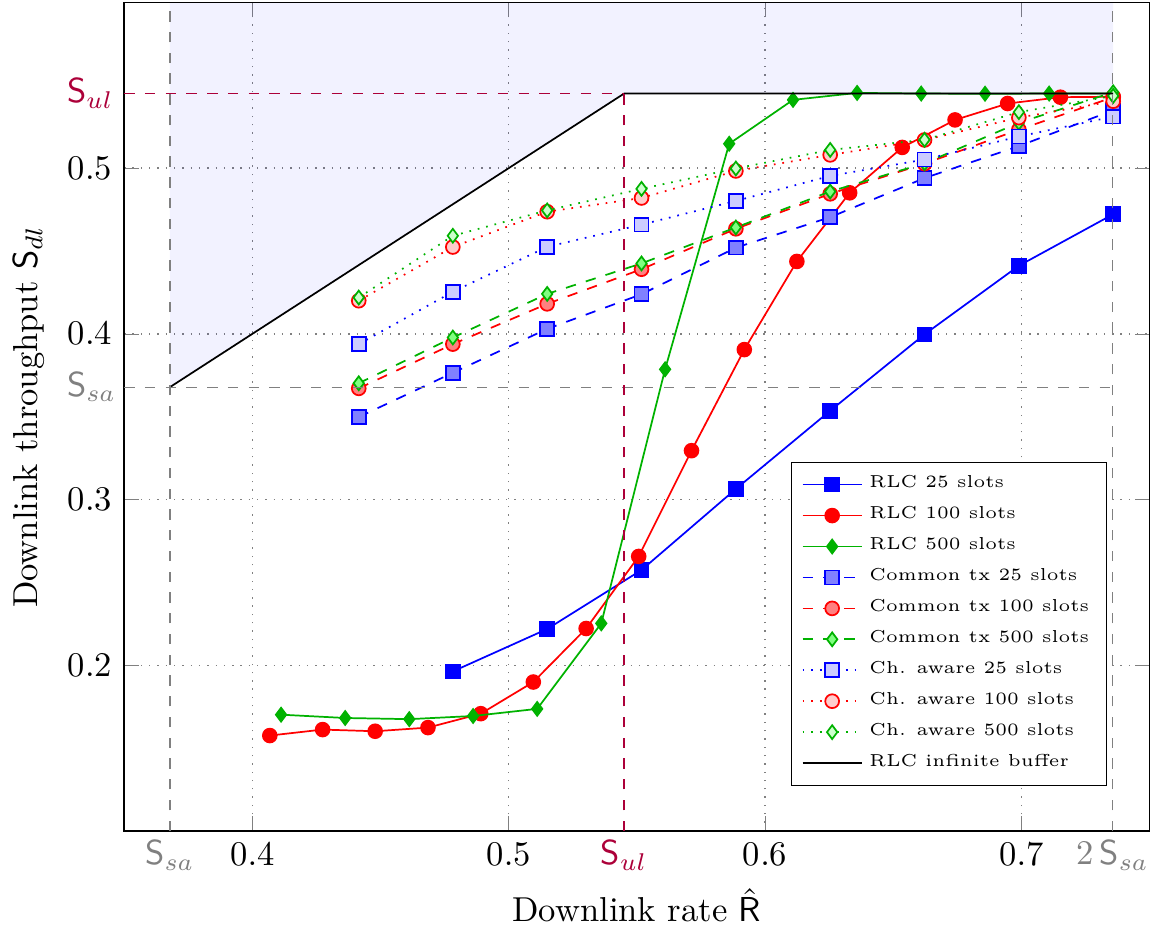}
\caption{Comparison of dropping policies and \ac{RLC} for finite-buffer scenarios. Various buffer dimensions are selected.}
\label{fig:Comparison2}
\end{figure} 
 
\renewcommand{\chaptermark}[1]{\markboth{#1}{}}
\setcounter{secnumdepth}{-1}

\chapter{Selected List of Publications}
\thispagestyle{empty}

Part of the work presented in this thesis has also appeared in the articles reported below.
\vspace{0.8cm}

\textbf{Journal Papers}
\begin{itemize}
\item [{[J2]}] F. Clazzer, C. Kissling, M. Marchese, "Enhancing Contention Resolution ALOHA
    using Combining Techniques", submitted to IEEE Transactions on Communications.
\item [{[J1]}] A. Munari, M. Heindlmaier, F. Clazzer, G. Liva, "On the Capacity of Slotted ALOHA
    with Multiple Receivers", in preparation for the IEEE Transaction on Information Theory.
\end{itemize}

\vspace{0.8cm}
\textbf{Conference Papers}
\begin{itemize}
\item [{[C6]}] F. Clazzer, E. Paolini, I. Mambelli and C. Stefanovi{\'c},
 "Irregular Repetition Slotted ALOHA over the Rayleigh Block Fading Channel with Capture",
 accepted for publication in the IEEE International Conference on Communications (ICC), 2017.
\item [{[C5]}] F. Clazzer, F. L\'azaro, G. Liva, M. Marchese, "Detection and Combining Techniques for
Asynchronous Random Access with Time Diversity", in Proceedings of the 11th International ITG
Conference on Systems, Communications and Coding (SCC), Hamburg, Germany, February 2017.
\item [{[C4]}] A. Munari, F. Clazzer, "Multi-Receiver Aloha Systems - a Survey and New Results", in
Proceedings of the IEEE International Conference on Communications (ICC), Workshop on Massive
Uncoordinated Access Protocols (MASSAP), London, UK, June 2015.
\item [{[C3]}] F. Clazzer, M. Marchese, "Layer $3$ Throughput Analysis for Advanced ALOHA Protocols", in
Proceedings of the IEEE International Conference on Communications (ICC), Workshop on Massive
Uncoordinated Access Protocols (MASSAP), pp.533-538, Sydney, Australia, June 2014.
\item [{[C2]}] F. Clazzer, C. Kissling, "Optimum Header Positioning in Successive Interference
Cancellation (SIC) based ALOHA", in Proceedings of the IEEE International Conference on
Communications (ICC), pp.2869-2874, Budapest, Hungary, June 2013.
\item [{[C1]}] F. Clazzer, C. Kissling, "Enhanced Contention Resolution ALOHA - ECRA", in Proceedings
of the 9th International ITG Conference on Systems, Communications and Coding (SCC), pp.1-6,
Munich, Germany, January 2013.
\end{itemize}

\chapter{Other Publications}
\thispagestyle{empty}

Below are reported other articles submitted or published during the Ph.D. that are not appearing in the thesis.
\vspace{0.8cm}

\textbf{Journal Papers}
\begin{itemize}
\item [{[J3]}] F. Clazzer, F. L\'azaro, S. Plass, "Enhanced AIS Receiver Design for Satellite
    Reception", CEAS Space Journal, Vol. 8, No. 4, pp. 257-268, December 2016.
\item [{[J2]}] Z. Katona, F. Clazzer, K. Shortt, S. Watts, Lexow, H.P., R. Winduratna,
    "Performance, Cost Analysis, and Ground Segment Design of Ultra High Throughput Multi-Spot
    Beam Satellite Networks applying different Capactiy Enhancing Techniques",
    International Journal of Satellite Communications and Networking, Vol. 34, No. 4, pp.
    547-573, July/August 2016.
\item [{[J1]}] P.-D. Arapoglou, A. Ginesi, S. Cioni, S. Erl, F. Clazzer, S. Andrenacci, A.
    Vanelli-Coralli, "DVB-S2X-Enabled Precoding for High Throughput Satellite Systems",
    International Journal of Satellite Communications and Networking, Vol. 34, No. 3, pp.
    439-455, May/June 2016.
\end{itemize}

\vspace{0.8cm}
\textbf{Conference Papers}
\begin{itemize}
\item [{[C10]}] F. Clazzer, A. Munari, F. Giorgi, "Asynchronous Random Access Schemes
    for the VDES Satellite Uplink", submitted to the IEEE/MTS Oceans'17, 2017.
\item [{[C9]}] F. Giorgi, F. Clazzer, A. Munari, "Asynchronous Random Access Protocols for the
    Maritime Satellite Channel", submitted to the IEEE Global Communications Conference
    (GLOBECOM), 2017.
\item [{[C8]}] A. Stajkic, F. Clazzer, G. Liva, "Neighbor Discovery in Wireless Networks: A
    Graph-based Analysis and Optimization", in Proceedings of the IEEE International Conference
    on Communications (ICC), Workshop on Massive Uncoordinated Access Protocols (MASSAP),
    pp. 511-516, Kuala Lumpur, Malaysia, May 2016.
\item [{[C7]}] S. Plass, F. Clazzer, F. Bekkadal, "Current Situation and Future Innovations in
    Arctic Communications", in Proceedings of the IEEE 82nd Vehicular Technology Conference
    (VTC Fall), pp. 1-7, Boston, MA, USA, September 2015.
\item [{[C6]}] F. Clazzer, F. L\'azaro, S. Plass, "New Receiver Algorithms for Satellite AIS", in
    Proceedings of the Deutscher Luft- und Raumfahrtkongress, pp. 1-7, Rostock, Germany,
    September 2015.
\item [{[C5]}] F. Clazzer, A. Munari, "Analysis of Capture and Multi-Packet Reception on the
    AIS Satellite System", Proceedings of the IEEE/MTS Oceans'15, pp.1-9, Genova, Italy, May
    2015.
\item [{[C4]}] S. Plass, F. Clazzer, F. Bekkadal, Y. Ibnyahya, M. Manzo, "Maritime Communications
    - Identifying Current and Future Satellite Requirements \& Technologies", in Proceedings of
    the 20th Ka and Broadband Communications, Navigation and Earth Observation Conference,
    pp. 1-8, Vietri, Italy, October 2014.
\item [{[C3]}] F. Clazzer, A. Munari, S. Plass, B. Suhr, "On the Impact of Coverage Range on AIS
    Message Reception at Flying Platforms", in Proceedings of the 7th Advanced Satellite
    Multimedia Conference and the 13th Signal Processing for Space Communications Workshop
    (ASMS/SPSC), pp.128-135, Livorno, Italy, September 2014.
\item [{[C2]}] F. Clazzer, A. Munari, M. Berioli, F. L\'azaro, "On the Characterization of AIS
    Traffic at the Satellite", in Proceedings of the IEEE/MTS Oceans'14, pp.1-9, Taipei, Taiwan,
    April 2014.
\item [{[C1]}] C. Kissling, F. Clazzer, "LDPC Code Performance and Optimum Code Rate for
    Contention Resolution Diversity ALOHA", in Proceedings of the IEEE Global Communication
    Conference (GLOBECOM), pp.2954-2960, Atlanta, GA, USA, December 2013.
\end{itemize}

\appendix

\renewcommand{\chaptermark}[1]{\markboth{\appendixname\ \thechapter.\ #1}{}}

\backmatter

\renewcommand{\chaptermark}[1]{\markboth{#1}{}}

\bibliographystyle{IEEEtran}
\bibliography{IEEEabrv,References}
\thispagestyle{empty}


\renewcommand{\chaptermark}[1]{\markboth{\chaptername\ \thechapter.\ #1}{}}

\end{document}